\title{Higher Deformation Quantization for Kapustin--Witten Theories}
\author{Chris Elliott \and Owen Gwilliam \and Brian R. Williams}
\date{\today}
\begin{document}

\begin{abstract}
We pursue a uniform quantization of all twists of 4-dimensional $\mathcal N=4$ supersymmetric Yang--Mills theory, using the BV formalism, and we explore consequences for factorization algebras of observables.  
Our central result is the construction of a one-loop exact quantization on $\RR^4$ for all such twists and for every point in a moduli of vacua. 
When an action of the group $\mathrm{SO}(4)$ can be defined --- for instance, for Kapustin and Witten's family of twists --- the associated framing anomaly vanishes.  
It follows that the local observables in such theories can be canonically described by a family of framed $\mathbb E_4$ algebras; 
this structure allows one to take the factorization homology of observables on any oriented 4-manifold.  
In this way, each Kapustin-Witten theory yields a fully extended, oriented 4-dimensional topological field theory {\em \`a la} Lurie and Scheimbauer.
\end{abstract}

\maketitle

\renewcommand{\baselinestretch}{0.5}\normalsize
\setcounter{tocdepth}{2}
\tableofcontents
\renewcommand{\baselinestretch}{1.0}\normalsize

\section{Introduction}

In \cite{KapustinWitten} Kapustin and Witten offered a view from quantum field theory
on the geometric Langlands correspondence,
relating S-duality for 4-dimensional $\mc{N} =4$ supersymmetric gauge theories to Langlands duality.
Subsequently, there has been an enormous amount of work
at this crossroads between quantum field theory, representation theory,  topology, and higher category theory.  To give just a few examples, we refer to the research program of Ben-Zvi, Gunningham, Nadler and collaborators \cite{BZNCharacter, BZNBetti, BZGN}, the work of Ben-Zvi, Brochier and Jordan \cite{BBJ1, BBJ2} and the results on skein algebras that have followed them \cite{Cooke, GunninghamJordanSafronov}, and the work of Frenkel and Gaiotto \cite{GaiottoBC, FrenkelGaiotto}. 
Much of the mathematical work has used methods involving higher categories and derived geometry,
in much the same spirit as homological mirror symmetry reworks the physicists' view on duality for $\mc{N} =(2,2)$ supersymmetric sigma models.
In this paper, we explore another path,
using Lagrangian field theory as physicists conventionally would but then exploiting recent mathematical progress \cite{CostelloBook, Book1, Book2} to extract higher algebraic structures in a new way from the proposal of Kapustin and Witten.

More precisely, the aim of this paper is to give an explicit construction of an important family of 4-dimensional quantum gauge theories, 
and to explore some of the higher algebraic structures that arise from the construction.  
Each quantum gauge theory we study arises from a classical field theory, 
by which we mean a Lagrangian field theory;
all our theories will be associated with twists of 4-dimensional $\mc{N} =4$ supersymmetric Yang-Mills theory on $\RR^4$.
Our family lives over the space $\CC^3 \times [\gg^*/G]$, where $G$ is a reductive group and $[\gg^*/G]$ is the coadjoint quotient stack of $G$.
The points of $\CC^3$ parametrize twists.
Over generic points in $\CC^3$, the theories we construct will be topological;
the theories emphasized by Kapustin and Witten appear among these twists.
The points of $[\gg^*/G]$ can be understood as classical vacua that make sense for all twists simultaneously.
We construct quantizations via Feynman diagrams and the BV formalism,
and we analyze the observables of these perturbative quantizations.
We show here, among other results, that the observables are $\bb E_4$-algebras,
or algebras over the little 4-disks operad,
a higher algebraic structure introduced and developed by topologists.
Such algebras have intrinsic interest in topology: 
one can obtain interesting geometric invariants by computing the \emph{factorization homology} of an $\bb E_4$ algebra on an appropriate 4-manifold.
It is an interesting question to ask how this kind of higher algebraic structure relates to prior work on Kapustin-Witten theories that uses higher algebra.

\begin{remark} \label{statesobservables_remark}
There are several distinct notions for what it means to ``construct'' a topological quantum field theory. 
Loosely speaking, one can attempt to model either the \emph{states} of the theory, or the \emph{observables} of the theory.  
In the language of (extended) functorial field theories (as developed by Atiyah \cite{AtiyahTQFT}, Baez--Dolan \cite{BaezDolan}, Lurie \cite{LurieCobordism} and many others), 
this choice manifests itself in the choice of target $(\infty,n)$-category for the functor.  
For example, for two-dimensional field theories, one might choose to construct a theory valued in an $(\infty,2)$-category of dg-categories --- this would be an example of a ``theory of states'' --- or one might choose to construct a theory valued in a suitable version of a higher Morita category of $\bb E_2$-algebras --- this would be an example of a ``theory of observables.''  
In this paper we will take the latter view. 
The relevant higher Morita categories and their relationship with factorization algebras have been developed in detail by Scheimbauer~\cite{ScheimbauerThesis},
precisely to give a home to such theories of observables.

We would like to emphasize an appealing aspect of our work here: 
the theory of observables can be constructed in full starting from a Lagrangian description of a field theory, and applying a suitable version of deformation quantization (here, using the Batalin--Vilkovisky formalism).  
While the theories of states play a crucial role in the geometric Langlands correspondence, for example, there does not yet exist a systematic procedure for their construction starting from a Lagrangian field theory; 
instead, various informal constructions and ans\"{a}tze must be used along the way.~\hfill$\Diamond$
\end{remark}

Let us contextualize the theories we will be constructing.  There is a rich family of four-dimensional gauge theories that can be obtained by the procedure known as twisting.  One starts with a Yang--Mills gauge theory coupled to scalar and spinorial matter, with the special property that the action of the Poincar\'e symmetry group can be extended to an action of a $\ZZ/2\ZZ$-graded extension known as a super Poincar\'e group; such theories are known as super Yang--Mills theories.  If one chooses an odd element $Q$ of the Lie algebra of the super Poincar\'e group with the property that $[Q,Q]=0$, the \emph{twist} by $Q$ is a new gauge theory obtained, roughly, by deforming the action functional using the infinitesimal action of $Q$.

By applying this procedure, one obtains theories that are more mathematically tractable but still deeply interesting: \emph{holomorphic} and \emph{topological field theories}.  Twists of supersymmetric Yang--Mills theories in dimension 4 include the famous Donaldson--Witten theory \cite{WittenTQFT}, which was the main motivating example for Witten's introduction of the twisting procedure and which lead to a physical origin for the Donaldson invariants of 4-manifolds.  More recently, Kapustin and Witten \cite{KapustinWitten} studied a family of twists to give a gauge-theoretic origin story for the categorical geometric Langlands conjecture of Beilinson and Drinfeld.

In this paper we will study the twists of maximally supersymmetric Yang--Mills theories in dimension 4, known as $\mc N=4$ super Yang--Mills theories.  These twists include the Kapustin--Witten family, as well as other examples, such as the mixed holomorphic-topological twist first studied by Kapustin \cite{KapustinHolo}.  A description of all twists of Yang--Mills theory at the classical level was given in \cite{ESW} using the Batalin--Vilkovisky (BV) formalism (see in particular Section 10.3 of loc. cit. for twists of 4d $\mc N=4$ theories).  In this paper we will analyze these twisted theories at the quantum level.

\begin{remark}
As we mentioned in Remark \ref{statesobservables_remark} above, we will construct quantum field theories modelling the observables for the Kapustin--Witten twisted gauge theories.  
One could also attempt to construct theories of states.  
Upon compactification on a Riemann surface $\Sigma$, 
one obtains a two-dimensional field theory; 
the literature often refers to geometric Langlands topological quantum field theories to mean two-dimensional functorial theories valued in dg-categories, 
assigning the categories of interest in the geometric Langlands theory to the point (see, for instance \cite{BZNBetti, BZGN, EY3}).  
~\hfill$\Diamond$\end{remark}

\begin{remark}
Our results do not rely on any aspect of S-duality, nor do we attempt in this paper to understand how S-duality interacts with our results.  Our construction models a family of perturbative quantum field theories varying over a version of the classical moduli of vacua; 
it would be very interesting to investigate what additional data is required in order to realize an S-duality relationship between constructions of this type.~\hfill$\Diamond$
\end{remark}

For the rest of this section, let us fix a complex reductive group $G$, viewed as the gauge group of super Yang--Mills theory.  We will suppress $G$ in the statements below.

Our first main result is as follows.

\begin{theorem}[See Theorem \ref{anomaly_theorem}, Theorem \ref{quantum_vacua_family_thm}.]
All twists of $\mc N=4$ Yang--Mills theory on $\RR^4$ admit an exact one-loop quantization. 
These quantizations extend to families of quantum field theories living over the quotient stack~$[\gg^*/G]$.
\end{theorem}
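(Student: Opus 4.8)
The plan is to work within Costello's effective-field-theory formulation of the Batalin--Vilkovisky formalism \cite{CostelloBook}, in which a quantization of a classical BV theory is a family of effective interactions $\{I[L]\}_{L>0}$ satisfying the renormalization-group flow equation together with the scale-$L$ quantum master equation, with a prescribed local behavior as $L \to 0$. For each twist I would start from the classical theories constructed in \cite{ESW}: each is a free BV theory---an elliptic complex of fields on $\RR^4$ with a $(-1)$-shifted symplectic pairing and a gauge-fixing operator---equipped with a local interaction $I_{\mathrm{cl}}$ solving the classical master equation. The strategy is the standard homological one: build the quantization inductively in powers of $\hbar$, where at each order the failure of the quantum master equation is an obstruction class lying in $H^1$ of the complex of local functionals with the twisted classical BV differential $Q + \{I_{\mathrm{cl}}, -\}$. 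If every such obstruction vanishes the quantization exists, and the content of the anomaly result (Theorem \ref{anomaly_theorem}) is precisely that the only potentially nonzero obstruction, the one-loop anomaly, can be made to vanish.

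First I would reduce the obstruction problem to a finite computation by identifying the relevant local cohomology. Using the general machinery relating local functionals to Gelfand--Fuks and Lie-algebra cohomology, together with the translation, rotation, and $G$-invariance that any anomaly must respect, one cuts the space of possible obstructions down to a small, explicitly describable space of invariant local functionals of the correct cohomological degree and weight. The one-loop anomaly itself is computed by the sum of one-loop Feynman diagrams---wheels built from the cubic, BF-type interaction vertices of $I_{\mathrm{cl}}$ joined by the propagator, with a single BV-Laplacian insertion. The main obstacle is to show that this sum is cohomologically trivial; I expect this to follow from a combination of (i) a symmetry and weight constraint forcing the putative anomaly into a class that is zero or exact, and (ii) a direct vanishing of the relevant loop integrals on $\RR^4$ for parity or dimensional reasons. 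This cancellation is the heart of the theorem and the step most likely to require genuine work.

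To upgrade ``quantization exists'' to ``exact one-loop quantization'' I would exhibit an additional grading on the fields, coming from a residual $U(1)$ weight symmetry surviving the twist, under which the propagator and each interaction vertex carry definite weights. A bookkeeping argument then shows that a connected diagram with at least two loops carries a weight incompatible with that of any admissible local, gauge-invariant functional, so that no counterterms or corrections beyond order $\hbar$ are permitted or required. Hence the effective interaction takes the form $I[L] = I_{\mathrm{cl}} + \hbar\, I^{(1)}[L]$, with all higher obstructions automatically absent, yielding the claimed exactness.

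Finally, for the extension over $[\gg^*/G]$ I would realize a point of $\gg^*$ as a constant background profile for the appropriate scalar field, producing a deformation of $I_{\mathrm{cl}}$ that depends polynomially on this background and is manifestly $G$-equivariant. Because the entire inductive construction---the gauge fixing, the renormalized diagrams, and the anomaly analysis---can be carried out $G$-equivariantly and depends algebraically on the background parameter, the one-loop quantization assembles into a family of quantum field theories over $\gg^*$ that descends, by equivariance, to the quotient stack $[\gg^*/G]$. The one point requiring care is checking that the anomaly vanishing is uniform across the family, that is, that turning on a vacuum introduces no new one-loop obstruction; I expect to control this by the same weight and invariance arguments used in the undeformed case.
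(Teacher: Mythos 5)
Your overall framework (Costello's effective BV quantization, order-by-order obstruction theory) is the right one, and your outline for the $[\gg^*/G]$ extension matches the paper's in spirit. But there are two genuine gaps at the steps that carry the weight of the theorem. First, your argument for one-loop exactness --- a residual $U(1)$ weight grading under which multi-loop diagrams are forbidden --- is essentially the cotangent-type argument (assign weight one to the fiber field $\beta$; since the interaction $\int \beta\wedge[\alpha,\alpha]$ is linear in $\beta$, directed graphs have at most one loop, cf.\ Corollary \ref{cotangent_cor}). This works for the B-type twists but \emph{fails} for the A-type twists, where the deformation $u\int\beta\wedge\beta$ is quadratic in the fiber fields and destroys exactly that grading; these theories are not of cotangent type, and a priori multi-loop diagrams can appear. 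The paper closes this gap by working in the holomorphic gauge of \cite{BWhol}: the gauge-fixing operator $\dbar^*$ commutes with $t_1\partial_{z_1}+t_2\partial_{z_2}+u\,\partial/\partial\eps$, so the propagator and hence the entire effective collection $\{I[\Lambda]\}$ are \emph{literally constant} over $\CC^3$, and the one-loop exactness established at the cotangent-type point $(0,0,0)$ persists; alternatively, Lemma \ref{A_diagram_lemma} kills any looped diagram containing the bivalent $\beta\wedge\beta$ vertex by the integration-by-parts identity $\int \dbar^* A\wedge\dbar^* B\,\d w=0$. Your proposal, which never invokes the holomorphic structure, has no mechanism for this step (and also no mechanism for avoiding counterterms in the construction of the effective family, which is what Theorem \ref{prequant_thm} provides).

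Second, the anomaly vanishing, which you correctly flag as the heart of the matter, is deferred in your proposal to ``symmetry and weight constraints'' plus ``parity or dimensional reasons'' for the loop integrals. The paper's mechanism is different and entirely algebraic: in the holomorphic gauge the anomaly is concentrated on wheel diagrams with $d+1=3$ internal edges (Theorem \ref{wheel_thm}), the weight of each wheel factors as an analytic piece times an algebraic piece, and the algebraic piece is an adjoint-representation trace over the graded Lie algebra $L=L'[\delta]$. Since $L=L'\oplus L'[\pm 1]$ as $L'$-modules, this trace cancels: $\mr{Tr}_L(Y)=\mr{Tr}_{L'}(Y)-\mr{Tr}_{L'}(Y)=0$ (Proposition \ref{holo_anomaly_prop}). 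No analysis of the loop integral is needed --- the paper emphasizes this explicitly --- and a priori the invariant local cohomology in degree one need not vanish (Remark \ref{uniqueness_remark} computes a one-dimensional $\mr H^0$, and Costello's computation in \cite{CostelloSUSY} is used for uniqueness, not existence), so your proposed reduction to a cohomological triviality statement is not guaranteed to succeed. To repair your proof you would need to supply the supertrace cancellation, or an equivalent explicit evaluation of the wheel.
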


In Section~\ref{modvac} we interpret this quotient stack~$[\gg^*/G]$ as a moduli of vacua that works for all twists simultaneously.
There are some subtleties here, however,
as the view from derived geometry indicates that the moduli of vacua looks like $[\gg^*[2]/G]$, where the dual of the Lie algebra is shifted down in cohomological degree by~2. 
As we are working with the theories as $\ZZ/2$-graded objects,
this difference in the shift of $\gg^*$ is not visible in our construction,
and it is fruitful to work with the ordinary (underived) stack as we do.

\begin{remark}
It is possible to eliminate this even grading shift by careful incorporation of the action of R-symmetries (as discussed in \cite[Section 3.4]{EY2}), 
but in this paper we take a simpler approach, and forget the $\ZZ$-grading on our algebras down to a $\ZZ/2\ZZ$-grading.
~\hfill$\Diamond$\end{remark}

This result can be thought of as consisting of two steps.  First, we verify the existence of a \emph{prequantization} of the classical twisted Yang--Mills theories: a collection of effective theories depending on a scale parameter, satisfying the renormalization group flow condition.  Twisted field theories in general permit particularly nice renormalizations, using general results about holomorphic field theories (for example, no counter-terms occur).

Next, we compute the anomaly associated to the prequantization: the obstruction to solving the quantum master equation.  It is obtained as the weight of Feynman diagrams.  We show that this anomaly vanishes because the weight is zero for one-loop Feynman diagrams and no Feynman diagrams with more than one-loop can occur.

\begin{remark} 
In the appendix of \cite{CostelloSUSY}, Costello showed that, if it exists, there is a unique quantization of the holomorphic twist (up to equivalence of BV theories) that preserves various natural symmetries of the classical theory, like translation invariance and $R$-symmetry (see Remark \ref{uniqueness_remark} for further discussion of Costello's result).  Here we produce such a quantization, and quantizations of the other twists, and analyze some of its features.
Costello was certainly aware that results of this flavor existed, 
as he originally suggested the Landau gauge fixing we use and encouraged efforts in this direction!
~\hfill$\Diamond$\end{remark}

Given our family of quantum field theories, we can then study the structure of their local observables.  These observables can be modelled using the machinery of \emph{factorization algebras}, as in \cite{Book1, Book2}.  For those twists which are topological, these factorization algebras admit an alternative model, familiar to homotopy theorists, as $\bb E_4$-algebras: algebras over the operad of little $4$-disks.  In conjunction with results from \cite{ElliottSafronov}, we prove the following.

\begin{theorem}
Let $(\CC^3 \bs (\CC \cup_{\{0\}} \CC)) \times [\gg^*/G]$ denote the subspace of pairs $(Q, [x])$, where $Q$ is a topological supercharge and $[x]$ is a choice of vacuum.
There is a sheaf of $\bb E_4$-algebras over this space, 
given by the factorization algebras of local observables in the topological twists of $\mc N=4$ super Yang--Mills theory.
\end{theorem}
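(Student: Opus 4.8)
The plan is to construct the sheaf pointwise and then verify that the construction varies coherently over the base. Fix a point $(Q,[x])$ in $(\CC^3 \bs (\CC \cup_{\{0\}} \CC)) \times [\gg^*/G]$. By the first main theorem (Theorem \ref{anomaly_theorem}, Theorem \ref{quantum_vacua_family_thm}), the twist of $\mc N=4$ Yang--Mills by the supercharge $Q$ at the vacuum $[x]$ carries an exact one-loop quantization, and these quantizations have already been assembled into a family over the full parameter space. Applying the general machinery of \cite{Book1, Book2}, the quantum observables of this theory form a factorization algebra $\mathrm{Obs}^q_{(Q,[x])}$ on $\RR^4$; because the prequantization and its anomaly analysis were carried out translation-equivariantly, this factorization algebra is (smoothly) translation invariant. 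The first step is therefore simply to record that the pointwise output of Theorem \ref{quantum_vacua_family_thm} is a smoothly translation-invariant factorization algebra on $\RR^4$ at each point of the base.

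Next I would restrict to the locus $\CC^3 \bs (\CC \cup_{\{0\}} \CC)$ of topological supercharges and show that there $\mathrm{Obs}^q_{(Q,[x])}$ is locally constant. The defining feature of a topological $Q$---the property that fails precisely along the two lines $\CC \cup_{\{0\}} \CC$ of holomorphic and mixed holomorphic-topological twists that have been removed---is that $Q$ trivializes the infinitesimal translations: it supplies a contracting homotopy exhibiting each $\partial_\mu$ as cohomologically zero on the complex of fields. Transporting this trivialization through the construction of observables makes the structure maps of the factorization algebra into quasi-isomorphisms as the configuration of disks is deformed, which is exactly the statement that $\mathrm{Obs}^q_{(Q,[x])}$ is locally constant. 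I would then invoke the comparison result of \cite{ElliottSafronov}, built on Lurie's equivalence between locally constant factorization algebras on $\RR^n$ and $\bb E_n$-algebras, to extract from each such factorization algebra a well-defined $\bb E_4$-algebra. The role of \cite{ElliottSafronov} here is to bridge the gap between the smoothly translation-invariant, BV-theoretic factorization algebras that our quantization naturally produces and the genuinely locally constant objects to which Lurie's theorem applies.

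Finally I would upgrade this pointwise assignment to a sheaf over $(\CC^3 \bs (\CC \cup_{\{0\}} \CC)) \times [\gg^*/G]$. Since the quantization of Theorem \ref{quantum_vacua_family_thm} was produced as a family---organized as an $\mc O$-algebra structure over $\CC^3$ with the $G$-equivariance encoding the stacky direction $[\gg^*/G]$---the factorization algebras, and hence the associated $\bb E_4$-algebras, inherit their dependence on $(Q,[x])$ from the same family data, so restriction along opens of the base should induce the required restriction maps. The main obstacle I anticipate is exactly this globalization, coupled to the local-constancy argument: one must verify that the contracting homotopy furnished by $Q$---and therefore the resulting $\bb E_4$-structure---can be chosen to depend algebraically on the twisting parameter and $G$-equivariantly along $\gg^*$, rather than merely being fixed fiber by fiber. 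Only then does the collection of $\bb E_4$-algebras satisfy descent over the (stacky) base and genuinely define a sheaf, as opposed to an incoherent family of individual algebras.
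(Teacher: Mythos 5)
Your overall route coincides with the paper's: quantize via Theorems \ref{anomaly_theorem} and \ref{quantum_vacua_family_thm}, pass to the factorization algebra of observables, and invoke the comparison theorem of \cite{ElliottSafronov} (quoted here as Theorem \ref{En_theorem}) to extract an $\bb E_4$-algebra from a translation-invariant factorization algebra on $\RR^4$ carrying a homotopy trivialization of the translation action. The genuine gap is at the central step: you assert that the contracting homotopy furnished by $Q$ can simply be ``transported through the construction of observables.'' In the BV formalism this is precisely the kind of statement that can fail. The trivialization is a \emph{classical} datum --- for these theories it is the contraction $\eta_X = \iota_X$, packaged as the Hamiltonian $\widetilde{J}_X = \int \langle \beta \wedge \iota_X \alpha \rangle$ --- and promoting it to the quantum observables means solving the \emph{equivariant} quantum master equation for the action of $(\RR^4)_{\mr{dR}}$ by background fields. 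The obstruction is an anomaly computed by wheel Feynman diagrams with $\widetilde{J}_X$-vertices, and its vanishing must be proved, not assumed; this occupies Section \ref{rotation_anomaly_section} (for the translations the key point is that $\iota_X$ commutes with the chosen gauge-fixing operator, so the relevant weights vanish --- see the proof outline of Theorem \ref{B_twist_de_Rham_action_thm} and its extension in Theorem \ref{KW_family_framing_anomaly_thm}). Without this verification you only have an $\bb E_4$-structure on the classical observables.

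Two smaller points. First, Theorem \ref{En_theorem} has two separate hypotheses --- (a) the inclusion of concentric balls induces a quasi-isomorphism, and (b) a smooth $(\RR^4)_{\mr{dR}}$-action --- and your local-constancy argument folds (a) into (b). The paper checks (a) directly (on the topological locus the theories are either topological BF theories or have contractible local observables); the whole point of \cite{ElliottSafronov} is to avoid having to establish genuine local constancy of the BV factorization algebra before applying Lurie's equivalence. Second, the coherence issues you flag at the end are real but are the easy part here: the effective interactions and the trivializing Hamiltonians are constant in $(t_1,t_2,u)$ and strictly $G$-equivariant over $\gg^*$, so the family and descent data come essentially for free once the pointwise quantum trivialization is in hand.
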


There is a useful interpretation of this sheaf, exposing an analogy with deformation quantization.
Here the moduli of vacua $[\gg^*/G]$ parametrizes a space of translation-invariant solutions to the equations of motion,
and we can consider its formal neighborhood inside all solutions.
(Loosely speaking, think about the tubular neighborhood.)
This neighborhood can be encoded by a family over $[\gg^*/G]$ of classical perturbative theories on $\RR^4$.
(Loosely speaking, identify the tubular neighborhood with the normal bundle.)
There is a natural shifted symplectic structure on each theory,
and when we quantize, we deform the dg commutative algebra of each theory to an $\bb E_4$-algebra.
In short, we produce a family of $\bb E_4$-stacks over $[\gg^*/G]$.
Note the analogy to deformation quantization, where one deforms the structure sheaf from a commutative to an associative (or $\bb E_1$-)algebra.

Having proven the existence of the quantization, we show that it can be equipped with rich additional structure. There are two main types of structure that we consider.
\begin{enumerate}
\item First, a \emph{framing structure}.  The method of factorization homology (as introduced in \cite[Section 5.5]{LurieHA} \cite{FrancisEn}) allows one to take an $\bb E_n$ algebra $\mc A$ and ``integrate'' it over an $n$-manifold $M$ with trivialized tangent bundle ({\it aka} a framed manifold).  From the point of view of field theory, we can think of this as defining a theory whose algebra of local observables is $\mc A$ and whose global observables on $M$ are the factorization homology.  To extend this method to {\em oriented} $n$-manifolds, one needs additional data: an action of the group $\SO(n)$ on $\mc A$ compatible with the $\bb E_n$-structure, with a trivialization up to homotopy.  An algebra with this structure is known as a \emph{framed $\bb E_n$ algebra}.
 
In the world of twisted supersymmetric field theories on $\RR^n$, one can obtain framings in this sense in a natural way.  One must start with the action of the group $\mr{ISO}(n)$ of isometries of $\RR^n$ on the theory, and find a \emph{homotopical trivialization}.  Roughly speaking, one must show that the conserved currents for the $\mr{ISO}(n)$-action are exact, with a family of potentials compatible with the Lie bracket on the Lie algebra of $\mr{ISO}(n)$.
 
For a subclass of the theories at hand -- the two-dimensional family of twists studied by Kapustin and Witten -- we construct such a homotopy trivialization for the action of $\mr{ISO}(4)$ on the classical twisted super Yang--Mills theories, and then go on to show that the quantum anomaly for this action (an example of a ``framing anomaly'') vanishes to all orders.  In particular, we obtain the following result.
 
\begin{theorem}[See Corollary \ref{framed E4}] 
There is a sheaf of framed $\bb E_4$-algebras over the space $(\CC^2 \bs \{0\}) \times [\gg^*/G]$ corresponding to the local observables for the Kapustin-Witten family of twisted 4d $\mc N=4$ theories.
\end{theorem}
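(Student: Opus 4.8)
The plan is to upgrade the sheaf of $\bb E_4$-algebras on the topological locus to a sheaf of \emph{framed} $\bb E_4$-algebras over the Kapustin--Witten locus, by exhibiting a compatible $\SO(4)$-action whose associated framing anomaly vanishes. First I would locate the Kapustin--Witten family as a subvariety $\CC^2 \bs \{0\}$ of topological supercharges sitting inside the topological locus $\CC^3 \bs (\CC \cup_{\{0\}} \CC)$ of the preceding theorem, and pull back the sheaf of $\bb E_4$-algebras to it. The feature that singles out this locus is that precisely here the twisted rotation symmetry assembles into a full $\SO(4)$ rather than a proper subgroup: the twisting homomorphism from $\mathrm{Spin}(4)$ into the $R$-symmetry group can be chosen so that a twisted diagonal $\SO(4)$ preserves the twisting supercharge $Q$. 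I would record this as an action of $\mathfrak{iso}(4) = \mathfrak{so}(4) \ltimes \RR^4$ on the classical BV theory, fibred over $\CC^2 \bs \{0\}$.

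Next I would carry out the classical homotopy trivialization. Encoding the $\mathfrak{iso}(4)$-symmetry through its Noether currents, the point is to trivialize the translations in an $\SO(4)$-equivariant way: produce odd operators $G_\mu$ with $[Q, G_\mu] = P_\mu$ --- which is what makes the factorization algebra locally constant, hence an $\bb E_4$-algebra --- chosen so that the family $\{G_\mu\}$ transforms as a vector under $\mathfrak{so}(4)$ and the resulting potentials are compatible with the brackets $[M, P] \sim P$ and $[P,P]=0$ of $\mathfrak{iso}(4)$. This coherent family of potentials is the required homotopy trivialization; together with the genuine $\SO(4)$-action it is exactly the classical data needed to present the observables as a framed $\bb E_4$-algebra, via the framing dictionary recalled above and the results of \cite{ElliottSafronov}.

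The hard part will be showing that this classical structure survives quantization --- that is, the vanishing of the quantum framing anomaly to all orders. The classical trivialization need not be compatible with the quantum differential: applying $Q + \hbar \Delta$ to the potentials produces an obstruction valued in $\hbar$ times the Lie-algebra cochains of $\mathfrak{iso}(4)$ with coefficients in local functionals, and this obstruction is the framing anomaly. I would compute it as a sum of Feynman weights. By the one-loop exactness of the quantization established in the first main theorem, only one-loop diagrams can contribute, and I would then show that the surviving weights vanish. The leverage is the residual $\SO(4)$- and $R$-symmetry invariance together with weight and cohomological-degree counting: the anomaly must be a translation-invariant, $\SO(4)$-invariant local functional of a fixed degree and scaling weight, and one checks that no such functional exists (equivalently, that the relevant obstruction-cohomology group vanishes), exactly in the spirit of the argument that killed the quantum-master-equation anomaly. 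This forces the framing anomaly to vanish.

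Finally, I would check that every step above is carried out compatibly in families --- holomorphically in $Q$ over $\CC^2 \bs \{0\}$, and over $[\gg^*/G]$ using the family quantization of the first theorem --- so that the framed $\bb E_4$-algebras glue to a sheaf on $(\CC^2 \bs \{0\}) \times [\gg^*/G]$, which is the assertion of Corollary \ref{framed E4}. I expect the genuine obstacle to be concentrated in the rotation sector of the third step: the translation sector is governed by standard topological-descent arguments, whereas the rotation currents --- present only because of the enhanced $\SO(4)$ of the Kapustin--Witten twist --- are where an anomaly could genuinely appear, and ruling it out is exactly why generic, non-Kapustin--Witten twists fail to produce oriented, as opposed to merely framed-manifold, theories.
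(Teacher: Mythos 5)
Your overall architecture matches the paper's: a classical $\mr{ISO}(4)$-action with a homotopy trivialization, quantization of that structure, vanishing of the framing anomaly, and assembly in families so that Theorem \ref{En_theorem} (from \cite{ElliottSafronov}) upgrades the local observables to framed $\bb E_4$-algebras. Two caveats on the classical step. First, Theorem \ref{En_theorem} requires a homotopy trivialization of the \emph{full} $\mr{ISO}(4)_{\mr{dR}}$-action --- rotations included --- not merely an $\SO(4)$-equivariant trivialization of the translations; the paper supplies this uniformly via the contraction operators $\eta_X = \iota_X$ and Cartan's formula (Proposition \ref{prop:cdr}), which is the coordinate-free version of your descent operators $G_\mu$. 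Second, the paper does not route the $\SO(4)$-action through the twisting homomorphism and R-symmetry of the untwisted theory; on the twisted fields $\mc E_{1,1,0}$, which are two copies of the de Rham complex, the isometry group acts by natural automorphisms, and the Kapustin--Witten twisting homomorphism is only a remark about provenance.

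The genuine gap is in your anomaly-vanishing step. You propose to kill the framing anomaly by arguing that no translation- and $\SO(4)$-invariant local functional of the right degree and scaling weight exists, ``in the spirit of the argument that killed the quantum-master-equation anomaly.'' But that is not how the paper's QME argument works: Proposition \ref{holo_anomaly_prop} computes the \emph{specific} algebraic weight of the anomaly wheel and shows it is a trace over $L'[\delta] = L' \oplus L'[\pm 1]$, which cancels; it never shows the ambient obstruction group vanishes, and Remark \ref{uniqueness_remark} shows the local obstruction-deformation complex of $\mc E_{1,1,0}$ has nontrivial cohomology even for semisimple $\gg$, so an abstract vanishing argument is not available. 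Moreover, your reduction to one-loop wheels is not automatic here: $\mf{iso}(4;\CC)$ contains non-holomorphic vector fields, so the holomorphic-gauge classification of anomalies (Theorem \ref{wheel_thm}) does not apply directly to the equivariant theory. The paper's essential device is to split $\mf{iso}(4;\CC)_{\mr{dR}}$ into two copies of $\big[\CC^2 \oplus \sl(2;\CC)\big]_{\mr{dR}}$, holomorphic for two conjugate complex structures, each treated with its own gauge fixing; only then is the anomaly a sum of three-legged wheels, whose weights are shown to vanish by an explicit form-degree count (the two-$\alpha$-leg terms produce a $(5,4)$-form lying in a subbundle generated by four $1$-forms on $\CC^6$, hence zero) and an adjoint-trace argument using reductivity of $\gg$ (the one-$\alpha$-leg terms are proportional to $\mr{Tr}_\gg$). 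Your extension over $(\CC^2 \setminus \{0\}) \times [\gg^*/G]$ is then Theorem \ref{KW_family_framing_anomaly_thm} and Corollary \ref{family_of_isometry_actions_cor}, which rest on Lemma \ref{A_diagram_lemma} making the anomalous weights independent of $u$ --- a point your sketch gestures at but would need to prove.
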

 
\item We additionally consider a natural \emph{filtration} on the local observables. Our $\bb E_4$-algebras, when considered in the absence of this filtration, are not very interesting: they are merely commutative.  However, when we incorporate the data of the filtration, we obtain genuine non-trivial filtered $\bb E_4$-algebras. 
 
This natural filtration on the local observables in any quantum field theory is called the \emph{free-to-interacting} filtration (see Section \ref{F2I_section}).  The associated graded of this filtration can be identified with the quantum observables of the underlying free theory, and at higher terms in the corresponding spectral sequence, we see both the higher order terms in the classical interaction, as well as quantum correction terms, occurring at prescribed levels.  We exploit this filtration when we study our twisted theories.
 
\begin{prop}
The family of factorization algebras of quantum observables over the space $\CC^3 \times [\gg^*/G]$ can be equipped with a natural filtration.  In particular, the subfamilies with the structure of $\bb E_4$- or framed $\bb E_4$-algebras can be promoted to families of filtered $\bb E_4$- or filtered framed $\bb E_4$-algebras.
\end{prop}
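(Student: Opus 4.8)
The plan is to build the filtration first on the Batalin--Vilkovisky complex of a single quantum theory, where it is essentially formal, and then to show that it is compatible with all of the surrounding structure: the factorization product, the variation over $\CC^3 \times [\gg^*/G]$, the comparison with $\bb E_4$-algebras of \cite{ElliottSafronov}, and the $\SO(4)$-action producing the framing.

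First I would recall that, for a classical theory with action $S = S_{\mathrm{free}} + I$, the quantum observables $\mathrm{Obs}^q(U)$ are modelled on a completion of the symmetric algebra on the (distributional) fields over $U$, carrying the quantum BV differential whose free part $d_{\mathrm{free}} + \hbar \Delta$ is precisely the differential of the underlying free quantum theory and whose interacting part is the BV bracket $\{I, -\}$. I would then grade this complex by assigning to a homogeneous observable the weight given by its polynomial (symmetric) degree corrected by twice its power of $\hbar$. Under this weighting the free differential $d_{\mathrm{free}}$ is weight-preserving, the second-order operator $\hbar \Delta$ is weight-preserving (the drop of two in polynomial degree is offset by the extra power of $\hbar$, which contributes $+2$), while each insertion of an interaction vertex from $I$, classical or quantum, strictly raises the weight by a prescribed amount. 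The induced filtration by weight is therefore preserved by the full differential, and the key computation --- entirely formal --- is that the associated graded differential is exactly $d_{\mathrm{free}} + \hbar \Delta$, so that $\mathrm{gr}\,\mathrm{Obs}^q$ is identified with the quantum observables of the free theory and the interaction terms appear, at their prescribed weights, as the higher differentials of the resulting spectral sequence. This is the free-to-interacting filtration of Section \ref{F2I_section}.

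Next I would upgrade this from a filtration of cochain complexes to a filtration of the factorization algebra, and check that it is defined uniformly over the parameter space. Multiplicativity, $F_p \cdot F_q \subseteq F_{p+q}$, holds because the prefactorization structure maps are continuous extensions of the product on symmetric algebras, under which both the polynomial degree and the power of $\hbar$ are additive; hence the structure maps are filtered. Uniformity is the observation that the underlying graded space of fields is constant across the family --- only the interaction, and hence the positive-weight part of the differential, depends on the twist $Q \in \CC^3$ and the vacuum $[x] \in [\gg^*/G]$ --- so that the weight filtration is literally the same subspace filtration at every point and the differential is filtered fibrewise. Together with the one-loop exact quantization of the first theorem in the excerpt, this produces a filtered factorization algebra over all of $\CC^3 \times [\gg^*/G]$, which is the first assertion of the proposition.

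Finally I would transport the filtration across the equivalences supplying the extra structure. Over the topological locus the comparison of \cite{ElliottSafronov} between smoothly locally constant factorization algebras on $\RR^4$ and $\bb E_4$-algebras is functorial and symmetric monoidal, so it carries a multiplicative, locally constant, filtered factorization algebra to a filtered $\bb E_4$-algebra; one needs only that each filtration piece is itself locally constant, which follows because the homotopies witnessing local constancy are built from bounded-degree local functionals and therefore respect the weight. The same mechanism handles the Kapustin--Witten locus: the homotopy trivialization of the $\mathrm{ISO}(4)$-action used to produce the framing is assembled from the conserved currents and their potentials, which are polynomials of bounded degree in the fields, so it preserves the weight and upgrades to a filtered homotopy trivialization, yielding a filtered framed $\bb E_4$-algebra over $(\CC^2 \bs \{0\}) \times [\gg^*/G]$.

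The step I expect to be the genuine obstacle is not the definition of the filtration, which is formal, but the verification that it is compatible with the factorization structure and with the equivalence of \cite{ElliottSafronov} at the level of the renormalized, scale-dependent quantum theory rather than the naive BV complex. Concretely, one must show that the homotopy renormalization-group flow maps relating observables at different length scales are filtered, so that $\mathrm{Obs}^q$, its filtration, and the identification of the associated graded with the free theory are all well defined after renormalization and convergent as a spectral sequence. I would address this by checking that the RG flow operator, built by contracting the propagator against interaction vertices, raises the weight according to the same bookkeeping as $\{I, -\}$; the one-loop exactness of the quantization bounds the powers of $\hbar$ that occur, which keeps the filtration complete and stable under the very maps used to construct the quantization.
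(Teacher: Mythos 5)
Your proposal is correct and follows essentially the same route as the paper: the filtration you define by the weight $2m+n$ (polynomial degree plus twice the power of $\hbar$) is exactly the paper's free-to-interacting filtration $F^k_{\FtI}\,\obsq = \prod_{2m+n\geq k}\hbar^m \sym^n(\mc E^*)$ from Section \ref{F2I_section}, with the same identification of the associated graded as the free quantum theory, and the same observation that the filtration is uniform over $\CC^3\times[\gg^*/G]$ because only the differential, not the underlying graded space, varies in the family. Your closing remarks on compatibility with the RG flow and with the homotopy trivialization (which is given by the quadratic, hence weight-preserving, functionals $\widetilde J_X$) correctly flesh out points the paper leaves implicit.
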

 
\end{enumerate}

Having constructed these families of (framed, filtered) $\bb E_4$ algebras, we can begin to compute the factorization homology on interesting curved 4-manifolds $M$.  In this paper we prove some results about the factorization homology where $M$ is compact, and where $M = N \times \RR$ for $N$ a compact oriented 3-manifold.  We foresee even more interesting applications related to the geometric Langlands program for the factorization homology on $M = S \times \RR^2$ where $S$ is a compact oriented 2-manifold, which we will describe further below.

\begin{prop}[See Proposition \ref{holo_4d_fact_hom_prop} and \ref{B_4d_fact_hom_prop}]
If $M$ is a compact complex surface that can be realized as a discrete quotient of an open subset $U$ of $\CC^2$, then we can compute the factorization homology of the holomorphic twist on $M$.  This factorization homology is given by $\det(\mr H^{\bullet, \bullet}_{\ol \dd}(M) \otimes \gg)[d_M]$, where $d_M$ coincides with the Euler characteristic of $M$ modulo 2.

For \emph{any} compact oriented 4-manifold $M$ we can compute the factorization homology of the B-twist on $M$.  This factorization homology is given by $\det(\mr H^{\bullet}_{\mr{dR}}(M) \otimes \gg)[d_M]$.
\end{prop}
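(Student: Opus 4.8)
The plan is to reduce the computation of factorization homology to a computation of global quantum observables, and then to evaluate these observables as a (shifted) determinant line. First I would invoke the comparison between the two points of view: by the results underlying the constructions above (in the spirit of \cite{ScheimbauerThesis, Book2}), the factorization homology $\int_M \mc A$ of the $\bb E_4$-algebra $\mc A$ of local observables is computed by the global quantum observables $\mr{Obs}^q(M)$ of the theory on $M$. Since the relevant $\bb E_4$-algebras are formal --- merely commutative in the absence of the free-to-interacting filtration --- their factorization homology agrees with that of the underlying free theory, so I may linearize the interacting BV theory and work with its free part throughout.

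Next I would carry out the free-theory computation, which is by now standard in the BV formalism \cite{Book1, Book2}. For a free elliptic theory with space of fields $\mc E$ on a compact manifold $M$, the global quantum observables are presented by the complex $\big(\widehat{\mr{Sym}}(\mc E(M)^\vee[-1]),\ \dd_{\mc E} + \hbar\Delta\big)$, where $\Delta$ is the BV Laplacian built from the $(-1)$-shifted symplectic pairing. On a compact $M$ this pairing descends to a perfect pairing on the cohomology $H^\bullet(\mc E(M))$: for the holomorphic twist this is Serre duality on Dolbeault cohomology, and for the B-twist it is Poincar\'e duality on de Rham cohomology. Perfectness is exactly the nondegeneracy condition under which the Gaussian integrates out all of the fields: the cohomology of $\big(\widehat{\mr{Sym}},\ \hbar\Delta\big)$ collapses onto a one-dimensional space, the determinant (Berezinian) line of $H^\bullet(\mc E(M))$, placed in the degree dictated by its grading.

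It then remains to identify the cohomology and the degree in each case. For the holomorphic twist the free elliptic complex is assembled from copies of the Dolbeault complex in all bidegrees valued in $\gg$, so its cohomology is $\mr H^{\bullet,\bullet}_{\ol\dd}(M)\otimes\gg$ and the observables are $\det\!\big(\mr H^{\bullet,\bullet}_{\ol\dd}(M)\otimes\gg\big)[d_M]$; here the hypothesis that $M = U/\Gamma$ is a discrete quotient of an open $U\subseteq\CC^2$ is what lets me globalize the translation-invariant factorization algebra on $\CC^2$ to $M$ by descent along $\Gamma$ (which acts by holomorphic isometries preserving all of the structure), so that the global sections genuinely compute Dolbeault cohomology of $M$. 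For the B-twist the theory is topological and hence, by the framing result above, defines an oriented $\bb E_4$-algebra, so factorization homology is defined on \emph{any} compact oriented $M$; the free complex is now built from the de Rham complex valued in $\gg$, giving $\det\!\big(\mr H^\bullet_{\mr{dR}}(M)\otimes\gg\big)[d_M]$. In both cases $d_M$ is controlled modulo $2$ by the super-dimension of the determinand, i.e.\ by the index of the twisted de Rham (resp.\ full Dolbeault) complex: for de Rham this index is $\chi(M)$ on the nose, and for Dolbeault the alternating sum of Hodge numbers again equals $\chi(M)$ via the Fr\"olicher spectral sequence, so $d_M \equiv \chi(M) \pmod 2$ in both cases.

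The main obstacle I anticipate is the globalization step for the holomorphic twist: one must verify that the free holomorphic factorization algebra really does descend along the $\Gamma$-action and that the resulting global observables compute the Dolbeault cohomology of $M$ rather than of $U$, including a careful identification of the BV pairing with Serre duality on $M$. The B-twist is comparatively clean, since orientation data suffices and Poincar\'e duality is available on every compact oriented $4$-manifold; there the only delicate point is the bookkeeping of the degree shift $d_M$ and its matching against $\chi(M)\bmod 2$ through the index of the twisted de Rham complex.
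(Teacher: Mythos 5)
Your overall strategy coincides with the paper's: both reduce the computation to the underlying free theory and then identify the global quantum observables of that free theory with a shifted determinant line, using the nondegeneracy of the $(-1)$-shifted pairing on cohomology (Serre duality in the holomorphic case, Poincar\'e duality in the topological case). The free-theory computation, the descent of the quantization along the discrete group action, and the bookkeeping of $d_M$ all match what the paper does; the paper packages the free computation as Lemma~\ref{obs as det}, whose proof filters by fiberwise weight and invokes Lemma 13.7.1.1 of \cite{Book2}.

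The genuine gap is in your justification for passing to the free theory. You argue that because the $\bb E_4$-algebra of local observables is formal (merely commutative), its factorization homology agrees with that of the underlying free theory. That inference is false: the factorization homology over a closed 4-manifold $M$ of the commutative algebra $\clie^\bu_{\mr{Lie}}(\gg[\eps])$ is (a completion of) $\clie^\bu_{\mr{Lie}}(\mr H^\bu(M)\otimes \gg[\eps])$ --- the \emph{classical} global observables of the interacting theory --- which is far from one-dimensional, so formality alone would steer you to the wrong answer rather than to the determinant line. What actually licenses the reduction is the free-to-interacting (anti-diagonal) filtration on the \emph{fiberwise polynomial} quantum observables $\obsq_{\mr{fp}}(M)$: the first page of the associated spectral sequence is the quantum observables of the free cotangent theory, your Gaussian computation shows that this page is one-dimensional, and it is precisely that one-dimensionality which forces the spectral sequence to collapse and identifies the interacting answer with the free one. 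Two further points of care: (i) the statement concerns the fiberwise \emph{polynomial} observables --- without that restriction the anti-diagonal filtration is not exhaustive and the convergence argument breaks down; and (ii) the collapse is a conclusion of the computation, not a hypothesis you may assume at the outset, so the order of your argument should be inverted: filter first, compute the free first page, observe it is one-dimensional, and only then conclude.
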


Things become even more interesting when we consider the factorization homology on the product $N \times \RR$, where $N$ is a compact oriented 3-manifold.  When we compute this factorization homology we obtain an $\bb E_1$ algebra (essentially a dg associative algebra).  We can study these $\bb E_1$ algebras using the data of a filtration: we can compute the first page of the spectral sequence precisely, and then investigate the modifications obtained by passing to later spectral sequence pages.  In particular, for the B-twist we obtain the following description.

\begin{theorem}[See Corollary \ref{nontriv def}]
The factorization homology of the B-twist on $N \times \RR$ is an algebra of differential operator type (see Definition \ref{def DOT}).  It can be computed using a spectral sequence whose first page is the algebra of differential operators on the formal moduli space associated to the abelian graded Lie algebra $\mr H^\bullet(N) \otimes |\gg|$, 
where $|\gg|$ is the underlying vector space of the gauge Lie algebra.
\end{theorem}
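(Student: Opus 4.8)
The plan is to reduce the four-dimensional computation to a one-dimensional one by pushing forward along the projection $\pi \colon N \times \RR \to \RR$, and then to extract the $\bb E_1$-algebra order by order using the free-to-interacting filtration. First I would invoke the pushforward (Fubini-type) property of factorization homology: since the framed $\bb E_4$-algebra $\mc A$ of the B-twist is locally constant and $N$ is a compact oriented $3$-manifold, the pushforward $\pi_* \mc A$ is a locally constant factorization algebra on $\RR$, whose value on an interval $I$ is the observables on the slab $N \times I$. A locally constant factorization algebra on $\RR$ is exactly the datum of an $\bb E_1$-algebra, namely $\int_N \mc A$ with multiplication induced by stacking slabs in the $\RR$-direction; this is the algebra whose structure the theorem describes.

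Next I would set up the spectral sequence of the free-to-interacting filtration, whose compatibility with this $\bb E_1$-structure is guaranteed by the Proposition above. The associated graded factorization algebra is that of the free B-twist, so the first page of the spectral sequence is the factorization homology of the \emph{free} theory on $N \times \RR$, again as an $\bb E_1$-algebra. The central computation is therefore to identify this free-theory answer. The free B-twist is an abelian BV theory built from the de Rham complex valued in $\gg$; taking cohomology along the compact fibre $N$ collapses $\Omega^\bullet(N) \otimes \gg$ to $\mr H^\bullet(N) \otimes \gg$, together with its Poincar\'e-dual partner, leaving a graded symplectic vector space propagating along $\RR$. Quantizing this linear phase space in the $\RR$-direction --- that is, computing the $\bb E_1$-structure of the resulting one-dimensional free theory --- produces the Weyl algebra of that symplectic vector space. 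Choosing the polarization given by the position Lagrangian identifies this Weyl algebra with the algebra of differential operators on the formal moduli space associated to the \emph{abelian} graded Lie algebra $\mr H^\bullet(N) \otimes |\gg|$, exactly as claimed for the first page; the bracket on $\gg$ does not yet appear because on the associated graded the interaction has been switched off.

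To conclude that the full interacting answer is an algebra of differential operator type in the sense of Definition \ref{def DOT}, I would argue that the later pages of the spectral sequence deform differential operators only within this class. The one-loop exactness of the quantization (Theorem \ref{anomaly_theorem}) guarantees that only finitely many quantum correction terms appear, and the cubic $BF$-type interaction of the B-twist contributes a Chevalley--Eilenberg-type differential that promotes the abelian $\mr H^\bullet(N) \otimes |\gg|$ to the genuine graded Lie algebra $\mr H^\bullet(N) \otimes \gg$, with bracket induced by the cup product on $N$ and the Lie bracket on $\gg$. Checking that these deformations respect the filtration by symbol order --- so that the result remains a filtered deformation of a sheaf of differential operators --- completes the identification.

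The hardest part will be the identification of the free-theory $\bb E_1$-algebra with differential operators. One must produce an explicit cochain-level model for the pushforward $\pi_* \mc A^{\mathrm{free}}$, verify that the complex computing its cohomology is quasi-isomorphic to the de Rham complex along $N$ tensored with the $\RR$-direction quantum mechanics, and then match the factorization product with the Weyl product together with the polarization-dependent identification. Controlling the homotopies in this quasi-isomorphism --- in particular ensuring compatibility with the $\bb E_1$-multiplication, not merely at the level of underlying complexes --- and tracking the cohomological shifts coming from Poincar\'e duality on $N$ are the principal technical obstacles.
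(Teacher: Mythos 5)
Your proposal follows essentially the same route as the paper: push forward along $\pi\colon N\times\RR\to\RR$ to get a locally constant factorization algebra (hence an $\bb E_1$-algebra), filter so that the first page is the free/abelian theory, identify that page with the Weyl algebra of the graded symplectic space built from $\mr H^\bullet(N)\otimes|\gg|$ (the paper does this via the factorization-envelope description of free-theory observables and Knudsen's enveloping-algebra theorem), and observe that the cubic $BF$ interaction restores the Lie bracket on later pages. The one point to be careful about is that the paper works with the \emph{fiberwise polynomial} quantum observables $\obsq_{\mr{fp}}$ and the associated anti-diagonal filtration of Definition \ref{antidiag filt}, rather than the free-to-interacting filtration on the full completed observables, since only for that restricted class is the filtration exhaustive and the associated graded literally the algebra of differential operators $\mr{Weyl}_{\mr{fp}}$ rather than a completion of it.
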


As we mentioned above, such constructions have the potential for many further interesting applications when we consider the factorization homology associated to a Riemann surface.
Especially interesting to us is the possibility of giving a direct, mathematical passage from the Kapustin--Witten theories to the quantum geometric Langlands correspondence.
Let us sketch how such an application might be developed, although we do not study this idea in detail in this paper.

Choose a closed, oriented 2-manifold $S$ and consider the compactification of a Kapustin-Witten theory $\mathcal T$ along $S$,
which produces a 2-dimensional topological field theory $\mc{T}_S$.
Our work immediately implies that the factorization algebra of observables $\obsq_{\mc{T}_S}$ of this theory is given by the framed $\bb E_2$-algebra $\int_S \obsq_{\mc{T}}$, namely the factorization homology of our framed $\bb E_4$-algebra over~$S$.

This construction is rather abstract but there are two concrete things to try:
\begin{itemize}
\item It is possible, using the machinery of \cite{Book1} and \cite{LiVertex}, to extract a dg vertex algebra from this construction,
and so the results of this paper lead to a plethora of potentially novel vertex algebras to study, each labeled by a twisting parameter $Q$, a complex reductive group $G$, a coadjoint orbit $x \in [\gg^*/G]$, and a surface~$S$.

\item Such $\bb E_2$ algebras appear throughout 2-dimensional field theory, often under more familiar names like Gerstenhaber algebras or Batalin--Vilkovisky algebras.
In the setting of homological mirror symmetry and the extended topological field theories of Baez--Dolan--Lurie, such algebras often appear as the Hochschild cohomology of categories of branes.
Optimistically, and following the ideas of \cite{EY2}, we might hope that $\obsq_{\mc{T}_S}$ has a natural relationship with the $\bb E_2$ algebras already known in geometric Langlands theory, such as those studied in \cite{ArinkinGaitsgory}.  
\end{itemize}

\begin{remark}
Let us elaborate on the last point a little more.  Given an $\bb E_{n-1}$-monoidal category $\mc C$, one can compute its Hochschild cochains, or the algebra of derived endomorphisms of the unit object.  
This process will yield an $\bb E_n$ algebra, which one views as parameterizing $\bb E_{n-1}$-monoidal deformations of $\mc C$.  
Most relevant for us here is the monoidal category $\mr{Sph}_G$, the spherical Hecke category of the geometric Langlands program (expected to arise as the category of line operators in an A- or B-twist of $\mc N=4$ gauge theory). 
Its algebra of Hochschild cochains can be identified with $\sym^\bullet(\gg[-2])^G$ as a filtered $\bb E_2$ algebra (see \cite[Section 12.4]{ArinkinGaitsgory}).  

Beraldo \cite{BeraldoTC} pushes this further and views $\mr{Sph}_G$ as an $\bb E_3$-monoidal category, using the 3-dimensional pair of pants, so the Hochschild cochains will have the structure of an $\bb E_4$-algebra.  
Beraldo then computes the factorization homology of $\mr{Sph}_G$ on a $d$-manifold $M$,
obtaining an $\bb E_{3-d}$-monoidal category that he identifies with the category $\bb H(\mr{LS}^{\mr{Betti}}_G(M))$ defined in \cite{BeraldoH}.
It would be very interesting to compare our results with Beraldo's, 
as his $\bb E_4$-algebra appears to coincide with the $\bb E_4$-algebra that we construct in this paper as quantum local observables of the B-twisted theory.~\hfill$\Diamond$\end{remark}

We plan to pursue these questions in the future, and we welcome conversations on these topics.

In other directions, our techniques apply to other gauge theories, 
notably the construction of families of factorization algebras (e.g., $\bb E_n$ algebras) over moduli of vacua.
It would be interesting to explore how these analogs of deformation quantization relate to prior work by physicists in related contexts, such as Seiberg--Witten theory.

\subsection*{Acknowledgements}
The authors would like to thank David Ben-Zvi, Justin Hilburn, Pavel Safronov, and Philsang Yoo for their helpful comments during the preparation of this paper. 
CE and OG would like to thank the participants of the informal Friday lunch seminar at UMass for providing much impetus towards this project, so our appreciation goes to Filip Dul, Andreas Hayash, Tina Kanstrup, Ivan Mirkovi\'c, and Alexei Oblomkov.
The National Science Foundation supported O.G. through DMS Grant No. 1812049.

\section{Review of the BV Formalism}

\subsection{Classical BV Formalism} \label{BV_review_section}

In this article we will study classical and quantum field theory using the \emph{Batalin--Vilkovisky (BV) formalism} \cite{BatalinVilkovisky}.  This is a model for perturbative field theory using homological algebra or, more accurately, formal derived geometry.  For a more detailed account of these techniques we refer the reader to~\cite{Book2}.

We think of a classical field theory as being encoded by the derived critical locus of its equations of motion.  Formally locally near a given solution, this derived critical locus can be described by a formal moduli space together with a $(-1)$-shifted symplectic structure. 
There is a convenient way to model a formal moduli space using purely algebraic objects:
by the fundamental theorem of deformation theory \cite{Hinich, LurieSAG}, a formal moduli space can be modeled by a dg Lie algebra or $L_\infty$ algebra.
Concretely, this claim means that for a formal space $\mc{X}$, there is an $L_\infty$ algebra $\gg_{\mc{X}}$ such that each map $f: \spec(R) \to \mc{X}$ corresponds to a solution of the Maurer-Cartan equation in $\mf{m}_R \otimes \gg_{\mc{X}}$, where $R$ is a local artinian connective dg algebra with maximal ideal $\mf{m}_R$.
Hence, we offer the following definition of a classical field theory.

\begin{definition}
A perturbative \emph{classical field theory} on a manifold $M$ is a sheaf $\mc E$ of $\ZZ$-graded $C^\infty_M$-modules on $M$ together with
\vspace{-10pt}
\begin{enumerate}
\item A $L_\infty$ algebra structure $\{[\cdot]_k\}$ on the shift $\mc E[-1]$ where every higher bracket $[\cdot]_k$ is a polydifferential operator.  We will often denote the one-fold bracket $[\cdot]_1$ by~$\d_{\mc E}$.
\item A $(-1)$-shifted symplectic structure $\omega$.  That is, a skew-symmetric non-degenerate pairing
\[\omega \colon \mc E \otimes \mc E \to \mr{Dens}_M[-1]\]
of graded $C^\infty_M$-modules.
\end{enumerate}
The pairing must be invariant (i.e., cyclic) for the $L_\infty$ brackets.
\end{definition}

We call such a sheaf of a $L_\infty$ algebras, where each bracket is built from differential operators, a {\em local} $L_\infty$ algebra. 
We will sometimes refer to $(\mc E, \d_{\mc E})$ as the \emph{classical BV complex} of the classical field theory.
Note that we will later weaken the $\ZZ$-grading in the definition to a mere $\ZZ/2$-grading. 

\begin{remark}
A feature (at times, irritating) of our definition is that the fields $\mc E$ do not form a local $L_\infty$ algebra, but instead form a {\em shifted} local $L_\infty$ algebra.
It is often convenient simply to describe the local $L_\infty$ algebra directly and leave it to the reader to recover the fields by applying a cohomological shift down by~1. 
~\hfill$\Diamond$\end{remark}

To connect with more standard ways of describing field theories,
it is helpful to bear in mind the following construction.
The local $L_\infty$ structure on $\mc E$ defines an action functional $S$ for the theory by
\[
S(\varphi) = \omega(\varphi, \d_{\mc E} \varphi) + \sum_{k \geq 2} \frac{1}{k!} \int_M \omega(\varphi, [\varphi, \ldots, \varphi]_k).
\]
The first term is quadratic, while the rest
\[
I(\varphi) = \sum_{k \geq 2} \frac{1}{k!} \int_M \omega(\varphi, [\varphi, \ldots, \varphi]_k)
\] 
we call the classical interaction.
The action functional satisfies the \emph{classical master equation}
\[\d_{\mc E}I + \frac 12 \{I,I\} = 0.\]
Here the shifted Poisson bracket $\{-,-\}$ on local functionals is defined using the shifted symplectic structure on fields.
In fact, every such solution to the classical master equation is equivalent to a perturbative classical field theory as in the above definition. 
Hence it is straightforward to translate between traditional presentations of classical BV theories (by action functionals) and the presentation by local $L_\infty$ algebras with shifted pairings.
(See \cite[Chapter 5.3]{CostelloBook} and \cite[Section 1.2]{ESW} for details.)

\begin{remark}
Together with the differential $\d_{\mc E} + \{I,-\}$ the bracket $\{-,-\}$ equips the (shifted) space of local functionals $\oloc(\mc E)[-1]$ with the structure of a dg Lie algebra. 
The classical master equation is, in fact, equivalent to the Maurer--Cartan equation in this dg Lie algebra.
Hence this dg Lie algebra $\oloc(\mc E)[-1]$ encodes the formal deformations of the classical field theory. 
~\hfill$\Diamond$\end{remark}

The theories we will study in this paper, arising as twists of supersymmetric gauge theories, are of a particularly nice class: they are holomorphic in the following sense.

\begin{definition}
A classical field theory on a complex manifold $X$ is called \emph{holomorphic} if $\mc E$ arises as the sheaf of smooth sections of a graded holomorphic vector bundle equipped with a holomorphic differential operator, and the local $L_\infty$ brackets on $\mc E[-1]$ are given by holomorphic polydifferential operators.
\end{definition}

As we will discuss in the next section, perturbative quantization becomes cleaner and simpler when we restrict to holomorphic theories~\cite{BWhol}.

We'll build classical field theories using a few key constructions, which we now summarize.
\begin{example} \label{CS_example}
 Let $\gg$ be an $L_\infty$ algebra equipped with a non-degenerate invariant pairing. 
 \vspace{10pt}
 \begin{enumerate}
  \item Let $X$ be a Calabi--Yau manifold of dimension $m$.
  Consider the space of fields
  \[ \mc E = (\Omega^{0,\bullet}_X \otimes \gg[1], \ol \dd),\]
  with $L_\infty$ structure on $\mc E[-1]$ inherited from the $L_\infty$-structure on $\gg$ and the wedge pairing of Dolbealt forms.  The Calabi--Yau structure on $X$ and the invariant pairing on $\gg$ equips $\mc E$ with a $(2-m)$-shifted symplectic pairing.  
   \item Let $Y$ be a complex manifold of dimension $n$.  Consider the space of fields
  \[\mc E = (\Omega^{\bullet,\bullet}_Y \otimes \gg[1], \ol \dd),\]
  with $L_\infty$ structure on $\mc E[-1]$ inherited from the $L_\infty$-structure on $\gg$ and the wedge pairing of $(p,q)$ forms.  The invariant pairing on $\gg$ equips $\mc E$ with a $(2-2n)$-shifted symplectic pairing.
   \item Let $M$ be a smooth manifold of dimension $d$.  Consider the space of fields
  \[\mc E = (\Omega^{\bullet}_M \otimes \gg[1], \d_{\mr{dR}}),\]
  with $L_\infty$ structure on $\mc E[-1]$ inherited from the $L_\infty$-structure on $\gg$ and the wedge pairing of differential forms.  The invariant pairing on $\gg$ equips $\mc E$ with a $(2-d)$-shifted symplectic pairing.
  \item We can combine these constructions, and consider the theory on $X \times Y \times M$ whose fields are
  \[\mc E = (\Omega^{0,\bullet}_X \otimes \Omega^{\bullet, \bullet}_Y \otimes \Omega^\bullet_M \otimes \gg[1], \ol \dd_X + \ol \dd_Y + \d_M).\]
  It is equipped with a $(2-d-m-2n)$-shifted symplectic structure.
 \end{enumerate}
 \end{example}
 
So far, these constructions only define classical field theories under an assumption on the dimension, ensuring that the shifted symplectic structure has degree $-1$ (such theories are called \emph{generalized Chern--Simons theories}).  We can define, however, a classical field theory in any dimension using the following construction.
 
\begin{definition}
Let $L$ be a local $L_\infty$ algebra on $M$ given by smooth sections of a finite-dimensional graded vector bundle on $M$.  
The \emph{cotangent theory} associated to $L$ is the theory whose fields are
\[\mc E = L[1] \oplus L^![-2],\]
where $L^! = L^* \otimes \dens_M$.  The $L_\infty$ structure on $\mc E[-1]$ is defined using the $L_\infty$ structure on $L$ and the coadjoint action of $L$ on $L^*$, and the shifted symplectic pairing is defined by pairing together the two summands. 
\end{definition}

\begin{definition}
Let $X$ and $Y$ be complex manifolds, and let $M$ be a smooth manifold.  We define a \emph{generalized BF theory} on $X \times Y \times M$ to be the cotangent theory to the local $L_\infty$ algebra
\[(\Omega^{0,\bullet}_X \otimes \Omega^{\bullet, \bullet}_Y \otimes \Omega^\bullet_M \otimes \gg , \ol \dd_X + \ol \dd_Y + \d_M).\]
See \cite[Section 1.6.1]{ESW} for more details.
\end{definition}

 We will conclude this section with one more piece of notation: a formal version of Simpson's Hodge stack $X_{\mr{Hod}}$ of a smooth scheme $X$ \cite{Simpson}.  
Simpson defined a derived stack $X_{\mr{Hod}}$ over $\bb A^1$ that interpolates between the de Rham stack of $X$ at the point $1 \in \bb A^1$, and the Dolbeault stack of $X$ (the 1-shifted tangent space $T[1]X$) at $t=0$.

\begin{definition}
The \emph{Hodge algebra} $L_{t\text{-Hod}}$ of an $L_\infty$ algebra $L$ is defined to be the $\CC[t]$-linear $L_\infty$ algebra $(L[1] \oplus L) \otimes \CC[t]$ with $L_\infty$ structure inherited from the $L_\infty$ structure on $L$ along with the adjoint action of $L$ on $L[1]$, where we add the operator $t \cdot \mr{id} \colon L \otimes \CC[t][1] \to L \otimes \CC[t]$ to the differential.  

We write $L_\mr{Dol}$ for the $L_\infty$ algebra obtained by evaluating the parameter at $t=0$.  We write $L_{\mr{dR}}$ for the contractible $L_\infty$ algebra obtained by evaluating the parameter at $t=1$.  
\end{definition}

To motivate this terminology, we unpack some consequences of this definition.
Let $L$ be an $L_\infty$ algebra that we view as encoding a formal moduli space $\mc{X}$.
Under the dictionary of derived deformation theory, the algebra $\mc{O}(X)$ of functions on $\mc{X}$ is encoded by $\clie^\bu_{\mr{Lie}}(L)$, the Chevalley-Eilenberg cochains of $L$.
Observe that 
\[
\clie^\bu_{\mr{Lie}}(L_{\mr{Dol}}) = \clie^\bu_{\mr{Lie}}(L, \sym(L^*[-2])) = \mc{O}(T[1]\mc{X}),
\]
the de Rham forms on $\mc{X}$ (i.e., the complex without the differential!).
Similarly,
\[
\clie^\bu_{\mr{Lie}}(L_{\mr{dR}}) = \Omega(\mc{X}),
\]
the de Rham complex. 
Indeed, Simpson's deformation turns on the de Rham differential.

\subsection{Quantum BV formalism} \label{quantum_section}

Let us now discuss quantization in the BV formalism, as articulated in \cite{CostelloBook}.  The theories we will study in this paper will all be particularly amenable to quantization.  The general procedure has two steps.
\begin{enumerate}
 \item First, we construct an effective collection of perturbative field theories, i.e. a collection $\{I[\Lambda]\}_{\Lambda \in \RR_{>0}}$ of effective interaction functionals in $\OO^+(\mc E)[\![\hbar]\!]$, which are compatible with renormalization group flow.  That is, for $\Lambda_1 < \Lambda_2$ we can determine $I[\Lambda_2]$ as the sum of weights associated to Feynman diagrams using the scale $\Lambda_1$ interaction $I[\Lambda_1]$.  
We refer to such a set as an {\em effective collection} of interactions.
 
 There is a standard method for the construction of such a collection of effective interactions by constructing counterterms associated to the singular parts of certain Feynman integrals.  
 
 \item Next, we must verify that the effective collection we have constructed satisfies the scale $\Lambda$ \emph{quantum master equation} for every value of $\Lambda$.  At a fixed scale $\Lambda$, one can define a shifted Poisson bracket $\{-,\}_\Lambda$ on local functionals and BV operator $\Delta_\Lambda$ inverse to the symplectic form.  The quantum master equation encodes the following compatibility between the effective action and the BV operator:
\begin{equation}\label{eqn:qme}
\d_{\mc E}I[\Lambda] + \frac 12\{I[\Lambda],I[\Lambda]\}_\Lambda + \hbar \Delta_\Lambda I[\Lambda] = 0.
\end{equation}
In practice, there may be an obstruction to the existence of a quantization of a given classical field theory solving the quantum master equation up to a given order in $\hbar$.  This obstruction, sometimes called an {\em anomaly}, is given by a class in $\mr H^1(\oloc(\mc E))$ that can be computed explicitly in terms of Feynman diagrams for the given theory.
\end{enumerate}

We gain a lot, however, by restricting attention to holomorphic theories of cotangent type.  First, let us use the cotangent type assumption.

The following simple combinatorial fact will be extremely useful for understanding the Feynman diagrams that can occur in cotangent type theories.

\begin{lemma}\label{graph_lemma}
Every connected directed multigraph where each vertex includes at most one incoming edge takes the form of a directed loop, with a number of outgoing directed trees attached.
\end{lemma}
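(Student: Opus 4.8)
The plan is to prove this by a short edge-counting argument followed by the standard structure theory of functional graphs. First I would constrain the number of edges. Writing $V$ and $E$ for the numbers of vertices and edges, the in-degree hypothesis gives $E = \sum_{v} \deg^{\mathrm{in}}(v) \le V$, while weak connectivity forces $E \ge V-1$. Hence $E \in \{V-1, V\}$, and in particular the first Betti number $E - V + 1$ of the underlying undirected multigraph is at most $1$: the graph contains at most one independent cycle. This is the conceptual heart of why ``at most one loop'' occurs, which is exactly the consequence the subsequent Feynman-diagram argument needs.

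Next I would pin down the orientation by passing to the reverse graph $F$, obtained by reversing every edge of the given graph $G$. The in-degree condition on $G$ becomes the condition that every vertex of $F$ has out-degree at most one. In the generic case $E = V$ every in-degree is exactly $1$, so every out-degree in $F$ is exactly $1$ and $F$ is a \emph{functional graph}, whose out-edge at $v$ records the unique incoming edge of $v$ in $G$. The key step is then the standard analysis of a weakly connected functional graph: tracing the unique out-edges from any vertex must, by finiteness, eventually revisit a vertex and so trace out a directed cycle, and since $F$ has cycle rank $1$ this cycle is the only one. Deleting its edges leaves a forest every edge of which is oriented toward the cycle, i.e. a collection of in-trees rooted on the cycle; reversing orientation back to $G$ turns these into the asserted directed cycle with outgoing trees attached.

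Finally I would dispose of the remaining case $E = V-1$, where exactly one vertex has in-degree $0$ and all others in-degree $1$. Here $G$ is a tree, and the same predecessor-tracing argument (following the unique incoming edge backward from any vertex until it terminates) shows that $G$ is an out-tree rooted at the unique source. This is the degenerate instance of the stated form in which no directed loop is present, so in either case the graph has the claimed shape.

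The main obstacle I anticipate is not substantive but one of careful bookkeeping in the \emph{multigraph} setting, where one must verify that self-loops and anti-parallel pairs remain compatible with the in-degree bound. A self-loop at $v$ and an anti-parallel $2$-cycle through $v$ each contribute exactly $1$ to $\deg^{\mathrm{in}}(v)$ and are therefore permitted, realizing directed loops of lengths $1$ and $2$, whereas two parallel edges pointing into the same vertex are forbidden. Once these degenerate loops are correctly accounted for, the cycle-rank computation and the functional-graph structure theorem apply verbatim, and the dichotomy above yields the lemma.
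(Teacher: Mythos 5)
Your argument is correct and complete: the edge count $V-1\le E\le V$ pins the cycle rank at $0$ or $1$, the reversed functional-graph analysis handles $E=V$, and the $E=V-1$ case correctly yields the degenerate out-tree (which is the reading needed for Corollary~\ref{cotangent_cor}, since that corollary only asserts \emph{at most} one loop). The paper itself offers no proof of Lemma~\ref{graph_lemma}, treating it as a standard combinatorial fact, so there is no argument to compare against; your write-up, including the care taken with self-loops and anti-parallel edges in the multigraph setting, would serve as a valid proof.
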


\begin{corollary} \label{cotangent_cor}
The only non-zero Feynman diagrams that can occur in a theory of cotangent type have at most one loop.
\end{corollary}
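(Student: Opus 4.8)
The plan is to exploit the rigid two-sector structure of a cotangent theory and reduce the statement to the purely combinatorial Lemma~\ref{graph_lemma}. Recall that for a cotangent theory the fields split as $\mc E = L[1] \oplus L^![-2]$ into a ``base'' summand and a ``fiber'' summand, and that both the shifted symplectic pairing and the interaction respect this splitting in a very constrained way. First I would record the two structural facts that drive the proof. On the one hand, the pairing $\omega$ vanishes on $L[1] \otimes L[1]$ and on $L^![-2] \otimes L^![-2]$, pairing only base against fiber; hence the quadratic part of the action is purely off-diagonal and its inverse, the propagator, joins a base half-edge at one end to a fiber half-edge at the other. On the other hand, every interaction vertex carries \emph{exactly one} fiber half-edge: the vertices built from $\omega(\varphi,[\varphi,\ldots,\varphi]_k)$ acquire one fiber leg from the outer pairing slot (a pure-base term would pair $L[1]$ with $[\,\cdot\,]_k \in L[1]$ and so vanish), while a second fiber leg is impossible because the fiber is an abelian module, i.e. $[L^!,L^!]=0$. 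Equivalently, the action is linear in the fiber field.

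With these facts in hand, the next step is the orientation argument. Any Feynman diagram with nonzero weight must have all of its internal edges of base--fiber type, since there is neither a base--base nor a fiber--fiber propagator; this is exactly the content of the phrase ``non-zero Feynman diagrams.'' I would orient each such internal edge from its base end toward its fiber end. Since each vertex has precisely one fiber half-edge, at most one internal edge can terminate at a given vertex on its fiber side, the unique fiber leg being either internal (contributing one incoming edge) or external (contributing none). Thus every vertex of the resulting directed multigraph has at most one incoming edge, while its out-degree, fed by the many base half-edges, is unconstrained.

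This is precisely the hypothesis of Lemma~\ref{graph_lemma}, so each connected component of the diagram is a single directed loop with a collection of outgoing directed trees attached. Such a graph has first Betti number at most one: the trees contribute no cycles and the lone directed loop contributes exactly one. Hence the number of loops of each connected component of the Feynman diagram is at most one, which is the assertion. One can double-check the bound by counting directly: the in-degree constraint forces $E \le V$ for the internal edges and vertices of a connected component, whence its loop number $E - V + 1 \le 1$.

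The only genuinely delicate point, and the step I would take the most care over, is the verification that every interaction vertex carries exactly one fiber half-edge and that the propagator is strictly base--fiber; this is what converts the algebraic content of the definition of a cotangent theory into the combinatorial in-degree condition of Lemma~\ref{graph_lemma}. In particular one must check that the coadjoint-action brackets do not secretly introduce a vertex quadratic in the fiber field — they do not, because the two descriptions $\omega(\beta,[\alpha,\alpha])$ and $\omega(\alpha,[\alpha,\beta])$ are the same cyclic term, and $[\beta,\beta]=0$. Once the orientation convention is fixed so that fiber half-edges are exactly the incoming ones, the remainder of the argument is formal.
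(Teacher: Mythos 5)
Your proof is correct and is essentially the paper's own argument: both orient each internal edge from its base end to its fiber end, observe that every interaction vertex carries at most one fiber half-edge so that in-degrees are at most one, and then invoke Lemma~\ref{graph_lemma}. The additional details you supply (the vanishing of base--base and fiber--fiber pairings, and the Euler-characteristic check $E - V + 1 \le 1$) only make explicit what the paper leaves implicit.
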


\begin{proof}
Consider a cotangent theory with BV complex $\mc E = T^*[-3]\mc L$, and denote fields in the base factor $\mc L$ by $\alpha$, and fields in the fiber factor $\mc L^![-3]$ by $\beta$.  The shifted Poisson structure pairs $\alpha$ and $\beta$ fields, so we can direct the edges of Feynman diagrams for such a theory by labelling one input of the propagator by $\alpha$, and the other by $\beta$.  The $L_\infty$ structure on $\mc E$ is inherited from the $L_\infty$ structure on $\mc L$ and the action of $\mc L$ on its dual, which means that all vertices in the Feynman diagram include either zero or one $\beta$ factors.  The claim then follows from Lemma~\ref{graph_lemma}.
\end{proof}

\begin{example} \label{tadpole_example}
For an example of a wheel, consider the ``tadpole'' diagram, corresponding to a wheel with one external leg.  
Suppose that our classical BV complex takes the form
\[\mc E = \mc E' \otimes \gg,\]
where $\mc E'$ is a $(-3)$-shifted symplectic complex, and where $\gg$ is a finite-dimensional Lie algebra equipped with a non-degenerate invariant pairing (such theories need not necessarily be of cotangent type).  
The weight of the diagram when evaluated on a field $\phi \otimes X \in \mc E' \otimes \gg$ is proportional to the trace $\mr{tr}_\gg(X)$ of $X$, taken in the adjoint representation.  
If $\gg$ is unimodular --- in particular, if $\gg$ is reductive --- this trace automatically vanishes and therefore the weight of the diagram does too.  
\end{example}

Let us now take advantage of the holomorphic assumption on a classical field theory.  We can use this assumption in two ways.  Firstly, the construction of a prequantization is straightforward: there are no counter-terms.

\begin{theorem}[{\cite[Theorem 3.1]{BWhol}}] \label{prequant_thm}
 Any holomorphic classical field theory on $\CC^2$ admits a one-loop effective collection involving no counterterms.  
 In particular, any holomorphic theory of cotangent type admits such an effective collection that is exact at one-loop.
\end{theorem}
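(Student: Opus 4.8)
The plan is to construct the effective collection $\{I[\Lambda]\}$ directly by renormalization-group flow, using a holomorphic gauge fixing, and then to show that the short-distance limit $\epsilon \to 0$ exists without counterterms. I would first equip $\CC^2$ with its flat Hermitian metric and take the gauge-fixing operator to be $\ol\dd^*$, so that the associated Laplacian acts as (a multiple of) the scalar heat operator on each Dolbeault component. The scale-$(\epsilon,\Lambda)$ propagator is then
\[P(\epsilon,\Lambda) = \int_\epsilon^\Lambda (\ol\dd^* \otimes 1) K_t\, dt,\]
where $K_t$ is the heat kernel on $\CC^2$. The structural feature driving the whole argument is that applying $\ol\dd^*$ to the Gaussian $e^{-|z-w|^2/4t}$ produces a prefactor linear in the holomorphic coordinate differences $z_i - w_i$ together with an antiholomorphic form factor; carrying out the $t$-integral gives a Bochner--Martinelli-type kernel whose singularity along the diagonal is of locally integrable order $|z-w|^{-3}$.

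With the propagator in hand, I would write the candidate effective interaction as $I[\Lambda] = \lim_{\epsilon \to 0} \sum_\Gamma w_\Gamma\big(P(\epsilon,\Lambda), I\big)$, a sum over connected graphs $\Gamma$ of the usual Feynman weights built from the propagator on edges and the holomorphic polydifferential interaction vertices; renormalization-group compatibility between two scales is then formal. Each weight $w_\Gamma$ is an integral over the configuration space of vertices in $\CC^2$ and over the edge parameters $t_e \in [\epsilon, \Lambda]$, and the content of the theorem is that this integral has a finite $\epsilon \to 0$ limit with no subtraction.

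The heart of the matter --- and the step I expect to be the main obstacle --- is the analytic estimate producing this convergence uniformly over all graphs. The mechanism is a combination of scaling and Dolbeault degree counting: the holomorphic prefactors $z_i - w_i$ carried by every propagator, together with the constraint that the antiholomorphic form degree absorbed at each vertex on $\CC^2$ cannot exceed $2$, restrict which subgraphs can contribute a singular local part as their edge parameters collapse. One then checks that the candidate counterterms --- the local functionals arising as the singular part of the naive $\epsilon$-expansion --- must themselves be holomorphic, translation-invariant local functionals of a prescribed scaling weight, and a weight/form-degree count shows that the relevant space of such functionals on $\CC^2$ vanishes. Controlling this for nested and overlapping subdivergences simultaneously is the genuine technical labour; it is here that the specific complex dimension is used. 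Granting the estimate, the limit defines $\{I[\Lambda]\}$ with no counterterms.

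Finally, for the cotangent-type refinement I would invoke Corollary \ref{cotangent_cor}: in a theory of cotangent type, only Feynman graphs with at most one loop can contribute. The effective collection just constructed therefore reduces to tree-level graphs, which reassemble the classical interaction, together with a single family of one-loop graphs contributing the unique order-$\hbar$ correction, with no higher-loop terms to control. Combined with the absence of counterterms, this exhibits the effective collection as exact at one loop.
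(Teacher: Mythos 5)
First, note that the paper does not prove this statement itself: it is imported verbatim from \cite[Theorem 3.1]{BWhol}, so your attempt should be measured against that proof. Your skeleton matches it: the holomorphic gauge $\ol\dd^*$, the heat-kernel regularized propagator, the direct construction of $I[\Lambda]$ as a sum of Feynman weights with RG compatibility being formal, and the reduction of the cotangent-type statement to Corollary \ref{cotangent_cor} (trees plus a single one-loop family, no higher loops) are all exactly right; that last paragraph is correct as written.

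The gap is in the middle, which is the only nontrivial content of the theorem, and which you have asserted rather than established. Three concrete points. (i) The prefactor produced by $\ol\dd^*$ is \emph{anti}holomorphic, $\bar{z}_j - \bar{w}_j$, together with the form factors $(\d\bar{z}_i - \d\bar{w}_i)$ --- compare equation \eqref{propagator_eq}; this matters because the convergence mechanism runs through antiholomorphic form-degree saturation on $(\CC^2)^k$, not through holomorphic numerators. (ii) Your appeal to ``locally integrable order $|z-w|^{-3}$'' does not close the argument: a wheel with $k$ edges on $\CC^2$ has superficial degree of divergence $3k$ against configuration-space dimension $4(k-1)$, so naive power counting fails for every $k \le 4$. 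The actual proof in \cite{BWhol} performs the Gaussian integrals over vertex positions explicitly and tracks the powers of the proper-time parameters $t_e$ contributed by the $\bar{z}_j - \bar{w}_j$ numerators and the form factors; wheels with too few vertices vanish identically for form-degree reasons and the remaining $t$-integrals converge. Your alternative route --- classify potential counterterms as translation-invariant holomorphic local functionals of prescribed scaling weight and show that space vanishes --- is not the argument in \cite{BWhol}, and you give no computation showing that the relevant space of local functionals on $\CC^2$ is actually zero; as stated it is an assertion, not a proof. (iii) The worry about ``nested and overlapping subdivergences'' is misplaced for a one-loop statement: a one-loop graph is a wheel with trees attached, tree weights converge for any theory by Costello's general results, and the only divergent region is the simultaneous collapse of all wheel edges. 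So there is a single integral to control, but controlling it genuinely requires the explicit estimate you have black-boxed.
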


\subsubsection{The holomorphic gauge}
\label{sec:holgauge}

It will be useful to explicitly characterize the form of the effective action $I[\Lambda]$ for these types of theories. 
As mentioned above it is given as a sum of weights of Feynman graphs 
\[
I[\Lambda] = \sum_{\Gamma} W_\Gamma(P_{0<\Lambda} , I) 
\]
where the sum is over connected directed multigraphs and $P_{0<\Lambda}$ is the scale $\Lambda > 0$ propagator.

The propagator is constructed by the following method.
A characterizing feature of a holomorphic theory on $\CC^n$ is that the kinetic term in the action involves the $\dbar$ operator for some holomorphic bundle. 
The propagator $P_{0<\Lambda}$ is a regularization of the integral kernel for the distributional operator $\dbar^{-1}$. 
Our choice of regularization starts by fixing the flat metric on $\CC^n$ and uses the standard heat kernel regularization of the inverse Laplacian $\triangle^{-1}$. 
We then use the formal distributional equation $\dbar^{-1} = \dbar^* \triangle^{-1}$ to define our propagator.  
Here $\dbar^*$ is the gauge-fixing operator. 
Details of this construction can be found in~\cite[Section 3]{BWhol}.

The propagator is a distribution valued in the symmetric square of the space of fields ${\rm Sym}^2(\Bar{\mc{E}})$. 
The space of fields of any holomorphic theory on $\CC^2$ is of the form $\Omega^{0,\bu}(\CC^2) \otimes E_0$ where $E_0$ is some vector space,
so we can view the propagator as a distributional section of $E \boxtimes E$ over $\CC_z^2 \times \CC_w^2$.
As such, it takes the following form
\begin{equation} \label{propagator_eq}
P_{0<\Lambda} (z,w) = \int_{t=0}^\Lambda \d t \frac{1}{(2 \pi i t)^d} \sum_{j=1}^d (-1)^{j-1}  \left(\frac{\zbar_j - \Bar{w}_j}{4 t} \right)  e^{-|z-w|^2 / 4t}  \prod_{i \ne j}^d (\d \zbar_i - \d \Bar{w}_i) \otimes c_{E_0}
\end{equation}
where $c_E$ is an element of ${\rm Sym}^2(E_0)$ that is specific to the given theory.

The weight $W_\Gamma(P_{0<\Lambda} , I)$ associated to a graph is evaluated as follows. 
The graph $\Gamma$ consists of two types of vertices, univalent ones and non-univalent ones. 
The univalent vertices label the input fields.
One associated the interaction $I$ to each internal (i.e. non-univalent) vertex (which necessarily has valency $\geq 3$). 
One associates the propagator $P_{0<\Lambda}$ to each internal edge.  
The weight of the diagram is obtained from the expression 
\[P_{0<\Lambda}^{\otimes \#\text{internal edges}} \otimes I^{\otimes \#\text{internal vertices}}\]
by tensor contraction according to the shape of the diagram.  
If the graph has $v$ vertices and $e$ edges, this weight is presented as an integral over $\CC^{2v}$ with $e$ insertions of the propagator $P_{0<\Lambda}$. 
See \cite[Section 2.3.6]{CostelloBook} for a more detailed account.
See also \cite[Section 3]{BWhol} specifically for the construction of weights in holomorphic theories.

If you unwind these definitions in the case of a simple wheel-shaped Feynman diagram (see Figure \ref{fig:anomaly} below), with $n$ trivalent vertices, one finds the following expression for the weight:
\[
I_n[\Lambda](\alpha) = 
\int_{(z_1, \ldots, z_n) \in (\CC^2)^n}
\tr\big( \alpha(z_1) \wedge P_{0<\Lambda}(z_1,z_2) \wedge \alpha(z_2) \wedge \cdots \wedge \alpha(z_n) \wedge P_{0<\Lambda}(z_n,z_1) \big),
\]
where the propagator $P_{0<\Lambda}$ is computed using the expression from equation \ref{propagator_eq} above.  In the limit as $\Lambda$ goes to infinity, the propagator is a kind of inverse to $\d_{\mc E}$, so this expression can be through of a version of $\tr(\d_{\mc E}^{-n})$. Altogether, the one-loop quantum correction can be seen as encoding a determinant.

\subsubsection{One-loop Anomalies}

Now it remains to compute the obstruction to the one-loop anomaly to the quantum master equation: if this vanishes then a holomorphic cotangent theory admits a one-loop exact quantization.  We can compute this anomaly in terms of a single Feynman diagram using the following result.

\begin{theorem}[{\cite[Lemmas 4.6 and 4.7]{BWhol}}] \label{wheel_thm}
The one-loop anomaly for a holomorphic field theory on $\CC^d$ is given by the weight associated to a wheel Feynman diagram with $(d+1)$ internal edges, see Figure~\ref{fig:anomaly}. 
\begin{figure}[h]
\begin{center}
\begin{tikzpicture}[line width=.2mm, scale=1.5]
		\draw[fill=black] (0,0) circle (1cm);
		\draw[fill=white] (0,0) circle (0.99cm);
		\draw (140:2) -- (145:1);
		\draw (145:2) -- (145:1);
		\draw (150:2) -- (145:1);
		\draw (210:2) -- (215:1cm);
		\draw (220:2) -- (215:1cm);
		\draw (27.5:2) -- (35:1);
		\draw (42.5:2) -- (35:1);
		\draw (-35:2) -- (-35:1cm);
        \clip (0,0) circle (1cm);
\end{tikzpicture}
\caption{An example of a wheel Feynman diagram, i.e. a one-loop Feynman diagram whose internal edges are arranged in a closed loop.}
\label{fig:anomaly}
\end{center}
\end{figure}
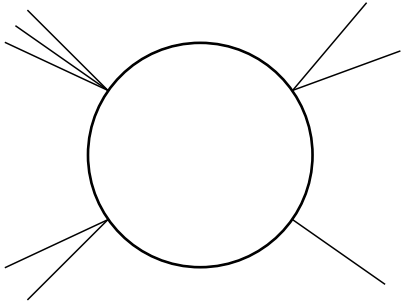
\end{theorem}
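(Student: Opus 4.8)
The plan is to identify the one-loop anomaly with the weight of a single wheel by a Dolbeault form-degree count. First I would recall the general shape of the obstruction: by Costello's analysis of the scale-$\Lambda$ quantum master equation \eqref{eqn:qme}, the failure of the prequantization of Theorem~\ref{prequant_thm} to satisfy the QME at order $\hbar$ is a class in $\mr{H}^1(\oloc(\mc E))$ computed as a limit $\lim_{\epsilon \to 0}$ of a sum of weights of connected graphs of first Betti number one. Each such graph is a wheel: a cyclic arrangement of $k$ internal vertices (each carrying the classical interaction, with its remaining legs labelled by the input field $\alpha$) joined by $k$ internal edges, one of which is the BV heat kernel $K_\epsilon$ closing the loop while the remaining $k-1$ carry the propagator $P_{\epsilon<\Lambda}$. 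Writing $\Theta_k$ for the weight of the $k$-vertex wheel, the content of the theorem is that $\Theta_k = 0$ for every $k \neq d+1$, so that the anomaly reduces to the single diagram of Figure~\ref{fig:anomaly}, which has $d+1$ internal edges.

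Next I would record the antiholomorphic form-types of the two kernels on $\CC^d$. By \eqref{propagator_eq} the propagator $P_{\epsilon<\Lambda}$ is a sum of terms each carrying the $(d-1)$ antiholomorphic one-forms $\prod_{i\neq j}(\d\zbar_i - \d\Bar{w}_i)$, whereas the heat kernel $K_\epsilon$ closing the loop carries the full $d$ one-forms $\prod_{i=1}^d(\d\zbar_i - \d\Bar{w}_i)$. Crucially, both kernels depend only on the \emph{difference} of their two arguments. I would then use translation invariance to split the integral over $(\CC^d)^k$ into an overall ``centre of mass'' copy of $\CC^d$, which produces the local density of the functional $\Theta_k = \int_{\CC^d}(\cdots)$, and a ``relative'' configuration space $(\CC^d)^{k-1}$ over which the Gaussian factors are integrated.

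The heart of the argument is a saturation count for antiholomorphic degree. Since $\Theta_k$ is a local functional, its density is a top $(d,d)$-form at the centre-of-mass point, so its antiholomorphic part is a $(0,d)$-form; because the propagators and the heat kernel depend only on coordinate differences, this centre-of-mass $(0,d)$-form can be supplied only by the input fields $\alpha$. The relative integral over $(\CC^d)^{k-1}$, on the other hand, is nonzero only if the integrand restricts there to a top antiholomorphic form, of degree $d(k-1)$; and the only antiholomorphic one-forms available in the relative directions are the difference one-forms carried by the $k-1$ propagators and the closing kernel, namely $(k-1)(d-1) + d$ of them. Comparing,
\[
(k-1)(d-1) + d = d(k-1) \iff k = d+1 .
\]
When $k < d+1$ the kernels supply strictly more than $d(k-1)$ relative antiholomorphic one-forms, so the integrand is an over-saturated antiholomorphic form and vanishes identically; when $k > d+1$ they supply strictly fewer, so the relative integrand is not of top degree and its integral vanishes. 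Hence only the $k = d+1$ wheel can contribute, which is exactly the diagram with $d+1$ internal edges. As a consistency check, for $k = d+1$ the fields then carry exactly $\sum_i q_i = k-1 = d$ antiholomorphic one-forms, matching the $(0,d)$-form output, while the relative kernels carry $(k-1)(d-1)+d = d^2 = d(k-1)$, matching the relative top degree.

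The main obstacle, and the step requiring genuine care, is making this centre-of-mass/relative decomposition rigorous at the level of differential forms: one must verify that the difference one-forms of the propagators and kernel genuinely span only the relative antiholomorphic directions while the absolute (centre-of-mass) antiholomorphic direction is spanned by the fields, so that the degree bookkeeping above is exact rather than heuristic. In tandem one must control the $\epsilon \to 0$ limit --- establishing that for the distinguished value $k = d+1$ the Gaussian integral over $(\CC^d)^{k-1}$ converges to a finite, generically nonzero local functional (using the rapid decay of the heat kernel and the absence of counterterms guaranteed by Theorem~\ref{prequant_thm}), and that the vanishing for $k \neq d+1$ persists under the limit. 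I note that the count never used the valences of the vertices, only that a $k$-vertex wheel has $k-1$ propagator edges and one closing kernel edge, so the conclusion holds for interactions of arbitrary arity, and specializes correctly (for instance to $k=2$ when $d=1$, recovering the familiar central-charge or level anomaly of a holomorphic theory on $\CC$).
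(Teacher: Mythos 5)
The paper does not prove this statement itself; it defers entirely to \cite[Lemmas 4.6 and 4.7]{BWhol}, and those two lemmas correspond precisely to the two halves of your case analysis: wheels with too few vertices, and wheels with too many. Your treatment of the first half is correct and matches the cited argument: since every antiholomorphic one-form appearing in the propagators and the closing heat kernel is a difference form $\d\zbar_i - \d\zbar_{i+1}$, these all live in the $d(k-1)$-dimensional ``relative'' subspace, and a $k$-vertex wheel carries $(k-1)(d-1)+d$ of them; for $k\le d$ this exceeds $d(k-1)$, so the wedge product of the kernels alone already vanishes, independently of the inputs.

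The gap is in the case $k > d+1$. Your argument there rests on the assertion that ``the only antiholomorphic one-forms available in the relative directions are the difference one-forms carried by the propagators and the closing kernel,'' but this is false: the input fields $\alpha(z_i)$ are arbitrary Dolbeault forms evaluated at the individual points $z_i$, and each $\d\zbar_i = \d\bar w + \d(\zbar_i - \bar w)$ has a nonzero component in the relative directions. For $k = d+1$ the kernels exactly saturate the relative antiholomorphic degree, which is what forces the fields to contribute only through their centre-of-mass components; but for $k > d+1$ the kernels undersaturate by $k-d-1$, and the fields can (and generically do) supply the missing relative one-forms, so the integrand is of top degree and no form-type argument kills it. The vanishing of large wheels in \cite[Lemma 4.7]{BWhol} is instead an \emph{analytic} statement: after performing the Gaussian integrals over the loop variables (Wick's lemma), the weight of a $k$-vertex wheel is bounded by a positive power of $\epsilon$ (roughly $\epsilon^{k-d-1}$ after rescaling the heat-kernel times), and hence tends to zero in the $\epsilon \to 0$ limit that defines the anomaly. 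Your proposal omits this estimate entirely, and it cannot be replaced by the degree count; this is the substantive missing step.
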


As a consequence, given a holomorphic theory of cotangent type on $\CC^d$, to check that it admits a one-loop exact quantization we only need to verify the vanishing of a wheel Feynman diagrams with $(d+1)$ vertices.

\subsection{Equivariant BV formalism} \label{equivariant_section}

Fix a classical BV theory $(\mc E, I, \omega)$. 
Let $(\mf h, [-,-]_{\mf h}, \d_{\mf h})$ be a dg Lie algebra. 
Recall that $\oloc (\mc E)[-1]$ is equipped with the structure of a dg Lie algebra with differential $\d_{\mc E} + \{I,-\}$ and bracket $\{-,-\}$. 

\begin{definition}
A {\em classical (Hamiltonian) action} of $\mf h$ on the classical BV theory $(\mc E, I, \omega)$ is a map of dg Lie algebras
\[
\rho \colon \mf h \to \oloc(\mc E) [-1] .
\]
\end{definition}

When discussing actions we often drop the word Hamiltonian as we will only consider this sort of Lie algebra action on a classical field theory in this paper.
Notice that through the BV bracket, any Hamiltonian action defines an action on the observables 
\[
\rho \colon \mf h \to \oloc(\mc E) [-1] \xto{\{-,-\}} {\rm End}(\mr{Obs}) .
\]

\begin{remark}
As in the ordinary (non-equivariant) BV formalism, there is a sort of master equation that governs Hamiltonian actions. 
Indeed, $\rho$ defines an element 
\begin{equation}\label{eqn:irho}
I_{\rho} \in \clie^\bu(\mf h) \otimes \oloc(\mc E) [-1] 
\end{equation}
of cohomological degree $1$. 
The condition that $\rho$ be a map of dg Lie algebras is equivalent to the following {\em equivariant} classical master equation
\[
(\d_{\mf h} \otimes 1) I_{\rho} + (1 \otimes \d_{\mc E}) I_{\rho} + \{I, I_{\rho}\} + \frac12 \{I_{\rho}, I_{\rho}\} = 0 .
\]
The graded vector space appearing in \eqref{eqn:irho} has the natural structure of dg Lie algebra.
The master equation is the Maurer--Cartan equation for this dg Lie algebra. 
~\hfill$\Diamond$\end{remark}

Quantization of equivariant BV theories proceeds along similar lines as the non-equivariant case. 
One starts with a (non-equivariant) quantum BV theory prescribed by some effective collection $\{I[\Lambda]\}$.
Next, one constructs an effective collection of functionals that depend both on the fields $\mc E$ and $\mf h$ and asks that an appropriate equivariant quantum master equation holds.

The notion of a renormalization group flow is defined for a subspace
\[
\mc O ^+(\mf h [1] \oplus \mc E) [[\hbar]] \subset \mc O(\mf h [1] \oplus \mc E) [[\hbar]],
\]
which consists of functionals which are at least cubic modulo $\hbar$ when restricted to $\mc E$ and satisfy mild technical support and smoothness conditions.
We refer to \cite[Definition 13.2.1.1]{Book2} for a precise definition. 

With the notion of renormalization group flow in hand it makes sense to ask for an equivariant collection $\{I_{\rho}[\Lambda]\}_{\Lambda > 0}$ which satisfy the renormalization group equations. 
We refer to such a set of functionals as an equivariant effective collection. 

\begin{definition}
Let $\{I[\Lambda]\}$ be a quantum BV theory. 
A {\em quantum action} of $\mf h$ on this theory is an equivariant effective collection
\[
\{I_{\rho}[\Lambda]\} \in \mc O^+(\mf h[1] \oplus \mc E) / \mc O(\mf h[1]) 
\]
that satisfies:
\begin{itemize}
\item[(1)] the equivariant quantum master equation, which states that 
\begin{equation}\label{eqn:eqqme}
\d_{\mf h} I_{\rho}[\Lambda] + \d_{\mc E} I_{\rho} [\Lambda] + \frac12 \{I_{\rho}[\Lambda] , I_{\rho}[\Lambda] \}_\Lambda + \hbar \Delta_\Lambda I_{\rho} [\Lambda] = 0 .
\end{equation} 
\item[(2)] 
Under the natural map 
\[
\mc O ^+(\mf h [1] \oplus \mc E) [[\hbar]] \to \mc O ^+(\mc E) [[\hbar]]
\]
given by restricting to functionals just of $\mc E$, the image of $I_\rho [\Lambda]$ is the original action $I[\Lambda]$ defining the non-equivariant theory. 
\item[(3)] Locality and support conditions. 
\end{itemize}
\end{definition}

We refer to \cite[Definition 13.2.2.1]{Book2} for a discussion of the locality and support conditions which we do not recall here.
They are designed in such a way that modulo $\hbar$ the $\Lambda \to 0$ limit of the functional $I_{\rho}[\Lambda]$ returns a classical Hamiltonian action.
 
\begin{remark}\label{rmk:eqcollection}
The holomorphic gauge will also apply to this equivariant situation, so long as the $\mf{h}$-action only involves holomorphic differential operators \cite{GRWthf}. 
The naive effective collection $\{I_{\rho}[\Lambda]\}$ is constructed in a similar way as in the non equivariant theory. 
One starts with a classical equivariant action $I_{\rho}$ and forms the sum over graphs 
\[
I_{\rho}[\Lambda] = \sum_{\Lambda} W_{\Gamma} (P_{0<\Lambda}, I_{\rho}) .
\]
Again, the univalent vertices are labeled by input fields, which can now be elements of $\mf{h}$ or $\mc{E}$. 
We assume that the set of univalent vertices labeled by $\mc{E}$ is non empty. 
The non-univalent vertices are labeled by $I_{\rho}$. 
The edges are labeled by $P_{0<\Lambda}$, the same as in the non equivariant case. 
This is where the term ``background field'' method comes from, since we are not treating the Lie algebra $\mf{h}$ as a propagating field. 
~\hfill$\Diamond$\end{remark}

\section{Twists of 4d \texorpdfstring{$\mc N=4$}{N=4} Super Yang--Mills Theory}

\subsection{Summary of Twists}\label{summary_twist_section}

Given a classical field theory with an action of the abelian super Lie algebra $\Pi \CC$, there is a procedure called \emph{twisting} that modifies the differential, by adding a generator of the $\Pi \CC$ action.  One can generate many examples of such actions by starting with a supersymmetric field theory, i.e. a theory equipped with an action of a $\ZZ/2\ZZ$-graded extension of a complexified Poincar\'e algebra, and choosing an odd element $Q$ of the algebra such that $[Q,Q]=0$ (often referred to as a ``nilpotent supercharge'').  We refer to \cite{ElliottSafronov,ESW} for a complete description of the twisting procedure in the BV formalism, as well as a classification of all twists of supersymmetric Yang--Mills theories.

In this section we'll first describe the possible twists of 4d $\mc N=4$ super Yang--Mills theory on $\RR^4$ in terms of the supersymmetry algebra, and then we'll describe those twists in the language of the classical BV formalism.

\begin{definition}
The \emph{$\mc N=k$ supertranslation algebra} in four dimensions is the complex super Lie algebra with underlying super vector space 
\[\mc T_k = \CC^4 \oplus \Pi(W \otimes S_+ \oplus W^* \otimes S_-)\]
where $W$ is a $k$-dimensional vector space, with a non-trivial Lie bracket 
\[(W \otimes S_+) \otimes (W^* \otimes S_-) \to \CC^4\]
inherited from the $\spin(4)$-equivariant isomorphism $S_+ \otimes S_- \to \CC^4$ and the evaluation pairing $W \otimes W^* \to \CC$.
\end{definition}

An odd element $Q$ of $\mc T_k$ can be identified with a pair of maps
\[Q_+ \colon S_+^* \to W, \quad Q_- \colon W \to S_-.\]
Such an element satisfies $Q^2 = 0$ if and only if the composite map $Q_- \circ Q_+ \colon S_+^* \to S_-$ is zero.  We can stratify the moduli space of nilpotent supercharges by the ranks of the two maps $Q_+$ and $Q_-$.  In other words, by specifying a pair of integers $(n_+,n_-)$ where $n_\pm$ are at most 2.

\begin{remark} \label{22_supercharge_rmk}
In most cases, there is no difference between two twists of super Yang--Mills theories associated to supercharges of the same rank, because the action of the group $\spin(4) \times \SL(W)$ is transitive on supercharges of a given rank.  There is an exception, however, for rank $(2,2)$ supercharges in the $\mc N=4$ supertranslation algebra.

A square-zero rank $(2,2)$ supercharge $Q$ induces a linear isomorphism $S_+^* \oplus S_- \to W$.  Indeed, any supercharge $(Q_+, Q_-) \in S_+ \otimes W \oplus S_- \otimes W^*$ induces linear maps $S_+^* \to W$ and $W \to S_-$; being a rank $(2,2)$ square-zero supercharge is equivalent to the statement that the sequence
\[0 \to S_+^* \to W \to S_- \to 0\]
is a split short exact sequence.

The $\SL(4;\CC)$-representation $W$ carries a canonical volume form, as does the sum $S_+^* \oplus S_-$, and the $\spin(4) \times \SL(W)$-orbits in the space of square-zero correspond to the ratio $q$ between these two volume forms (see~\cite[Section 4.4]{ElliottSafronov}).  

One can show that twists of super Yang--Mills theory by supercharges with $q \ne 1$ are all equivalent (``generic'' rank $(2,2)$ supercharges), but they differ from the twist by such a supercharge at the point $q=1$ (''special'' rank $(2,2)$ supercharges).
~\hfill$\Diamond$\end{remark}

Now, let us describe the theories produced by twisting $\mc N=4$ super Yang--Mills theory on $\RR^4$ by these classes of supercharge.  Let $\gg$ be a reductive Lie algebra equipped with a fixed nondegenerate equivariant pairing (that, in particular, induces a canonical isomorphism between $\gg$ and $\gg^*$).

\begin{remark} \label{family_def}
When we refer to a \emph{family} of quantum field theories over an affine space $\CC^n$, we mean a quantum field theory where the effective interactions are valued in the ring $\CC[t_1, \ldots, t_n]$, and where all the relevant structures are linear over this ring.  For a detailed definition we refer to~\cite[Section 7.3]{Book2}.
~\hfill$\Diamond$\end{remark}

Let $\eps$ be a formal variable of degree~$-1$. 

\begin{theorem}[{\cite[Section 10.3]{ESW}}] \label{twist_theorem}
There is a three-parameter family of twists of $\mc N=4$ super Yang--Mills theory on $\CC^2$ that can be described perturbatively by the $\CC[t_1,t_2,u]$-linear space of fields 
\[\mc E = \bigg(\Omega^{\bullet, \bullet}\big(\CC_{z_1} \times \CC_{z_2}, \gg[1] \oplus \eps \gg[2]\big)[t_1,t_2,u] \; , \; \ol \dd + t_1 \dd_{z_2} + t_2 \dd_{z_2} + u \frac{\d}{\d \eps} \bigg) .\]
We will write a general field as $\alpha + \eps \beta$, where 
\[
\alpha \in \Omega^{\bu,\bu}(\CC^2, \gg[1]) \quad\text{ and }\quad  \beta \in \Omega^{\bu,\bu}(\CC^2, \eps \gg[2]).
\]
The interaction is
\[
I (\alpha, \beta) = \frac12 \int \beta \wedge [\alpha, \alpha] .
\]
where the bracket is encoded by the wedge product of differential forms and the Lie bracket on $\gg$. 
\end{theorem}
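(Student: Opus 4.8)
The plan is to produce the family in three logical stages: compute the minimal (holomorphic) twist as the base point $t_1 = t_2 = u = 0$, exhibit the three parameters as commuting Hodge-type degenerations of the resulting differential, and then verify the classical master equation for the whole family. For the base point I would start from the full classical BV description of $\mc N=4$ super Yang--Mills on $\CC^2$ together with its action of the supertranslation algebra $\mc T_4$, and compute the cohomology of the BV differential deformed by a rank $(1,0)$ supercharge. Concretely, one adds the corresponding component of the supersymmetry action to $\d_{\mc E}$ and takes cohomology; the bulk of the gauge, scalar, spinor, ghost, and antifield content organizes into acyclic pairs and cancels, leaving the model $\Omega^{\bu,\bu}(\CC^2, \gg[1] \oplus \eps\gg[2])$ with differential $\ol\dd$ and cubic interaction $\frac12\int \beta \wedge [\alpha,\alpha]$. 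Presenting the survivors via the full bicomplex $\Omega^{\bu,\bu}$ (rather than only $\Omega^{0,\bu}$) is the natural bookkeeping choice: the three form-like directions $(\dd z_1, \dd z_2, \eps)$ along which the twisted theory is built correspond to the three adjoint chiral multiplets of $\mc N=4$, and the holomorphic directions $\Omega^{\bu,0}$ are exactly what must be present to receive holomorphic derivatives once we deform.

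Next I would identify the residual symmetries. The odd generators not consumed by the holomorphic twist descend to an action of a graded abelian Lie algebra on the minimally twisted theory, realized by translation-invariant local operators, and I would show that three distinguished commuting generators act, up to homotopy, as $\dd_{z_1}$, $\dd_{z_2}$, and $\frac{\d}{\d\eps}$. Twisting further by a linear combination then deforms the differential to $\ol\dd + t_1\dd_{z_1} + t_2\dd_{z_2} + u\frac{\d}{\d\eps}$, with the coefficients recording how much of each additional supercharge is switched on. The first two deformations are the Simpson Hodge degenerations $L_{t\text{-Hod}}$ in the $z_1$ and $z_2$ directions, turning on the holomorphic de Rham differential so that the corresponding direction becomes topological, while the third is the analogous degeneration in the internal $\eps$-direction. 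Comparing the rank of the resulting square-zero supercharge against the stratification in Remark \ref{22_supercharge_rmk} then matches each locus in $(t_1,t_2,u)$-space with a $\spin(4)\times\SL(W)$-orbit of supercharges; in particular the parameters sweep out a copy of $\CC^3$, with the Kapustin--Witten subfamily recovered on the appropriate subspace.

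Finally, assembling this into a single $\CC[t_1,t_2,u]$-linear theory, I would verify the classical master equation for the family all at once. Since the interaction $I = \frac12\int \beta \wedge [\alpha,\alpha]$ is independent of the parameters and the base twist already satisfies $\ol\dd I + \frac12\{I,I\} = 0$, it suffices to check that the added terms, acting as derivations on local functionals, annihilate $I$: the holomorphic derivatives $\dd_{z_i}$ do so by integration by parts, and $\frac{\d}{\d\eps}$ by inspection of the $\eps$-dependence. Hence the master equation of the base holomorphic twist propagates across the whole family, and linearity over $\CC[t_1,t_2,u]$ is automatic.

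The main obstacle will be the first stage: the explicit component computation of the holomorphic twist and the subsequent identification of the residual supersymmetries with $\dd_{z_1}$, $\dd_{z_2}$, and $\frac{\d}{\d\eps}$. Running the entire super Yang--Mills multiplet through the twisting differential, verifying precisely which fields pair off, and computing the action of the remaining odd generators on cohomology is a lengthy representation-theoretic and homological bookkeeping exercise. Everything downstream --- the Hodge-theoretic repackaging, the matching of parameters to supercharge orbits, and the master equation --- is comparatively formal once this computation is in hand.
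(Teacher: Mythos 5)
The paper offers no proof of this statement at all---it is imported wholesale from \cite[Section 10.3]{ESW}---and your outline is essentially the argument carried out there: take the cohomology of the full $\mc N=4$ BV complex with respect to a rank $(1,0)$ supercharge to land on holomorphic BF theory for $\gg[\eps_1,\eps_2]$ (equivalently the $\Omega^{\bu,\bu}$ presentation), identify the residual commuting supercharges as acting up to homotopy by $\dd_{z_1}$, $\dd_{z_2}$, and $\frac{\d}{\d\eps}$, and check the classical master equation for the resulting $\CC[t_1,t_2,u]$-linear family, with the first stage carrying all the real labor. The one point to handle with care is that the identification between the three-dimensional space of residual nilpotent supercharges and the parameters $(t_1,t_2,u)$ is non-linear (the paper flags this explicitly, citing \cite{EY3}), so matching loci in parameter space with $\spin(4)\times\SL(W)$-orbits of supercharges cannot be done by simply reading coefficients off a linear combination.
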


Notice that the interaction $I$ is independent of the parameters, the only dependence on the parameters $u,t_1,t_2$ is in~$\d_{\mc E}$. 
In the terminology of Section~\ref{BV_review_section}, we have a family of generalized Chern--Simons theories valued in the Hodge Lie algebra $\gg_{u \text{-Hod}}$ associated to~$\gg$.

We can match up subspaces of this three-dimensional family of twists with the classes of nilpotent supercharges discussed above (see \cite[Section 10.3]{ESW}).
\begin{itemize}
 \item The point $(0,0,0)$ is a \emph{holomorphic} twist, obtained from a supercharge of rank $(1,0)$. It is a holomorphic BF theory.
 \item Points of the form $(t_1,0,0)$ or $(0,t_2,0)$ with $t_i \ne 0$ are obtained from supercharges of rank $(1,1)$.  These twists are holomorphic in one complex direction, and topological in the other.  They are often referred to as \emph{Kapustin twists}~\cite{KapustinHolo}. 
 \item Points of the form $(t_1,t_2,0)$ with both $t_1$ and $t_2 \ne 0$ are obtained from special rank $(2,2)$ supercharges, as in Remark~\ref{22_supercharge_rmk}.
 \item Points of the form $(0,0,u)$ with $u \ne 0$ are topological twists obtained from supercharges of rank $(2,0)$.
 \item Points of the form $(t_1,0,u)$ or $(0,t_2,u)$ with both $t_i$ and $u \ne 0$ are topological twists obtained from supercharges of rank~$(2,1)$.
 \item Points of the form $(t_1,t_2,u)$ with all three coordinates non-zero are topological twists obtained from generic supercharges of rank $(2,2)$.  These twists occur in the quantum geometric Langlands conjecture at generic values of the level.
\end{itemize}

There is the two-dimensional family $\mc E_{u=0}$ that we refer to as the family of ``B-type'' twists of $\mc N=4$ super Yang--Mills theory.  
We refer to subfamilies with $u \ne 0$ as ``A-type'' twists. 

\begin{remark} \label{A_B_name_remark}
We use the terminology of A- and B-type twists following \cite{KapustinWitten}.  Kapustin and Witten study the dimensional reduction of twists of 4d $\mc N=4$ super Yang--Mills theories along a curve $C$, where one obtains the twist of an $\mc N=(2,2)$ supersymmetric sigma model whose target is the Hitchin system $\higgs_G(C)$.  These twists are either A-models (so that the category of boundary conditions is modelled by a category of twisted D-modules), or B-models (so that the category of boundary conditions is modelled by a category of coherent sheaves).  
~\hfill$\Diamond$\end{remark}

\begin{figure}[h]
\begin{center}
\tdplotsetmaincoords{75}{130}
\begin{tikzpicture}[tdplot_main_coords,fill opacity=0.2]

\draw (-3,-3,-3) --  (-3,-3,3) -- (-3,3,3) -- (-3,3,-3) -- cycle; 
\draw (3,-3,-3) --  (3,-3,3) -- (3,3,3) -- (3,3,-3) -- cycle; 
\draw (-3,-3,-3) --  (-3,-3,3) -- (3,-3,3) -- (3,-3,-3) -- cycle; 
\draw (-3,3,-3) --  (-3,3,3) -- (3,3,3) -- (3,3,-3) -- cycle; 
\draw[fill=gray]  (-3,0,3) -- (3,0,3) -- (3,0,-3) -- cycle; 
\draw[fill=green] (3,-3,-3) -- (3,3,-3) -- (-3,3,3) -- (-3,-3,3) -- cycle;
\draw[fill=gray]  (3,0,-3) --(-3,0,-3) -- (-3,0,3) -- cycle; 

\draw[thick, red] (0,-5,0) -- (0,5,0);
\draw[line width = 2pt] (0,0,-3) -- (0,0,3);
\draw[line width = 1.5pt] (-3,0,0) -- (3,0,0);

\node[opacity=1] (line) at (3.6,0,0) {$t_1$};
\node[opacity=1] (line) at (0,0,3.6) {$t_2$};
\node[opacity=1, red] (line) at (0,4,0.5) {$u$};
\fill[black,opacity=1] (0,0,0) circle (3pt);
 
\end{tikzpicture}
\end{center}
\caption{The space $\CC^3$ of square zero supercharges that we are considering.  The vertical plane in grey is the two-dimensional space of B-type twists.  The diagonal plane in green indicates the two-dimensional subspace of supercharge compatible with the $\SO(4)$ Kapustin--Witten twisting homomorphism. }
\end{figure}
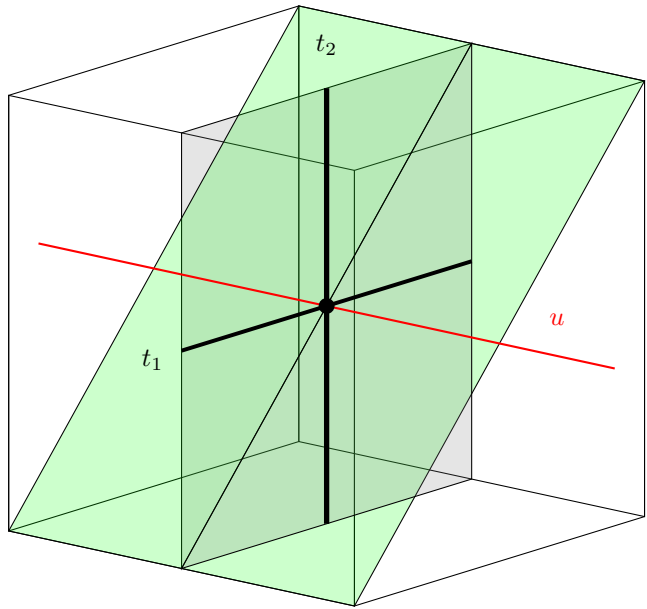

\begin{remark}
All twists of $\mc N=4$ super Yang--Mills theory occur in families of this type.  If one fixes a holomorphic supercharge $Q_{\mr{hol}}$, the space of nilpotent supercharges commuting with $Q_{\mr{hol}}$ is three-dimensional.  One does need to take care, however: the identification between such a space of supercharges and the space of deformations of the holomorphic twist is non-linear.  See \cite{EY3} for further details. 
~\hfill$\Diamond$\end{remark}

\subsection{Anomaly Vanishing} \label{anomaly_section}

In this section we will establish the following theorem.
\begin{theorem} \label{anomaly_theorem}
The three-parameter family of classical field theories of Theorem \ref{twist_theorem} admits a one-loop exact quantization to a three-parameter family of quantum field theories on $\RR^4$.
\end{theorem}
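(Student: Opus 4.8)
The plan is to reduce the quantum master equation to the vanishing of a single family of Feynman weights, and then to show those weights vanish by a symmetry (trace/unimodularity) argument. The theory in Theorem~\ref{twist_theorem} is holomorphic on $\CC^2$ and, crucially, of cotangent type: the fields split as $\alpha + \eps\beta$ with $\beta \in \Omega^{\bu,\bu}(\CC^2,\eps\gg[2])$ playing the role of the cotangent fiber to the base $\alpha \in \Omega^{\bu,\bu}(\CC^2,\gg[1])$, and the interaction $I = \tfrac12\int \beta\wedge[\alpha,\alpha]$ is linear in $\beta$. So the first step is to invoke Theorem~\ref{prequant_thm}: because the theory is holomorphic on $\CC^2$, there is a one-loop effective collection $\{I[\Lambda]\}$ involving no counterterms, and since it is of cotangent type this collection is exact at one loop. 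I would emphasize that all of this is $\CC[t_1,t_2,u]$-linear, so the prequantization works uniformly in the family parameters (the parameters enter only through $\d_{\mc E}$, not through $I$ or the propagator-defining operator $\dbar^*$).

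Next I would identify the obstruction. By the general discussion around equation~\eqref{eqn:qme}, the failure of the scale-$\Lambda$ quantum master equation is measured by an anomaly class in $\mr H^1(\oloc(\mc E))$. By Corollary~\ref{cotangent_cor}, in a cotangent-type theory only diagrams with at most one loop can contribute, so there are no higher-loop anomalies to worry about; the entire obstruction sits at one loop. By Theorem~\ref{wheel_thm}, for a holomorphic theory on $\CC^d$ with $d=2$ the one-loop anomaly is computed by the weight of a single wheel diagram with $d+1 = 3$ internal edges, i.e.\ a wheel with three trivalent vertices. Each trivalent vertex is the cubic vertex $\beta[\alpha,\alpha]$, and since the propagator pairs an $\alpha$-leg with a $\beta$-leg, going around the wheel contracts the Lie-algebra indices into a trace of a product of structure-constant maps in the adjoint representation.

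**The key computation**, then, is that this wheel weight vanishes. I would argue exactly as in Example~\ref{tadpole_example}: the fields are of the product form $\mc E' \otimes \gg$ with $\mc E' = \Omega^{\bu,\bu}(\CC^2)[\dots]$ a shifted-symplectic complex carrying all the analytic/propagator data, and the internal-vertex structure is the Lie bracket on $\gg$. Contracting around the closed loop produces a factor proportional to a trace over $\gg$ in the adjoint representation of a product of brackets---schematically $\tr_\gg(\mr{ad}_{X_1}\cdots)$---and since $\gg$ is reductive (hence unimodular, $\tr_\gg(\mr{ad}_X)=0$), this Lie-algebraic factor vanishes identically, killing the weight regardless of the value of the analytic integral. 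This separation of variables---the Lie factor pulling out of the diagram weight---is what makes the argument clean, and it is uniform in $(t_1,t_2,u)$ because the reductivity of $\gg$ does not depend on the family parameters.

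\textbf{The main obstacle} I anticipate is not the vanishing itself but making the reduction to a single diagram fully rigorous in the \emph{family} setting: I must confirm that Theorem~\ref{wheel_thm} and the cotangent bound of Corollary~\ref{cotangent_cor} apply verbatim after base change to $\CC[t_1,t_2,u]$, and that the extra $u\,\partial/\partial\eps$ and $t_i \dd_{z_i}$ terms in $\d_{\mc E}$---which are part of the kinetic operator, not the interaction---do not spoil the holomorphic gauge-fixing or introduce new vertices that could evade the trace argument. I would address this by noting that the propagator is built from $\dbar^*$ via the heat-kernel regularization as in equation~\eqref{propagator_eq} and is insensitive to the parameter-dependent pieces of the differential (which act as a deformation commuting with the gauge-fixing), so the diagrammatics and the adjoint-trace factorization are unchanged; the anomaly then vanishes identically over the whole family, yielding the one-loop exact quantization asserted.
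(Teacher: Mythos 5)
Your reduction to a single wheel is fine and matches the paper: prequantization via Theorem \ref{prequant_thm}, one-loop exactness via Corollary \ref{cotangent_cor}, and localization of the anomaly on the $(d+1)$-edge wheel via Theorem \ref{wheel_thm}, all carried out $\CC[t_1,t_2,u]$-linearly since the propagator and interaction are parameter-independent. The gap is in your key vanishing step. For the anomaly wheel on $\CC^2$ the algebraic factor is a trace of a \emph{product} of three adjoint operators, schematically $\mr{Tr}(\mr{ad}_{X_1}\mr{ad}_{X_2}\mr{ad}_{X_3})$, and unimodularity only kills the trace of a \emph{single} $\mr{ad}_X$ (the tadpole of Example \ref{tadpole_example}, a wheel with one external leg). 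For a reductive, non-abelian $\gg$ the three-fold trace does not vanish: already its antisymmetrization is $\mr{Tr}(\mr{ad}_{[X,Y]}\mr{ad}_Z)=\kappa([X,Y],Z)$, the Cartan 3-form, which is nonzero for any semisimple $\gg$. So "reductive $\Rightarrow$ unimodular $\Rightarrow$ the wheel vanishes" is a non sequitur, and your argument as written would equally "prove" anomaly vanishing for the holomorphic twist of pure $\mc N=1$ Yang--Mills, where no such cancellation is available.

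What actually makes the weight vanish (Proposition \ref{holo_anomaly_prop}) is the specific field content of the $\mc N=4$ twist: the relevant graded Lie algebra is $L=\gg[\eps_1,\eps_2]=L'[\delta]$ with $L'=\gg[\eps_1]$ and $\delta=\eps_2$ odd, and the trace in the algebraic weight is taken over all of $L$, not just over $\gg$. (Your factorization $\mc E'\otimes\gg$ hides the $\d z_i$'s in the "analytic" piece, which is where the argument goes wrong.) Assigning weight $1$ to $\delta$ makes the interaction weight $-1$ and the propagator weight $+1$, so the wheel weight has total $\delta$-weight zero and is determined by its restriction to $L'\subset L'[\delta]$; there it is, up to scale, $X\mapsto \mr{Tr}_{L}\big((\mr{ad}_X)^v\big)$, and since $L=L'\oplus L'[\pm 1]$ as $L'$-modules this supertrace cancels between the two shifted copies, $\mr{Tr}_{L}(Y)=\mr{Tr}_{L'}(Y)-\mr{Tr}_{L'}(Y)=0$. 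This odd-doubling cancellation — not unimodularity — is the essential mechanism, and it is what encodes the fact that maximal supersymmetry is being used. Your handling of the extension over $\CC^3$ (the extra check that $(t_1\dd_{z_1}+t_2\dd_{z_2}+u\,\dd/\dd\eps)I^{(1)}[\Lambda]$ contributes nothing, Proposition \ref{B_anomaly_prop}) is otherwise in the right spirit, since it again reduces to the vanishing of the same algebraic weight.
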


In other words, we will construct a $\CC[t_1,t_2,u]$-linear solution to the quantum master equation whose $\hbar \to 0$ limit agrees with the classical family of Theorem \ref{twist_theorem}.
We will do this using the method discussed in Section~\ref{quantum_section} because the family $\mc E$ consists of holomorphic theories.  

We will proceed in two steps.
\begin{enumerate}
\item First, we argue that the theory $\mc E_{0,0,0}$ over the special point $u=t_1=t_2=0$ admits a one-loop exact quantization. 
\item We show that this quantization extends $\CC[t_1,t_2,u]$-linearly to a quantization over the three-dimensional family of theories.  
\end{enumerate}
The quantum corrections that we will construct will, rather remarkably, be \emph{constant} over the three dimensional space $\CC^3_{t_1,t_2,u}$ of twisting parameters.  In other words, the effective collection $\{I[\Lambda]\}$ of interactions will be independent of the parameters $t_1,t_2$ and $u$.

\begin{remark} \label{uniqueness_remark}
In fact, if $\gg$ is semisimple then the family of quantizations that we construct is uniquely characterized by the fact that, for points of the form $(t_1,t_2,0) \in \CC^3$, the quantization preserves the cotangent structure, in the sense that the effective interactions $I[\Lambda]$ only involves the $\alpha$ fields.  One can show this by a computation involving the obstruction-deformation complexes of the twisted theories.

Let us begin by discussing the obstruction-deformation complex of the B-twisted theory $\mc E_{1,1,0}$ (a similar computation will apply for any theory $\mc E_{t_1,t_2,0}$ with $t_1$ and $t_2$ both non-zero).  Using \cite[Lemma 3.5.4.1]{Book2} we can see that the obstruction-deformation complex takes the form
\begin{align*}
\mr C^\bullet_{\text{red,loc}}(\mc E_{1,1,0}) &\iso \Omega^\bullet(\RR^4; \mr C^\bullet_{\mr{red}}(\gg[\eps]))[4] \\
&\iso \mr C^\bullet_{\mr{red}}(\gg, \sym^\bullet(\gg^*[-2]))[4].
\end{align*}
In particular, $\mr H^0_{\text{red,loc}}(\mc E_{1,1,0})$ can be identified with $\mr H^4(\gg) \oplus (\mr H^0(\gg) \otimes \sym^2(\gg^*)^\gg)$, which is one-dimensional if $\gg$ is semisimple.  Restricting to quantizations compatible with the cotangent structure is equivalent to taking $\CC^\times$-invariants of this complex, where $\CC^\times$ acts on $\eps$ with weight one.  When we take these invariants, the resulting cohomology group is trivial. So the quantization of the theory $\mc E_{1,1,0}$ (and a fortiori the quantization of the full family $\mc E$ of theories) is unique \footnote{The deformation that we've realized here that breaks the cotangent structure appears in our story too, it is the deformation of the B-twisted theory $\mc E_{1,1,0}$ to an A-twisted theory $\mc E_{1,1,u}$.}.

One can also calculate the obstruction-deformation complex of the holomorphically twisted theory $\mc E_{0,0,0}$.  This was discussed by Costello in \cite[Theorem 21.0.3]{CostelloSUSY}, where the part of the obstruction-deformation complex of the holomorphic twist consisting of functionals invariant for the action of a certain group of symmetries (namely $(\CC^\times \ltimes \CC^2) \times \GL(2;\CC)$, where the first factor acts on spacetime, and the latter factor acts by R-symmetries) was calculated.  
~\hfill$\Diamond$\end{remark}

\subsubsection{The Holomorphic Twist}

We will begin by establishing the existence of the quantization for a large class of holomorphic theories. Consider \emph{holomorphic BF theory} on $\CC^2$, whose space of fields is
\[(A,B) \in \Omega^{0,\bullet}(\CC^2, L[1]) \oplus \Omega^{2,\bullet}(\CC^2,L^\vee)\]
where $L$ is now a general graded Lie algebra.  
The action functional of the theory is $\int B \wedge F_A$ where $F_A = \dbar A + \frac12 [A \wedge A]$.
If $L$ is equipped with a degree 0 invariant pairing (for instance if $L=\gg$ is an ordinary reductive Lie algebra), 
then the standard Calabi--Yau form on $\CC^2$ identifies the space of fields with $\Omega^{0,\bullet}(\CC^2, T[1](L[1]))$.

\begin{remark}
We can recover $\mc E_{0,0,0}$ by letting $L = \gg[\eps_1,\eps_2]$, where $\eps_i$ are parameters of degree 1. 
These parameters are distinct from the odd parameter we called $\eps$ in Theorem \ref{twist_theorem} and correspond to the holomorphic one-forms $\d z_i$.
That is, in the notation of Theorem \ref{twist_theorem}, we have the following relationship among the fields.
Expand the field $A \in \Omega^{0,\bullet}(\CC^2, \gg[\eps_1,\eps_2][1])$ in BF theory as
\[
A = A_0 + \eps_i A_{i} + \eps_1 \eps_2 A_{12} .
\]
Then, the corresponding field in $\mc E_{0,0,0}$ is
\[
\alpha = A_0 + \d z_i A_i + \d z_1 \d z_2 A_{12}  .
\]
A similar relationship holds between the fields $\beta$ and $B$.

In fact, one can check that holomorphic twists of $\mc N=1$ and $\mc N=2$ super Yang--Mills theories in dimension 4 are also theories of this type (for $L = \gg$ and $L = \gg[\eps]$ respectively), and so the result below will also reply to twisted theories of this more general class.
~\hfill$\Diamond$\end{remark}

For any graded Lie algebra $L$ the holomorphic gauge of Section \ref{sec:holgauge} determines an effective collection of interactions $\{I_L[\Lambda]\}$ for the holomorphic BF theory. 
Since BF theory is of cotangent type, for each $\Lambda$ the coefficient of $\hbar^n$, for $n \geq 2$, in the functional $I_L [\Lambda]$ is zero.
So it admits a decomposition as
\[
I_L [\Lambda] = I_L^{(0)} [\Lambda] +  \hbar I_L^{(1)} [\Lambda]  .
\]
Here $I_{L}^{(0)}$ is a sum over trees $\sum_{\text{trees} \, \downY} I_{\downY}[\Lambda]$ and $I_L^{(1)} [\Lambda]$ is a sum over graphs with a single loop.

\begin{prop} \label{holo_anomaly_prop}
Let $\delta$ be an odd parameter of degree $\pm 1$
and suppose that $L = L'[\delta]$ for some graded Lie algebra $L'$. 
The effective collection $\{I_L[\Lambda]\}$ satisfies the quantum master equation. 
\end{prop}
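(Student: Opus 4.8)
The plan is to verify the scale-$\Lambda$ quantum master equation \eqref{eqn:qme} by isolating and killing the only possible obstruction. Since holomorphic BF theory is of cotangent type, Theorem~\ref{prequant_thm} already provides the effective collection $\{I_L[\Lambda]\}$ and guarantees one-loop exactness, so I would write $I_L[\Lambda] = I_L^{(0)}[\Lambda] + \hbar I_L^{(1)}[\Lambda]$ with $I_L^{(0)}$ a sum over trees and $I_L^{(1)}$ a sum over single-loop graphs. Expanding \eqref{eqn:qme} in powers of $\hbar$, the order-$\hbar^0$ term is the scale-$\Lambda$ classical master equation, which holds because the classical BF theory is a genuine classical BV theory and the prequantization is compatible with renormalization group flow. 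Thus the entire content of the proposition reduces to the vanishing of the order-$\hbar^1$ term, namely the one-loop anomaly.

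Next I would invoke Theorem~\ref{wheel_thm}: for a holomorphic theory on $\CC^2$ the one-loop anomaly is computed by the weight of a single wheel diagram with $d+1 = 3$ internal edges, hence three trivalent vertices, each carrying one external leg. As in the holomorphic-gauge discussion and in Example~\ref{tadpole_example}, this weight factorizes as an analytic factor --- an integral over $(\CC^2)^3$ assembled from the propagators \eqref{propagator_eq} --- times an algebraic factor coming from the Lie structure. The algebraic factor is precisely a supertrace over the adjoint representation of $L$ of a composite of adjoint operators $\mr{ad}$ determined by the external inputs, exactly as the tadpole weight of Example~\ref{tadpole_example} was proportional to $\tr_\gg$. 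The analytic factor is the usual, generically nonzero, holomorphic-anomaly integral, so everything hinges on the algebraic factor.

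The decisive step uses the hypothesis $L = L'[\delta]$. Viewing $L = L' \otimes \CC[\delta]$ as a Lie algebra over the supercommutative algebra $\CC[\delta]$, the bracket, and hence every adjoint operator $\mr{ad}_{x \otimes a}$, is $\CC[\delta]$-linear and factors as $\mr{ad}_x \otimes (\text{multiplication by } a)$. Consequently the algebraic factor factorizes as a supertrace over $L'$ times $\mr{str}_{\CC[\delta]}$ of a product of multiplication operators, i.e. $\mr{str}_{\CC[\delta]}(\text{mult by } c)$ for a definite $c \in \CC[\delta]$ built from the $\CC[\delta]$-components of the external inputs. Since $\delta$ is odd, $\CC[\delta] = \CC \oplus \CC\delta$ has one even and one odd generator, and a direct check gives $\mr{str}_{\CC[\delta]}(\text{mult by } 1) = 1 - 1 = 0$, while multiplication by $\delta$ is nilpotent and strictly off-diagonal so that its supertrace vanishes as well. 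Hence the algebraic factor is zero, the wheel weight vanishes, and so does the anomaly. This is the super-analog of the unimodularity argument of Example~\ref{tadpole_example}: rather than requiring $L'$ to be reductive, the odd parameter $\delta$ forces every adjoint supertrace of $L'[\delta]$ to vanish automatically.

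The step I expect to be the main obstacle is pinning down the algebraic factor precisely --- confirming that the Lie-theoretic contribution of the three-vertex wheel really is a $\CC[\delta]$-linear adjoint supertrace over the loop, with the Koszul signs arranged so that the factorization over $L' \otimes \CC[\delta]$ is valid. Once this identification is secured, the vanishing follows immediately from the parity of $\delta$, and the only remaining bookkeeping is to confirm, via Theorem~\ref{wheel_thm}, that no tree--loop cross terms or higher wheels contribute beyond the single diagram it isolates.
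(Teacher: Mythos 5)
Your proposal is correct and follows essentially the same route as the paper: reduce to the one-loop anomaly, express it as the weight of a wheel diagram, split the weight into analytic and algebraic factors, and kill the algebraic factor as an adjoint (super)trace of $L = L'[\delta]$. The only cosmetic difference is in the last step --- the paper first uses a $\delta$-weight-counting argument to reduce to external inputs in $L' \subset L'[\delta]$ and then cancels $\mr{Tr}_{L}(Y) = \mr{Tr}_{L'}(Y) - \mr{Tr}_{L'}(Y) = 0$ using the splitting $L = L' \oplus L'[\pm 1]$ of $L'$-modules, whereas you factor the supertrace through $\CC[\delta]$ directly; this is the same cancellation in different packaging.
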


The assumption implies that the classifying space $BL$, the formal moduli problem associated to the $L_\infty$ algebra $L$, can be identified with the formal moduli problem~$T[\pm 1] B L'$.

\begin{proof}
The order $\hbar^0$ component of the quantum master equation is equivalent to the classical master equation, so we focus our attention to the order $\hbar^1$ component.
We will show that the obstruction to solving the one-loop quantum master equation 
\begin{equation}\label{eqn:qme1}
\Theta_L[\Lambda] = \hbar \Delta_\Lambda I^{(0)}_L [\Lambda] + \left(Q I^{(1)}_L [\Lambda] + \frac12 \{I^{(0)}_L [\Lambda], I^{(1)}_L [\Lambda]\} \right)
\end{equation}
satisfies $\lim_{\Lambda \to 0} \Theta[L] = 0$. 

We apply \cite[Theorem C.5]{LiLi} to show that the obstruction vanishes. 
Since $\Delta_\eps I = 0$ for any $\eps > 0$, 
it follows that the $\Lambda \to 0$ limit of the obstruction is
\[
\lim_{\Lambda \to 0} \Theta_L [\Lambda] = \lim_{\Lambda \to 0}  \lim_{\eps \to 0} \sum_{\text{wheel} \, \wheel} \sum_{\text{edge} \, e} W_{\wheel, e}(P_{\eps<\Lambda}, K_{\eps} , I) .
\]

We analyze the weight $W_{\wheel, e}(P_{\eps<\Lambda}, K_{\eps} , I)$ where $e$ is a fixed edge in a wheel $\wheel$ with $v$ vertices.
The weight is product of an analytic and an algebraic piece
\begin{equation}\label{eqn:algweight}
W_{\wheel, e}(P_{\eps<\Lambda}, K_{\eps} , I) = w^{\mr{an}}_{v, e}(p_{\eps<\Lambda}, k_{\eps} , I) \; w^{\mr{alg}}_{v} (k^{\mr{alg}}, I^{\mr{alg}}) .
\end{equation}
Notice that the algebraic weight does not depend on the distinguished edge $e$. 
We will show that $w^{\mr{alg}}_{v} (k^{\mr{alg}}, I^{\mr{alg}})$ is identically zero.

The algebraic factor $k^{\mr{alg}}$ of both the propagator $P_{\eps < \Lambda}$ and the heat kernel $K_\eps$ is given by the element in $ \sym^2(L^*)$ corresponding to the choice of invariant pairing on $L$.
When one contracts the factors of $(k^{alg})^{\otimes v}$ thus computing the algebraic weight of the graph, what results is an invariant polynomial for the Lie algebra~$L$. 

Recall we are in the situation that $L=L'[\delta]$ where $\delta$ is a parameter of degree $\pm 1$. 
If we assign weight $1$ to $\delta$, then it is clear that the classical action is weight $-1$ and the propagator is weight $1$.
Thus, the weight of any wheel representing the anomaly is weight zero. 
In particular, it suffices to show that the invariant polynomial $w^{\mr{alg}}_{v} (k^{\mr{alg}}, I^{\mr{alg}})$ vanishes when restricted to the subalgebra $L' \subset L'[\delta]$. 

Up to scale, the invariant polynomial $w^{\mr{alg}}_{v} (k^{\mr{alg}}, I^{\mr{alg}})$ sends $X \in L'$ to
$\mr{Tr}_{L} (X^v)$,
where we mean the trace in the adjoint representation. 
But, as $L'$-modules, we have $L = L'[\delta] = L' \oplus L'[\pm1]$. 
Thus, for every element $Y \in L'$, the trace is zero $\mr{Tr}_L (Y) = \mr{Tr}_{L'} (Y) - \mr{Tr}_{L'}(Y) = 0$. 
\end{proof}

This computation has the following consequence for our holomorphic theory in the case $L' = \mf g [\eps_1]$ and $\delta = \eps_2$, where $\mf g$ is an ordinary Lie algebra. 

\begin{corollary}
\label{cor:holtheory}
The effective collection $\{I[\Lambda]\} = \{I_{\mf g [\eps_1,\eps_2]} [\Lambda]\}$ of Proposition~\ref{holo_anomaly_prop} is a solution to the quantum master equation for the theory~$\mc E_{0,0,0}$.
\end{corollary}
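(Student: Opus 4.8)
The plan is to deduce this corollary as an immediate specialization of Proposition~\ref{holo_anomaly_prop}; the content is purely a matter of matching notation, so that after identifying the relevant structures I may quote the proposition directly.

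First I would recall the identification recorded in the remark following Theorem~\ref{twist_theorem}: the holomorphic twist $\mc E_{0,0,0}$ is exactly holomorphic BF theory on $\CC^2$ with coefficient graded Lie algebra $L = \gg[\eps_1,\eps_2]$, where $\eps_1,\eps_2$ are the auxiliary odd parameters of degree~$1$ that encode the holomorphic one-forms $\d z_1, \d z_2$. Under this identification the effective collection $\{I[\Lambda]\}$ in the statement is precisely the collection $\{I_{\gg[\eps_1,\eps_2]}[\Lambda]\}$ produced by the holomorphic gauge of Section~\ref{sec:holgauge}, so it suffices to show that the latter solves the quantum master equation.

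Next I would observe that the coefficient algebra has the product form required by the proposition. Taking $L' = \gg[\eps_1]$ and $\delta = \eps_2$, we have the tautological factorization $\gg[\eps_1,\eps_2] = (\gg[\eps_1])[\eps_2] = L'[\delta]$, in which $\delta = \eps_2$ is an odd parameter of degree~$1$. Hence the hypotheses of Proposition~\ref{holo_anomaly_prop} hold for the choice $(L',\delta) = (\gg[\eps_1], \eps_2)$, and applying that proposition yields at once that $\{I_{\gg[\eps_1,\eps_2]}[\Lambda]\}$ satisfies the quantum master equation, as required.

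The only step warranting attention --- and it is less an obstacle than a sanity check --- is confirming the parity and degree of $\eps_2$. The vanishing argument in the proof of Proposition~\ref{holo_anomaly_prop} uses $\delta$ in two essential ways: assigning $\delta$ weight~$1$ forces every anomaly wheel to carry total weight zero, and the decomposition $L = L' \oplus L'[\pm 1]$ of $L$ as an $L'$-module makes the adjoint supertrace cancel, $\mr{Tr}_L(X^v) = \mr{Tr}_{L'}(X^v) - \mr{Tr}_{L'}(X^v) = 0$. Both inputs require only that $\delta$ be odd and of degree~$\pm 1$; since $\eps_2$ is odd of degree~$1$, nothing in the argument needs modification and the corollary follows.
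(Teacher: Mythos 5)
Your proposal is correct and matches the paper's own (very brief) justification: the paper deduces the corollary by applying Proposition~\ref{holo_anomaly_prop} with exactly the choice $L' = \gg[\eps_1]$ and $\delta = \eps_2$, using the identification of $\mc E_{0,0,0}$ with holomorphic BF theory for $L = \gg[\eps_1,\eps_2]$ from the remark after Theorem~\ref{twist_theorem}. Your added sanity check on the parity and degree of $\eps_2$ is consistent with the hypotheses of the proposition and requires no modification of its argument.
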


We note that this fact did not require us to know anything about the analytic form of the factor in the obstruction.  
This result is discussed in \cite[Proof of Proposition 7.7]{SWSuperconformal} and in~\cite[Proposition 6.5]{GRW}. 

\subsubsection{Extending to the Family of Twists} \label{Btwist_section}

Let us now consider the quantization of our three-dimensional family of theories $\mc E$. 
We do it by leveraging the quantization of the purely holomorphic theory of the previous section.  

The effective collection $\{I[\Lambda]\}$ for the holomorphic theory $\mc E_{0,0,0}$ was constructed using the $\CC[\eps]$-linear extension of the $\dbar^*$ operator acting on de Rham forms. 
We observe that this $\dbar^*$ operator commutes with the $\CC[t_1,t_2,u]$-linear operator 
\[
t_1 \dd_{z_2} + t_2 \dd_{z_2} + u \frac{\d}{\d \eps} .
\]
In particular, the effective collection $\{I[L]\}$ constructed in the previous section extends to an effective collection for the $\CC[t_1,t_2,u]$-linear family $\mc E$. 
This implies that the propagator $P_{\eps<\Lambda}$ constructed from $\dbar^*$ is constant over the entire three-parameter family. 
Thus the effective action $I [\Lambda]$, constructed in the same way as in the purely holomorphic theory, is also constant over the entire three-parameter family. 
It remains to see that this effective action satisfies the $\CC[t_1,t_2,u]$-linear quantum master equation.

\begin{prop} \label{B_anomaly_prop}
The effective collection $\{I[\Lambda]\}$ of Corollary \ref{cor:holtheory} is a solution to the quantum master equation for the $\CC[t_1,t_2,u]$-family of theories $\mc E$ on~$\CC^2$. 
\end{prop}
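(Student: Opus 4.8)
The plan is to reduce the quantum master equation (QME) for the $\CC[t_1,t_2,u]$-linear family $\mc E$ to the QME for the holomorphic theory $\mc E_{0,0,0}$, which we already know holds by Corollary~\ref{cor:holtheory}. The key structural observation has already been recorded in the discussion preceding the proposition: the effective collection $\{I[\Lambda]\}$ and the propagator $P_{\eps<\Lambda}$ are \emph{constant} over the three-parameter family, since the gauge-fixing operator $\dbar^*$ (and hence the heat kernel and propagator) is unchanged when we turn on $t_1,t_2,u$. Thus the only part of the QME data that varies over $\CC^3_{t_1,t_2,u}$ is the differential $\d_{\mc E}$, which acquires the extra term $D := t_1 \dd_{z_1} + t_2 \dd_{z_2} + u \frac{\d}{\d\eps}$.

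First I would write the scale-$\Lambda$ QME for the family as
\[
\d_{\mc E} I[\Lambda] + \tfrac12 \{I[\Lambda],I[\Lambda]\}_\Lambda + \hbar \Delta_\Lambda I[\Lambda] = 0,
\]
and split $\d_{\mc E} = \d_{\mc E}^{(0)} + D$, where $\d_{\mc E}^{(0)} = \ol\dd$ is the differential of the purely holomorphic theory at the origin. Since $I[\Lambda]$, the bracket $\{-,-\}_\Lambda$, and $\Delta_\Lambda$ are all the \emph{same} as in the holomorphic theory (being built from the unchanged propagator/kernel), the holomorphic QME from Corollary~\ref{cor:holtheory} already supplies
\[
\d_{\mc E}^{(0)} I[\Lambda] + \tfrac12 \{I[\Lambda],I[\Lambda]\}_\Lambda + \hbar \Delta_\Lambda I[\Lambda] = 0.
\]
Subtracting, the family QME reduces to the single identity $D\, I[\Lambda] = 0$. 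So the whole problem collapses to showing that the effective interaction is annihilated by the extra differential $D$.

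The main step, then, is to verify $D\, I[\Lambda]=0$. I would argue this at the level of Feynman weights. Each weight $W_\Gamma(P_{\eps<\Lambda}, I)$ is a contraction of copies of the classical interaction $I(\alpha,\beta)=\tfrac12\int \beta\wedge[\alpha,\alpha]$ along the propagators. The operator $D$ is a first-order differential operator acting on the external legs; applying it and integrating by parts transfers $D$ onto the propagators and the interaction vertices. Because $D$ commutes with $\dbar^*$ (as noted in the text), and because $P_{\eps<\Lambda}=\dbar^*\triangle^{-1}$ is built from $\dbar^*$ together with heat-kernel data that is translation-invariant and hence annihilated by the constant-coefficient operators $\dd_{z_i}$ and by $\d/\d\eps$, each such transferred term vanishes. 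More precisely, the $\d/\d\eps$ piece acts trivially because $I$ is the cotangent-type interaction, linear in the $\eps$-direction in the manner controlled by Proposition~\ref{holo_anomaly_prop}, and the $t_i \dd_{z_i}$ pieces vanish after integration by parts against the symmetry of the wheel/tree contractions. The cleanest formulation is that $D$ is a (inner) derivation of the classical structure that lifts to a symmetry of the entire effective family, so it kills the gauge-fixed effective interaction term by term.

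I expect the genuine obstacle to be the bookkeeping in this last step: making rigorous that applying $D$ to a weight and integrating by parts produces only boundary/propagator terms that vanish, uniformly in the scale parameters $\eps<\Lambda$, and that these manipulations are compatible with the $\Lambda\to 0$ and $\eps\to 0$ limits used to define the effective collection. In particular one must check that the compactly-supported/smoothness hypotheses behind integration by parts hold for the distributional propagator of equation~\eqref{propagator_eq}, and that no anomalous contact term is generated when $D$ hits a propagator. Once the commutativity $[D,\dbar^*]=0$ and the translation-invariance of the heat kernel are leveraged, however, there is no new analytic input beyond what already established Corollary~\ref{cor:holtheory}, so the argument should go through with the identity $D\,I[\Lambda]=0$ as its sole content.
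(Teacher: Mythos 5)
Your structural reduction is the same as the paper's: the propagator and hence the effective collection are constant over $\CC^3_{t_1,t_2,u}$, so the family quantum master equation differs from the holomorphic one only by the extra term involving $D = t_1 \dd_{z_1} + t_2 \dd_{z_2} + u\,\dd/\dd\eps$ applied to the effective interaction (more precisely, the paper isolates $D\, I^{(1)}[\Lambda]$ in the order-$\hbar$ obstruction; the tree-level piece is part of the classical master equation for the family and needs no separate treatment). Where you diverge is in how this term is killed, and that is where your argument has a genuine gap. You propose to prove $D\,I[\Lambda]=0$ analytically, by integration by parts together with $[D,\dbar^*]=0$ and translation invariance of the heat kernel. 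That plausibly handles the $t_i\dd_{z_i}$ pieces, but it does not address $u\,\dd/\dd\eps$: this operator moves $\beta$-type inputs into $\alpha$-type slots and is exactly the deformation that breaks the cotangent structure, so no total-derivative or adjointness cancellation is available for it. Your stated reason --- that ``$I$ is the cotangent-type interaction, linear in the $\eps$-direction'' --- is not an argument, and you flag the remaining bookkeeping as the obstacle without carrying it out. For the $\dd/\dd\eps$ piece I do not believe it can be carried out along the lines you sketch.

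The paper's proof avoids all analysis at this step. It writes $I^{(1)}[\Lambda]$ via \cite[Theorem C.5]{LiLi} as a sum over trees with a single wheel attached, factors each wheel weight as a product of an analytic and an algebraic factor, observes that $t_1,t_2,u$ enter only through the analytic factor, and then reuses the computation from Proposition \ref{holo_anomaly_prop}: the algebraic factor is an adjoint trace over $L=L'[\delta]$, which vanishes by the super-cancellation $\mr{Tr}_{L}(Y) = \mr{Tr}_{L'}(Y)-\mr{Tr}_{L'}(Y)=0$. Hence the wheel contributions to $I^{(1)}[\Lambda]$ vanish identically and $D\,I^{(1)}[\Lambda]=0$ for free, with no integration by parts, no contact-term worries, and no interchange of the $\Lambda\to 0$ and $\eps\to 0$ limits. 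If you want to salvage your route, you should replace the analytic treatment of the $\dd/\dd\eps$ term with this Lie-theoretic vanishing, or with the separate argument of Lemma \ref{A_diagram_lemma}, which is how the paper handles the $u$-deformation when it is viewed as a bivalent vertex rather than as part of the differential.
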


\begin{proof} 
The obstruction to solving the $\CC[t_1,t_2,u]$-linear quantum master equation is very similar to the obstruction to solving the quantum master equation for the purely holomorphic theory.
The obstruction is given by the formula
\begin{equation}\label{eqn:qme2}
\Theta_{u=0} [\Lambda] = \hbar \Delta_\Lambda I^{(0)} [\Lambda] + \left(Q I^{(1)} [\Lambda] + \frac12 \{I^{(0)} [\Lambda], I^{(1)} [\Lambda]\} \right) + \left(t_1 \dd_{z_2} + t_2 \dd_{z_2} + u \dd_{\eps}\right) I^{(1)}[\Lambda]  .
\end{equation}
We will show that $\lim_{\Lambda \to 0} \Theta[L] = 0$. 

The $\Lambda \to 0$ limit of the first two terms in the above equation vanish by the same argument as in the proof of Proposition \ref{holo_anomaly_prop}. 
It remains to show that 
\begin{equation}\label{eqn:familyqme}
\lim_{\Lambda \to 0} \left(t_1 \dd_{z_2} + t_2 \dd_{z_2} + u \dd_\eps \right) I^{(1)}[\Lambda] = 0 .
\end{equation}
This result is deduced from a fact about wheel diagrams, just as in the proof of Proposition~\ref{holo_anomaly_prop}. 

We apply \cite[Theorem C.5]{LiLi}, which gives the following formula for the one-loop effective action:
\[
I^{(1)}[\Lambda] = \sum_{\text{trees} \, \downY} \sum_{v \in \downY} \sum_{\text{wheel} \, \wheel} W_{\downY, v} \left(P_{\eps<\Lambda}, I, W_{\wheel}(P_{\eps<\Lambda}, I) \right) .
\]
We focus our attention just on the piece $W_{\wheel}(P_{\eps<\Lambda}, I)$ where $\wheel$ is a wheel diagram.  It is enough to show that the $\Lambda \to 0$ limit of this term vanishes for all wheel diagrams.

Suppose $\wheel$ is a wheel with $v$ vertices. 
The weight $W_{\wheel}(P_{\eps<\Lambda}, I)$ can be decomposed as a product of an analytic and algebraic factor as
\[
W_{\wheel, e}(P_{\eps<\Lambda}, I) = w^{\mr{an}}_{v}(p_{\eps<\Lambda} , I) \; w^{\mr{alg}}_{v} (k^{\mr{alg}}, I^{\mr{alg}}) .
\]
We crucially observe that the dependence of this weight on the values of $t_1$ and $t_2$ occur entirely in the analytic factor $p_{\eps<\Lambda}$ of the propagator.  Both $k^{\mr{alg}}$ and $I^{\mr{alg}}$ are independent of $t_1$ and $t_2$.  Therefore the factor $w^{\mr{alg}}_{v} (k^{\mr{alg}}, I^{\mr{alg}})$ is the same algebraic weight as in Equation \eqref{eqn:algweight}.  We showed above this algebraic weight vanishes. It follows that Equation~\eqref{eqn:familyqme} holds.
\end{proof}

\subsubsection{A-Type Twists} \label{A_subsection}

In this section we provide a more extended discussion that is specific to the subfamily of twists with $u \ne 0$---the family of $A$-type twists as defined in Remark~\ref{A_B_name_remark}. 

\begin{remark}
Note that the classical BV complexes of A-type twist are all contractible.  In other words these theories are uninteresting when viewed as perturbative field theories in isolation (as opposed to as part of the full three-dimensional family of theories).  These A-type theories, however, do have interesting structure when one considers aspects other than the perturbative classical field theory, such as their global moduli spaces of solutions (as discussed in~\cite{EY1, EY3}).  
~\hfill$\Diamond$\end{remark}

Over this locus the family of classical theories is no longer of cotangent type, so cannot be presented in the form $T^*[-1] \mc M$. 
In particular, there is no {\em a priori} guarantee of the existence of a one-loop exact collection of effective actions.  There is a potential contribution coming from diagrams with more than one loop.

However, we have shown in Proposition~\ref{B_anomaly_prop} that the effective collection $\{I[\Lambda]\}$ produced via the holomorphic gauge fixing operator does produce a one-loop exact quantization over the entire family, in particular we obtain a quantization over the locus of $A$-type twists.  In other words, Feynman diagrams with more than one loop do not contribute to the quantization even when $u \ne 0$.  

While this result may appear surprising at first, to see why it is true we need only note that both the classical interaction $I$ \emph{and} the propagator $P_{\eps < \Lambda}$ are independent of the parameters $t_1, t_2$ and $u$.  Indeed, the heat kernel, and hence the propagator, are defined using the commutator $[\ol \dd^*, \ol \dd + t_1 \dd_{z_1} + t_2 \dd_{z_2} + u \frac \dd{\dd \eps}]$, which is independent of the three parameters.  Therefore the weight $W_\Gamma(P_{\eps < \Lambda}, I)$ is constant across the family.  In particular it vanishes if $\Gamma$ has more than one loop.

An alternative argument, which applies to deformations of BF theories more generally, may offer insight as well.  Let us view an A-twist as deforming the \emph{interaction} by a quadratic term $I_u^A(\alpha, \beta)$.  
We can argue directly that such deformations are trivial.

\begin{lemma} \label{A_diagram_lemma}
Fix a holomorphic BF theory on $\CC^n$, and let $I^A$ denote a quadratic deformation of the classical interaction involving no spacetime derivatives.  The weight of all Feynman diagrams with at least one loop involving the bivalent vertex associated to $I^A$ vanishes identically.
\end{lemma}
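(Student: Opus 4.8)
The plan is to exploit the single special feature of the bivalent vertex---that it carries no spacetime derivative---in order to force a collision of the two $\dbar^*$ operators hidden inside the adjacent propagators.

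First I would record that the deformed interaction $I^A$ still pairs an $\alpha$-field with a $\beta$-field, so in the directed-graph description of Lemma~\ref{graph_lemma} the bivalent vertex still carries exactly one incoming ($\beta$) edge. Hence Corollary~\ref{cotangent_cor} continues to apply and every connected diagram has at most one loop; a diagram ``with at least one loop involving the bivalent vertex'' is therefore a wheel through which the bivalent vertex passes, decorated by some outgoing trees. The trees merely dress the external inputs and factor out of the weight, so it suffices to show that the wheel factor vanishes. Since the bivalent vertex has in-degree and out-degree both equal to one, both of its legs lie along the loop and are therefore internal edges, each carrying a propagator $P_{0<\Lambda}$.

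The key point is that the holomorphic-gauge propagator factors as $P_{0<\Lambda} = \dbar^* H_\Lambda$, where $H_\Lambda = \int_0^\Lambda e^{-t\triangle}\,\d t$ is the regularized heat operator, and that $\dbar^*$ commutes with $\triangle$ and hence with $H_\Lambda$. Because $I^A$ involves no spacetime derivatives, the bivalent vertex acts as a purely algebraic operator $M$ on the internal ($\gg$- and $\eps$-) indices of the fields, so $M$ commutes with each of $\dbar^*$, $\triangle$, and $H_\Lambda$. The weight of the wheel is a cyclic tensor contraction of propagators and vertex tensors---that is, the trace of a cyclic product of operators---and the bivalent vertex inserts the factor $M$ between two copies of $P_{0<\Lambda}$. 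Locally this factor is
\[
P_{0<\Lambda}\, M\, P_{0<\Lambda} = \dbar^* H_\Lambda\, M\, \dbar^* H_\Lambda = M\,(\dbar^*)^2\, H_\Lambda^2 = 0,
\]
using $(\dbar^*)^2 = 0$ together with the commutation relations. As this holds at every scale $\Lambda$, the whole weight vanishes identically rather than merely in a limit.

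The hard part will be the bookkeeping that converts the position-space kernel contraction at the vertex into the clean operator composition $P_{0<\Lambda}\, M\, P_{0<\Lambda}$. Two points need care. First, one must check that the two $\dbar^*$'s coming from the adjacent propagators really land on the same side after the integration over the vertex position, so that they genuinely compose into $(\dbar^*)^2$ rather than acting on distinct external variables; this follows from consistently writing each propagator as the operator $\dbar^* H_\Lambda$ and using the symmetry of the heat kernel, but it is where a sign/convention check is essential. Second, one should confirm that $M$ truly commutes with $\dbar^*$---i.e. that $I^A$ acts only on internal indices and not, say, through some holomorphic first-order operator---which is precisely the content of the hypothesis that the deformation involves no spacetime derivatives. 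Once these are in place, the contrast with the trivalent BF vertex is reassuring: that vertex is a genuine trilinear map (carrying its own external leg and the Lie bracket), so no collision of $\dbar^*$'s occurs, consistent with the fact that the trivalent wheels do contribute to the anomaly in general.
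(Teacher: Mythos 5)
The heart of your argument---that the derivative-free bivalent vertex forces the two $\dbar^*$'s from the adjacent propagators to collide and annihilate---is exactly the mechanism of the paper's proof. The paper phrases it in position space: the contraction at the vertex produces the factor $\int_{\CC^n_w} \ol\dd^*_w K_\eps(z_1-w)\wedge \ol\dd^*_w K_\eps(w-z_2)\,\d w_1\cdots \d w_n$, which vanishes by Stokes' theorem since the integrand is exact. Your operator-level identity $P\,M\,P = \dbar^* H_\Lambda M \dbar^* H_\Lambda = 0$ is a clean repackaging of the same computation, and the ``care point'' you flag (that both $\dbar^*$'s act on the integration variable) is precisely what the position-space formula makes manifest.

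There is, however, a genuine error in your preliminary reduction. You claim that $I^A$ ``still pairs an $\alpha$-field with a $\beta$-field, so the bivalent vertex carries exactly one incoming edge,'' and conclude that Corollary~\ref{cotangent_cor} still applies and every diagram is a wheel with trees. This is false for the case the lemma is actually needed for: the A-type deformation is $I^A_u = u\int\beta\wedge\beta$ (see the action $S_{t,t,u}$ in the proof of Theorem~\ref{KW_family_framing_anomaly_thm}), whose bivalent vertex takes \emph{two} $\beta$-inputs. This is exactly why the deformed theory is no longer of cotangent type and why multi-loop diagrams could a priori contribute---killing those diagrams is the entire purpose of the lemma, so you cannot invoke the cotangent argument to restrict to wheels. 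Fortunately your core computation does not depend on this reduction: it applies verbatim to any diagram, with any number of loops, in which both legs of the bivalent vertex are internal. What is then missing is the complementary case where the bivalent vertex attaches to an external leg (so only one adjacent propagator and no $\dbar^*$-collision); the paper disposes of this case separately by noting that such diagrams have at most one loop and vanish by \cite[Proposition 4.4]{BWhol}. You should delete the cotangent-type reduction, run your $\dbar^*$-collision argument for arbitrary diagrams with both legs of the vertex internal, and add the external-leg case.
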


Our result then follows by applying this lemma to the theory $\mc E_{u=0}$ on $\CC^2$ with $I^A = I^A_u$ for some non-zero choice of~$u$.

\begin{proof}
Consider any Feynman diagram $\Gamma$ including the bivalent vertex associated to $I^A$, say with $k+1$ vertices.  Let us say that this bivalent vertex is located at $w \in \CC^n$, and the remaining vertices are located at $z_1, \ldots, z_k$.  The weight of this Feynman diagram is given by a limit of expressions of the form
\[W_\Gamma^{\eps < \Lambda} = \int_{[\eps, \Lambda]^{k+1}}\int_{\CC^{kn}_{z_1, \ldots, z_k}}\left(\int_{\CC^n_w} \ol \dd_w^* K_\eps(z_1 - w) \wedge \ol \dd_w^* K_\eps(w - z_2) \d w_1 \cdots \d w_n\right)F(z_1, \ldots, z_k) \d t_1, \ldots, d t_{k+1}, \]
where $F(z_1, \ldots, z_k)$ is some $t_i$ dependent differential form that is, crucially, independent of the location $w$ of the A-type bivalent vertex.  It is enough to note that the factor $\int_{\CC^n_w} \ol \dd_w^* K_\eps(z_1 - w) \ol \dd_w^* K_\eps(w - z_2) \d w_1 \cdots \d w_n$ vanishes for all values of $z_1, z_2$.  Indeed, if $A$ and $B$ are Dolbeault forms we can always calculate
\begin{align*}
\int_{\CC^n} \ol \dd^* A \wedge \ol \dd^* B \d w_1 \cdots \d w_n &= \int_{\CC^n} \ol \dd (\ast A) \wedge \ol \dd (\ast B) \d w_1 \cdots \d w_n \\
&=  \int_{\CC^n} \d (\ast A) \wedge \d (\ast B \wedge \d w_1 \cdots \d w_n) \\
&= 0
\end{align*}
using integration by parts.  Therefore the weight of the Feynman diagram $\Gamma$ vanishes identically.

The only remaining possibility is that all the A-type bivalent vertices are only connected to one propagator, i.e. that they directly connect to external edges.  Any such diagram must have at most one loop, and therefore has vanishing weight by~\cite[Proposition 4.4]{BWhol}.
\end{proof}

\section{Perturbation Theory Around a Vacuum}
\label{modvac}

So far we have discussed these field theories when expanded around a special solution to the equations of motion: the trivial solution {\em aka} the zero section.
At the classical level, the local $L_\infty$ algebras we have introduced describe the formal neighborhood of this solution,
and we have produced perturbative quantizations around this solution.
There are, however, other important solutions to consider -- namely, translation-invariant solutions --
and here we will show that our results extend easily to this class of solutions.  We will also discuss phenomena like symmetry breaking for these twisted theories.

The main outcome of this section is the following construction.

\begin{theorem} \label{vacua_family_of_theories_thm}
The family of quantum field theories over $\CC^3$ constructed in the previous section extend to a family of quantum field theories over the stack $\CC^3 \times \left[\gg^*/G\right]$, where $\left[\gg^*/G\right]$ is the coadjoint quotient stack. 
\end{theorem}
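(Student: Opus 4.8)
The plan is to understand a point of the stack $[\gg^*/G]$ as a translation-invariant solution to the equations of motion, expand the theory around that solution, and then verify that the one-loop exact quantization constructed in Proposition~\ref{B_anomaly_prop} survives the deformation $\CC[t_1,t_2,u]$-linearly over $\OO([\gg^*/G])$. First I would recall that in the family $\mc E$ of Theorem~\ref{twist_theorem}, a field is $\alpha + \eps\beta$ with $\alpha, \beta \in \Omega^{\bu,\bu}(\CC^2,\gg[\ast])$, and the $(0,0)$-component of $\beta$ is a $\gg$-valued function; a constant such value $x \in \gg^* \cong \gg$ is precisely a translation-invariant solution. (The identification $\gg \cong \gg^*$ via the invariant pairing lets us read $x$ either way, and the coadjoint $G$-action on $\gg^*$ is the gauge redundancy among these vacua, which is why the parameter space is the stack $[\gg^*/G]$ rather than the vector space $\gg^*$.) Concretely, I would pass to a new variable by translating $\beta \mapsto \beta + x$, producing a classical field theory $\mc E_x$ whose interaction is
\[
I_x(\alpha,\beta) = \tfrac12\int \beta \wedge [\alpha,\alpha] + \tfrac12 \int x \wedge [\alpha,\alpha].
\]
The new quadratic term $\tfrac12\int x\wedge[\alpha,\alpha]$ is linear in $x$ and is the only modification of the BV data.

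Next I would assemble this into a single family over the affine base $\gg^*$ before quotienting. Working $\CC[\gg^*] = \sym(\gg)$-linearly, I would write the universal shifted interaction $I^{\mathrm{univ}} = I + I^{x}$, where $I^x$ is the bivalent $\sym(\gg)$-linear vertex encoding $\tfrac12\int x\wedge[\alpha,\alpha]$. The key observation is that this new vertex, like the A-type deformation treated in Lemma~\ref{A_diagram_lemma}, is a quadratic deformation of the classical interaction involving \emph{no spacetime derivatives}. Hence the same holomorphic gauge-fixing operator $\dbar^*$, the same propagator $P_{\eps<\Lambda}$, and the same heat kernel $K_\eps$ as in Section~\ref{Btwist_section} apply verbatim and are all independent of $x$. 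I would therefore define the effective family $\{I^{\mathrm{univ}}[\Lambda]\}$ by the usual sum of Feynman weights $\sum_\Gamma W_\Gamma(P_{\eps<\Lambda}, I^{\mathrm{univ}})$, which is automatically $\sym(\gg)$-linear and satisfies the renormalization group flow equation by construction.

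The main step — and the point I expect to be the main obstacle — is the $\sym(\gg)$-linear quantum master equation. As in Proposition~\ref{B_anomaly_prop}, the order-$\hbar^0$ part is the classical master equation, which holds because $x \in \gg^*$ is a genuine solution, and the order-$\hbar^1$ obstruction reduces to a sum over wheel diagrams. Here I would invoke Lemma~\ref{A_diagram_lemma} directly: any wheel containing the bivalent vertex $I^x$ contributes a factor of the form $\int_{\CC^n_w}\dbar^*_w K_\eps(z_1-w)\wedge\dbar^*_w K_\eps(w-z_2)$, which that lemma shows vanishes identically by integration by parts. Thus only wheels built purely from the original cubic vertex $I$ survive, and these vanish by the unimodularity/trace argument of Proposition~\ref{holo_anomaly_prop} (recall $\gg$ is reductive, hence unimodular, so $\mathrm{Tr}_\gg(\mathrm{ad}\,X) = 0$; cf. Example~\ref{tadpole_example}). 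Hence $\lim_{\Lambda\to 0}\Theta[\Lambda] = 0$ and the quantization is one-loop exact over all of $\gg^*$, simultaneously with the three parameters $t_1,t_2,u$. Finally I would descend to the stack: since every construction above is manifestly $G$-equivariant — the coadjoint action on $x$ is intertwined with the gauge $G$-action on $\gg$, and the propagator and weights are built from $G$-invariant data — the family of effective interactions, together with its solution to the quantum master equation, is $G$-equivariant and therefore defines a family of quantum field theories over the quotient stack $\CC^3\times[\gg^*/G]$, as claimed. The delicate points to check carefully are the $G$-equivariance of the whole effective collection (to legitimately descend to the stack) and that the translated interaction $I^x$ indeed involves no spacetime derivatives, so that Lemma~\ref{A_diagram_lemma} applies; both follow from the explicit form of the vertex $\tfrac12\int x\wedge[\alpha,\alpha]$.
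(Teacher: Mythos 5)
Your strategy for the B-type locus ($u=0$) is essentially the paper's: interpret $x\in\gg^*\cong\gg$ as a constant value of the $\beta$-field, observe that the resulting modification of the BV data is purely quadratic in $\alpha$ with no spacetime derivatives, check that the propagator is unaffected, kill the one-loop anomaly by the wheel arguments, and descend to the stack by strict $G$-invariance of the propagator and interaction. (One cosmetic difference: the paper absorbs $\eps\,\ad_x$ into the kinetic term rather than treating $\tfrac12\int\langle x,[\alpha,\alpha]\rangle$ as a new bivalent interaction vertex; since $\dbar^*$ commutes with $\eps\,\ad_x$, the effective action $I[\Lambda]$ is then literally constant in $x$, and the quantum master equation reduces to the single identity $[x,I[\Lambda]]=0$, proved by the same algebraic wheel-vanishing as Equation~\eqref{eqn:familyqme}. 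Your route through Lemma~\ref{A_diagram_lemma} is a reasonable alternative for the anomaly, though the tree-level dressing by the bivalent vertex makes your effective family $x$-dependent, unlike the paper's.)

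The genuine gap is over the A-type locus $u\neq 0$. There, a constant $\beta$-field $x$ is \emph{not} a translation-invariant solution of the equations of motion: the differential contains $u\,\frac{\dd}{\dd\eps}$, which sends $\eps x$ to $ux\neq 0$, so the Maurer--Cartan equation fails and your assertion that ``the order-$\hbar^0$ part is the classical master equation, which holds because $x$ is a genuine solution'' is false unless $u=0$ or $x$ is central. Equivalently, the operators $\eps\,\ad_x$ and $u\,\frac{\dd}{\dd\eps}\,\id_{\gg}$ do not anticommute to zero, so the naively shifted data is not a classical BV theory at all. The paper's extension over $\CC^3\times[\gg^*/G]$ therefore requires a genuine modification at $u\neq0$: one replaces $u\,\frac{\dd}{\dd\eps}\,\id_\gg$ by $u\,\frac{\dd}{\dd\eps}\,\id_{\gg_x}$, the restriction to the centralizer of $x$, which does anticommute with $\eps\,\ad_x$ (using $\gg_x=\im(\ad_x)^\perp$ from the proof of Lemma~\ref{symmetry_breaking_lemma}); this is how the classical family $\mc T^{\mr{vac}}_{t_1,t_2,u}$ is actually defined, and the paper is explicit that these A-type fibers are not obtained as perturbation theory around vacua. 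Your proof needs this definition in hand before the quantization argument can even begin over the A-type locus; once it is in place, the rest of your argument (unchanged propagator, vanishing wheels, $G$-equivariance of all ingredients) goes through as in the paper.
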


Here the incorporation of the $\gg^*$ factor corresponds to computing the perturbation theory around a non-zero vacuum solution for any given twisted theory.  The subfamily over $\CC^3 \times \{0\}$ recovers the family of quantum field theories constructed in the previous section.

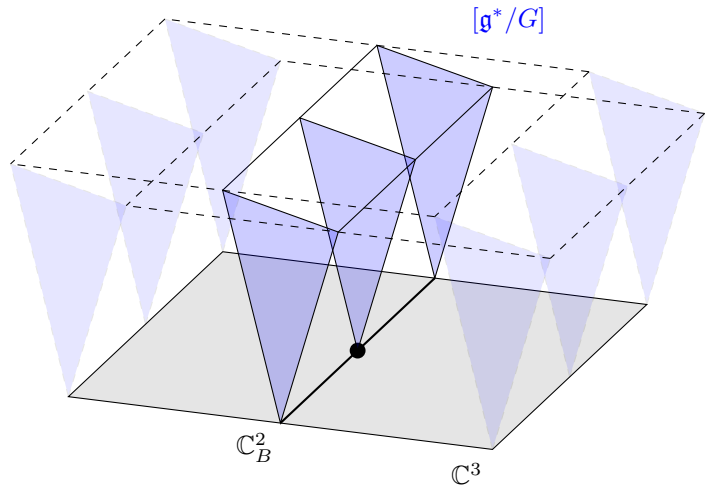
\begin{figure}[h]
 \begin{center}
 \tdplotsetmaincoords{70}{110}
 \begin{tikzpicture}[tdplot_main_coords,fill opacity=0.2]

 \draw[fill=gray] (3,-3,0) -- (3,3,0) -- (-3,3,0) -- (-3,-3,0) -- cycle;
 \draw[thick] (-3,0,0) -- (3,0,0);
 \fill[black,opacity=1] (0,0,0) circle (3pt);
 
 \draw[fill=blue] (-3,0,0) -- (-3.5,-1,3) -- (-2.5,1,3) -- cycle;
 \draw[fill=blue] (0,0,0) -- (-0.5,-1,3) -- (0.5,1,3) -- cycle;
 \draw[fill=blue] (3,0,0) -- (2.5,-1,3) -- (3.5,1,3) -- cycle;
 \draw (-3.5,-1,3) -- (2.5,-1,3);
 \draw (-2.5,1,3) -- (3.5,1,3);
 
 \draw[dashed, fill=blue, opacity=0.1] (-3,-3,0) -- (-3.5,-4,3) -- (-2.5,-2,3) -- cycle;
 \draw[dashed, fill=blue, opacity=0.1] (0,-3,0) -- (-0.5,-4,3) -- (0.5,-2,3) -- cycle;
 \draw[dashed, fill=blue, opacity=0.1] (3,-3,0) -- (2.5,-4,3) -- (3.5,-2,3) -- cycle;
 \draw[dashed] (-3.5,-4,3) -- (2.5,-4,3);
 \draw[dashed] (-2.5,-2,3) -- (3.5,-2,3);
 
 \draw[dashed, fill=blue, opacity=0.1] (-3,3,0) -- (-3.5,2,3) -- (-2.5,4,3) -- cycle;
 \draw[dashed, fill=blue, opacity=0.1] (0,3,0) -- (-0.5,2,3) -- (0.5,4,3) -- cycle;
 \draw[dashed, fill=blue, opacity=0.1] (3,3,0) -- (2.5,2,3) -- (3.5,4,3) -- cycle;
 \draw[dashed] (-3.5,2,3) -- (2.5,2,3);
 \draw[dashed] (-2.5,4,3) -- (3.5,4,3);
 
 \draw[dashed] (-3.5,2,3) -- (-3.5,-4,3);
 \draw[dashed] (2.5,2,3) -- (2.5,-4,3);
 \draw[dashed] (-2.5,4,3) -- (-2.5,-2,3);
\draw[dashed] (3.5,4,3) -- (3.5,-2,3);

\node[opacity=1] (line) at (4,0,0) {$\mathbb C^2_B$};
\node[opacity=1] (line) at (4,3,0) {$\mathbb C^3$};
\node[color=blue, opacity=1] (line) at (-4.5,0.5,3.2) {$[\gg^*/G]$};
\end{tikzpicture} 
\end{center}
\caption{An illustration of the space $\CC^3 \times [\gg^*/G]$ parameterising the choice of a pair $(Q,[x])$, where $Q$ is a square-zero supercharge and $[x]$ is a choice of vacuum.}
\end{figure}

\begin{remark}
Let us clarify what is meant by a family over $[\gg^*/G]$.
We simply mean a family of theories over the affine space $\gg^*$ equipped with an action of $G$ that is equivariant for the map down to the base. This equivariant family of theories over $\gg^*$ thus determines a family of theories over the quotient stack.
~\hfill$\Diamond$\end{remark}

Below we will first discuss what we mean by vacua, describe the associated family of classical theories, and then describe the quantization.

\subsection{Moduli of Vacua}

\subsubsection{First Steps Towards Vacua} \label{B_vacua_section}

In this section we will review the idea of the moduli space of vacua for a classical field theory, and what it means to consider not just a single perturbative field theory, but a family of perturbative field theories parameterized by a choice of vacuum.  This section will provide motivation for the constructions we describe in the subsequent sections, where we demonstrate that the twisted field theories studied above can be generalized to families of perturbative field theories depending on an auxiliary space, the \emph{coadjoint quotient stack} $[\gg^*/G]$.

Let us begin by recalling the idea of a vacuum solution.
\begin{definition}
The (classical) \emph{moduli of vacua} of a classical field theory on $\RR^n$ is the derived stack of translation invariant solutions to the equations of motion.
\end{definition}

\begin{remark}
Note that in order to study such moduli spaces rigorously, we must work with the nonperturbative classical field theory, rather than studying the perturbative expansion around a fixed solution.
A completely rigorous discussion would require us to formulate a derived moduli space of solutions, amongst which we would pick out the translation-invariant sublocus.  
As yet, no adequate treatment of field theory using global derived differential super-geometry is yet available to do this.
~\hfill$\Diamond$\end{remark}

\begin{remark}
In Remark~\ref{vacua of untwisted theory} below, 
we turn to the vacua of the untwisted theory,
in light of our discussion of the vacua for the twisted theories.
~\hfill$\Diamond$\end{remark}

Let us begin by studying the moduli of vacua for the B-twisted theory, that is, the twist occuring at the point $(1,1,0) \in \CC^3$.  There is a natural choice of derived stack that enhances the perturbative classical field theory $\mc T_{1,1,0}$ of Section~\ref{summary_twist_section}.

\begin{claim} \label{nonperturbative_B_claim}
The moduli stack of solutions to the equations of motion for the B-twisted theory on a 4-manifold $X$ is $\mc M_B(X) = \mr{Map}(X_{\mr{dR}}, T^*[3]BG)$.  For instance if $X$ is compact, $\mc M_B(X) =T^*[-1]\mr{Flat}_G(X)$, where $\mr{Flat}_G(X)$ is the moduli stack of flat $G$-bundles on~$X$.

If $X = \RR^4$, then we can similarly identify
\begin{align*}
 \mc M_B(\RR^4) &= T^*[3]\mr{Flat}_G(\RR^4) \\
 &\iso T^*[3]BG \\ 
 &\iso \left[\gg^*[2]/G\right],
\end{align*}
the ($-2$-shifted) \emph{coadjoint quotient stack}.  The group $\RR^4$ of translations acts trivially on this stack, so the moduli of vacua is also equivalent to~$\left[\gg^*[2]/G\right]$.
\end{claim}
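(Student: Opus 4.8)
The plan is to identify the B-twisted classical theory at $(1,1,0)$ with an AKSZ-type $\sigma$-model and then read off the moduli of solutions as a mapping stack, computing it explicitly when $X = \RR^4$. First I would observe that at the point $(1,1,0)$ the differential $\d_{\mc E}$ of Theorem~\ref{twist_theorem} specializes to $\ol\dd + \dd_{z_1} + \dd_{z_2} = \d_{\mr{dR}}$, the full de Rham operator on $\RR^4 = \CC^2$, and the $u\,\partial/\partial\eps$ coupling is switched off. Together with the interaction $I(\alpha,\beta) = \frac12\int\beta\wedge[\alpha,\alpha]$, this exhibits the B-twist as the cotangent theory to the local $L_\infty$ algebra $\Omega^\bullet(X)\otimes\gg$, i.e.\ as a generalized BF theory in the de Rham direction, with $\alpha$ the connection-like field and $\beta$ its cotangent partner. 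Under the dictionary of derived deformation theory, $\Omega^\bullet(X)\otimes\gg$ is precisely the shifted tangent complex at the trivial flat connection of $\mr{Map}(X_{\mr{dR}}, BG) = \mr{Flat}_G(X)$ — the statement that flat $\gg$-connections are Maurer--Cartan elements of the de Rham complex. Passing to the cotangent theory corresponds, on the global side, to forming a shifted cotangent stack, which motivates the identification $\mc M_B(X) = \mr{Map}(X_{\mr{dR}}, T^*[3]BG)$.

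Next I would pin down the degree shifts. By the AKSZ construction, the $(-1)$-shifted symplectic pairing of the B-twist is the transgression of the canonical $3$-shifted symplectic form on $T^*[3]BG$ along $X_{\mr{dR}}$; for a compact oriented $4$-manifold, $X_{\mr{dR}}$ is $4$-oriented via $\int_X$, so transgression shifts by $-4$ and produces $3 - 4 = -1$, yielding $\mc M_B(X) \iso T^*[-1]\mr{Flat}_G(X)$ as claimed. For the non-compact case $X = \RR^4$ the argument instead uses that the de Rham stack $\RR^4_{\mr{dR}}$ is contractible to a point, so $\mr{Flat}_G(\RR^4) = \mr{Map}(\mr{pt}, BG) = BG$ and $\mc M_B(\RR^4) = \mr{Map}(\mr{pt}, T^*[3]BG) = T^*[3]BG$ with no dimensional shift. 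Finally I would compute $T^*[3]BG$ directly: the tangent complex of $BG$ at the basepoint is $\gg[1]$, so the $3$-shifted cotangent fiber is $(\gg[1])^\vee[3] = \gg^*[2]$, and the total space is the quotient of $\gg^*[2]$ by the coadjoint $G$-action, giving $[\gg^*[2]/G]$. For the last sentence of the claim, I would note that the $\RR^4$-action by translations becomes homotopically trivial after passing to $\RR^4_{\mr{dR}} \simeq \mr{pt}$ (the translation group is connected to the identity and $X_{\mr{dR}}$ only records the homotopy/infinitesimal type), so every solution is translation-invariant and the moduli of vacua coincides with $\mc M_B(\RR^4) \iso [\gg^*[2]/G]$.

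The main obstacle is conceptual rather than computational: as the surrounding remarks acknowledge, there is no fully rigorous global derived differential-geometric model of the nonperturbative moduli stack $\mc M_B(X)$. The honest content of the argument is therefore twofold: (i) to verify that the formal completion of the proposed stack $\mr{Map}(X_{\mr{dR}}, T^*[3]BG)$ at the trivial solution recovers precisely the local $L_\infty$ algebra $\mc E[-1]$ together with its $(-1)$-shifted pairing from Theorem~\ref{twist_theorem}, and (ii) to justify the AKSZ transgression and the cotangent-of-mapping-stack formula in the algebraic model where $X_{\mr{dR}}$ is available. Granting these, the explicit identification $T^*[3]BG \iso [\gg^*[2]/G]$ and the triviality of the translation action are formal; the care required lies entirely in matching the perturbative BV data to the global stack and in the bookkeeping of cohomological shifts, which—as noted elsewhere—is only well defined up to an even shift once we work $\ZZ/2$-gradedly.
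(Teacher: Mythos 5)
Your proposal is consistent with the paper, but you should be aware that the paper does not actually prove this statement: it is deliberately labelled a \emph{Claim}, and the surrounding remarks state explicitly that no rigorous global derived-differential-geometric model of the nonperturbative solution stack is available, deferring to \cite{EY1} for a related derivation. What you have written is the standard AKSZ/mapping-stack motivation that the authors have in mind but do not spell out, and your bookkeeping is correct throughout: at $(1,1,0)$ the twisted differential becomes $\d_{\mr{dR}}$ and the theory is the cotangent theory of $\Omega^\bullet(X)\otimes\gg$; the target $T^*[3]BG$ is $3$-shifted symplectic, so transgression along a $4$-oriented $X_{\mr{dR}}$ gives the required $(-1)$-shift and the identification $\mr{Map}(X_{\mr{dR}},T^*[3]BG)\iso T^*[-1]\mr{Flat}_G(X)$ for compact $X$; for $X=\RR^4$ one instead uses $\RR^4_{\mr{dR}}\simeq\mr{pt}$, so no shift occurs and one lands on $T^*[3]BG\iso[\gg^*[2]/G]$ with $(\gg[1])^\vee[3]=\gg^*[2]$; and translations act trivially because they act trivially on $\RR^4_{\mr{dR}}$. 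You also correctly locate the honest content in step (i), matching the formal completion of the proposed stack to the perturbative BV data of Theorem~\ref{twist_theorem}; the paper effectively performs this check in the later construction of the family over $[\gg^*/G]$, where the tangent $L_\infty$ algebra at a point $x\in\gg^*$ is written down explicitly. In short, your argument supplies more justification than the paper itself offers, and the only caveat is the one both you and the authors acknowledge: the passage from the Lagrangian theory to a global derived stack of solutions is a heuristic, not a theorem.
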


\begin{remark}
Although we are not rigorously justifying these nonperturbative descriptions of our twisted theories, we do refer to a related discussion in \cite{EY1}, where descriptions of this type are derived from the starting point of a non-perturbative description for holomorphic Chern--Simons theory on $\mc N=4$ super twistor space.
~\hfill$\Diamond$\end{remark}

Let us now begin to work in a slightly different context: for the rest of this section we will work with a $\ZZ/2\ZZ$-grading, not a $\ZZ$-grading. 
That is, we will forget the $\ZZ$-grading on our BV complexes down to a $\ZZ/2\ZZ$-grading.  

\begin{remark}
After reducing the grading, we gain the freedom to study perturbation theory around non-zero points in the moduli of vacua, which would otherwise have corresponded to inhomogeneous (but purely even) ideals in its ring of functions.  At least in the example of the B-twisted theory it is possible to recover a $\ZZ$-graded story at the expense of introducing some more technical modifications, which we will not be concerned with in this paper.  See \cite[Section 3.4]{EY2} for a discussion of this issue.
~\hfill$\Diamond$\end{remark}

The discussion of Claim~\ref{nonperturbative_B_claim} motivates the following definition.

\begin{definition}
For the B-twisted theory with gauge group given by a reductive algebraic group $G$,
the (stacky) {\em moduli of vacua} is the coadjoint quotient stack $[\gg^*/G]$.
Its {\em coarse moduli of vacua} is the quotient space $\hh^*/W$,
arising from the action of the Weyl group~$W$ on the dual space $\hh^*$ to the Cartan subalgebra $\hh$ of~$\gg$.
\end{definition}

It is worthwhile to note how these two moduli differ.
The stack is a rather sophisticated approach that keeps track of a lot of information,
such as the isotropy group (or stabilizer) of a point $x \in \gg^*$.
The coarse space shows up when one constructs a quotient via the ring of functions,
because 
the $G$-invariant functions on $\gg^*$ are isomorphic to the $W$-invariant functions on $\hh^*$.
In the setting of algebraic geometry, this result is known as the Chevalley restriction theorem~\cite{ChevalleyICM}; one says that the affine variety $\hh^*/W$ is the affinization of the algebraic stack~$[\gg^*/G]$. 
There is a canonical map $[\gg^*/G] \to \hh^*/W$,
sending a conjugacy class to its ``generalized eigenvalues'' (or better, its characteristic polynomial).

This simpler quotient space forcibly identifies distinct solutions and hence is merely a coarse approximation.
To see this, consider the case $G = \GL_n(\CC)$, and let us use a nondegenerate invariant pairing on $\gg$ to identify $\gg^*$ and $\gg$.
The quotient $\hh/W$ arises by looking at diagonal matrices up to permutation of eigenvalues (e.g., by taking the characteristic polynomial or by taking traces of powers of the matrix).
But not every matrix in $\gg$ is conjugate to a diagonal matrix,
as we know from the theory of Jordan normal form.
Hence the map $[\gg^*/G] \to \hh^*/W$ remembers the generalized eigenvalues of a matrix (up to permutation) but does not remember the Jordan blocks.

\begin{remark}
Physicists often describe constructions involving coarse moduli of vacua such as $\hh^*/W$, but mention where various points have ``enhanced symmetry'' (see e.g. \cite{SeibergVacua,LutyTaylor}). 
The stacky quotient captures this enhanced symmetry data in a mathematically precise way.
~\hfill$\Diamond$\end{remark}

\subsubsection{Vacua for Other Twists} \label{holo_vacua_section}

Now, what happens when we try to generalize this discussion from the B-twist to the other twisted theories?  Let us next consider the holomorphic twist.  We can make a natural (and again, not rigorously justified) guess for a non-perturbative description for the holomorphic twist, much like we did for the B-twist.

\begin{claim} \label{nonperturbative_holo_claim}
The moduli stack of solutions to the equations of motion for the holomorphically twisted theory on a complex surface $X$ is $\mc M_{\mr{hol}}(X) = \mr{Map}(X_{\mr{Dol}}, T^*[3]BG)$.  In particular, if $X$ is compact, $\mc M_{\mr{hol}}(X) = T^*[-1]\higgs_G(X)$, where $\mr{Higgs}_G(X)$ is the moduli stack of $G$-Higgs bundles on~$X$.

If $X = \CC^2$, then we can similarly identify
\[\mc M_{\mr{hol}}(\CC^2) = T^*[3]\higgs_G(\CC^2).\]
\end{claim}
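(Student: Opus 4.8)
The plan is to establish the holomorphic version of Claim~\ref{nonperturbative_B_claim} by running the same argument with Simpson's Dolbeault stack $X_{\mr{Dol}} = T[1]X$ in place of the de Rham stack $X_{\mr{dR}}$. As with the B-twist the reasoning is heuristic --- no global derived super-geometry is available --- so the strategy is first to check that the proposed mapping stack has the correct formal neighborhood at the trivial solution, reproducing the perturbative BV complex of the holomorphic twist, and then to invoke the AKSZ/shifted-cotangent formalism to read off the global shape, paralleling the twistor-space derivation of~\cite{EY1}.

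First I would match the perturbative data. The holomorphic twist $\mc E_{0,0,0}$ is the holomorphic BF theory whose base is the local $L_\infty$ algebra $\Omega^{0,\bullet}(\CC^2, \gg[\eps_1,\eps_2])$, where $\eps_1,\eps_2$ encode the holomorphic one-forms $\d z_1, \d z_2$. Under the deformation-theory dictionary of Section~\ref{BV_review_section} this is precisely the Chevalley--Eilenberg presentation of the tangent complex of $\mr{Map}(X_{\mr{Dol}}, BG)$ at the trivial bundle: the linear $\eps_i$-terms of $\gg[\eps_1,\eps_2]$ supply the Higgs directions $\phi \in \mr H^0(X, \Omega^1_X \otimes \gg)$, while the $\ol\dd$ differential resolves the holomorphic structure. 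This identifies the base of the BF theory with $\higgs_G(X) = \mr{Map}(X_{\mr{Dol}}, BG)$, the Dolbeault analogue of $\mr{Flat}_G(X) = \mr{Map}(X_{\mr{dR}}, BG)$; it is the $t=0$ fibre of the interpolating $L_{t\text{-Hod}}$ construction, as dictated by the de Rham/Dolbeault dichotomy of nonabelian Hodge theory.

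Next I would reinstate the cotangent (BF) direction and globalize. The $B$-field $\Omega^{2,\bullet}(\CC^2, \gg^\vee)$ pairs against the base through the standard Calabi--Yau structure on $\CC^2$ and supplies a shifted cotangent fibre, so formally locally $\mc M_{\mr{hol}}$ is the completion of a shifted cotangent of $\higgs_G(X)$. Declaring the nonperturbative moduli of solutions to be the full mapping stack $\mr{Map}(X_{\mr{Dol}}, T^*[3]BG)$ --- whose formal neighborhood at the trivial solution recovers the BV complex --- yields the first assertion. The compact and $\CC^2$ statements then follow from the AKSZ identity $\mr{Map}(X_{\mr{Dol}}, T^*[3]BG) \iso T^*[3-d]\higgs_G(X)$, where $d$ is the orientation dimension of $X_{\mr{Dol}}$: a compact complex surface with trivial canonical bundle is $4$-oriented by Serre duality, giving the shift $3-4=-1$ and hence $T^*[-1]\higgs_G(X)$, whereas the Stein manifold $\CC^2$ carries no such finite pairing, so $d=0$ and we obtain $T^*[3]\higgs_G(\CC^2)$.

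The main obstacle is the globalization. Matching formal neighborhoods at the trivial bundle is routine, but upgrading this to an identification of the entire stack of solutions requires a working theory of global derived (super)geometry for Lagrangian field theories, together with control of the equations of motion away from the basepoint --- in particular the Higgs integrability constraint $[\phi,\phi]=0$ and its interaction with $\ol\dd$-holomorphicity. A secondary subtlety is the Serre-duality bookkeeping responsible for the two different shifts $-1$ and $3$: making the shifted-cotangent identification precise forces the compact surface to carry a genuine Calabi--Yau structure, which I would record as a standing hypothesis rather than attempt to remove.
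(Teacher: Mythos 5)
You should first note that the paper does not actually prove this statement: it is labeled a ``claim'' and is introduced as ``a natural (and again, not rigorously justified) guess,'' and the globalization gap you identify at the end --- the absence of a global derived (super)geometric framework for Lagrangian field theories --- is precisely why the authors stop short of a proof. Your sketch (match the perturbative BV complex to the tangent complex of the mapping stack at the trivial solution, then invoke the AKSZ/shifted-cotangent formalism for the global shape) is the justification the paper gestures at, so in spirit you are aligned with the text.

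Two pieces of your Serre-duality bookkeeping are off, however. First, the compact case does not require a Calabi--Yau hypothesis: the degree-$4$ orientation on $X_{\mr{Dol}}$ is the trace map on $\mr H^2(X, \Omega^2_X) \cong \CC$, which exists and is nondegenerate on $\bigoplus_{p,q}\mr H^q(X,\Omega^p_X)$ by Serre duality for \emph{any} compact complex surface; a trivialization of $K_X$ would only be needed if the fields were built from $\Omega^{0,\bullet}$ alone, whereas here they live in the full Dolbeault complex $\Omega^{\bullet,\bullet}$. The paper's claim accordingly imposes no Calabi--Yau condition, and your ``standing hypothesis'' should be dropped. Second, your derivation of the shift $3$ for $X = \CC^2$ via ``$d=0$'' is not correct: $\CC^2_{\mr{Dol}}$ carries no nondegenerate degree-$0$ orientation (its function algebra $\bigoplus_p \mr H^0(\CC^2,\Omega^p)[-p]$ is infinite-dimensional and not self-dual), so the AKSZ identity does not apply in any degree. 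The identification $\mr{Map}(\CC^2_{\mr{Dol}}, T^*[3]BG) = T^*[3]\higgs_G(\CC^2)$ is instead the tautological statement that a map into a shifted cotangent is a map into the base together with a section of the pulled-back shifted cotangent complex --- contrast this with the B-twist on $\RR^4$, where $\RR^4_{\mr{dR}} = \pt$ genuinely is $0$-oriented and the analogous identification is an honest equivalence. Since the whole claim is heuristic these slips do not invalidate your proposal, but they misplace where the actual content of the two displayed identifications lies.
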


The group $\RR^4$ of translations now acts non-trivially on $\mc M_{\mr{hol}}(\CC^2)$, 
but we can easily pick out a substack of translation invariant solutions, namely the stack $\left[\gg^*[2]/G\right] = T^*[3]BG$, 
where we consider only the trivial Higgs bundle with zero Higgs field, modulo translation invariant gauge transformations.  
This space is \emph{not} the full moduli of vacua; 
for instance, we could consider the shifted cotangent space to all constant Higgs fields, not necessarily zero-valued.  
However, only this substack of zero Higgs fields extends to vacua for the B-type twists.
We will thus focus on those vacua that extend to the whole space of twists, in a certain sense to be explained.

In fact, we will exhibit a family of theories where we vary \emph{two} parameters:
\begin{enumerate}
\item first, the choice of twisting parameter $Q \in \CC^3$, and
\item second, a point $x \in \left[\gg^*/G\right]$.
\end{enumerate}
We have seen why $\left[\gg^*/G\right]$ parametrizes vacua for the B-type twists,
but the situation for the A-type twists is more subtle.
We will not construct the moduli of vacua for the A-type twists, 
for reasons discussed in the next Remark,
but we will produce an A-type theory for each $x \in \gg^*$ below.

\begin{remark}\label{A_vacua_remark}
The moduli of vacua for these theories have the potential to include new and interesting additional features.  The moduli stack we have been considering for the holomorphic twist can alternatively be described using Simpson's notion of the Dolbeault stack: we have a shifted adjoint quotient stack
\[
BG_\mr{Dol} = T[1]BG = [\gg[2]/G],
\]
where $BG = [\pt/G]$.
Hence, in the $\ZZ/2\ZZ$-graded setting, 
the adjoint quotient stack is isomorphic to $\Pi T(BG)$ ---
the parity-shifted tangent bundle to $BG$ --- 
and hence it is a Dolbeault stack as
\[
[\gg/G] = BG_\mr{Dol}.
\]
This identification is convenient for us, but 
moving to the $\ZZ/2\ZZ$-graded world drastically changes the geometry:
as a $\ZZ/2\ZZ$-graded space, there are many points in $BG_\mr{Dol}$ (i.e. maps from $\pt = \spec(\CC)$),
while in $\ZZ$-graded derived geometry, there is only one point in~$BG_\mr{Dol}$.

Passing to the A-twist should correspond {\em in the $\ZZ/2\ZZ$-graded setting} to Simpson's Hodge deformation, moving from the Dolbeault to the de Rham stack of $BG$.  It would be interesting to perform a careful analysis of this construction, but we will not do this in the present paper.
~\hfill$\Diamond$\end{remark}

\begin{remark}\label{vacua of untwisted theory}
Finally, we turn to vacua for the original, untwisted theory. 
{\it A priori} one might expect the moduli of vacua here to be even larger than that of the holomorphic twist.  
Such vacua are discussed by Vafa and Witten in \cite[Section 5.1]{VafaWitten}, where the moduli of vacua for the untwisted $\mc N=4$ theory (without turning on a mass deformation) is described as
\[\left[\{X,Y,Z \in \gg^3 \colon [X,Y]=[Y,Z]=[Z,X]=0\}/G\right],\]
the adjoint quotient of the triply commuting variety of $\gg$.  There is a natural embedding of $\left[\gg^*/G\right]$ into this quotient stack using the diagonal map $\gg \to \gg^3$, and identifying $\gg^*$ and $\gg$ using a non-degenerate invariant pairing.
~\hfill$\Diamond$\end{remark}

\subsection{Families of Classical Field Theories over \texorpdfstring{$[\gg^*/G]$}{[g*/G]}}

We turn to an important feature of the moduli of vacua:
it parametrizes a family of perturbative classical field theories, by remembering the formal neighborhood of the vacua inside the space of all solutions.
We will describe a sheaf of local $L_\infty$ algebras over $[\gg^*/G]$ that encodes this information.

\subsubsection{The holomorphic twist and the Higgs mechanism}

For simplicity of notation, we continue to use a nondegenerate invariant pairing on $\gg$ to identify $\gg^*$ with $\gg$,
and we will freely pass between them below.
Concretely, if we pick an element $x \in \gg$, it determines a translation-invariant solution $(\dbar,x)$ to the equations of motion for holomorphic BF theory.
There is a local $L_\infty$ algebra on $\CC^2$ given by 
\[
(\Omega^{\bu,\bu}(\CC^2) \otimes \gg[\eps][1], \dbar + \eps [x, -])
\]
where $\gg[\eps]$ is the tensor product of $\gg$ with $\CC[\eps]$ with $|\eps| = -1$.
The Maurer-Cartan equation encodes the perturbation expansion of holomorphic BF theory around this solution, as can be quickly checked.
This construction clearly varies nicely in the choice of  $x \in \gg$, 
so that we get a bundle of local $L_\infty$ algebras (equivalently, perturbative classical field theories), as only the $[-]_1$ bracket varies.
It is straightforward to verify that this bundle is equivariant under conjugation,
so that this bundle on $\gg^*$ descends to a bundle on the quotient~$[\gg^*/G]$.

\begin{lemma}
For each complex reductive group $G$,
this construction produces a family $\mc{T}_{0,0,0}^{\mr{vac}}[G]$ over the moduli of vacua $[\gg^*/G]$ of ($\ZZ/2\ZZ$-graded) perturbative classical field theories on~$\CC^2$.
\end{lemma}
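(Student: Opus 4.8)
The plan is to exhibit the stated family as a twist of the local $L_\infty$ algebra of holomorphic BF theory by a translation-invariant Maurer--Cartan element, and then to check that this twist varies equivariantly in the parameter. First I would recall that, as an $L_\infty$ algebra, holomorphic BF theory is the dg Lie algebra $(\Omega^{\bu,\bu}(\CC^2) \otimes \gg[\eps], \dbar)$ with binary bracket induced from the Lie bracket on $\gg$ and the wedge product of Dolbeault forms, so that the only nonvanishing brackets are $l_1 = \dbar$ and $l_2 = [-,-]$. Given $x \in \gg$, I would place it in the appropriate degree via the odd parameter $\eps$, so that the constant section $\eps x$ is a well-defined element of this $L_\infty$ algebra. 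Because $x$ is translation invariant we have $\dbar(\eps x) = 0$, and because $[x,x] = 0$ all higher terms vanish, so $\eps x$ solves the Maurer--Cartan equation. This is the precise sense in which $(0,x)$ is a translation-invariant solution to the equations of motion.

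Next I would form the twist of the $L_\infty$ algebra by the Maurer--Cartan element $\eps x$. Since only $l_1$ and $l_2$ are nonzero, the twisted brackets $l_k^{\eps x}(-) = \sum_{j \geq 0} \tfrac{1}{j!}\, l_{k+j}\big((\eps x)^{\otimes j}, -\big)$ collapse to $l_1^{\eps x} = \dbar + \eps[x,-]$ and $l_2^{\eps x} = l_2$, with all higher brackets still zero. This reproduces exactly the algebra $(\Omega^{\bu,\bu}(\CC^2) \otimes \gg[\eps][1], \dbar + \eps[x,-])$, and the general theory of twisting guarantees that it is again a local $L_\infty$ algebra (equivalently, a solution of the classical master equation); as a sanity check one sees directly that the differential squares to zero, since $\dbar^2 = 0$, the cross terms cancel as $x$ is a $\dbar$-closed constant, and $(\eps[x,-])^2 = 0$ because $\eps^2 = 0$. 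The shifted symplectic pairing $\omega$ is unchanged, and I would verify that it remains cyclic for the deformed differential: for $\dbar$ this is the integration-by-parts statement already present in the undeformed theory, while for the new term it reduces to the identity $\langle [x,a], b \rangle_\gg + (-1)^{|a|}\langle a, [x,b] \rangle_\gg = 0$, which is precisely the $\mr{ad}$-invariance of the chosen pairing on $\gg$.

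Finally I would assemble these fiberwise theories into a family and check descent to the stack. As $x$ ranges over $\gg^* \cong \gg$, the only $x$-dependent part of the data is the term $\eps[x,-]$, which depends $\CC$-linearly on $x$; hence all structure maps are linear over $\mc O(\gg^*) = \sym(\gg)$, producing a family in the sense reviewed in Remark~\ref{family_def}. For equivariance, the $G$-action on $\Omega^{\bu,\bu}(\CC^2) \otimes \gg[\eps]$ through the adjoint action on the $\gg$ factor intertwines the theory at $x$ with the theory at $\mr{Ad}_g x$, using $\mr{Ad}_g[x,-] = [\mr{Ad}_g x, \mr{Ad}_g(-)]$ together with the $G$-invariance of the pairing, and the coadjoint action on $\gg^*$ matches this under the identification $\gg^* \cong \gg$. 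By the interpretation of families over $[\gg^*/G]$ recalled just above, this equivariant family over $\gg^*$ descends to the desired family $\mc{T}_{0,0,0}^{\mr{vac}}[G]$ over the quotient stack. I expect the only real subtlety to lie in tracking Koszul signs in the $\ZZ/2\ZZ$-graded cyclicity check and in confirming that the twisting construction preserves cyclicity; both reduce to the invariance of the pairing on $\gg$, so the argument is essentially bookkeeping rather than a genuine obstacle.
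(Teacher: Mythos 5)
Your proposal is correct and follows essentially the same route as the paper, which proves this lemma only by the construction sketched in the preceding paragraph: realize $x$ (placed in odd degree via $\eps$) as a translation-invariant Maurer--Cartan element, twist so that only the unary bracket changes to $\dbar + \eps[x,-]$, observe linear dependence on $x$ and conjugation-equivariance, and descend to $[\gg^*/G]$. You simply fill in the details the paper labels ``can be quickly checked'' and ``straightforward to verify'' (square-zero of the twisted differential, cyclicity of the pairing via $\mr{ad}$-invariance), all of which are accurate.
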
 

We assert that this family should encode the formal neighborhood of the translation invariant subspace $[\gg^*/G]$ sitting inside the moduli space of solutions.

This family has some lovely and intriguing features, 
called the {\em symmetry breaking phenomenon} or Higgs mechanism by physicists.
The idea is quite simple: there is an equivalence of the perturbative theory around a translation-invariant solution $x \in \gg$ with a different perturbative theory.
In our case, for each point $x$, there is an isotropy (or stabilizer) group $G_x \subset G$ fixing $x$ under conjugation,
and the perturbative expansion of the $G$-BF theory around $x$ agrees with the perturbative expansion for the $G_x$-BF theory around $0$.  Let us say this more precisely.

\begin{lemma}\label{symmetry_breaking_lemma}
There is an equivalence of perturbative theories
\[\mc T_{0,0,0}^{\mr{vac}}[G]\big|_x \iso \mc T_{0,0,0}^{\mr{vac}}[G_x]\big|_0\]
between the perturbative theory over a vacuum $x$ for gauge group $G$, and the perturbative theory over the vacuum 0 for gauge group given by the centralizer~$G_x$.
\end{lemma}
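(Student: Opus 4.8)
The plan is to exhibit the claimed equivalence by a direct computation at the level of local $L_\infty$ algebras, since both perturbative theories are presented explicitly in the form $(\Omega^{\bu,\bu}(\CC^2) \otimes \ell[1], \dbar + \eps[x,-])$ for an appropriate Lie algebra $\ell$, with the interaction determined by the Lie bracket. First I would recall that the left-hand side $\mc T_{0,0,0}^{\mr{vac}}[G]\big|_x$ is governed by the local $L_\infty$ algebra
\[
\big(\Omega^{\bu,\bu}(\CC^2) \otimes \gg[\eps][1], \; \dbar + \eps[x,-]\big),
\]
while the right-hand side $\mc T_{0,0,0}^{\mr{vac}}[G_x]\big|_0$ is governed by
\[
\big(\Omega^{\bu,\bu}(\CC^2) \otimes \gg_x[\eps][1], \; \dbar\big),
\]
where $\gg_x = \mathrm{Lie}(G_x)$ is the centralizer of $x$ (equivalently, the kernel of $\mathrm{ad}_x \colon \gg \to \gg$). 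The goal is to produce an $L_\infty$ quasi-isomorphism between these, linear over $\Omega^{\bu,\bu}(\CC^2)$ and respecting the shifted pairing.

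The key step is a purely Lie-algebraic one: analyze the auxiliary differential $\eps[x,-]$ on the two-term complex $\gg[\eps] = \gg \oplus \eps\gg$, where the differential sends the copy of $\gg$ in $\eps$-degree $0$ to the copy in $\eps$-degree $1$ via $\mathrm{ad}_x$. Because $G$ is reductive and $x$ is (after the chosen identification) an element of $\gg$, the operator $\mathrm{ad}_x$ is a diagonalizable-plus-nilpotent (and on a reductive algebra, semisimple when $x$ is semisimple) endomorphism, so $\gg$ decomposes $G_x$-equivariantly as $\gg = \ker(\mathrm{ad}_x) \oplus \mathrm{im}(\mathrm{ad}_x)$, with $\ker(\mathrm{ad}_x) = \gg_x$. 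I would then observe that the complex $(\gg[\eps], \eps[x,-])$ has cohomology exactly $\gg_x$ concentrated in $\eps$-degree $0$: the image summand and its shifted copy in $\eps$-degree $1$ form an acyclic subcomplex, since $\mathrm{ad}_x$ restricts to an isomorphism there. The inclusion $\gg_x[\eps] = \gg_x \oplus \eps\gg_x \hookrightarrow \gg[\eps]$ is not yet a chain map unless one is careful, so the cleaner statement is that $(\gg_x, 0) \hookrightarrow (\gg[\eps], \eps[x,-])$ picks out the cohomology, and a choice of splitting provides a contraction of the acyclic complement.

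Given this contraction, I would invoke the homotopy transfer theorem to transport the $L_\infty$ structure (here just the Lie bracket and the $\dbar$-twisted differential) from the large complex onto its cohomology $\gg_x$, obtaining an $L_\infty$ algebra quasi-isomorphic to the original; the transferred structure is precisely the Lie algebra structure on $\gg_x$ together with the trivial auxiliary differential, which is exactly the data of $\mc T_{0,0,0}^{\mr{vac}}[G_x]\big|_0$. Tensoring the whole transfer with $\Omega^{\bu,\bu}(\CC^2)$ and checking that the transfer can be performed $\Omega^{\bu,\bu}(\CC^2)$-linearly (it can, since the contracting homotopy is built from the fiberwise linear algebra of $\mathrm{ad}_x$ and commutes with $\dbar$) upgrades this to an equivalence of local $L_\infty$ algebras, hence of perturbative field theories. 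The main obstacle I anticipate is verifying that the transferred structure respects the $(-1)$-shifted symplectic pairing — that is, that the homotopy equivalence can be promoted to a \emph{cyclic} $L_\infty$ equivalence compatible with the cotangent-type structure. One must check that the splitting $\gg = \gg_x \oplus \mathrm{im}(\mathrm{ad}_x)$ is orthogonal for the invariant pairing (which it is, since $\ker(\mathrm{ad}_x)$ and $\mathrm{im}(\mathrm{ad}_x)$ are orthogonal complements under an invariant form) and that the induced pairing on $\gg_x$ remains nondegenerate and invariant; this is what guarantees that the BF/cotangent presentation is preserved under the equivalence.
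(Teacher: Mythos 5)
Your overall strategy---split $\gg[\eps]$ into an acyclic piece built from $\im(\ad_x)$ and a complementary piece identified with the centralizer, using invariance of the pairing to see that the splitting is orthogonal---is the same as the paper's, and your closing observation that $\ker(\ad_x)$ and $\im(\ad_x)$ are orthogonal complements is exactly the paper's key computation (via the skew-self-adjointness $\kappa(a,[x,b]) = -\kappa([x,a],b)$). There is, however, a concrete error in your identification of the cohomology. The differential $\eps\,\ad_x$ sends the $\eps$-degree-$0$ copy of $\gg$ to the $\eps$-degree-$1$ copy via $\ad_x$, so the cohomology of $(\gg[\eps],\eps[x,-])$ is $\ker(\ad_x)=\gg_x$ in $\eps$-degree $0$ \emph{and} $\gg/\im(\ad_x)\cong\gg_x$ in $\eps$-degree $1$; that is, it is the full two-term complex $\gg_x[\eps]=\gg_x\oplus\eps\gg_x$ with zero differential, not ``$\gg_x$ concentrated in $\eps$-degree $0$'' as you assert. (Indeed, removing your acyclic subcomplex $\im(\ad_x)\oplus\eps\,\im(\ad_x)$ from $\gg\oplus\eps\gg$ leaves both $\gg_x$ and $\eps\gg_x$.) This is not a harmless slip: the target theory $\mc T_{0,0,0}^{\mr{vac}}[G_x]\big|_0$ is BF theory for $\gg_x[\eps]$, and it is precisely the $\eps\gg_x$ summand that carries the cotangent fiber and hence the $(-1)$-shifted symplectic pairing. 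If the cohomology really were only $\gg_x$ in $\eps$-degree $0$, your transferred structure would not be the claimed theory.

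Relatedly, your claim that the inclusion $\gg_x[\eps]\hookrightarrow\gg[\eps]$ ``is not yet a chain map unless one is careful'' is false: $\eps[x,-]$ vanishes identically on $\gg_x[\eps]$ because $\ad_x$ kills $\gg_x$, so the inclusion is a strict map of complexes compatible with the brackets and the pairing, and after tensoring with $\Omega^{\bu,\bu}(\CC^2)$ it is exactly the quasi-isomorphism of local $L_\infty$ algebras that the paper exhibits; once this is corrected no homotopy transfer is needed, since the acyclic complement $\Pi\gg_x^\perp \xto{\ad_x} \im(\ad_x)$ splits off on the nose. One further caveat, which your parenthetical ``semisimple when $x$ is semisimple'' brushes against but does not resolve (and which the paper also leaves implicit): the decomposition $\gg=\gg_x\oplus\im(\ad_x)$ and the nondegeneracy of the induced pairing on $\gg_x$ both require that $\kappa$ restrict nondegenerately to $\gg_x$. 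This holds for semisimple $x$ but fails for nilpotent $x$ (e.g.\ $x=e$ in $\sl(2;\CC)$, where $e\in\gg_x\cap\im(\ad_x)$ and $\kappa(e,e)=0$), so the argument as written only applies at such points.
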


\begin{proof}
Consider the inclusion of complexes
\[
(\Omega^{\bu,\bu}(\CC^2) \otimes \gg_x[\eps][1], \dbar) \hookrightarrow (\Omega^{\bu,\bu}(\CC^2) \otimes \gg[\eps][1], \dbar + \eps [x, -]),
\]
which provides, in fact, a quasi-isomorphism of local $L_\infty$ algebras.  Indeed, it is enough to show that the complex $\Pi \gg \xto{\ad_x} \gg$ is canonically isomorphic to the direct sum of the acyclic complex 
$\Pi \gg_x^\perp \xto{\ad_x} \im(\ad_x)$
and the complex $\Pi \gg_x \xto{0} \gg_x$.

The inclusion of $\Pi \gg_x^\perp$ into $\Pi \gg_x$ can be postcomposed with $\ad_x$,
and the image is clearly the whole image of $\ad_x$. 
Hence the inclusion of the acyclic complex into $\TT_{[x]} [\gg/G]$ (or rather, our representing complex) is manifest.

We also know that $\Pi \gg = \Pi \gg_x \oplus \Pi \gg_x^\perp$, 
so it remains only to give a canonical isomorphism between the even component $\gg$ and the direct sum $\im(\ad_x) \oplus \gg_x$.
In other words, we need to give a canonical isomorphism between $\im(\ad_x)^\perp$ and $\gg_x$.
The key fact is that for any elements $a,b \in \gg$,
\[
\kappa(a, [x,b]) = \kappa([a,x], b) = - \kappa([x,a], b);
\]
in other words, $\ad_x$ is skew self-adjoint for the pairing $\kappa$.
Thanks to this fact, we see that 
\begin{align*}
y \in \im(\ad_x)^\perp &\iff \kappa(y, \ad_x z) = 0 \, \forall z \in \gg\\
&\iff  \kappa(\ad_x y, z) = 0 \, \forall z \in \gg\\
&\iff  \ad_x y =0.
\end{align*}
That is, $\gg_x = \im(\ad_x)^\perp$.
\end{proof}

In the sense of Lemma \ref{symmetry_breaking_lemma}, the symmetry group (i.e., the gauge group) has ``broken'' from $G$ to $G_x$.  
For instance, for $G = \GL_n(\CC)$ and $x$ a diagonal matrix with distinct eigenvalues, the stabilizer is the torus $(\CC^\times)^n$, 
so that we have broken from a nonabelian gauge theory down to an abelian gauge theory.  
(For further discussion of symmetry breaking in this style, see~\cite{ElliottGwilliam}.)

\begin{remark}
The symmetry breaking lemma implies that in fact, the theory over \emph{any} vacuum $[x] \in [\gg^*/G]$ can be lifted from a $\ZZ/2\ZZ$-graded theory to a $\ZZ$-graded theory.  These lifted theories will not vary smoothly in the $[\gg^*/G]$-family, they only exist when considering each point (or each stratum, where we stratify $[\gg^*/G]$ by conjugacy classes of the centralizers.
~\hfill$\Diamond$\end{remark}

\subsubsection{The B twists}

We can immediately generalize this discussion --- of the existence of a family of theories, as well as of symmetry breaking --- from the holomorphic twist to the B-twist by replacing $\dbar$ everywhere with the twisted differential $\dbar + t_1 \partial_{z_1} + t_2 \partial_{z_2}$.
In particular, we have the following.

\begin{lemma}
This construction produces a family $\mc{T}_{t_1,t_2,0}^{\mr{vac}}$ over $[\gg^*/G] \times \CC^2_{t_1,t_2}$ of ($\ZZ/2\ZZ$-graded) perturbative classical field theories on~$\RR^4$.
Moreover, at a point $x \in \gg$, 
there is an equivalence 
\[
\mc{T}_{t_1,t_2,0}^{\mr{vac}}[G]\big|_x \iso \mc{T}_{t_1,t_2,0}^{\mr{vac}}[G_x]\big|_0
\]
of perturbative theories.
\end{lemma}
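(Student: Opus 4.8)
The plan is to imitate the proof of Lemma~\ref{symmetry_breaking_lemma}, isolating the only genuinely new ingredient---the spacetime differential $t_1\partial_{z_1}+t_2\partial_{z_2}$---and showing that it does not interfere with the algebraic decomposition used there. At a point $x\in\gg$ with parameters $(t_1,t_2)$, the relevant local $L_\infty$ algebra on $\CC^2$ is
\[
\big(\Omega^{\bu,\bu}(\CC^2)\otimes\gg[\eps][1],\; D_x\big),\qquad D_x=\dbar+t_1\partial_{z_1}+t_2\partial_{z_2}+\eps[x,-],
\]
with brackets inherited from those on $\gg$. First I would check that $D_x^2=0$: the spacetime piece squares to zero, the term $\eps[x,-]$ squares to zero since $\eps^2=0$, and the two anticommute because $x$ is constant, so that $\eps[x,-]$ acts only on the Lie-algebra tensor factor while the spacetime piece acts only on forms. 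This differential is manifestly $\CC[t_1,t_2]$-linear and depends algebraically on $x$, and conjugation by $g\in G$ intertwines $D_x$ with $D_{\mr{Ad}_g x}$; hence the assignment is $G$-equivariant and descends to a family over $[\gg^*/G]\times\CC^2_{t_1,t_2}$, exactly as in the holomorphic case.

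For the symmetry breaking equivalence I would exhibit the inclusion of local $L_\infty$ algebras
\[
\big(\Omega^{\bu,\bu}(\CC^2)\otimes\gg_x[\eps][1],\;\dbar+t_1\partial_{z_1}+t_2\partial_{z_2}\big)\hookrightarrow\big(\Omega^{\bu,\bu}(\CC^2)\otimes\gg[\eps][1],\; D_x\big),
\]
which is a cochain map because $\eps[x,-]$ vanishes on $\gg_x$, and which intertwines the $L_\infty$ brackets because $\gg_x\subset\gg$ is a sub-Lie-algebra. It then suffices to prove this inclusion is a quasi-isomorphism. The key observation is that $D_x$ is the total differential of a tensor product: the spacetime complex $(\Omega^{\bu,\bu}(\CC^2),\dbar+t_1\partial_{z_1}+t_2\partial_{z_2})$ tensored with the finite-dimensional algebraic complex $(\gg[\eps][1],\eps[x,-])$, since the two differentials act on distinct tensor factors. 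By the computation in Lemma~\ref{symmetry_breaking_lemma}, the algebraic complex splits as the subcomplex $(\gg_x[\eps][1],0)$ plus the acyclic two-term complex $\Pi\gg_x^\perp\xto{\ad_x}\im(\ad_x)$. A bounded acyclic complex of vector spaces is contractible, and tensoring a contractible complex with any complex preserves contractibility; hence the tensor of the acyclic summand with the spacetime complex is acyclic, and the inclusion of the $\gg_x$-summand is a quasi-isomorphism. Crucially, all of the $t_i$-dependence lives in the spacetime factor and plays no role in this splitting.

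The hard part, and the place where one must be careful, is verifying that this is an equivalence of \emph{perturbative BV theories} rather than merely of cochain complexes or of $L_\infty$ algebras: the quasi-isomorphism must respect the $(-1)$-shifted symplectic pairing. This is exactly where the identity $\gg_x=\im(\ad_x)^\perp$ from Lemma~\ref{symmetry_breaking_lemma}---a consequence of the skew self-adjointness of $\ad_x$ for the invariant pairing $\kappa$---does the work: it guarantees that the acyclic complement $\Pi\gg_x^\perp\oplus\im(\ad_x)$ is symplectically orthogonal to $\gg_x$, so that the restriction of $\kappa$ to $\gg_x$ stays nondegenerate and the inclusion is a symplectic quasi-isomorphism in the sense of \cite{CostelloBook}. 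Since both the $t_1,t_2$-deformation and the vacuum parameter $x$ enter only through $D_x$, and not through the pairing or the brackets, this symplectic compatibility is untouched by the passage from the holomorphic twist to the B-twist, which completes the argument.
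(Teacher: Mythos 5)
Your proposal is correct and follows the same route as the paper, which handles this lemma simply by observing that the argument for the holomorphic case (the construction of the family and the proof of Lemma~\ref{symmetry_breaking_lemma}) goes through verbatim once $\dbar$ is replaced by $\dbar + t_1\partial_{z_1} + t_2\partial_{z_2}$, since the new term acts only on the spacetime factor. You have merely spelled out the details --- the tensor-product decomposition, the reuse of the algebraic splitting, and the compatibility with the pairing --- that the paper leaves implicit.
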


\subsubsection{The A twists}

We can extend our family of theories depending on a choice of point in the moduli of vacua to the full 3-dimensional family of twisted theories.  We can motivate the following construction by considering the orthogonal decomposition $\gg = \gg_x \oplus \gg_x^\perp$ associated to a chosen invariant nondegenerate pairing.  Recall from our argument for Lemma \ref{symmetry_breaking_lemma} that the operator $\eps \mr{ad}_x$ restricts to an isomorphism from $\gg_x^\perp$ to $\eps \gg_x^\perp$.  The operators $\eps \ad_x$ and $u \frac{\dd}{\dd \eps} \id_\gg$ clearly do not commute if $u$ is non-zero, but if we restrict the latter term to the identity on the centralizer subalgebra $\gg_x$ then the operators $\eps \ad_x$ and $u \frac{\dd}{\dd \eps} \id_{\gg_x}$ do commute.  Therefore we can extend our family of theories to all of $\CC^3 \times [\gg^*/G]$ in the following manner.

\begin{definition}
Let $\mc{T}_{t_1,t_2,u}^{\mr{vac}}$ denote the family over $[\gg^*/G]$ of perturbative classical field theories on $\RR^4$ where the fiber at a point $x \in \gg^*$ is encoded by
\[
(\Omega^{\bu,\bu}(\CC^2) \otimes \gg[\eps][1], \dbar + t_1 \partial_{z_1} + t_2 \partial_{z_2} + \eps \ad_x + u \frac \dd{\dd \eps} \id_{\gg_x}).
\]
\end{definition}

It is straightforward to see that at a point $x \in \gg$, 
there is an equivalence 
\[
\mc{T}_{t_1,t_2,u}^{\mr{vac}}[G]\big|_x \iso \mc{T}_{t_1,t_2,u}^{\mr{vac}}[G_x]\big|_0
\]
of perturbative theories.

As we discussed in Remark~\ref{A_vacua_remark}, 
it would be interesting to realize a family of theories of this type in terms of perturbation theory over the image of a map $[\gg^*/G] \to \mc M_A$, 
where $\mc M_A$ is the moduli stack of solutions to the equations of motion in our A-twisted theory, viewed in $\ZZ/2\ZZ$-graded derived geometry.  
We do not have such a derivation of this family of theories at present.

\subsection{Families of Quantum Field Theories over \texorpdfstring{$[\gg^*/G]$}{[g*/G]}}
The classical moduli of vacua parametrizes a family of perturbative classical field theories,
so it is natural to ask whether we can quantize this family.
We will see that we can, using precisely the techniques that let us quantize the theory at the origin of the coadjoint quotient $[\gg^*/G]$.
(The origin is given by the ``trivial'' solution~$(\dbar, 0)$.)

\begin{remark}
As we will see in Section \ref{En_section}, this quantization gives rise to a family of $\bb E_4$-stacks quantizing the odd symplectic structure on the coadjoint quotient stack $[\gg^*/G] \iso \Pi T^*BG$.
~\hfill$\Diamond$\end{remark}

The key observation is that at every point $((t_1,t_2,u), [x]) \in \CC^3 \times [\gg^*/G]$,
the perturbative classical theory is a deformation of the theory at $(0, [0])$ where only the first order bracket is changed: we add to $\dbar$ the operator
\[
t_1 \partial_{z_1} + t_2 \partial_{z_2} + \eps [x, -] + u \frac{\dd}{\dd \eps} \id_{\gg_x}.
\]
(Equivalently, one can say that we have only changed the quadratic term of the action functional.)
Hence the perturbative quantization exhibits very similar behavior to what we saw when $x = 0$; the only change is that the bivalent vertices are slightly altered.

\begin{theorem} \label{quantum_vacua_family_thm}
The family $\mc T_{t_1,t_2,u}^\mr{vac}$ of $\ZZ/2\ZZ$-graded classical field theories over $\CC^3 \times \left[\gg^*/G\right]$ admits a one-loop exact quantization.
\end{theorem}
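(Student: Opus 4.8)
The plan is to reduce the quantization of the full family $\mc T^{\mr{vac}}_{t_1,t_2,u}$ over $\CC^3 \times [\gg^*/G]$ to the anomaly-vanishing arguments already established for the family over $\CC^3 \times \{0\}$ (Propositions~\ref{holo_anomaly_prop} and~\ref{B_anomaly_prop}), exactly as those arguments themselves leveraged the purely holomorphic theory $\mc E_{0,0,0}$. The key structural observation, emphasized just before the statement, is that at every point $((t_1,t_2,u),[x])$ the perturbative theory differs from the one at $(0,[0])$ only through a first-order (bivalent) deformation of the differential: we add to $\dbar$ the operator $t_1 \partial_{z_1} + t_2 \partial_{z_2} + \eps[x,-] + u \tfrac{\dd}{\dd\eps}\id_{\gg_x}$. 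Since the gauge-fixing operator $\dbar^*$ is unchanged, the propagator $P_{\eps<\Lambda}$ is \emph{identical} to the one used at the origin, and so is the cubic interaction $I$; hence the same effective collection $\{I[\Lambda]\}$ furnishes a prequantization over the entire family. What must be checked is only that the equivariant quantum master equation holds $\CC[t_1,t_2,u]$-linearly and $G$-equivariantly over $\gg^*$.

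First I would set up the obstruction to the one-loop QME in direct parallel with Equation~\eqref{eqn:qme2}, now including the extra bivalent term $\eps[x,-]$ coming from the vacuum. Writing the obstruction $\Theta_x[\Lambda]$, the $\Lambda\to 0$ limit of the genuinely one-loop pieces ($\hbar\Delta_\Lambda I^{(0)}[\Lambda]$ and $QI^{(1)}[\Lambda] + \tfrac12\{I^{(0)}[\Lambda],I^{(1)}[\Lambda]\}$) vanishes by the identical trace argument of Proposition~\ref{holo_anomaly_prop}: the algebraic weight of every wheel computes an invariant polynomial $\mr{Tr}_L(X^v)$ which vanishes because $L=\gg[\eps]$ splits as an $L'$-module into $L'\oplus L'[\pm1]$, forcing cancellation of the two trace contributions. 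This argument is insensitive to the value of $x$, since it depends only on the $\ZZ/2$-graded module structure, not on the differential.

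It then remains to show that the new bivalent terms contribute nothing to the anomaly, which is the crux. I would invoke Lemma~\ref{A_diagram_lemma} and its method directly. Both $t_1\partial_{z_1}+t_2\partial_{z_2}$ and $u\tfrac{\dd}{\dd\eps}\id_{\gg_x}$ are precisely the kind of quadratic deformations treated there: the derivative terms were handled in Proposition~\ref{B_anomaly_prop} (their dependence lives entirely in the analytic factor of the propagator, leaving the vanishing algebraic weight untouched), while the term $u\tfrac{\dd}{\dd\eps}\id_{\gg_x}$ and the vacuum term $\eps[x,-]$ involve \emph{no} spacetime derivatives, so Lemma~\ref{A_diagram_lemma} applies verbatim: any loop-containing diagram through such a bivalent vertex has a subintegral of the form $\int_{\CC^2_w} \dbar^*_w K_\eps(z_1-w)\wedge\dbar^*_w K_\eps(w-z_2)$, which vanishes identically by the integration-by-parts computation in the proof of that lemma. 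Thus no multi-loop or vacuum-corrected diagram contributes.

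The main obstacle I anticipate is not analytic but organizational: verifying that the construction is genuinely $G$-equivariant and descends to the quotient stack $[\gg^*/G]$, and that the term $u\tfrac{\dd}{\dd\eps}\id_{\gg_x}$—which is defined using the \emph{centralizer} $\gg_x$, hence varies only stratum-wise rather than smoothly over $\gg^*$—can still be organized into a well-defined family satisfying the QME fiberwise. I would address this by working equivariantly over the affine space $\gg^*$ throughout (using the equivariant formalism of Section~\ref{equivariant_section} with $\mf h$ the adjoint action), noting that the propagator and interaction are manifestly $G$-invariant and $x$-independent, so that $G$-equivariance of $\{I[\Lambda]\}$ is automatic; the QME must then be checked as a $\CC[t_1,t_2,u]$-linear and $\gg^*$-pointwise identity, and the argument above does exactly this at each $x$. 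Finally, the symmetry-breaking equivalence $\mc T^{\mr{vac}}_{t_1,t_2,u}[G]|_x \iso \mc T^{\mr{vac}}_{t_1,t_2,u}[G_x]|_0$ provides a useful consistency check: at each point the quantization agrees with the already-established quantization of the $G_x$-theory at the origin, confirming that the $\gg_x$-dependence of the $u$-term is harmless.
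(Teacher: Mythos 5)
Your proposal is correct and follows essentially the same route as the paper's proof: the key points in both are that the gauge-fixing operator commutes with the new bivalent deformations so the propagator, interaction, and hence the effective collection $\{I[\Lambda]\}$ are literally unchanged over the family; that the anomaly vanishing reduces to the same wheel-weight arguments of Propositions~\ref{holo_anomaly_prop} and~\ref{B_anomaly_prop}; and that descent to $[\gg^*/G]$ follows from the manifest $G$-invariance of the propagator and classical action. Your invocation of Lemma~\ref{A_diagram_lemma} for the derivative-free bivalent vertices is a mild variant (the paper's proof instead cites the algebraic-weight argument behind Equation~\eqref{eqn:familyqme}), but the paper itself presents these as interchangeable in Section~\ref{A_subsection}, so this is not a substantive departure.
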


\begin{proof}
Let us begin by verifying that there is no anomaly for the quantization of our family of theories over $\CC^3 \times \gg^*$.  
This follows from a similar argument as in Proposition \ref{B_anomaly_prop}.
The gauge fixing operator $\dbar^*$ commutes with the deformation $\eps [x,-]$ for any $x \in \gg^*$. 
Therefore, the effective action $I[\Lambda]$ is independent of the value of $x$.
To show that $I[\Lambda]$ solves the quantum master equation as a family of effective actions over $\CC^3 \times \gg^*$ it suffices to show that $[x, I[\Lambda]] = 0$ for all $x \in \gg^*$ and $\Lambda > 0$. 
This follows from an identical argument to the one we used to prove Equation \eqref{eqn:familyqme}.

Now, we must argue that the classical coadjoint action of $G$ extends to the quantum level.
We have already seen that the linear differential 
\[
\dbar + t_1 \partial_{z_1} + t_2 \partial_{z_2} + \eps \ad_x + u \frac \dd{\dd \eps} \id_{\gg_x}
\]
descends to the coadjoint quotient. 
It remains to see that the effective action $I[\Lambda]$ also descends. 
But, since the propagator and the classical action functional are both strictly $G$-invariant, so is the effective action $I[\Lambda]$ for each $\Lambda > 0$, since the effective action is constructed as a function of the classical action and the propagator.
\end{proof}

\section{Framing Anomaly}
\label{rotation_anomaly_section}

\subsection{Homotopically Trivial \texorpdfstring{$G$}{G}-Actions} \label{background_field_section}
In this subsection we will discuss what it means for a Lie group $G$ to act \emph{homotopically trivially} on a quantum field theory.  Our main example will involve theories on $\RR^n$ with a homotopically trivial action of the group $\mr{ISO}(n) \iso \SO(n) \ltimes \RR^n$ of isometries of $\RR^n$.  As we will see, the presence of such an action allows us to realize the factorization algebra of local observables in the theory as a \emph{framed $\bb E_n$ algebra}.

Let us start by discussing the concept of a homotopy trivialization of a $\gg$-action.  
\begin{definition}
Let $\gg$ be a Lie algebra.  
Write $\gg_{\mr{dR}}$ for the dg Lie algebra whose underlying graded vector space is $\gg[1] \oplus \gg$, with differential given by the identity map $\gg[1] \to \gg$ (so that the resulting cochain complex is contractible), and with bracket given by the Lie bracket on $\gg$ and the adjoint action of $\gg$ on $\gg[1]$.
We write elements of $\gg_{\mr{dR}}$ as pairs $(\Tilde{X}, X)$ with $X\in \gg$ in degree zero and $\Tilde{X} \in \gg[1]$ in degree $-1$. 
\end{definition}

Because $\gg_{\mr{dR}}$ is contractible, actions of such dg Lie algebras on classical and quantum field theories are not, in and of themselves, very interesting. 
There \emph{is} however a non-trivial notion that we can consider.

\begin{definition}
Suppose the Lie algebra $\gg$ is a symmetry of a field theory on a manifold $M$, meaning $\gg$ acts by derivations on the observables $\mr{Obs}(M)$.
A \emph{homotopy trivialization} of the $\gg$ action is a $\gg_{\mr{dR}}$-action on $\mr{Obs}(M)$ that restricts to the original $\gg$-action along the inclusion $\gg \inj \gg_{\mr{dR}}$.
\end{definition}

\subsection{Framing Anomaly Vanishing} 

We will now apply these general ideas to the specific theories we have been studying in this paper: twists of $\mc N=4$ supersymmetric Yang--Mills theory on $\RR^4$.  Specifically, let us consider the B-twisted theory $\mc E_{1,1,0}$ from Section \ref{Btwist_section}.  In this section we will prove that this theory carries a de Rham action of $\mr{ISO}(4)$ when the gauge group $G$ is reductive (in fact it will be sufficient for $G$ to be unimodular).

\begin{theorem} \label{B_twist_de_Rham_action_thm}
The quantization of the classical BV theory $\mc E_{1,1,0}$ carries a homotopy trivialization of the action of the group $\mr{ISO}(4)$ of isometries of $\RR^4$.  
\end{theorem}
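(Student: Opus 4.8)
The plan is to prove the theorem in two stages, following the general logic of the paper: first establish a \emph{classical} homotopy trivialization of the $\mr{ISO}(4)$-action on $\mc E_{1,1,0}$, and then show that the associated \emph{framing anomaly} obstructing a quantum lift vanishes to all orders. For the classical statement, I would work with the dg Lie algebra $\mf{iso}(4)_{\mr{dR}} = \mf{iso}(4)[1] \oplus \mf{iso}(4)$ introduced above, and seek a Hamiltonian action $\rho_{\mr{dR}} \colon \mf{iso}(4)_{\mr{dR}} \to \oloc(\mc E_{1,1,0})[-1]$ restricting to the known geometric action of $\mf{iso}(4) = \mf{so}(4) \ltimes \RR^4$ along the inclusion $\mf{iso}(4) \hookrightarrow \mf{iso}(4)_{\mr{dR}}$. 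Concretely, for each generator $X$ of $\mf{iso}(4)$ I must produce a degree $-1$ potential $\Tilde{X}$ (a local functional) witnessing that the conserved current associated to $X$ is exact, and these potentials must be compatible with the Lie bracket so that $\rho_{\mr{dR}}$ is a genuine map of dg Lie algebras, i.e. it solves the equivariant classical master equation of Section~\ref{equivariant_section}. The key structural input is that $\mc E_{1,1,0}$, being a B-twist, is built from the \emph{full} de Rham complex $\Omega^{\bu,\bu}(\CC^2)$ with differential $\dbar + \dd_{z_1} + \dd_{z_2}$, which is just the de Rham differential $\dd_{\mr{dR}}$ on $\RR^4$; because the action of isometries on differential forms is by Lie derivatives, Cartan's magic formula $\mc L_X = \dd \iota_X + \iota_X \dd$ gives, essentially for free, the homotopy $\Tilde{X} \rightsquigarrow \iota_X$ trivializing each current, and the Lie-algebraic compatibility is governed by the standard identities among Lie derivatives and contractions.

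Next I would pass to the quantum level, where the obstruction to promoting $\rho_{\mr{dR}}$ to a \emph{quantum} action (a solution of the equivariant quantum master equation \eqref{eqn:eqqme}) is a single anomaly cocycle, computable as a sum of weights of Feynman diagrams with background-field insertions from $\mf{iso}(4)_{\mr{dR}}$, exactly as in Remark~\ref{rmk:eqcollection}. The essential observation is that the theory is holomorphic and of cotangent type, so by Corollary~\ref{cotangent_cor} only one-loop (wheel) diagrams can contribute, and by Theorem~\ref{wheel_thm} the relevant wheels on $\CC^2$ have exactly three internal edges. I would then argue that the framing anomaly vanishes by the \emph{same} mechanism that killed the ordinary anomaly in Proposition~\ref{holo_anomaly_prop}: the algebraic factor of each wheel weight is a trace in the adjoint representation of an invariant polynomial built from the propagator's algebraic kernel $c_{E_0} \in \sym^2(L^*)$, and because the relevant Lie algebra retains the form $L = L'[\delta]$ with $\delta$ an odd parameter, the module structure $L = L' \oplus L'[\pm 1]$ forces all such traces to cancel in pairs. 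The background-field vertices from $\mf{iso}(4)_{\mr{dR}}$ only modify the \emph{analytic} factor of the weight (they are polydifferential operators coming from Lie derivatives and contractions), leaving the vanishing algebraic factor untouched, exactly as in Proposition~\ref{B_anomaly_prop} where the $t_i$- and $u$-deformations were absorbed analytically.

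The main obstacle I expect is \emph{not} the vanishing of the anomaly — that should follow cleanly from the cotangent/unimodularity argument once the equivariant setup is in place — but rather the careful construction of the classical homotopy trivialization $\rho_{\mr{dR}}$ as an honest $L_\infty$-map satisfying the equivariant classical master equation, and in particular verifying that the higher brackets of $\mf{iso}(4)_{\mr{dR}}$ are respected. The subtlety is that translations and rotations interact nontrivially inside $\mf{iso}(4)$, so the potentials $\Tilde{X}$ for rotations and translations must be chosen so that their Poisson brackets reproduce the bracket relations of the isometry algebra together with the contractibility datum $\dd(\Tilde{X}) = X$; ensuring these compatibilities hold at the level of \emph{local} functionals, rather than merely cohomologically, is where most of the genuine work lies. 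A secondary technical point is confirming that the Landau-type gauge-fixing operator $\dbar^*$ used to build the propagator is compatible with (or can be corrected to be equivariant for) the $\mr{ISO}(4)$-action, so that the background-field diagrams are well-defined and the analytic/algebraic factorization of weights genuinely holds; I would expect this to reduce to the translation- and rotation-invariance of the flat metric on $\RR^4$, but it merits explicit checking.
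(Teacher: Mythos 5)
Your overall architecture --- a classical trivialization via Cartan's formula followed by a one-loop anomaly computation --- matches the paper, and your classical step is essentially the paper's: the potentials $\widetilde{J}_X = \int \langle \beta \wedge \iota_{\wt X} \alpha\rangle$ with $\eta_X = \{\widetilde{J}_X,-\}$ and $L_X = \{\{S_{1,1,0},\widetilde{J}_X\},-\}$. The gap is in the quantum step. You invoke Theorem \ref{wheel_thm} to reduce the equivariant anomaly to wheels with three internal edges, but that theorem --- and the analytic/algebraic factorization of weights on which your cancellation rests --- is a statement about \emph{holomorphic} theories with \emph{holomorphic} background fields. The complexified isometry algebra $\mf{iso}(4;\CC)$ contains antiholomorphic translations and the antiholomorphic $\sl(2;\CC)$ factor of $\so(4;\CC) \iso \sl(2;\CC)\oplus\sl(2;\CC)$, which act by non-holomorphic differential operators; for these the holomorphic-gauge machinery does not apply and there is no a priori reduction to three-edge wheels. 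The paper resolves this by splitting $\mf{iso}(4;\CC)_{\mr{dR}}$ into two copies of $\big[\CC^2\oplus\sl(2;\CC)\big]_{\mr{dR}}$, each holomorphic for one of two conjugate complex structures, and running the argument twice with two different gauge-fixing operators. Your closing remark about checking compatibility of $\dbar^*$ with the $\mr{ISO}(4)$-action gestures at a difficulty here but does not identify the actual obstacle or the decomposition that fixes it; without it, your reduction to three-edge wheels is unjustified.

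Second, your claim that the background-field vertices ``only modify the analytic factor, leaving the vanishing algebraic factor untouched'' is not accurate, and the paper does not in fact reuse the mechanism of Proposition \ref{holo_anomaly_prop}. A vertex labelled by $\widetilde{J}_X$ is quadratic in the propagating fields: going around the wheel it contributes the pairing (essentially an identity, possibly composed with a contraction acting in the $\d z_i$-directions) rather than an adjoint action, so the invariant polynomial being traced is genuinely different from the $\mr{Tr}_L(\ad^v)$ of the non-equivariant case, and the weight-counting step of Proposition \ref{holo_anomaly_prop} that reduced that polynomial to its restriction to $L'\subset L'[\delta]$ must also be redone. The paper instead argues case by case on the number of $\alpha$-legs of the three-edge wheel: for two $\alpha$-legs and one $\widetilde{J}_X$-vertex it shows the \emph{analytic} factor vanishes for form-degree reasons (the integrand is a $(5,4)$-form lying in a subbundle generated by four $1$-forms pulled back from a four-dimensional subspace of $\CC^6$), and for one $\alpha$-leg it uses unimodularity of $\gg$, i.e.\ $\mr{Tr}_\gg(\ad)=0$. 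Your supertrace cancellation may well be salvageable for these diagrams, but it is a different argument that needs to be verified rather than cited as ``the same mechanism,'' and in any case it cannot be deployed until the holomorphic-splitting issue above is addressed.
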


\begin{remark}
If we consider a more general B-type theory in our family, say $\mc E_{t_1,t_2,0}$, our proof will apply to a subgroup of the full group of isometries.  If $t_1 \ne t_2$ are both non-zero, then the quantum field theory will admit a homotopy trivialized action of the group $(\SO(2) \times \SO(2)) \ltimes \RR^4$.  If $t_1 \ne 0$ but $t_2 = 0$ (the situation of the ``Kapustin twist''), then the quantum field theory will admit a homotopy trivialized action of the group $\SO(2) \ltimes \RR^2$.  In both cases, there is no obstruction to the existence of this action: the relevant anomaly vanishes.
~\hfill$\Diamond$\end{remark}

\begin{remark}
It is worth noting that the action of the group $\SO(4)$ of rotations on our family of theories does not come from the action of rotations on the untwisted $\mc N=4$ gauge theory.  At the level of complexified Lie algebras we can obtain a ``twisted'' $\so(4;\CC)$ action instead by embedding $\SO(4)$ diagonally in the Lie algebra $\so(4;\CC) \times \sl(4;\CC)$, where the second factor is the algebra of R-symmetries, and where $\so(4;\CC) \iso \sl(2;\CC) \oplus \sl(2;\CC)$ embeds into $\sl(4;\CC)$ block diagonally.  This is the Kapustin--Witten twisting homomorphism considered in \cite{KapustinWitten}, and the 2-parameter family of Kapustin--Witten supercharges consist of exactly those supercharges that are stabilized by the image of this Lie algebra embedding.  Since we do not treat the untwisted $\mc N=4$ theory in this paper, we don't need to keep track of this non-trivial origin for the rotation action.
~\hfill$\Diamond$\end{remark}

The proof of Theorem \ref{B_twist_de_Rham_action_thm} will involve the following steps.
\begin{enumerate}
 \item First we will describe the classical action of $\mr{ISO}(4)$, and its homotopy trivialization.
 \item We will argue that there is no obstruction to the quantization of the $\mr{ISO}(4)$ action itself, using the fact that the ingredients of the quantization of the classical field theory are manifestly isometry invariant. 
\item Next we will argue that there is no obstruction to the homotopy trivialization of the $\RR^4$ action by translations.  
We will use the fact that this trivialization commutes with a natural choice of gauge-fixing condition to show that the relevant Feynman diagrams vanish.
\item It remains to address the homotopy trivialization of the action of rotations.  
We will use the decomposition $\so(4) \iso \su(2) \oplus \su(2)$: if we choose a complex structure the two factors act by holomorphic and antiholomorphic rotations respectively.  We can treat each factor in turn, using two different choices of gauge-fixing operator associated to two conjugate complex structures.  
For each choice, the theory with compatible $\su(2)$ background fields will be a holomorphic field theory as in \cite{BWhol}, so we can use the results therein.  
In particular, the only Feynman diagrams that contribute to the anomaly are wheels with three external edges.  We will show in Theorem \ref{Hodge_anomaly_vanishes} that this anomaly vanishes, completing the proof.
\end{enumerate}

With this plan in mind, we first address the homotopy trivialization of the $\mr{ISO}(4)$ action on the classical field theory $\mc E_{1,1,0}$.

\subsubsection{Classical framing}

Recall that the linear space of fields is two copies of the de Rham complex valued in shifted copies of $\gg$:
\[
\mc E_{1,1,0} = \Omega^{\bu} (\RR^4 , \gg[1]) \oplus \Omega^{\bu} (\RR^4 , \eps \gg[2]) .
\]
The group $\mr{ISO}(4)$ acts on the differential forms on $\RR^4$ by natural automorphisms. 
This extends to an action at the level of classical observables.

\begin{prop}
The action of the group $\mr{ISO}(4)$ on the fields $\mc E_{1,1,0}$ extends to an action on the classical observables 
\[
\mr{Obs}^{\mr{cl}}_{1,1,0} = \left(\mc O(\mc E_{1,1,0}) \, , \, \{S_{1,1,0}, -\} \right) .
\]
\end{prop}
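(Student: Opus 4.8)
The plan is to verify directly that the $\mr{ISO}(4)$-action on the classical fields $\mc E_{1,1,0}$ lifts compatibly to an action on the full classical BV complex of observables $\mr{Obs}^{\mr{cl}}_{1,1,0} = (\mc O(\mc E_{1,1,0}), \{S_{1,1,0},-\})$. Since $\mr{ISO}(4)$ already acts on the linear space of fields by natural automorphisms of differential forms (as recalled just before the statement), it acts by algebra automorphisms on the functions $\mc O(\mc E_{1,1,0})$ via pullback. The content to check is that this induced action is compatible with the differential $\{S_{1,1,0},-\}$, so that it descends to an action on cohomology, i.e. a genuine action by derivations on the classical observables.

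First I would observe that $\mr{ISO}(4)$ acts on observables by the pullback along its action on fields; concretely, for $g \in \mr{ISO}(4)$ and $\Phi \in \mc O(\mc E_{1,1,0})$ one sets $(g \cdot \Phi)(\varphi) = \Phi(g^{-1} \cdot \varphi)$. This is automatically an action by algebra homomorphisms. The only thing to verify is that it commutes with the BV differential $\{S_{1,1,0},-\}$, which amounts to showing that the action functional $S_{1,1,0}$ is itself $\mr{ISO}(4)$-invariant. The quadratic part of $S_{1,1,0}$ is built from the de Rham differential $\d_{\mr{dR}}$ paired with the symplectic form, and the interaction is $I(\alpha,\beta) = \tfrac12 \int \beta \wedge [\alpha,\alpha]$; both the de Rham differential and the wedge-and-integrate pairing are manifestly natural under diffeomorphisms and in particular under isometries of $\RR^4$. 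I would make this invariance precise by noting that $\mr{ISO}(4)$ preserves the orientation and hence the integration map, preserves the wedge product, and commutes with $\d_{\mr{dR}}$, so $S_{1,1,0}(g\cdot\varphi) = S_{1,1,0}(\varphi)$.

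Given $\mr{ISO}(4)$-invariance of $S_{1,1,0}$, the action commutes with $\{S_{1,1,0},-\}$: since $g$ acts by a symplectomorphism (it preserves the shifted symplectic pairing $\omega$, again by naturality of the wedge pairing and invariance of the density), one has $g\cdot\{F,H\} = \{g\cdot F, g\cdot H\}$, and therefore $g\cdot\{S_{1,1,0},\Phi\} = \{g\cdot S_{1,1,0}, g\cdot\Phi\} = \{S_{1,1,0}, g\cdot\Phi\}$. This exhibits the $\mr{ISO}(4)$-action as a cochain map on $(\mc O(\mc E_{1,1,0}), \{S_{1,1,0},-\})$, which is exactly the assertion that it acts on the classical observables.

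The main obstacle, such as it is, lies in being careful about the symplectic form: one must check that $\mr{ISO}(4)$ genuinely preserves $\omega$ and not merely the action, since the BV bracket is defined using $\omega^{-1}$, and that the formal identity $g\cdot\{F,H\} = \{g\cdot F, g\cdot H\}$ holds at the level of the (distributional) local functionals rather than only formally. I expect this to reduce to the observation that isometries preserve the flat metric, hence the Hodge star and the translation-invariant density on $\RR^4$ used to define the pairing, so that $\omega$ and its inverse transform equivariantly; the remaining verification is routine naturality of the wedge product and integration, which I would leave to the reader.
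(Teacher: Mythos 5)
Your proposal is correct and follows essentially the same route as the paper's own (much terser) proof: both arguments amount to observing that every ingredient of $S_{1,1,0}$ and of the shifted symplectic pairing --- the de Rham differential, the wedge product and integration, and the Lie bracket together with the invariant pairing on $\gg$ --- is natural under isometries of $\RR^4$, so the induced pullback action on $\mc O(\mc E_{1,1,0})$ commutes with $\{S_{1,1,0},-\}$. Your version simply spells out the intermediate verifications (invariance of $\omega$ and the compatibility $g\cdot\{F,H\}=\{g\cdot F,g\cdot H\}$) that the paper leaves implicit.
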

\begin{proof} 
It suffices to observe that the action preserves the Lie bracket on differential forms valued in $\gg$ as well as the invariant pairing on $\gg$. 
\end{proof}

The infinitesimal action of the complexified Lie algebra $\mf{iso}(4;\CC) \iso (\sl(2;\CC) \oplus \sl(2;\CC)) \ltimes \CC^4$ acts on $\mc E_{1,1,0}$ by the Lie derivative of complex valued vector fields on $\RR^4$.
In fact, this action is Hamiltonian in the sense that it is given by the BV bracket with a fixed local functional. 
\begin{lemma}
Let $J_X$ be the local functional $J_X = \int \langle \beta \wedge L_X \alpha \rangle$ on $\mc{E}_{1,1,0}$. 
Then $L_X = \{J_X,-\}$. 
\end{lemma}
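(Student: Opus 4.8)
The plan is to show that the Hamiltonian vector field $\{J_X,-\}$ generated by $J_X$ coincides, as a derivation of $\mc{O}(\mc{E}_{1,1,0})$, with the Lie-derivative action $L_X$. Since $L_X$ acts on a functional $F$ by $F \mapsto \delta F(L_X\alpha, L_X\beta)$ while $\{J_X, F\}$ acts by pairing $\delta F$ against the Hamiltonian vector field of $J_X$, it suffices to identify that Hamiltonian vector field with the Lie-derivative vector field $(\alpha,\beta) \mapsto (L_X\alpha, L_X\beta)$. I would begin by recording the $(-1)$-shifted symplectic pairing on the cotangent-type theory $\mc{E}_{1,1,0}$, which couples the base field $\alpha$ to the fiber field $\beta$:
\[
\omega(\alpha + \eps\beta,\, \alpha'+\eps\beta') = \int_{\RR^4} \langle \alpha \wedge \beta'\rangle \pm \int_{\RR^4}\langle \beta\wedge\alpha'\rangle,
\]
where $\langle-,-\rangle$ bundles together the wedge of de Rham forms, the invariant pairing on $\gg$, and extraction of the top-degree component. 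By definition $\{J_X,-\}$ is the unique vector field $\Xi$ with $\iota_\Xi\omega = \delta J_X$, where $\delta$ is the variational differential on fields, so the computation reduces to evaluating $\delta J_X$ and reading off its two components.

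Next I would compute $\delta J_X$ by the Leibniz rule, obtaining
\[
\delta J_X = \int_{\RR^4}\langle \delta\beta\wedge L_X\alpha\rangle + \int_{\RR^4}\langle \beta\wedge L_X\,\delta\alpha\rangle.
\]
Comparing the first summand with the $\delta\beta$-slot of $\omega$ identifies the $\alpha$-component of $\Xi$ with $L_X\alpha$ with no further work. The second summand is where the real content lies: to match it against the $\delta\alpha$-slot I must transfer the first-order operator $L_X$ off of $\delta\alpha$ and onto $\beta$.

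This integration-by-parts step is the main obstacle, and it rests on the skew self-adjointness of the Lie derivative for the integration pairing. Because $\langle \beta\wedge\delta\alpha\rangle$ is a top-degree form and the fields appearing in observables have compact support, Cartan's formula $L_X = \d\iota_X + \iota_X\d$ together with Stokes' theorem forces $\int_{\RR^4} L_X\langle\beta\wedge\delta\alpha\rangle = 0$; since $L_X$ is a derivation of both the wedge product and the pointwise $\gg$-pairing, this gives
\[
\int_{\RR^4}\langle\beta\wedge L_X\,\delta\alpha\rangle = -\int_{\RR^4}\langle L_X\beta\wedge\delta\alpha\rangle.
\]
Substituting and comparing with the $\delta\alpha$-slot of $\omega$ identifies the $\beta$-component of $\Xi$ with $L_X\beta$. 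Hence $\Xi = (L_X\alpha, L_X\beta)$ is precisely the Lie-derivative vector field, and evaluating $\delta F$ against it yields $\{J_X,F\} = L_X F$ for every $F$, proving the lemma. The only delicate bookkeeping is the relative sign between the two cotangent summands of $\omega$, which is fixed by graded antisymmetry of the pairing; I expect that, once this convention is pinned down, the two components assemble into the genuine Lie derivative rather than its negative on either factor, and that no hypothesis on $X$ beyond its being a (complexified) vector field is required, since only Stokes' theorem is used.
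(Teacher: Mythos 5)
Your proof is correct and is exactly the computation the paper leaves implicit: the lemma is stated without proof (and the companion Lemma on $\widetilde J_X$ is dispatched as ``a straightforward calculation'' plus Cartan's formula), and the intended argument is precisely to vary $J_X$, read off the $\alpha$-component of the Hamiltonian vector field directly, and obtain the $\beta$-component by integrating $L_X$ by parts using $\int_{\RR^4} L_X(\text{top form}) = \int_{\RR^4}\d\,\iota_X(\text{top form}) = 0$. Your caveat about the relative sign between the two cotangent summands of $\omega$ is the only bookkeeping point, and it resolves as you expect.
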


For the B-twisted theory $\mc E_{1,1,0}$ we can extend this action to include a homotopical trivialization for the infinitesimal action of isometries using Cartan's formula $L_X = [\d, \iota_X]$.   

For any vector field $X$ on $\RR^4$, let $\eta_X$ be the following endomorphism of cohomological degree $-1$ of the linear space of fields
\begin{align*}
\eta_X & \colon \mc E_{1,1,0} \to \mc E_{1,1,0}[-1] \\
\eta_X (\alpha + \eps \beta) & = \iota_X(\alpha) + \eps \iota_X(\beta) 
\end{align*}
where $\iota_X(-)$ is the contraction of a differential form.
\begin{prop}
\label{prop:cdr}
The collection of endomorphisms $\{\eta_X | X \in \mf{iso}(4;\CC)\}$ defines a homotopy trivialization of the classical field theory $\mc E_{1,1,0}$. 
\end{prop}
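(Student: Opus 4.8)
The plan is to exhibit the family $\{\eta_X\}$ as a classical Hamiltonian action of the contractible dg Lie algebra $\mf{iso}(4;\CC)_{\mr{dR}}$ on $\mc E_{1,1,0}$ that restricts along $\mf{iso}(4;\CC) \inj \mf{iso}(4;\CC)_{\mr{dR}}$ to the action $X \mapsto J_X$; by the definition of homotopy trivialization in Section~\ref{background_field_section} this is precisely what must be produced. Concretely, I would build a map of dg Lie algebras
\[
\rho \colon \mf{iso}(4;\CC)_{\mr{dR}} \to \oloc(\mc E_{1,1,0})[-1],
\]
where the target carries the differential $\d_{\mc E} + \{I,-\}$ and the bracket $\{-,-\}$. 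On the degree-zero generators set $\rho(X) = J_X = \int \langle \beta \wedge L_X \alpha\rangle$, the Hamiltonian already shown to generate $L_X$. On the degree $(-1)$ generators $\Tilde X \in \mf{iso}(4;\CC)[1]$ set $\rho(\Tilde X) = \Tilde{J}_X := \int \langle \beta \wedge \eta_X \alpha\rangle = \int\langle\beta\wedge\iota_X\alpha\rangle$, whose associated (degree $-1$) Hamiltonian vector field acts on the linear fields exactly by the contraction $\eta_X$; a degree count confirms $\Tilde J_X$ lands in the correct degree. The action of $\Tilde X$ on observables is then $\{\Tilde J_X,-\}$, with linear part $\eta_X$, so producing $\rho$ is the same as producing the trivialization carried by the $\eta_X$.

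Next I would unwind the morphism condition into the identities: $(\d_{\mc E}+\{I,-\})J_X = 0$ and $(\d_{\mc E}+\{I,-\})\Tilde J_X = J_X$ for the differentials (the second forced by $\d_{\mr{dR}}\Tilde X = X$ in $\mf{iso}(4;\CC)_{\mr{dR}}$), together with $\{J_X,J_Y\} = J_{[X,Y]}$, $\{J_X,\Tilde J_Y\} = \Tilde J_{[X,Y]}$, and $\{\Tilde J_X,\Tilde J_Y\} = 0$ (the last using $[\Tilde X,\Tilde Y]=0$ in $\mf{iso}(4;\CC)_{\mr{dR}}$). The strategy is to reduce each to a standard relation of Cartan calculus for $L_X, \iota_X$ on $\Omega^\bu(\RR^4)$, transported through the symplectic pairing. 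Since $\d_{\mc E}$ is the de Rham differential on $\mc E_{1,1,0}$, the quadratic identity $\d_{\mc E}\Tilde J_X = J_X$ is exactly Cartan's magic formula $L_X = \d\iota_X + \iota_X\d$; the bracket $\{J_X,\Tilde J_Y\}=\Tilde J_{[X,Y]}$ follows from $[L_X,\iota_Y]=\iota_{[X,Y]}$, the vanishing $\{\Tilde J_X,\Tilde J_Y\}=0$ from $\iota_X\iota_Y + \iota_Y\iota_X = 0$, and $\{J_X,J_Y\}=J_{[X,Y]}$ from $[L_X,L_Y]=L_{[X,Y]}$. Each reduction uses integration by parts under $\int\langle-,-\rangle$: because $\langle a\wedge b\rangle$ is a $5$-form on $\RR^4$ and hence zero when $|a|+|b|=5$, contraction $\iota_X$ is skew-adjoint, and Stokes' theorem makes $\d$ and $L_X$ (skew-)adjoint.

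The one point that genuinely involves the interacting structure, and which I expect to be the crux, is compatibility of the contraction homotopy with the cubic interaction $I = \tfrac12\int\langle\beta\wedge[\alpha,\alpha]\rangle$: one must check $\{I,\Tilde J_X\}=0$ (and, for the degree-zero differential identity, $\{I,J_X\}=0$). The latter is simply $\mr{ISO}(4)$-invariance of $I$, manifest because $I$ is built from the natural operations of wedge product, the Lie bracket on $\gg$, and integration, all of which commute with $L_X$. For the former, the Hamiltonian vector field of $\Tilde J_X$ contracts both $\alpha$ and $\beta$ by $\iota_X$, so $\{I,\Tilde J_X\}$ is the variation of $I$ under simultaneous contraction; writing this out and using that $\iota_X$ is a graded derivation of the wedge product, together with the graded symmetry and invariance of $\langle -,[-,-]\rangle$, the two resulting cubic terms are interchanged by the skew-adjointness of $\iota_X$ and cancel. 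By polynomial degree the conditions $\d_{\mc E}\Tilde J_X = J_X$ (quadratic) and $\{I,\Tilde J_X\}=0$ (cubic) are independent, and both hold; the bulk of the real work is bookkeeping the Koszul signs coming from the shifts $\gg[1]$, $\eps\gg[2]$ and $[-1]$ and checking these cancellations carry the right signs, while the conceptual content is entirely Cartan's formula and the Cartan calculus relations.

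Finally, I would emphasize that this argument is uniform over all of $\mf{iso}(4;\CC)$: the translation and rotation generators are treated at once, since Cartan's formula holds for the associated complexified Killing vector fields without distinction. The proposition is a purely classical statement; the vanishing of the corresponding quantum framing anomaly is a separate matter, handled by the Feynman-diagram analysis outlined for Theorem~\ref{B_twist_de_Rham_action_thm}.
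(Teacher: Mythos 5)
Your proposal is correct and follows essentially the same route as the paper: the paper's proof is a two-line appeal to Cartan's formula $L_X = [\d,\iota_X]$, with the compatibility of $\eta_X$ with the cubic interaction deferred to the subsequent Lemma~\ref{l:J} (where it is observed that the only term of $S_{1,1,0}$ not commuting with $\eta_X$ is the kinetic term). Your more detailed unwinding — packaging the data as a dg Lie algebra map out of $\mf{iso}(4;\CC)_{\mr{dR}}$ and checking $\{I,\Tilde J_X\}=0$ via the vanishing of $5$-forms on $\RR^4$ — is a faithful expansion of exactly that argument.
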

\begin{proof}
The operator $\eta_X$ extends to a derivation on observables in the natural way. 
The fact that the operator $\eta_X$ defines a homotopy trivialization of the derivation determined by $X$ is Cartan's formula. 
\end{proof}

There is another, soon to be convenient, description of the endomorphism $\eta_X$. 
For any vector field $X$ consider the local functional $\widetilde{J}_X \in \oloc(\mc E_{1,1,0})[-2]$
defined by
\[\widetilde{J}_X = \int \langle \beta \wedge \iota_{\wt X} \alpha\rangle.\]

\begin{lemma} \label{l:J} 
For any vector field $X$ one has $\eta_X = \{\widetilde{J}_X, -\}$ acting on $\mr{Obs}_{1,1,0}(\RR^4)$.
Moreover, 
\[
L_X = \big\{ \{S_{1,1,0}, \widetilde{J}_X\} , - \big\} 
\]
as derivations on $\mr{Obs}^{\mr{cl}}_{1,1,0}(\RR^4)$. 
\end{lemma}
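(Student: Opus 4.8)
The plan is to treat the two assertions in turn, deriving the second from the first together with Cartan's magic formula. Throughout I view $\mc E_{1,1,0}$ as the cotangent-type complex whose base fields are $\alpha \in \Omega^\bu(\RR^4,\gg[1])$ and whose fibre fields are $\eps\beta \in \Omega^\bu(\RR^4,\eps\gg[2])$, so that the $(-1)$-shifted pairing is the integration pairing $\int\langle\beta\wedge\alpha\rangle$ and the BV bracket $\{-,-\}$ contracts an $\alpha$-variation of one functional against the $\beta$-variation of the other. I write $\iota_X$ for the contraction $\iota_{\wt X}$ with the vector field underlying $X$.

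For the first assertion I would compute the Hamiltonian vector field of the quadratic functional $\widetilde{J}_X = \int\langle\beta\wedge\iota_X\alpha\rangle$ directly. Taking variations in $\alpha$ and in $\beta$ shows that $\{\widetilde{J}_X,-\}$ acts on $\alpha$ by $\iota_X$ and on $\beta$ by $\pm\iota_X$, and the two signs are reconciled using that contraction is (anti-)self-adjoint for the integration pairing. Concretely, for form-degrees summing to five the expression $\langle\beta\wedge\alpha\rangle$ is a $5$-form on $\RR^4$ and hence vanishes, so the Leibniz rule $\iota_X\langle\beta\wedge\alpha\rangle = \langle\iota_X\beta\wedge\alpha\rangle \pm \langle\beta\wedge\iota_X\alpha\rangle$ gives $\int\langle\iota_X\beta\wedge\alpha\rangle = \mp\int\langle\beta\wedge\iota_X\alpha\rangle$. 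This identifies the Hamiltonian vector field of $\widetilde{J}_X$ with the cotangent lift of $\iota_X$, namely $\eta_X(\alpha+\eps\beta) = \iota_X\alpha + \eps\iota_X\beta$, proving $\eta_X = \{\widetilde{J}_X,-\}$.

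For the second assertion, by the preceding lemma $L_X = \{J_X,-\}$, so it suffices to establish the equality of functionals $\{S_{1,1,0},\widetilde{J}_X\} = J_X$. I would split the action as $S_{1,1,0} = S_0 + I$, with quadratic part $S_0 = \int\langle\beta\wedge\d_{\mr{dR}}\alpha\rangle$ — at the point $(1,1,0)$ the linear differential is the full de Rham operator $\ol\dd + \dd_{z_1}+\dd_{z_2} = \d_{\mr{dR}}$ — and cubic part $I = \frac12\int\langle\beta\wedge[\alpha\wedge\alpha]\rangle$. By the graded Jacobi identity $\{\{S_0,\widetilde{J}_X\},-\} = [\,\{S_0,-\},\{\widetilde{J}_X,-\}\,] = [\d_{\mr{dR}},\iota_X]$, which by Cartan's formula is the cotangent lift of the Lie derivative $L_X$; since $\{S_0,\widetilde{J}_X\}$ is then the quadratic Hamiltonian for this lifted operator, the same computation as in the first part identifies it with $J_X = \int\langle\beta\wedge L_X\alpha\rangle$.

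It then remains to show that the cross term $\{I,\widetilde{J}_X\}$ vanishes, and this is the crux of the argument. Computing the bracket produces a cubic expression of the shape
\[
\{I,\widetilde{J}_X\} = \int\langle[\beta\wedge\alpha]\wedge\iota_X\alpha\rangle \pm \frac12\int\langle[\alpha\wedge\alpha]\wedge\iota_X\beta\rangle .
\]
Using invariance (cyclicity) of $\langle-,-\rangle$ together with the fact that $\iota_X$ is an odd derivation of the wedge product, these two terms reassemble into $\pm\int_{\RR^4}\iota_X\bigl(\frac12\langle\beta\wedge[\alpha\wedge\alpha]\rangle\bigr)$, the integral of $\iota_X$ applied to the top-degree integrand of $I$. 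Since contracting a $4$-form on $\RR^4$ yields a $3$-form, whose integral over a $4$-manifold is zero, the cross term vanishes. Hence $\{S_{1,1,0},\widetilde{J}_X\} = J_X$, and bracketing gives $\{\{S_{1,1,0},\widetilde{J}_X\},-\} = \{J_X,-\} = L_X$ as derivations of $\mr{Obs}^{\mr{cl}}_{1,1,0}(\RR^4)$. The only genuinely delicate step is the sign bookkeeping in the reassembly of $\{I,\widetilde{J}_X\}$; the conceptual mechanism — Cartan's formula for the free part and a dimension count (vanishing of a $3$-form integral) for the interacting part — is robust.
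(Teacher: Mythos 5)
Your proposal is correct and follows essentially the same route as the paper: the first claim is the direct computation of the Hamiltonian vector field of $\widetilde{J}_X$, and the second reduces to $J_X = \{S_{1,1,0},\widetilde{J}_X\}$ by checking that the interaction term contributes nothing (your vanishing of $\{I,\widetilde{J}_X\}$ is exactly the paper's remark that only the kinetic term fails to commute with $\eta_X$) and then applying Cartan's formula $[\d,\iota_X]=L_X$ to the kinetic term. You have merely spelled out the details that the paper compresses into two sentences.
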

\begin{proof}
The first part is a straightforward calculation. 
The second part follows from the fact that the only term in $S_{1,1,0}$ that does not commute with $\eta_X$ is the kinetic term $\int \beta \d \alpha$.
The equation is then equivalent to the equality of local functionals $J_X = \{S_{1,1,0}, \widetilde{J}_X\}$ which follows from Cartan's formula $[\d, \iota_X] = L_X$.
\end{proof}

\subsubsection{Quantum framing}

We have, therefore, constructed a homotopically trivialized action of the group of isometries on the \emph{classical} observables of the B-twisted 4d $\mc N=4$ theory.  
In the rest of this section we will quantize this classical symmetry thus proving Theorem \ref{B_twist_de_Rham_action_thm}. 

\begin{prop} \label{quantization_of_isometry_prop}
The quantization of the theory $\mc E_{1,1,0}$ constructed in Section \ref{Btwist_section} admits an $\mr{ISO}(4)$ action.
\end{prop}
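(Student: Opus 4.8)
The plan is to promote the classical Hamiltonian action to a \emph{quantum} action in the sense of Section~\ref{equivariant_section}: we must produce an equivariant effective collection $\{I_\rho[\Lambda]\}$ for $\mf h = \mf{iso}(4;\CC)$ that solves the equivariant quantum master equation~\eqref{eqn:eqqme} and restricts to the collection $\{I[\Lambda]\}$ of Section~\ref{Btwist_section} upon forgetting the background fields. The classical input is already in hand: the generating functionals $J_X = \int \langle \beta \wedge L_X \alpha \rangle$ define a map of dg Lie algebras $\mf{iso}(4;\CC) \to \oloc(\mc E_{1,1,0})[-1]$, since the Lie derivatives $L_X$ preserve the wedge product, the bracket on $\gg$, and the invariant pairing; in particular $I$ and $\omega$ are $\mr{ISO}(4)$-invariant, so the equivariant classical master equation holds. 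I would then build $I_\rho[\Lambda]$ by the background-field prescription of Remark~\ref{rmk:eqcollection}, summing the weights $W_\Gamma(P_{\eps<\Lambda}, I_\rho)$ of graphs whose univalent vertices are labelled by fields of $\mc E_{1,1,0}$ or by elements of $\mf{iso}(4;\CC)$, with edges labelled by the same propagator used in the non-equivariant theory.

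Everything then reduces to the vanishing of the equivariant one-loop anomaly. The generators $J_X$ are, like $I$ itself, linear in the cotangent field $\beta$, so the equivariant interaction $I_\rho$ is again of cotangent type; by Corollary~\ref{cotangent_cor} only diagrams with at most one loop occur, and by Theorem~\ref{wheel_thm} the obstruction is carried by wheel diagrams. I would factor each wheel weight into an analytic and an algebraic piece exactly as in the proof of Proposition~\ref{B_anomaly_prop}. The algebraic piece is an invariant polynomial on $\gg$ assembled from the adjoint action, and as in Example~\ref{tadpole_example} it is proportional to $\mr{Tr}_\gg$ of an endomorphism built from $\ad$; this vanishes because $\gg$ is reductive, hence unimodular. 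Thus the equivariant quantum master equation is satisfied and the action quantizes, restricting to $\{I[\Lambda]\}$ by construction.

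The main obstacle is that the holomorphic gauge-fixing operator $\dbar^*$, and therefore the propagator $P_{\eps<\Lambda}$, is invariant only under the subgroup $\mr{U}(2) \ltimes \RR^4 \subset \mr{ISO}(4)$ preserving the chosen complex structure, so one cannot simply transport the manifest classical $\mr{ISO}(4)$-invariance to the quantum presentation $I[\Lambda]$. The resolution is that the obstruction to equivariance is nonetheless computed by the gauge-independent wheel weights above, which vanish; concretely, for a rotation that moves the complex structure one conjugates $\dbar^*$ to the gauge-fixing operator of the rotated structure and uses that the resulting quantizations agree up to homotopy. Isolating and controlling this rotational part is precisely the work carried out in the subsequent analysis of the $\su(2) \oplus \su(2)$ action, so at the present stage the substantive content is the single adjoint-trace computation, with the more delicate rotational bookkeeping deferred to the homotopy trivialization.
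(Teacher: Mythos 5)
Your proposal takes a much more elaborate route than the paper does, and two of its steps fail. First, the central difficulty you identify --- that the holomorphic gauge-fixing operator $\dbar^*$ is only invariant under $\mr{U}(2)\ltimes\RR^4$ --- is resolved in the paper not by a ``quantizations agree up to homotopy'' argument but by simply changing the gauge fixing: since $\mc E_{1,1,0}$ is a de Rham-type theory, one may use the operator $\d^*$, which is $\mr{ISO}(4)$-invariant on the nose. Then both the classical interaction and the propagator are strictly $\mr{ISO}(4)$-invariant, hence so is every effective interaction $I[\Phi]$, and the conclusion follows directly from \cite[Proposition 9.1.1.2]{Book2}: a quantization of a smoothly $G$-equivariant classical theory whose effective interactions are all $G$-invariant is smoothly $G$-equivariant. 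No Feynman-diagram anomaly computation is needed for this proposition at all. Your proposed fix --- conjugating $\dbar^*$ to the gauge-fixing operator of the rotated complex structure and invoking equivalence of the resulting quantizations ``up to homotopy'' --- only yields an equivalence of theories, not an actual action of the group $\mr{ISO}(4)$ on a fixed quantization, and you do not supply the coherence data needed to upgrade it.

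Second, the anomaly argument is wrong where it does appear. You claim the algebraic factor of a wheel weight is proportional to $\mr{Tr}_\gg$ of an endomorphism built from $\ad$ and vanishes because $\gg$ is reductive, hence unimodular. Unimodularity only kills $\mr{Tr}_\gg(\ad_X)$, i.e.\ the tadpole of Example~\ref{tadpole_example}; a wheel with $v\ge 2$ vertices produces $\mr{Tr}_\gg(\ad_{X_1}\cdots\ad_{X_v})$, which is generically nonzero for reductive $\gg$ (for $v=2$ it is the Killing form). The actual vanishing mechanisms used in the paper are different: the cancellation between the even and odd copies in $L=L'[\delta]$ (Proposition~\ref{holo_anomaly_prop}), and, for the equivariant functionals, the case analysis of Theorem~\ref{Hodge_anomaly_vanishes}, which combines a form-degree argument (for wheels with two $\alpha$-legs) with the one-leg trace argument. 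Note also that what you set out to prove --- a quantum Hamiltonian action of $\mf{iso}(4;\CC)$ via the background-field method --- is the content of the later results on the homotopy trivialization, not of this proposition, which asserts only a smooth group action and admits the short, diagram-free proof above.
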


\begin{proof}
This follows from \cite[Proposition 9.1.1.2]{Book2} and the remark that immediately follows it.  The proposition in question says that if we are given a quantization of a smoothly $G$-equivariant classical field theory with the property that the effective interaction $I[\Phi]$ associated to every parametrix $\Phi$ is $G$-invariant, then the quantum theory is also smoothly $G$-equivariant.

To see that the effective interactions are $\mr{ISO}(4)$-invariant at every scale, it is enough to observe that the classical interaction and the propagator are $\mr{ISO}(4)$ invariant, by definition of the effective interaction $I[\Phi]$.  This is clear for the classical interaction (defined using the Lie bracket for the gauge Lie algebra and the wedge product of differential forms).  We can define a propagator using the gauge fixing operator $\d^*$ on $\mc E_{1,1,0}$.  Since this operator and the BV bracket on $\mc E_{1,1,0}$ are both $\mr{ISO}(4)$, so is the propagator.
\end{proof}

\begin{remark} \label{smooth_action_rmk}
In fact, this argument shows that the factorization algebras of quantum observables in the B-twisted theory admits a smooth action of the group $\mr{ISO}(4)$, as in \cite[Chapter 4.8]{Book1}.  This will be relevant in Section \ref{factorization_section} below, where we will use the existence of a smooth homotopically trivial action of the isometry group to identify this factorization algebra as a framed $\bb E_4$-algebra.
~\hfill$\Diamond$\end{remark}

Infinitesimally, we see that the classical action of the Lie algebra $\mf{iso}(4;\CC)$ also persists to the quantum level.
It remains to quantize the homotopy trivialization of this action. 
Recall that the derivation $\eta_X$ trivialized the classical action of a vector field $X$. 
Unlike the action of $\mf{iso}(4;\CC)$, it is not immediate that the action by the derivation $\eta_X$ is compatible with the quantization. 
In fact, we will see that for generic Lie algebra $\gg$, there is a potential anomaly to quantizing these trivializations. 

We will address the quantization of the homotopy trivialization by making use of the observation in Lemma \ref{l:J} that, classically, the action of a vector field $X$ and its homotopy trivialization are through the BV bracket of local functionals $J_X$, $\widetilde{J}_X$. 
Instead of directly quantizing the endomorphisms $\{\eta_X\}$, we attempt to quantize the collection of local functionals $\{J_X\}$ by using the background field method of equivariant BV quantization recalled in Section \ref{equivariant_section}. 

We study the Hamiltonian action by the dg Lie algebra $\mf{iso}(4;\CC)_{\mr{dR}}$. 
This process begins by applying the RG flow to the local functionals $I,J, \widetilde{J}$, in a similar way that we constructed $I[\Lambda]$ as the RG flow of the classical interaction $I$.

Let $I_{\rm dR} [\Lambda]$ be the naive effective collection for the $\mf{iso}(4;\CC)_{\mr{dR}}$-equivariant theory, as defined in Remark \ref{rmk:eqcollection}. 
We need to show that $I_{\rm dR}[\Lambda]$ solves the equivariant QME \eqref{eqn:eqqme}. 
The non equivariant theory $\mc{E}_{1,1,0}$ is manifestly holomorphic, but notice that $\mf{iso}(4;\CC)$ involves the action of non holomorphic vector fields.
In particular, the techniques to solving the QME using the holomorphic gauge do not apply. 
Nevertheless, we can break the problem up in the following way. 

Consider the subalgebra 
\[
\CC^2 \oplus \mf{sl}(2;\CC) \subset \mf{iso}(4;\CC)
\]
consisting of holomorphic translations and holomorphic volume preserving linear transformations. 
For any vector field $X$ in this subalgebra, the operators $L_X$ and $\eta_X$ are by holomorphic differential operators. 
Thus, we have the following statement.

\begin{lemma}
Consider the theory $\mc E_{1,1,0}$ on $\CC^2$ with its homotopically trivialized $\SU(2)$-action.  
The corresponding classical field theory with background fields valued in 
\[
\big[\CC^2 \oplus \sl(2;\CC)\big]_{\mr{dR}}
\]
is a holomorphic field theory with background fields in the sense of \cite{GRWthf}, and therefore the only possible obstructions to solving the equivariant QME are given by the weights of wheel Feynman diagrams with three external edges. 
\end{lemma}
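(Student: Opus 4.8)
The plan is to reduce this statement to the anomaly result for holomorphic theories with background fields established in \cite{GRWthf}, which is the equivariant generalization of Theorem~\ref{wheel_thm} (cf.\ the discussion in Remark~\ref{rmk:eqcollection}). To invoke that result I must verify two things: that the underlying theory $\mc E_{1,1,0}$ on $\CC^2$ is holomorphic, and that the coupling of the propagating fields to the background dg Lie algebra $[\CC^2 \oplus \sl(2;\CC)]_{\mr{dR}}$ is implemented by holomorphic differential operators. The first point is immediate: in the holomorphic presentation used in Section~\ref{Btwist_section}, $\mc E_{1,1,0}$ is a deformation of holomorphic BF theory, and we use the holomorphic gauge-fixing operator $\dbar^*$ throughout.

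The bulk of the work is the second point. The degree-zero part of the background algebra couples through the currents $J_X = \int \langle \beta \wedge L_X \alpha \rangle$, while the degree $(-1)$ part couples through the trivializing potentials $\widetilde{J}_X = \int \langle \beta \wedge \iota_{\widetilde X} \alpha \rangle$ of Lemma~\ref{l:J}. I would check directly that for every generator $X$ of the chosen subalgebra --- the holomorphic translations $\partial_{z_1}, \partial_{z_2}$ spanning $\CC^2$, and the traceless holomorphic linear vector fields $z_i \partial_{z_j}$ spanning $\sl(2;\CC)$ --- both $\iota_X$ and $L_X$ are holomorphic differential operators. The point is that such an $X = \sum_i a_i(z)\, \partial_{z_i}$ has holomorphic coefficients $a_i$, so $\iota_X$ contracts only against the holomorphic coframe $\d z_i$ and anticommutes with $\dbar$; consequently $L_X = [\d, \iota_X]$ reduces to $[\partial, \iota_X]$ and involves no antiholomorphic derivatives. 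Hence $J_X$ and $\widetilde{J}_X$ are holomorphic local functionals and the full equivariant interaction $I_\rho$ involves only holomorphic differential operators, so the equivariant theory is a holomorphic theory with background fields in the sense of \cite{GRWthf}.

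With these hypotheses in place the conclusion follows formally. The propagating fields still form a cotangent-type theory, so by Corollary~\ref{cotangent_cor} no diagram with more than one loop can contribute even after coupling in the background fields; thus the obstruction is a one-loop anomaly. The \cite{GRWthf} generalization of Theorem~\ref{wheel_thm} then localizes this one-loop anomaly on wheel diagrams with $d+1 = 3$ internal edges for $d = 2$. Since a wheel whose vertices are all trivalent has as many external legs as internal edges, these are precisely the wheels with three external legs, where each external leg may now be labeled either by an $\alpha$ field or by an element of the background algebra $[\CC^2 \oplus \sl(2;\CC)]_{\mr{dR}}$.

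I expect the genuine obstacle to be the verification that the \emph{homotopy-trivializing} data remains holomorphic, rather than merely the action itself. The Lie derivatives $L_X$ are holomorphic almost by inspection, but the trivialization is carried by the contraction operators $\iota_{\widetilde X}$ of degree $-1$, and one must be careful that contracting against a holomorphic vector field does not secretly reintroduce antiholomorphic dependence. This is exactly why the lemma is stated only for the holomorphic subalgebra $\CC^2 \oplus \sl(2;\CC)$ and not for all of $\mf{iso}(4;\CC)$: the antiholomorphic rotations and non-holomorphic translations lie outside this subalgebra and would break the holomorphic gauge, so the anomaly for the second $\su(2)$ factor must be treated separately using the conjugate complex structure, as indicated in the surrounding discussion.
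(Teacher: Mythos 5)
Your proposal is correct and follows essentially the same route as the paper, which simply observes that for $X$ in the subalgebra of holomorphic translations and holomorphic volume-preserving linear transformations both $L_X$ and the trivializing contraction $\eta_X$ are holomorphic differential operators, and then invokes \cite{GRWthf} to localize the one-loop anomaly on wheels with $d+1=3$ internal edges. Your explicit check that $\iota_X$ anticommutes with $\dbar$ so that $L_X=[\partial,\iota_X]$, and your remark on why the statement is restricted to this holomorphic subalgebra, merely spell out what the paper leaves implicit.
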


Let us now prove that the homotopy trivializations we have defined classically for the rotation action on our B-twisted theory extend to the quantum level, i.e. that the anomaly vanishes.  

\begin{theorem} \label{Hodge_anomaly_vanishes}
Suppose the Lie algebra $\gg$ defining $\mc{E}_{1,1,0}$ is reductive. 
There exists a quantization of the homotopy trivialization of the $\sl(2;\CC)$ action on $\mc E_{1,1,0}$. 
\end{theorem}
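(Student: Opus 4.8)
The plan is to show that the obstruction to the equivariant quantum master equation for the action by $\big[\CC^2 \oplus \sl(2;\CC)\big]_{\mr{dR}}$ vanishes, by computing the relevant Feynman weights and isolating a purely algebraic factor that dies precisely when $\gg$ is reductive. By the preceding lemma, the theory $\mc E_{1,1,0}$ equipped with $\sl(2;\CC)$-background fields is holomorphic in the sense of \cite{GRWthf}, so by Theorem~\ref{wheel_thm} the only diagrams that can contribute to the anomaly are wheels $\wheel$ with $d+1 = 3$ internal edges, i.e.\ wheels with three vertices and three external edges. Each external leg of such a wheel is labelled either by a gauge field in $\mc E$ (attached to a cubic vertex $I$) or by a background field in $\sl(2;\CC)_{\mr{dR}}$ (attached to one of the bivalent vertices $J_X$ or $\wt J_X$). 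It therefore suffices to prove that every three-vertex wheel has vanishing weight.

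First I would factor the weight of each wheel as a product of an analytic factor and an algebraic factor, exactly as in the proofs of Propositions~\ref{holo_anomaly_prop} and~\ref{B_anomaly_prop}. The key structural point is that the background couplings $J_X = \int\langle\beta\wedge L_X\alpha\rangle$ and $\wt J_X = \int\langle\beta\wedge\iota_{\wt X}\alpha\rangle$ act on the fields only through the form-valued operators $L_X$ and $\iota_{\wt X}$; in the gauge directions they are diagonal, using only the invariant pairing $\langle -,-\rangle$ on $\gg$. Hence the entire dependence of the weight on the rotation background fields is carried by the analytic factor, while the algebraic factor is assembled purely from the structure constants of $\gg$ entering through the $I$-vertices together with identity insertions $\id_\gg$ coming from the $J$- and $\wt J$-vertices. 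Tracing around the loop of the wheel, this algebraic factor is a trace in the adjoint representation of $\gg$.

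The heart of the argument is to show that this adjoint trace vanishes. In the purely holomorphic computation of Proposition~\ref{holo_anomaly_prop} the analogous factor died through the cotangent ``doubling'' $L = L'[\delta]$, with no hypothesis on $\gg$. That doubling is implemented by the odd directions $\d z_1, \d z_2$, but the holomorphic rotations in $\sl(2;\CC)$ act on $(z_1,z_2)$ and mix precisely these directions, so the clean cancellation of Proposition~\ref{holo_anomaly_prop} is no longer available once the rotation background fields are turned on. Instead one must use reductivity directly, via the mechanism of Example~\ref{tadpole_example}: for reductive (hence unimodular) $\gg$ the surviving contribution is a tadpole-type adjoint trace $\mr{Tr}_\gg(\ad_X)$, which vanishes. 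Running through the finitely many combinatorial types of three-vertex wheel---organized by how many vertices are cubic $I$-vertices versus bivalent $J$- or $\wt J$-vertices---then shows that each contributing weight is zero, so the equivariant QME is solvable and the classical homotopy trivialization quantizes.

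I expect the principal obstacle to lie in this last step: correctly identifying the algebraic factor of each wheel as an adjoint trace, and articulating cleanly why the doubling argument of Proposition~\ref{holo_anomaly_prop} fails under the rotation action, so that genuine reductivity of $\gg$---and not merely the cotangent structure---is required. In particular one must check that both the Lie-derivative vertices $J_X$ and the contraction vertices $\wt J_X$ enter in the claimed $\gg$-diagonal way, and verify that no analytic factor rescues a diagram whose algebraic factor has already been shown to vanish. Once the $\sl(2;\CC)$ factor is treated, the identical argument applied with respect to the conjugate complex structure handles the second $\su(2)$ summand of $\so(4)$, yielding the full framing statement needed for Theorem~\ref{B_twist_de_Rham_action_thm}.
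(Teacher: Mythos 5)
Your reduction to three-vertex wheels via Theorem~\ref{wheel_thm}, and your case analysis by the number of cubic $I$-vertices versus bivalent background-field vertices, both match the paper's strategy. But there is a genuine gap in the mechanism you propose for the vanishing: you claim that in every case the algebraic factor is an adjoint trace that dies by unimodularity of $\gg$. That is only true for the wheels with exactly \emph{one} external $\alpha$-leg (one $I$-vertex and two $\wt J$-vertices), where the algebraic factor is $\mr{Tr}_\gg(\ad_{\alpha_\gg})$ --- note that the trace is of the Lie-algebra component of the external \emph{gauge} field, not of the background vector field, which acts only on the form factors --- and this indeed vanishes as in Example~\ref{tadpole_example}. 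For the wheels with \emph{two} external $\alpha$-legs (two $I$-vertices and one $\wt J_X$-vertex), tracing the structure constants around the loop produces $\mr{Tr}_\gg(\ad_{\alpha_{1,\gg}}\ad_{\alpha_{2,\gg}})$, i.e.\ the Killing form, which is nondegenerate for semisimple $\gg$; reductivity does not kill these diagrams, and no algebraic cancellation is available.

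The paper disposes of the two-$\alpha$-leg wheels by an \emph{analytic} argument, which is the ingredient your proposal is missing (and your closing remark that one must ``verify that no analytic factor rescues a diagram whose algebraic factor has already been shown to vanish'' has the logic reversed --- here the analytic factor must do the rescuing). Writing the vertices at $u,v,w\in\CC^2$, the product of the two propagators and the heat kernel appearing in the loop is a $(5,4)$-form on $(\CC^2)^3\cong\CC^6$; because these kernels depend only on coordinate differences, that form lies in the subbundle generated by $\d u-\d v$, $\d v-\d w$, $\d\ol u-\d\ol v$, $\d\ol v-\d\ol w$, i.e.\ it is pulled back from a four-dimensional complex subspace, on which no form of total degree $9$ can survive. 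Without this form-degree argument (or a substitute for it), the equivariant anomaly is not shown to vanish. A smaller point: by Proposition~\ref{quantization_of_isometry_prop} one may restrict to anomaly terms depending on the degree~$-1$ background fields, so only $\wt J_X$-vertices (not $J_X$-vertices) need be considered, which simplifies your enumeration of combinatorial types.
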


\begin{proof}

By Theorem \ref{wheel_thm} we must show that the weight of a wheel-shaped Feynman diagram with three external legs vanishes.  
By Proposition \ref{quantization_of_isometry_prop} it suffices to characterize the anomaly only in the case that it depends on background fields valued in $\CC^2 [1] \oplus \sl(2;\CC) [1]$. 
In this case, the vertices of the wheel diagram labeling the anomaly are only labeled by the classical BF interaction $I$ or $\widetilde{J}_X$ for some vector field $X \in \CC^2 \oplus \sl(2;\CC)$. 
 
We can decompose the weight of such a wheel into a sum, depending on the number of legs labelled by BF theory fields $\alpha \in \Omega^{\bullet, \bullet}(\CC^2;\gg)$.

\begin{enumerate} 

 \item Two $\alpha$-legs.  If the remaining vertex is labelled by $\widetilde{J}_X$,  then we will show that the weight of the diagram is zero for form-degree reasons.  Indeed, identify $\wt X$ as a holomorphic vector field $f_1(w) \frac{\dd}{\dd(\d w_1)} + f_2(w) \frac{\dd}{\dd(\d w_2)}$ on $\CC^2$.   The Feynman weight in question is a limit of integrals of the form
  \[\int_{(u,v,w) \in (\CC^2)^3} \alpha_1(u)\alpha_2(v) P(u,v)P(v,w) \left( f_1(w)\frac{\dd}{\dd(\d w_1)} K(w,u) + f_2(w)\frac{\dd}{\dd(\d w_2)} K(w,u)\right).\]
 The factor $P(u,v)P(v,w) \left( f_1(w)\frac{\dd}{\dd(\d w_1)} K(w,u) + f_2(w)\frac{\dd}{\dd(\d w_2)} K(w,u) \right)$ of the integrand is a $(5,4)$ form on $\CC^6$.  However, by definition of the heat kernel, this form lies in the sub-bundle generated by the 1-forms $\d u - \d v, \d v - \d w, \d \ol u - \d \ol v, \d \ol v - \d \ol w$, in other words, 1-forms pulled back from a four-dimensional subspace of $\CC^6$.  Any $(5,4)$ form of this type must necessarily vanish.
 
 \item One $\alpha$-leg.  Let us write the input field $\alpha \in \Omega^{\bullet, \bullet}(\CC^2;\gg)$ as a pure tensor $\alpha_{\mr{alg}} \otimes \alpha_\gg$, where $\alpha_{\mr{alg}}$ is a differential form on $\CC^2$ and $\alpha_\gg$ is an element of $\gg$ (without loss of generality).  Any diagram of this type has weight proportional to $\mr{Tr}_\gg(\alpha_\gg)$, where the trace takes place in the adjoint representation.  Because $\gg$ is reductive, this trace, and therefore the weight of the diagram, vanishes.
\end{enumerate}

Summing all these contributions, we find that the total weight of a wheel-shaped Feynman diagram is zero, and therefore the anomaly vanishes, as desired. 
\end{proof}

\begin{corollary}
The anomaly for the $\mf{iso}(4;\CC)_\mr{dR}$-equivariant quantization of the B-twisted theory $\mc E_{1,1,0}$ vanishes.
\end{corollary}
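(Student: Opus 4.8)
The plan is to assemble the corollary from the results established above by exploiting the semidirect-product decomposition
\[
\mf{iso}(4;\CC) \iso \big(\sl(2;\CC) \oplus \sl(2;\CC)\big) \ltimes \CC^4,
\]
together with the fact, recorded in Proposition~\ref{quantization_of_isometry_prop}, that the underlying (untrivialized) $\mf{iso}(4;\CC)$-action has already been quantized with no anomaly. Because of that proposition, the only possible obstruction to the $\mf{iso}(4;\CC)_{\mr{dR}}$-equivariant quantum master equation must lie in the part of the anomaly cocycle depending on at least one of the odd trivializing generators $\widetilde{X}$; by Lemma~\ref{l:J} these contributions are precisely the wheel diagrams carrying at least one insertion of a functional $\widetilde{J}_X$. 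Since the quantization is one-loop exact, Theorem~\ref{wheel_thm} reduces this obstruction, after choosing a complex structure on $\RR^4 \iso \CC^2$, to a finite sum of weights of wheel-shaped Feynman diagrams with three external legs.

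First I would dispose of the translations $\CC^4 \subset \mf{iso}(4;\CC)$. For a constant vector field $X$ the trivializing homotopy $\eta_X = \iota_X$ has constant coefficients and commutes with the gauge-fixing operator $\d^*$ used to build the propagator, so the wheels carrying a $\widetilde{J}_X$ with $X$ a translation are killed by the integration-by-parts mechanism of Lemma~\ref{A_diagram_lemma}. Next I would treat the two rotation factors. Fixing a complex structure, the first copy of $\sl(2;\CC)$ acts by holomorphic volume-preserving vector fields, and Theorem~\ref{Hodge_anomaly_vanishes} establishes that the corresponding trivialization anomaly vanishes. Passing to the conjugate complex structure exchanges the two copies of $\sl(2;\CC)$, so running the identical argument with $\ol\dd^*$ replaced by the conjugate gauge-fixing operator yields the vanishing for the second copy.

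The main obstacle is the combination step: a priori the anomaly cocycle for the full $\mf{iso}(4;\CC)_{\mr{dR}}$ is not merely the sum of the three sectors above, since a single wheel may carry insertions of $\widetilde{J}_X$ drawn from different summands at once (a ``mixed'' term), and no single complex structure renders all of $\mf{iso}(4;\CC)$ holomorphic. The hard part will be showing these mixed wheels also vanish. I expect this to reduce to the two mechanisms already in play: any wheel whose three external legs include a gauge field $\alpha$ together with two distinct $\widetilde{J}_X$ insertions is annihilated by the form-degree argument of case~(1) of Theorem~\ref{Hodge_anomaly_vanishes}, because the heat-kernel factors again force the integrand into a sub-bundle of differential forms of too low rank; and any wheel with too few $\alpha$-legs has algebraic weight proportional to a trace $\mr{Tr}_\gg$ in the adjoint representation, which vanishes since a reductive $\gg$ is unimodular. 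Verifying that these two cases exhaust the mixed contributions, uniformly across the two complex structures used to analyze the rotation factors, is the crux and would complete the proof that the anomaly $\Theta_{\mr{dR}}$ vanishes.
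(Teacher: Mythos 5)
Your strategy is essentially the paper's: the published proof also reduces the corollary to Theorem \ref{Hodge_anomaly_vanishes} by writing $\mf{iso}(4;\CC)_{\mr{dR}}$ as a sum of two copies of $\big[\CC^2 \oplus \sl(2;\CC)\big]_{\mr{dR}}$ --- holomorphic vector fields for a fixed complex structure and for the conjugate one --- and invokes the theorem once for each of the two conjugate gauge-fixing operators. One difference: the paper does not peel off the translations as a separate sector. The holomorphic (resp.\ antiholomorphic) translations are absorbed into the corresponding summand, and the form-degree argument in case (1) of the proof of Theorem \ref{Hodge_anomaly_vanishes} already covers them (take $f_1,f_2$ constant). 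Your citation of Lemma \ref{A_diagram_lemma} for the translation sector is not quite right: that lemma concerns the bivalent vertex $\int \beta \wedge \beta$, whose two legs are both $\beta$-type, and its vanishing mechanism is the identity $\int \ol\dd^* A \wedge \ol\dd^* B \,\d w_1 \cdots \d w_n = 0$ for the two propagators meeting at that vertex; the vertex $\widetilde{J}_X = \int \langle \beta \wedge \iota_X \alpha\rangle$ has one $\alpha$- and one $\beta$-leg, so that argument does not transfer verbatim, and you should instead route the translations through the theorem as the paper does.

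The more substantive point is the combination step you flag: wheels carrying $\widetilde{J}$-insertions from both the holomorphic and antiholomorphic summands are invisible upon restriction to either subalgebra, and no single complex structure renders the full equivariant theory holomorphic, so the wheel-reduction of Theorem \ref{wheel_thm} does not apply to them directly. You correctly identify this as the crux but explicitly leave it unverified, so as written the proposal stops short of a complete proof. To be fair, the paper's own three-line proof asserts the conclusion directly from the two-subalgebra decomposition without addressing these cross terms either; you are, if anything, more candid about where the remaining work lies. The mechanisms you propose for the mixed wheels --- the form-degree collapse forced by the heat-kernel factors, and the vanishing of adjoint traces for reductive $\gg$ --- are exactly the two cases inside the proof of Theorem \ref{Hodge_anomaly_vanishes}, and the case analysis there is organized by the number of $\alpha$-legs rather than by which summand the vector fields come from, so it does extend; but that verification is carried out neither in your proposal nor explicitly in the paper.
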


\begin{proof}
We split $\mf{iso}(4;\CC)_{\mr{dR}}$ up as a sum 
\[
\big[\CC^2 \oplus \sl(2;\CC)\big]_{\mr{dR}} \oplus \big[\CC^2 \oplus \sl(2;\CC)\big]_{\mr{dR}}
\]
The first factor consists of holomorphic vector fields in a fixed complex structure.
The second consists of holomorphic vector fields in the conjugate complex structure. 
This implies that the full action of $\mf{iso}(4;\CC)_{\mr{dR}}$ is anomaly-free as desired.
\end{proof}

Let us conclude this section by extending this result for the B-twist to the full Kapustin--Witten family of topological twists.  That is, to the $\CC^2 \bs \{0\}$ family of supercharges spanned by the B-twisting supercharge $(1,1,0)$ along with the A-twisting supercharge $(0,0,1)$.

\begin{theorem} \label{KW_family_framing_anomaly_thm}
There is an anomaly-free action of $\mf{iso}(4;\CC)_{\mr{dR}}$ on the twisted theory $\mc E_{t,t,u}$ for any $(t,u) \in \CC^2 \bs \{0\}$, extending the action described above for the point $(1,0)$.
\end{theorem}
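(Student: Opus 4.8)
The plan is to reproduce, in the equivariant setting, the family argument of Proposition~\ref{B_anomaly_prop}: the entire construction is constant across the $(t,u)$-family, so the only new input needed at a general point is the vanishing of the terms produced by the deformed linear differential. I would first record that every ingredient entering the naive equivariant effective collection $\{I_{\mr{dR}}[\Lambda]\}$ of Remark~\ref{rmk:eqcollection} is independent of $(t,u)$. The propagator is built from the holomorphic gauge-fixing operator $\dbar^*$, and the associated heat kernel is governed by $[\dbar^*,\, \dbar + t(\partial_{z_1}+\partial_{z_2}) + u\frac{\dd}{\dd\eps}] = [\dbar^*,\dbar]$, since $\dbar^*$ anticommutes with the holomorphic derivatives $\partial_{z_i}$ and with $\frac{\dd}{\dd\eps}$; the vertices $I$ and $\widetilde{J}_X$ are likewise $(t,u)$-independent local functionals. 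Hence $\{I_{\mr{dR}}[\Lambda]\}$ is literally the same equivariant effective collection for every $(t,u)\in\CC^2\bs\{0\}$, and coincides with the one solving the equivariant quantum master equation~\eqref{eqn:eqqme} at the point $(1,0)$ by the corollary to Theorem~\ref{Hodge_anomaly_vanishes}.

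Next I would compare the obstruction $\Theta_{t,u}[\Lambda]$ to the equivariant quantum master equation for $\mc E_{t,t,u}$ with the obstruction $\Theta_{1,0}[\Lambda]$ for $\mc E_{1,1,0}$. Every term in~\eqref{eqn:eqqme} except the linear differential $\dd_{\mc E}I_{\mr{dR}}[\Lambda]$ is built only from $I_{\mr{dR}}[\Lambda]$, the propagator and $\dd_{\mf h}$, all constant in $(t,u)$; writing $\dd_{\mc E_{t,t,u}} - \dd_{\mc E_{1,1,0}} = (t-1)(\partial_{z_1}+\partial_{z_2}) + u\frac{\dd}{\dd\eps}$ gives
\[
\Theta_{t,u}[\Lambda] = \Theta_{1,0}[\Lambda] + \big[(t-1)(\partial_{z_1}+\partial_{z_2}) + u\tfrac{\dd}{\dd\eps}\big]\, I_{\mr{dR}}[\Lambda].
\]
Since $\lim_{\Lambda\to 0}\Theta_{1,0}[\Lambda]=0$, it remains to show that the $\Lambda\to0$ limit of the second term vanishes. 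Splitting $I_{\mr{dR}}[\Lambda] = I_{\mr{dR}}^{(0)}[\Lambda] + \hbar I_{\mr{dR}}^{(1)}[\Lambda]$, the tree-level piece is governed by the classical equivariant master equation, which I would verify directly for $\mc E_{t,t,u}$: the operator $\frac{\dd}{\dd\eps}$ commutes with the contractions $\iota_{\widetilde X}$ defining $\widetilde{J}_X$, and the holomorphic derivatives are handled by decomposing $\mf{iso}(4;\CC)_{\mr{dR}}$ into its holomorphic and antiholomorphic $\sl(2;\CC)$-summands for two conjugate complex structures, exactly as in the corollary to Theorem~\ref{Hodge_anomaly_vanishes}.

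The one-loop piece reduces, as in~\eqref{eqn:familyqme}, to a statement about wheel diagrams, and here I would invoke the two vanishing mechanisms already in place. For the $t$-deformation I use the factorization of each wheel weight into analytic and algebraic factors from Proposition~\ref{B_anomaly_prop}: the operators $\partial_{z_i}$ affect only the analytic factor, while the algebraic factor is the adjoint-trace invariant polynomial that vanishes by reductivity of $\gg$ and is unchanged by the deformation. For the $u$-deformation I would apply Lemma~\ref{A_diagram_lemma}: turning on $u\frac{\dd}{\dd\eps}$ inserts a derivative-free bivalent vertex, and the integration-by-parts computation there forces the weight of any wheel through such a vertex to vanish, the remaining graphs being at most one-loop and handled by \cite[Proposition~4.4]{BWhol}. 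Concluding that both limits vanish yields $\lim_{\Lambda\to 0}\Theta_{t,u}[\Lambda]=0$, and hence an anomaly-free action for all $(t,u)$.

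I expect the main obstacle to be the bookkeeping where the two mechanisms meet. By Theorem~\ref{wheel_thm} the only dangerous configurations on $\CC^2$ are wheels with three external legs, but these now carry three kinds of labels --- the cubic BF vertex $I$, the quadratic homotopy-trivialization vertex $\widetilde{J}_X$, and the new quadratic A-type vertex $u\frac{\dd}{\dd\eps}$ --- and I must check that mixing them produces no new nonvanishing wheel. Concretely, I would verify that inserting the derivative-free bivalent vertices neither raises the Dolbeault form degree enough to escape the $(5,4)$-form vanishing used in Theorem~\ref{Hodge_anomaly_vanishes}, nor spoils the adjoint-trace argument that annihilates the one-$\alpha$-leg diagrams; since the A-twist breaks the clean holomorphic structure that made the $u=0$ analysis transparent, the real content is to confirm that the constancy of the propagator across the family preserves both vanishing arguments verbatim.
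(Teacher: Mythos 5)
Your proposal is correct and follows essentially the same route as the paper: verify the classical equivariant master equation by noting the A-type quadratic term Poisson-commutes with $J_X$ and $\wt J_X$, invoke Proposition~\ref{quantization_of_isometry_prop} together with the constancy of the propagator and effective action over the family, and then handle the one-loop wheel diagrams via the analytic/algebraic factorization for the $t$-direction and Lemma~\ref{A_diagram_lemma} for the $u$-direction. The paper's proof is terser (it simply cites these ingredients), while you spell out the bookkeeping of mixed vertex labels that the paper leaves implicit, but no new idea is involved.
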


\begin{proof}
The classical action of $\mc{E}_{t,t,u}$ is 
\[
S_{t,t,u} = \int \beta \wedge \dbar \alpha + t \int \beta \wedge \partial \alpha + u \int \beta \wedge \beta + I .
\]
Even when $u \ne 0$, the equivariant action $I + J + \wt{J}$ still satisfies the CME since $\{\int \beta \wedge \beta , J_X \} = \{\int \beta \wedge \beta , \wt{J}_X\} = 0$ for any vector field $X$. 

Now, we must verify that this action extends to the quantum level.  
Proposition \ref{quantization_of_isometry_prop} continues to apply for our family of theories, therefore it is enough to generalize Theorem \ref{Hodge_anomaly_vanishes} from $\mc E_{0,0,0}$ to the $\CC[u]$-family of theories $\mc E_{0,0,u}$.  In order to do this, we only need to obseve that the proof of Lemma \ref{A_diagram_lemma} continues to apply for the equivariant version of the interaction, so the Feynman weights that contribute to the anomaly of the theory $\mc E_{0,0,u}$ are independent of the value of $u$.
\end{proof}

Finally, we can combine this equivariant structure with the construction of the family of theories over the moduli of vacua, to obtain the following equivariant family.

\begin{corollary} \label{family_of_isometry_actions_cor}
The family of quantum field theories of Theorem \ref{quantum_vacua_family_thm} can be enhanced to define a family of $\mr{iso}(4;\CC)_{\mr{dR}}$-equivariant one-loop exact $\ZZ/2\ZZ$-graded quantum field theories over the stack $(\CC^2 \bs \{0\}) \times [\gg^*/G]$.
\end{corollary}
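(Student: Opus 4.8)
The plan is to assemble the desired equivariant family by combining the one-loop exact quantization over the moduli of vacua from Theorem~\ref{quantum_vacua_family_thm} with the homotopically trivialized $\mf{iso}(4;\CC)_{\mr{dR}}$-action from Theorem~\ref{KW_family_framing_anomaly_thm}. Concretely, I would restrict the family of Theorem~\ref{quantum_vacua_family_thm} to the Kapustin--Witten locus $t_1 = t_2 = t$, obtaining a one-loop exact quantization of $\mc{T}^{\mr{vac}}_{t,t,u}$ over $(\CC^2 \bs \{0\})_{t,u} \times [\gg^*/G]$, and then show that the equivariant structure built in Theorem~\ref{KW_family_framing_anomaly_thm} over the basepoint $[0] \in [\gg^*/G]$ extends $G$-equivariantly over all of $[\gg^*/G]$. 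The entire content is therefore the verification that the homotopy trivialization is compatible with the deformation by a vacuum.

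First I would check classical compatibility. Recall that passing to a vacuum $x \in \gg^*$ deforms only the linear part of the differential of $\mc{T}^{\mr{vac}}_{t,t,u}$, by adding the purely algebraic operator $\eps\ad_x + u \frac{\dd}{\dd\eps}\id_{\gg_x}$, which acts on the $\gg[\eps]$ factor and involves no spacetime derivatives. The homotopy trivialization of Proposition~\ref{prop:cdr}, on the other hand, is implemented by the contraction operators $\eta_X = \iota_X$, equivalently by the local functionals $\widetilde{J}_X$ of Lemma~\ref{l:J}, which act only on the de Rham forms and are the identity on the $\gg[\eps]$ factor. Hence $\eta_X$ commutes with the vacuum deformation, so the identity $L_X = \big\{ \{S, \widetilde{J}_X\}, - \big\}$ of Lemma~\ref{l:J} persists for the deformed action $S$, and the equivariant classical master equation for $I + J + \widetilde{J}$ is preserved---using, as in the proof of Theorem~\ref{KW_family_framing_anomaly_thm}, that the quadratic term $u\int\beta\wedge\beta$ brackets trivially with both $J_X$ and $\widetilde{J}_X$. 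This produces the classical $\mf{iso}(4;\CC)_{\mr{dR}}$-equivariant structure over the whole family.

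Next I would promote this to the quantum level. As in the proof of Theorem~\ref{quantum_vacua_family_thm}, the holomorphic gauge-fixing operator $\dbar^*$ commutes with the vacuum deformation $\eps\ad_x$, so the propagator $P_{\eps<\Lambda}$ is independent of $x$; combined with the $x$-independence already used to construct $I[\Lambda]$, the equivariant effective collection $I_{\mr{dR}}[\Lambda]$ is built from exactly the same propagator as in the undeformed case, with the bivalent vertices merely altered by the algebraic vacuum terms. Splitting $\mf{iso}(4;\CC)_{\mr{dR}}$ into two conjugate copies of $[\CC^2\oplus\sl(2;\CC)]_{\mr{dR}}$ as in Theorem~\ref{KW_family_framing_anomaly_thm}, the holomorphic background-field reduction leaves only wheels with three external legs as potential anomalies. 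Their weights are independent of $x$, and the argument of Theorem~\ref{Hodge_anomaly_vanishes} applies verbatim: diagrams with two $\alpha$-legs vanish for form-degree reasons, while diagrams with a single $\alpha$-leg are proportional to $\mr{Tr}_\gg(\alpha_\gg)$, which vanishes by reductivity of $\gg$. The A-type $u$-deformation contributes no additional loop diagrams by Lemma~\ref{A_diagram_lemma}, exactly as in Theorem~\ref{KW_family_framing_anomaly_thm}.

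The main obstacle is confirming that the full equivariant quantum master equation \eqref{eqn:eqqme} holds once the bivalent vacuum vertices are present, rather than merely the vanishing of the three-external-leg wheel. For this I would mimic the $x$-independence argument of Theorem~\ref{quantum_vacua_family_thm}: since the propagator and all interaction vertices, including the equivariant functionals $J_X$ and $\widetilde{J}_X$, are strictly $G$-invariant and $I_{\mr{dR}}[\Lambda]$ is assembled from these, it is itself $G$-invariant, and one verifies $\{x, I_{\mr{dR}}[\Lambda]\} = 0$ by the same reasoning that established Equation~\eqref{eqn:familyqme}. This $G$-equivariance is precisely what allows the family over $\gg^*$ to descend to the quotient stack $[\gg^*/G]$, yielding the asserted $\mf{iso}(4;\CC)_{\mr{dR}}$-equivariant one-loop exact family over $(\CC^2 \bs \{0\}) \times [\gg^*/G]$.
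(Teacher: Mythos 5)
Your proposal is correct and follows exactly the route the paper intends: the paper offers no explicit proof of this corollary, merely asserting that one ``combines'' the equivariant structure of Theorem~\ref{KW_family_framing_anomaly_thm} with the vacuum family of Theorem~\ref{quantum_vacua_family_thm}, and your argument is a faithful elaboration of that combination. The key observations you supply --- that the vacuum deformation $\eps\ad_x + u\tfrac{\dd}{\dd\eps}\id_{\gg_x}$ is purely algebraic and hence commutes with $\iota_X$ and with $\dbar^*$, so that both the classical trivialization and the anomaly-vanishing wheels carry over unchanged, with descent to $[\gg^*/G]$ via strict $G$-invariance --- are precisely the checks the paper leaves implicit.
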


\section{The Factorization Algebras of Observables} \label{factorization_section}

The collection of observables of a field theory carry an exceedingly rich mathematical structure.
For a perturbative theory, the notion of a factorization algebra captures important aspects of this structure,
notably the local-to-global features.
In brief, for a field theory on a manifold $M$, its factorization algebra $\obs$ of observables encodes the following information.
\begin{itemize}
\item For each open set $U$, such as an open ball, the collection of observables $\obs(U)$ that only depend on the fields' behavior inside $U$.
\item Given observables $O_1, \ldots, O_k$ that depend on disjoint regions $U_1, \ldots, U_k$, the assignment of a product observable $O_1 \cdots O_k$.
\end{itemize}
In particular, the factorization algebra knows the operator product expansion.
For a precise definition, see \cite{Book1} and \cite{Book2},
which also show how to construct and analyze the observables for any BV theory.
Indeed, the central result of those books is that a factorization algebra arises automatically from the BV formalism, simply by keeping track of the support of observables.

The real benefit of this result is that factorization algebras are objects of intrinsic mathematical interest and have already received extensive development, 
for purposes independent of field theory.
As the preeminent example, consider a topological field theory on $\RR^n$, such as Chern-Simons theory on $\RR^3$.
In this case, the observables are locally constant, in the sense that $\obs(U) \simeq \obs(U')$ if $U \subset U'$ are two balls. 
It is then a theorem that the factorization algebra is equivalent to the data of an algebra over the little $n$-disks operad,
also known as an $\bb E_n$ algebra.
Such algebras were introduced by topologists in the 1960s and have received intensive study since,
so that we can seek to deploy their insights in the setting of topological field theory.
Recently much activity has centered around the notion of factorization homology, 
which (when defined) pairs an $n$-manifold $M$ with a $\bb E_n$ algebra $A$ to produce a cochain complex,
much as traditional homology pairs a space with an abelian group \cite{LurieHA, AyalaFrancis}.
This process produces rich invariants of manifolds and $\bb E_n$ algebras (see, e.g., \cite{BBJ1, GunninghamJordanSafronov, Knudsen, HHKWZ} for some recent work).

Our goal in this section is to begin the analysis of the observables of the twists of 4-dimensional $\mc{N} = 4$ super Yang--Mills theories.
In particular, we begin by explaining when the observables of a topological twist are {\em framed} $\bb E_4$ algebras,
which allows one to compute factorization homology on arbitrary {\em oriented} 4-manifolds.
Next, we explain how to capture the process of canonically quantizing and then imposing constraints is captured by a filtration on the factorization algebra of observables.
This filtration answers a puzzle raised in \cite{EY2} about how quantization of the B-twist affects its observables.
Finally, we unwind what factorization homology should mean and compute its values on various classes of manifolds.

\subsection{Factorization Algebras and Framed \texorpdfstring{$\bb E_n$}{En} Algebras} \label{En_section}

Let us begin by discussing the notion of a smooth $G$-action on a factorization algebra.  We will keep this discussion brief, and refer to \cite[Section 2.2]{ElliottSafronov} and \cite[Chapter 4.8]{Book1} for a more detailed discussion.  

Let $G$ be a Lie group with Lie algebra $\gg$, equipped with an action on $\RR^n$ by diffeomorphisms.  Let $\obs$ denote a factorization algebra on $\RR^n$.

\begin{definition}
A \emph{smooth action} of $G$ on the factorization algebra $\obs$ consists of 
\begin{enumerate}
 \item an isomorphism $\alpha_g(U) \colon \obs(U) \to \obs(gU)$ for each $g \in G$ and each open subset $U \sub \RR^n$, and
 \item a dg Lie algebra homomorphism $\gg \to \mr{Der}(\obs)$, where $\mr{Der}(\obs)$ is the dg Lie algebra of derivations of $\obs$ (see \cite[Chapter 4, Definition 8.1.2]{Book1}).
\end{enumerate}
This data must together satisfying the following conditions.  
\begin{enumerate}
 \item For all $U$ and all $g_1, g_2 \in G$, $\alpha_{g_1}(g_2U) \circ \alpha_{g_2}(U) = \alpha_{g_1g_2}(U)$.
 \item For all $g \in G$, the maps $\alpha_g$ commute with the factorization algebra structure on $\obs$.
 \item For all pairwise disjoint open subsets $U_1, \ldots, U_k \sub V \sub \RR^n$, the map 
  \[m \colon \{(g_1, \ldots, g_k) \in G^k \colon g_k(U_k) \text{ are disjoint subsets of } V\} \to \hom\left(\bigotimes_{i=1}^k \obs(U_i), \obs(V)\right),\]
 defined by first acting by $(g_1, \ldots, g_k) \in G^k$ then using the factorization structure, is smooth.
 \item The maps $\alpha_g$ and the infinitesimal $\gg$ action $\rho$ are compatible in the following sense.  For all $X \in \gg$ and $i = 1, \ldots, k$, we have
 \[\dd_{X,i}m_{g_1, \ldots, g_k}(\mc{O}_1, \ldots, \mc{O}_k) \iso m_{g_1, \ldots, g_k}(\mc{O}_1, \ldots, \ldots, \rho(X)\mc{O}_i, \ldots, \mc{O}_k)\]
 where the map $\dd_{X,i}$ is the derivative on $G^k$ with respect to the tangent vector
 \[(0, \ldots, L_{g_i}(X), \ldots, 0) \in T_{g_1, \ldots, g_k}G^k,\]
 where the non-zero element is placed in the $i^{\mr{th}}$ slot.
\end{enumerate}
\end{definition}

\begin{definition}
A \emph{homotopy trivialization} of a smooth $G$-action on $\obs$ is a $G$-equivariant cochain map $\eta \colon \gg[1] \to \mr{Der}(\obs)$ so that $[\eta(X), \eta(Y)] = 0$ for all $X,Y \in \gg$, and $\d \eta(X) = \rho(X)$ for all $X \in \gg$.

Equivalently, let $\gg_{\mr{dR}}$ denote the dg Lie algebra $\gg[1] \overset{\mr{id}}{\to} \gg$, where $\gg$ acts on $\gg[1]$ by the adjoint action.  A homotopy trivialization of a smooth $G$-action is an extension of the infinitesimal action $\rho$ of $\gg$ to an action of $\gg_{\mr{dR}}$.
\end{definition}

We can construct factorization algebras with $G$-actions as the algebras of classical or quantum observables of a field theory.  Let us consider examples where $G$ is a subgroup of the group $\mr{ISO}(n) = \SO(n) \ltimes \RR^n$, acting on $\RR^n$ by isometries.  For instance, $G$ could be the group $\RR^n$ of translations, or the full group $\mr{ISO}(n)$ of all isometries.  Factorization algebras with an action of $G_{\mr{dR}}$ of this type admit another interpretation, in terms of homotopical algebra.

We will use the following two types of object
\begin{itemize}
 \item \emph{$\bb E_n$ algebras} are cochain complexes with the structure of an algebra for the operad of \emph{little $n$-disks}.  We will work with a specific model for this operad.  The space of $k$-ary operations of this operad can be described by taking the singular cochains of the manifold $\mr{Disk}_n(k)$ of all collections of $k$-disjoint $n$-disks inside the unit $n$-disk.  We refer to for instance \cite[Chapter 4.1]{MarklShniderStasheff} for a definition (see also \cite[Definition 2.12]{ElliottSafronov} in a similar context.)
 \item \emph{Framed $\bb E_n$ algebras}, or $\bb E_n^{\mr{fr}}$-algebras are likewise algebras for the operad of \emph{framed little $n$-disks}, which can alternatively be realized as the semidirect product $\SO(n) \ltimes \bb E_n$.
\end{itemize}

\begin{remark}
The word ``framing'' occurs in factorization homology in two places; this would benefit from some unpacking.  Given an (ordinary, unframed) $\bb E_n$ algebra $
\mc A$, one can attempt to compute the factorization homology of a \emph{framed} manifold with coefficients in $\mc A$.  Using this input, Scheimbauer's theorem \cite{ScheimbauerThesis} provides a \emph{framed} TQFT.

On the other hand, given a framed $\bb E_n$ algebra $\mc A$, one can attempt to compute the factorization homology of any \emph{oriented} manifold with coefficients in $\mc A$.  According to the cobordism hypothesis, as in \cite[Theorem 2.4.18]{LurieCobordism}, Scheimbauer's construction will extend to provide an \emph{oriented} TQFT.
~\hfill$\Diamond$\end{remark}

Now, in the presence of a homotopically trivial action of the group of translations, one can understand the factorization algebra of observables of a quantum field theory as an $\bb E_n$ algebra.  We emphasise that this construction is explicit; it does not simply abstractly imply the existence of an $\bb E_n$ action, but instead allows for the construction of an action of an explicit model for the $\bb E_n$ operad.

\begin{theorem}[{\cite[Corollary 2.30 and 2.39]{ElliottSafronov}}] \label{En_theorem}
Let $\obs$ be a factorization algebra on $\RR^n$, and suppose that the factorization map $\obs(B_r(0)) \to \obs(B_R(0))$ associated to the inclusion of concentric balls $B_r(0) \inj B_R(0)$ for $r < R$ is a quasi-isomorphism.

If $\obs$ is equipped with a smooth $\RR^n_{\mr{dR}}$ action, then the cochain complex $\obs(B_1(0))$ can be equipped with a canonical $\bb E_n$ algebra structure.

If $\obs$ is equipped with a smooth $\mr{ISO}(n)_{\mr{dR}}$ action, then the cochain complex $\obs(B_1(0))$ can be equipped with a canonical $\bb E^{\mr{fr}}_n$ algebra structure.
\end{theorem}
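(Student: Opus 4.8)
The plan is to produce, for each arity $k$, a cochain-level action of the chosen model of the little $n$-disks operad on the complex $A := \obs(B_1(0))$, and then to verify that these assemble into a map of operads. Write an element of $\mr{Disk}_n(k)$ as a configuration of $k$ disjoint round disks $D_1, \dots, D_k$ inside the unit disk $B_1(0)$. The prefactorization structure of $\obs$ already supplies a multiplication $\obs(D_1) \otimes \cdots \otimes \obs(D_k) \to \obs(B_1(0)) = A$ for each such configuration, so the first task is to identify each $\obs(D_i)$ canonically with $A$. For this I would use the smooth action together with the concentric-balls hypothesis: the isomorphism $\alpha_g$ attached to the translation $g$ carrying $D_i$ onto a ball $B_{r_i}(0)$ centred at the origin identifies $\obs(D_i) \cong \obs(B_{r_i}(0))$ strictly, and the hypothesis that $\obs(B_{r_i}(0)) \to \obs(B_1(0))$ is a quasi-isomorphism rigidifies the remaining rescaling. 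Composing gives, for each configuration, an operation $A^{\otimes k} \to A$.

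The heart of the argument is to promote this configuration-by-configuration family of operations to a genuine action of the cochain operad, and here the $\RR^n_{\mr{dR}}$-structure is essential. The smoothness condition on the action guarantees that the operations vary smoothly over $\mr{Disk}_n(k)$, while the compatibility axiom relating $\alpha_g$ and $\rho$ says that moving the $i$-th disk infinitesimally in a direction $X \in \RR^n$ acts on the corresponding input by the infinitesimal translation $\rho(X)$; that is, $\dd_{X,i} m = \rho(X)_i$. Since the homotopy trivialization provides derivations $\eta(X)$ of cohomological degree $-1$ with $\dd \eta(X) = \rho(X)$ and $[\eta(X), \eta(Y)] = 0$, I would define the operation by contracting each tangent direction of the configuration space against the corresponding $\eta$, producing a map $A^{\otimes k} \to \Omega^\bullet(\mr{Disk}_n(k)) \hat\otimes A$. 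The relation $\rho = \dd\eta$ then matches the de Rham differential along $\mr{Disk}_n(k)$ with the internal differential of $A$, so that this assignment is a cochain map; the vanishing of $[\eta(X), \eta(Y)]$ ensures there are no correction terms, so the construction is strict over this model rather than merely homotopy-coherent. Via the de Rham theorem this is precisely the data of a $C^\bullet(\mr{Disk}_n(k))$-action, and it then remains to verify the operadic axioms — symmetric-group equivariance, unitality, and compatibility with operadic composition — all of which follow from the corresponding naturality of the prefactorization product and of the maps $\alpha_g$.

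For the framed statement I would run the identical construction for the larger dg Lie algebra $\mf{iso}(n)_{\mr{dR}} = (\SO(n) \ltimes \RR^n)_{\mr{dR}}$. The framed little disks operad has $k$-ary space $\mr{Disk}^{\mr{fr}}_n(k) \simeq \mr{Disk}_n(k) \times \SO(n)^k$, recording a rotation at each disk, and the homotopy trivialization of the rotation action absorbs the extra $\SO(n)^k$ tangent directions exactly as $\eta$ absorbs the translational ones. I expect the main obstacle to be precisely the passage from ``an operation for each configuration'' to a single, simultaneously compatible operad map: one must check that the de Rham differentials on all of the configuration spaces are matched with the one internal differential of $A$ coherently across every operadic composition, and that the quasi-inverses used to rigidify rescaling can be chosen compatibly. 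This is where essentially all of the homotopical bookkeeping lives, and where the defining properties of the $_{\mr{dR}}$-algebras — that $\eta$ trivializes $\rho$ and that the $\eta(X)$ mutually commute — do the real work.
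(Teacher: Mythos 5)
This theorem is imported verbatim from \cite{ElliottSafronov} (Corollaries 2.30 and 2.39); the paper supplies no proof of its own beyond that citation, so there is no internal argument to compare against. That said, your sketch faithfully reconstructs the strategy of the cited source: use the prefactorization products indexed by configurations of disks, identify the inputs with $\obs(B_1(0))$ via the group action together with the concentric-ball quasi-isomorphism hypothesis, and then use the potentials $\eta(X)$ (with $\d\eta(X)=\rho(X)$ and $[\eta(X),\eta(Y)]=0$) to smear these pointwise operations into cochains on $\mr{Disk}_n(k)$, producing an action of an explicit chain model of the (framed) little $n$-disks operad. The one step you rightly flag as delicate --- choosing the quasi-inverses that rigidify the rescalings coherently across all operadic compositions --- is precisely where the cited proof does its real homotopical work, so your outline is accurate in approach but not self-contained at that point.
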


By applying Theorem \ref{En_theorem} to the twisted $\mc N=4$ theories we have been considering in this paper, particularly by Theorem \ref{KW_family_framing_anomaly_thm}, we can deduce the following.

\begin{corollary}
\label{framed E4}
The family of factorization algebras of quantum observables $\obsq_{\lambda,\lambda,\mu}$ over $\CC^2 \bs \{0\}$ canonically carries the structure of a family of $\bb E_4^{\mr{fr}}$-algebras.
\end{corollary}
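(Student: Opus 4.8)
The plan is to obtain the corollary as a direct application of Theorem~\ref{En_theorem} in the case $n = 4$, feeding in the equivariant structures assembled in Section~\ref{rotation_anomaly_section}. Recall that Theorem~\ref{En_theorem} produces a framed $\bb E_4$-algebra from a factorization algebra on $\RR^4$ subject to two conditions: first, that the theory is \emph{locally constant}, meaning that the factorization map $\obsq_{\lambda,\lambda,\mu}(B_r(0)) \to \obsq_{\lambda,\lambda,\mu}(B_R(0))$ for concentric balls $r < R$ is a quasi-isomorphism; and second, that the factorization algebra carries a smooth $\mr{ISO}(4)_{\mr{dR}}$-action. Granting both conditions, the theorem equips $\obsq_{\lambda,\lambda,\mu}(B_1(0))$ with a canonical $\bb E_4^{\mr{fr}}$-structure; it then remains only to check that this identification is made uniformly in $(\lambda,\mu)$, producing a genuine family over $\CC^2 \bs \{0\}$.

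First I would verify the second condition by assembling the group-level and infinitesimal data already in hand. The smooth action of the group $\mr{ISO}(4)$ is supplied by Proposition~\ref{quantization_of_isometry_prop} together with Remark~\ref{smooth_action_rmk}; as recorded in the proof of Theorem~\ref{KW_family_framing_anomaly_thm}, this argument continues to apply across the family $\mc E_{t,t,u}$ once one matches the parameters $(t,u)$ with $(\lambda,\mu)$, the point being that the propagator and the classical interaction remain invariant under the isometry group. The homotopy trivialization --- the promotion of the infinitesimal $\mf{iso}(4;\CC)$-action to a smooth action of the contractible dg Lie algebra $\mf{iso}(4;\CC)_{\mr{dR}}$ with vanishing anomaly --- is precisely the content of Theorem~\ref{KW_family_framing_anomaly_thm}. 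Combining the smooth group action with this infinitesimal de Rham trivialization yields the smooth $\mr{ISO}(4)_{\mr{dR}}$-action demanded by Theorem~\ref{En_theorem}.

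Next I would verify the first condition. Every theory $\mc E_{t,t,u}$ with $(t,u) \in \CC^2 \bs \{0\}$ is a topological twist, so its observables are locally constant: the homotopy trivialization of the translation subalgebra $\RR^4 \subset \mf{iso}(4;\CC)$ --- itself part of the $\mf{iso}(4;\CC)_{\mr{dR}}$-action just constructed --- supplies the contracting homotopies that witness the quasi-isomorphism $\obsq_{\lambda,\lambda,\mu}(B_r(0)) \simeq \obsq_{\lambda,\lambda,\mu}(B_R(0))$. For the family statement, I would then invoke the observation underlying Proposition~\ref{B_anomaly_prop} and Lemma~\ref{A_diagram_lemma} that the propagator, the effective interaction, and the equivariant data are all constructed $\CC[\lambda,\mu]$-linearly; since they are in fact constant across the family, the $\bb E_4^{\mr{fr}}$-structure produced by Theorem~\ref{En_theorem} varies as a sheaf over $\CC^2 \bs \{0\}$, which is the asserted family of framed $\bb E_4$-algebras.

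The main obstacle is not a fresh computation --- the relevant anomalies have already been shown to vanish --- but rather the bookkeeping needed to confirm that the separately-constructed smooth group action $\{\alpha_g\}$ and the infinitesimal de Rham trivialization genuinely assemble into a single smooth $\mr{ISO}(4)_{\mr{dR}}$-action satisfying every compatibility axiom in the definition of a smooth action, and that this holds uniformly over $\CC^2 \bs \{0\}$. In particular one must check the axiom relating the group translations $\alpha_g$ to the infinitesimal derivations $\eta_X$, and confirm local constancy at each point of the base. Each of these follows from the invariance properties of the propagator and interaction that were already central to the anomaly arguments, so I expect no genuinely new difficulty beyond careful organization.
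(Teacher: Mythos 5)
Your proposal is correct and follows essentially the same route as the paper: both reduce the corollary to Theorem~\ref{En_theorem}, supplying the smooth $\mr{ISO}(4)$-action via Proposition~\ref{quantization_of_isometry_prop} and Remark~\ref{smooth_action_rmk}, the homotopy trivialization via Theorem~\ref{KW_family_framing_anomaly_thm}, and the local-constancy hypothesis from the topological nature of the twists (the paper simply observes the theories are topological BF theories or have contractible local observables, while you invoke the translation part of the de Rham action --- the same underlying fact). The only detail the paper makes explicit that you fold into ``bookkeeping'' is that the trivializing potential, constructed on global observables over $\RR^4$, visibly preserves observables supported on each open $U$ because the Hamiltonian $\widetilde{J}_X$ is a local functional.
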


\begin{proof}
For the factorization algebras of observables of the full $\CC^2 \bs \{0\}$-family of theories, it is manifestly the case that the factorization map associated to the inclusion of concentric balls is a quasi-isomorphism (theories in this family are either topological BF theories, or theories for which the algebra of local observables on any ball is contractible).  So we only need to know that the factorization algebra carries a smooth $\mr{ISO}(4)_{\mr{dR}}$-action in order to apply Theorem \ref{En_theorem}.  

The factorization algebra of quantum observables carries a smooth action of the group $\mr{ISO}(4)$ by Proposition \ref{quantization_of_isometry_prop} as discussed in Remark \ref{smooth_action_rmk}, we need to realize a homotopy trivialization of the infinitesimal $\mf{iso}(4;\CC)$ action.  We have constructed a potential for the action on the global observables $\obsq(\RR^4)$ in Theorem \ref{KW_family_framing_anomaly_thm}.  This action preserves the local observables $\obsq(U)$ supported on an open set $U \sub \RR^4$: this is immediate from the expression for the classical Hamiltonian $J^{\mr{eq}}$ defining the homotopy trivialization.
\end{proof}

Given an $\bb E_4$-algebra, according to Scheimbauer's theorem \cite[Corollary 4.6.4]{ScheimbauerThesis} there is a canonically associated framed extended 4-dimensional TQFT, valued in the Morita category of $\bb E_4$-algebras.  Since in this case our $\bb E_4$-algebras have been equipped with the richer structure of an $\bb E_4^{\mr{fr}}$-algebra, these extended 4-dimensional TQFTs can be promoted to \emph{oriented} extended 4-dimensional TQFTs, i.e. functors from the $(\infty, 4)$-category of oriented bordisms, as constructed in \cite[Definition 2.6.10]{ScheimbauerThesis}.

\begin{remark}
By combining this construction with our results from Section \ref{modvac} we will obtain, say for the point $(1,1,0) \in \CC^3$, a sheaf of $\ZZ/2\ZZ$-graded framed $\bb E_4$-algebras over $[\gg^*/G]$.  Alternatively, we could recover the $\ZZ$-grading by applying an even grading shift to obtain a sheaf of $\ZZ$-graded framed $\bb E_4$-algebras over $[\gg^*[2]/G]$.  Because we can canonically identify the shifted coadjoint quotient with $T^*[3]BG$, this stack is 3-shifted symplectic, or $\bb P_4$.  It is natural to expect that the $\bb E_4$-algebra of quantum local observables that we have constructed provides a quantization of this stack in the sense of \cite{CPTVV}.~\hfill$\Diamond$\end{remark}

\subsection{Filtrations and the Free-to-Interacting Spectral Sequence}
\label{sec filtrations}

The observables of any BV theory sometimes admit natural filtrations,
and these typically offer both conceptual insight and calculational leverage.
Two obvious filtrations are
\begin{itemize}
\item the {\it classical-to-quantum} filtration on quantum observables $\obsq$ arising from powers of~$\hbar$, where
\[
\obsq \supset \hbar \obsq \supset \cdots \supset \hbar^k \obsq \supset \cdots,
\]
whose associated graded is $\obscl[[\hbar]]$, and
\item the {\it free-to-interacting} filtration on {\em classical} observables $\obscl$ arising from powers of the augmentation ideal, where 
\[
\obscl = \symc({\mc E}^*) \supset \symc^{> 0}({\mc E}^*) \supset \cdots \supset \symc^{> k}({\mc E}^*) \supset \cdots,
\]
whose associated graded is the classical observables of the underlying free theory.
\end{itemize}
The first filtration amounts to imposing the classical equations of motion and then tracking quantum corrections in powers of $\hbar$.
In other words, it is a version of the loop expansion for QFT.
The second filtration amounts to imposing the linearized equations of motion and then tracking the corrections due to terms in the action functional of higher polynomial degree.
In other words, it is a version of perturbatively solving PDE.
Here, however, we are interested in more subtle filtrations.

\subsubsection{The Free-to-Interacting Filtration} \label{F2I_section}

The following filtration captures the notion of canonically quantizing first and then imposing the equations of motion.

\begin{definition}
The {\em free-to-interacting} filtration on {\em quantum} observables $\obsq$ is the decreasing filtration where
\[
F^k_{\FtI} \obsq = \prod_{2m+n \geq k} \hbar^m \sym^n({\mc E}^*).
\]
The associated graded is the quantum observables of the underlying free theory.
\end{definition}

Note that the free-to-interacting filtration is exhaustive (that is, $\bigcup_{k \in \ZZ} F^k_{\FtI} \obsq = \obsq$) and separated (that is, $\bigcap_{k \in \ZZ} F^k_{\FtI} \obsq = \emptyset$, because $F^k_{\FtI} \obsq$ is empty if $k < 0$).

The free-to-interacting spectral sequence -- the spectral sequence arising from this filtration -- begins by computing the cohomology of the free quantum theory,
which is typically easy to calculate, 
so that the spectral sequence gives us traction on the observables of the interacting theory.
For instance, on a closed manifold,
this spectral sequence always collapses on the first page (see Theorem 5.6.1 of~\cite{OGthesis}).

Below, we will compute the factorization homology of the {\em abelian} gauge theories arising as twists of 4-dimensional ${\mc N}=4$ super Yang--Mills theory,
and then leverage that information to understand homology for the non-abelian gauge theories using the free-to-interacting spectral sequence.  The underlying free theory of a non-abelian super Yang--Mills theory can be understood simply by turning off the non-trivial Lie bracket on the gauge Lie algebra, producing an abelian gauge theory.

This filtration has an interesting consequence for the theories living in the family of B-type twists, including the holomorphic and Kapustin twists.
Each of these theories has an exact one-loop quantization, 
so the differential on the quantum observables only has terms of order $\hbar^0$ and order $\hbar^1$.
We express it heuristically as
\[
Q + \hbar \Delta + \{I, -\} + \hbar \{I^q, -\},
\]
where $I$ is the classical interaction term and $I^q$ is the quantum interaction term.
(We are suppressing here any dependence on a parametrix or length scale.)
The first page of the spectral sequence is the cohomology with respect to $Q + \hbar \Delta$,
but the differential on the next page depends on the cubic term of $I$ and the linear term of $I^q$.
Later pages depend on higher order terms of $I$ and $I^q$.
Hence, this spectral sequence breaks up the problem of analyzing the observables into a series of more tractable problems.

The free-to-interacting filtration also resolves the following puzzle about the quantum observables of these twisted theories.  {\it A priori}, it does not seem like the quantum observables for the B-twisted theory are very interesting, for the following reason.

\begin{lemma}
The Lie algebra of classical observables in the B-twisted theory $\mc E_{1,1,0}$ admits no non-trivial quantum deformations as a cotangent theory.
\end{lemma}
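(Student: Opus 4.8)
The plan is to reduce the claim to the cohomological computation sketched in Remark~\ref{uniqueness_remark}. A quantum deformation of the classical observables \emph{as a cotangent theory} is a first-order deformation of the interaction whose associated local functional involves only the base field $\alpha$ and not the fiber field $\beta$; equivalently, it has weight zero for the $\CC^\times$-action that scales the odd fiber variable $\eps$. First I would make this translation precise: deformations of the shifted Poisson (Lie) structure that keep the theory of cotangent type are Maurer--Cartan elements of the $\CC^\times$-invariant part of the reduced local deformation complex $\mr C^\bullet_{\text{red,loc}}(\mc E_{1,1,0})$, sitting in cohomological degree zero, while the obstruction lives in degree one. Because $\mc E_{1,1,0}$ is of cotangent type, Corollary~\ref{cotangent_cor} guarantees one-loop exactness, so there are no further $\hbar$-dependent deformations beyond those visible in this complex; the quantum deformation problem therefore coincides with the classical one restricted to cotangent-compatible functionals.

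The next step is to identify the complex explicitly. Following \cite[Lemma 3.5.4.1]{Book2} exactly as in Remark~\ref{uniqueness_remark}, one has the quasi-isomorphisms
\[
\mr C^\bullet_{\text{red,loc}}(\mc E_{1,1,0}) \iso \Omega^\bullet\big(\RR^4;\, \mr C^\bullet_{\mr{red}}(\gg[\eps])\big)[4] \iso \mr C^\bullet_{\mr{red}}\big(\gg,\, \sym^\bullet(\gg^*[-2])\big)[4],
\]
in which the Chevalley--Eilenberg factor records the dependence on $\alpha$ and the coefficient module $\sym^\bullet(\gg^*[-2])$ records the dependence on $\beta$. Since $\RR^4$ is contractible the de Rham factor collapses and contributes nothing. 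The $\CC^\times$-weight of the summand $\sym^k(\gg^*[-2])$ is exactly $k$, so the cotangent-compatible (weight-zero) part is the summand with $k=0$, namely $\mr C^\bullet_{\mr{red}}(\gg)[4]$ with trivial coefficients. In degree zero this contributes $\mr H^4(\gg)$. For semisimple $\gg$ the cohomology ring $\mr H^\bullet(\gg)$ is an exterior algebra on odd generators of degree at least three, so $\mr H^4(\gg)=0$; hence the weight-zero deformation group vanishes and there is no nontrivial cotangent-type quantum deformation. (The single nontrivial class in $\mr H^0_{\text{red,loc}}$, coming from the Killing form in $\sym^2(\gg^*)^\gg$, has weight two and is precisely the deformation of $\mc E_{1,1,0}$ into the A-type theory $\mc E_{1,1,u}$, which breaks the cotangent structure.)

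The main obstacle is the bookkeeping in the first step: one must verify that ``deforming the Lie algebra of classical observables as a cotangent theory'' is genuinely controlled by the $\CC^\times$-invariant part of $\mr C^\bullet_{\text{red,loc}}(\mc E_{1,1,0})$ in degree zero. Concretely, this means checking that deforming the shifted Poisson bracket while retaining cotangent type is equivalent to deforming the underlying base local $L_\infty$ algebra $\big(\Omega^\bullet(\RR^4,\gg),\, \d_{\mr{dR}}\big)$ (with the fiber $\beta$ dragged along canonically by the coadjoint action), and that the weight grading on the deformation complex matches the $\eps$-scaling on functionals. Once this identification is secured, the vanishing is immediate from the structure of $\mr H^\bullet(\gg)$, and no genuine calculation of Feynman weights is required.
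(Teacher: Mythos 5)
Your computation of the $\CC^\times$-invariant part of $\mr H^0_{\text{red,loc}}(\mc E_{1,1,0})$ is correct and faithfully reproduces Remark~\ref{uniqueness_remark}, but it proves a different statement from the lemma: it shows that the cotangent-compatible \emph{quantization of the theory} is unique, not that the resulting \emph{quantum observables} are a trivial deformation of the classical observables. These are logically independent claims. The complex $\mr C^\bullet_{\text{red,loc}}(\mc E_{1,1,0})$ controls choices of effective interaction $I[\Lambda]$ solving the quantum master equation; but even the unique such quantization modifies the differential on observables by $\hbar \Delta + \hbar\{I^{(1)},-\}$, and the content of the lemma is that the $\bb E_4$-algebra so obtained is nonetheless equivalent to the classical one. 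The $\bb E_1$ analogue makes the distinction vivid: for the compactification on $N \times \RR$ the quantization is equally canonical, yet the quantum observables form an algebra of differential operator type, a genuinely nontrivial deformation of $\mc{O}_{\mr{fp}}(T^*\mc{X})$ (Corollary~\ref{nontriv def}). Uniqueness of the quantization cannot, by itself, rule out such a deformation of the observables.

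The paper's proof therefore works with a different complex. It first uses the Poincar\'e lemma to identify $\obscl_B$ with $\clie^\bu_{\lie}(\gg[\eps])$ as a locally constant factorization algebra, hence as an $\bb E_4$-algebra, and then invokes the Calaque--Willwacher HKR theorem for $\bb E_n$-algebras: deformations of this commutative algebra \emph{as an $\bb E_4$-algebra} are controlled by polyvector fields on $B(\gg[\eps])$ shifted down by $4$, and the $\CC^\times$-equivariant part of the relevant cohomology vanishes. It is exactly this shift by $n=4$ in the polyvector-field complex --- invisible to $\mr C^\bullet_{\text{red,loc}}$ --- that distinguishes $\bb E_4$ from $\bb E_1$ and makes the lemma true. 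A symptom of the mismatch is that your argument needs $\mr H^4(\gg)=0$ and hence semisimplicity, and would report a nontrivial answer for an abelian $\gg$ of rank at least $4$ (where $\mr H^4(\gg) = \Lambda^4 \gg^* \neq 0$ records a classical quartic deformation, irrelevant to the lemma), whereas the HKR argument applies to any reductive $\gg$. To repair your approach you would need to relate the local deformation complex to the $\bb E_4$-tangent complex of the observables, which is essentially the Calaque--Willwacher input you were hoping to avoid.
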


This is proven in \cite[Proposition 3.19]{EY2} and the discussion that immediately follows it, but we will give a brief argument here.

\begin{proof}
The B-twisted classical field theory is described by the local Lie algebra $\Omega^\bullet(\RR^4) \otimes \gg[\eps]$,
where $|\eps| = -1$.
Note that by the Poincar\'e lemma, this local Lie algebra is quasi-isomorphic to $\gg[\eps]$, viewed as a locally constant sheaf of dg Lie algebras on $\RR^4$.

As the classical observables $\obscl_B$ of the B-twist are the Chevalley-Eilenberg cochains of the local Lie algebra, 
we thus see that the classical observables are quasi-isomorphic to $\clie_{\lie}^\bu(\gg[\eps])$ as a locally constant prefactorization algebra on $\RR^4$.
Hence, the classical observables encode $\clie_{\lie}^\bu(\gg[\eps])$ as an $\bb E_4$-algebra.

This algebra is, of course, a {\em commutative} algebra but now viewed merely as an $\bb E_4$-algebra.
When we quantize the B-twist, we produce a deformation of this $\bb E_4$-algebra,
but by the HKR theorem for $\bb E_n$ algebras of Calaque and Willwacher \cite{CalaqueWillwacher},
one finds that there are no nontrivial $\bb E_4$-deformations that are equivariant for the action of $\CC^\times$, acting on $\eps$ with weight one. 
\end{proof}

In other words, the quantization seems to have done nothing interesting, in the sense that the quantum observables are equivalent to the classical observables.  This result might be concerning, but our filtration offers some relief.

\begin{lemma}
As a \emph{filtered} $\bb E_4$-algebra, the quantum observables of a B-type twist are a nontrivial deformation of the classical observables.
\end{lemma}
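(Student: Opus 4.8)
The plan is to detect the nontriviality through the free-to-interacting spectral sequence, which is an invariant of a filtered $\bb E_4$-algebra up to filtered equivalence: a filtered quasi-isomorphism induces a quasi-isomorphism on associated graded, hence an isomorphism of the resulting spectral sequences. It therefore suffices to show that the associated graded of $\obsq$ differs from that of the trivial deformation $\obscl[[\hbar]]$. This is precisely where the unfiltered equivalence of the previous lemma fails to help: that equivalence, supplied by the Calaque--Willwacher HKR theorem for $\bb E_n$-algebras, is permitted to shift the free-to-interacting filtration, and so need not descend to associated graded.

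First I would record the two associated-graded differentials. Writing the full quantum differential as $Q + \hbar\Delta + \{I,-\} + \hbar\{I^q,-\}$ and assigning weight $2m+n$ to $\hbar^m\sym^n(\mc E^*)$, one checks that $Q$ and $\hbar\Delta$ preserve the weight --- for $\hbar\Delta$ because it raises the power of $\hbar$ by one while lowering the symmetric degree by two, a net change of $2-2=0$ --- whereas $\{I,-\}$ and $\hbar\{I^q,-\}$ strictly raise it. Hence the associated-graded differential of $\obsq$ is $Q + \hbar\Delta$, the differential of the free quantum observables, exactly as recorded in the definition of the filtration. By contrast the trivial deformation $\obscl[[\hbar]]$ carries the $\hbar$-independent differential $Q + \{I,-\}$, whose associated-graded differential is merely $Q$, with $\hbar$ a spectator of weight two.

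Next I would show that these associated graded are inequivalent over $\CC[[\hbar]]$, i.e.\ that $\hbar\Delta$ is a nontrivial $\hbar$-deformation of $Q$. Passing to $Q$-cohomology on a ball (legitimate since $[Q,\Delta]=0$) identifies the free observables with $\sym(V^*)$, where $V = \gg[1]\oplus\eps\gg[2]$ is the finite-dimensional $(-1)$-shifted symplectic space of harmonic representatives, and there $\Delta$ descends to the BV Laplacian $\Delta_\ast$ of the pairing. Since the pairing is nondegenerate, $\Delta_\ast\neq 0$, and the cohomology $H(\sym(V^*)[[\hbar]],\hbar\Delta_\ast)$ acquires $\hbar$-torsion: for any $y = \Delta_\ast x \neq 0$ the class $[y]$ is nonzero --- a coboundary for $\hbar\Delta_\ast$ has no term in $\hbar$-degree zero --- yet $\hbar[y] = [\hbar\Delta_\ast x] = 0$. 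The $E_1$-page of the trivial deformation is the free module $\sym(V^*)[[\hbar]]$, which is $\hbar$-torsion-free. The two $E_1$-pages are thus not isomorphic even as $\CC[[\hbar]]$-modules, which rules out any filtered cochain-level equivalence and a fortiori any filtered $\bb E_4$-equivalence.

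Finally I would remark that the later pages of the spectral sequence, governed by $\{I,-\}$ and the one-loop term $\hbar\{I^q,-\}$ (the latter encoding the determinant of Section~\ref{sec:holgauge}), refine this deformation and record the nonabelian and quantum-interaction data; but the nontriviality of the filtered deformation is already witnessed on $E_1$ by the free theory, which the filtration exposes and which the interaction had concealed at the unfiltered level. The main obstacle --- really the conceptual crux, since modulo $\hbar$ the two sides agree and unfiltered they are equivalent --- is to isolate an invariant sensitive to the deformation at the filtered level and to confirm that the quantum BV operator survives there nontrivially: concretely, that $\Delta_\ast$ is nonzero and not $Q$-exact on $\sym(V^*)$, which is exactly where the nondegeneracy of the $(-1)$-shifted symplectic pairing is used.
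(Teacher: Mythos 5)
Your argument is correct, and it reaches the conclusion by a genuinely different route from the paper's. Both proofs begin the same way, by passing to the associated graded of the free-to-interacting filtration, which is the quantum observables of the underlying free (abelian) theory. The paper then compactifies: it takes $M \times \RR$ for $M$ a closed 3-manifold, pushes the observables forward to $\RR$, and identifies the resulting locally constant factorization algebra as a dg Weyl algebra, which is visibly a noncommutative --- hence nontrivial --- deformation of the commutative algebra of functions on the shifted cotangent space; the precise statements are deferred to Lemma~\ref{abelian B def}, Corollary~\ref{nontriv def}, and Chapter 4 of \cite{Book1}. You instead work directly on a single ball and detect the deformation on the $E_1$-page of the spectral sequence itself: after passing to $Q$-cohomology the associated-graded differential becomes $\hbar\Delta_*$ on $\sym(V^*)[[\hbar]]$ for the finite-dimensional zero-mode symplectic space $V = \gg[1]\oplus\eps\gg[2]$, and nondegeneracy of the pairing forces $\hbar$-torsion that cannot occur in the $E_1$-page of the trivial deformation $\obscl[[\hbar]]$. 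Your approach is more elementary and self-contained --- it needs only the Poincar\'e lemma and the nondegeneracy of the pairing, requires no auxiliary 3-manifold, and establishes the stronger statement that the underlying \emph{filtered cochain complexes} are already inequivalent, before any multiplicative structure is invoked. The paper's route buys a structural payoff instead: it exhibits the deformation as the passage from functions on a shifted cotangent space to an algebra of differential operator type, the theme developed in Section~\ref{DOTs} and in the factorization homology computations. One small point worth making explicit: the $\hbar$-torsion criterion presupposes that a trivializing equivalence would be $\CC[[\hbar]]$-linear. If you prefer not to assume that, your own computation already yields a dimension count at fixed filtration weight (e.g.\ at weight $2$ the cohomology is $\ker\bigl(\Delta_*|_{\sym^2(V^*)}\bigr)$, of strictly smaller dimension than $\sym^2(V^*)\oplus\hbar\CC$), which rules out any filtered quasi-isomorphism whatsoever.
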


Before giving the short proof, we recall a motivating but simpler example.
Consider the Chevalley-Eilenberg cochains $\clie_{\lie}^\bu(\gg)$ of a semisimple Lie algebra $\gg$.
As a dg commutative algebra, it is formal, i.e., quasi-isomorphic to its cohomology,
which is a free graded-commutative algebra.
Little about the original Lie algebra can be read off from the cohomology.
On the other hand, there is a canonical filtration on $\clie_{\lie}^\bu(\gg)$ by powers of the augmentation ideal,
whose associated graded is the free graded-commutative algebra $\sym(\gg^*[-1])$.
Thus from $\clie_{\lie}^\bu(\gg)$ as a {\em filtered} algebra, 
one can fully recover the Lie algebra $\gg$
(in brief, via the tangent complex of the associated graded algebra and via the differential).

\begin{proof}
Consider the quantum observables of the B-type twist equipped with the free-to-interacting filtration.
The associated graded is the complex of quantum observables of an {\em abelian} B-type twist,
which is a deformation of the complex of classical observables of that abelian B-type twist.
Hence, to prove the lemma, it suffices to show that this associated graded, the quantum observables for an abelian theory, nontrivially the classical observables of the abelian theory.

To see this, we study a simpler, closely related deformation.
Let $M$ be a closed 3-manifold and consider the manifold $M \times \RR$, equipped with the projection $\pi \colon M \times \RR \to \RR$. 
The quantum observables of the B-type twist on $M \times \RR$ push forward to a factorization algebra $\pi_* \obsq$ on $\RR$.
As this factorization algebra is locally constant, it corresponds to an $\EE_1$-algebra.
This correspondence applies to the associated graded as well,
so that $\pi_* \gr \obsq$ corresponds to some $\EE_1$-algebra that is a deformation of the algebra for the $\pi_* \gr \obscl$.
In other words, to prove the lemma, it suffices to show that we obtain a nontrivial deformation of $\EE_1$-algebras when we quantize the {\em abelian} B-type twist on the manifold~$M \times \RR$.

We give a precise version of this claim in Lemma~\ref{abelian B def} and Corollary~\ref{nontriv def}.
(Strictly speaking, the results stated there use the fiberwise polynomial observables discussed below,
but the argument applies to the observables here.)
But such a claim is a quick consequence of the techniques and results of Chapter 4,~\cite{Book1}.
It is shown there that for a free BV theory on $\RR$, 
the classical observables correspond to algebraic functions on a dg vector space of the form $T^* V$
and that the quantum observables correspond to the corresponding dg Weyl algebra.
A Weyl algebra is a nontrivial deformation, as needed for our lemma here.
\end{proof}

\subsubsection{The Fiberwise Polynomial Observables and their Filtration}

For a cotangent theory, there is a variant definition of the observables, both classical and quantum.
The essential idea is to ask for observables that are formal power series along the ``base'' (i.e., as functions on the $L$-direction) but that are polynomial along the ``fiber'' (i.e., as functions on the $L^!$-direction).
It might help the reader to bear in mind that this kind of construction appears in the definition of differential operators on a manifold~$X$:
the symbol of a differential operator is a smooth function on $T^* X$ that is a polynomial when restricted to a cotangent fiber~$T^*_x X$.
(We elaborate on this analogy in Section~\ref{DOTs} below.)

\begin{definition}
For the cotangent theory associated to a local $L_\infty$ algebra $L$, 
the {\em fiberwise polynomial classical observables} assign to each open set $U$,
the cochain complex
\[
\obscl_{\mr{fp}}(U) = \left( \bigoplus_{n = 0}^\infty \prod_{m = 0}^\infty  \sym^m((L(U)[1])^*) \widehat{\otimes} \sym^n((L^!(U)[-2])^*), \{I, -\}\right).
\]
The differential preserves the observable that are homogeneous of degree $n$ with respect to the $L^!$-inputs.
\end{definition}

The reason we can implement this construction is that the local $L_\infty$ algebra of a cotangent theory is the extension of a local $L_\infty$ algebra $L$ by its module $L^![-3]$,
viewed as an abelian $L_\infty$ algebra.
For such an extension $\gg \ltimes V$ of a Lie algebra $\gg$ by a module $V$, 
the algebra of Chevalley--Eilenberg cochains has the form
\[
\clie_{\lie}^\bu(\gg \ltimes V) \cong \clie_{\lie}^\bu(\gg, \symc(V^*[-1])) \cong \prod_{n=0}^\infty \clie_{\lie}^\bu(\gg, \sym^n(V^*[-1])).
\]
There is thus a natural grading by symmetric powers of $V^*[-1]$,
which we will call the {\it fiberwise weight}.
By taking the sum rather than the product over the fiberwise weights,
we obtain a natural subalgebra 
\[
\clie_{\lie}^\bu(\gg, \sym(V^*[-1])) \cong \bigoplus_{n=0}^\infty \clie_{\lie}^\bu(\gg, \sym^n(V^*[-1]))
\]
that is polynomial in the $V$-direction.
In the definition above of fiberwise polynomial observables, 
we likewise replaced the product over those symmetric powers by the direct sum.

When we quantize a cotangent theory,
only one-loop diagrams appear, as we have seen already,
and these only depend on the $L$-inputs (the base of the cotangent space).
Hence, the $\hbar$-weighted term $I^\mr{q}$ of the quantized action is only a function of the $L$-inputs,
and the operator $\{I^\mr{q}, -\}$ thus has the property that it lowers the fiberwise weight by one.
Thus, if this quantized action satisfies the quantum master equation,
the differential on the quantum observables preserves the subalgebra of fiberwise polynomial observables.  This argument shows that the following sub-factorization algebra is well-defined.

\begin{definition}
Given an exact one-loop quantization of the cotangent theory,
the algebra of {\em fiberwise polynomial quantum observables} $\obsq_{\mr{fp},\hbar}$ is the sub-factorization algebra of $\obsq$ given by taking the direct sum over symmetric powers in~$L^!$.
\end{definition}

As the differential is a polynomial in $\hbar$, we can even set $\hbar$ equal to 1 (or any nonzero value),
a very special feature of exact one-loop quantizations.
(It implies that, in a sense, the theory is less perturbative and perhaps closer to a complete answer.)
In particular, the quantum corrections to the classical action functional are treated on an equal footing.

\begin{definition}
Given an exact one-loop quantization of the cotangent theory,
let $\obsq_{\mr{fp}}$ denote the quotient $\obsq_{\mr{fp}, \hbar}/(\hbar -1)$.
\end{definition}

The reader might recognize this quotient as analogous to a situation in quantum mechanics:
sometimes one imposes the commutation relation $[p,x] = i \hbar$ (viewing $\hbar$ as a parameter) or one imposes the commutation relation $[p,x] = i$.
The second appears when viewing the operators as differential operators on $\RR$, with $p = i \frac{\d}{\d x}$;
the first appears as a deformation quantization of functions on~$T^*\RR$.

There is a useful filtration one can put on these observables.
Consider the underlying graded vector space of the observables
\[
\bigoplus_{n = 0}^\infty \prod_{m = 0}^\infty  \sym^m((L(U)[1])^*) \widehat{\otimes} \sym^n((L^!(U)[-2])^*).
\]
It has a natural bigrading with the $(m,n)$-component consisting of $\sym^m((L(U)[1])^*) \widehat{\otimes} \sym^n((L^!(U)[-2])^*)$.
Given a homogeneous $(m,n)$-graded component,
the classical differential $\{I,-\}$ preserves the second term $n$ (polynomial degree in $L^!$-inputs) but can increase the first term $m$ (polynomial degree with respect to $L$-inputs).
The BV Laplacian sends $(m,n)$ to $(m-1,n-1)$, lowering both terms equally.
The quantum correction $\{I^\mr{q}, - \}$ lowers the second term $n$ by $1$ but can increase the first term $m$ arbitrarily, so if $f$ has bidegree $(m,n)$, the $\{I^\mr{q},f\}$ is a sum of terms with bidegree $(m+k, n-1)$ for any $k \geq 0$.

An arbitrary element in these observables has the form
\[
f = \sum_{\text{finite } n} \sum_{m = 0}^\infty f_{m,n},
\]
where there are only finitely many values of $n$
such that  the formal power series
\[
f_{(n)} = \sum_{m = 0}^\infty f_{m,n}
\]
is nonzero.
Running over the nonzero terms $f_{m,n}$ in $f$, 
we see that there is a maximum value of the difference $n-m$
because $m$ is bounded below by 0 and there are only finitely many values of $n$.
Set
\[
[f]_{\overline{\Delta}} = \max_{m,n} \{ n-m \, : \, f_{m,n} \neq 0\}.
\]
Note that the quantum differential $d^\mr{q}$ satisfies
\[
[d^\mr{q} f]_{\overline{\Delta}} \leq [f]_{\overline{\Delta}}
\]
so long as $I^\mr{q}$ has no linear term.

\begin{remark}
Let us comment on why this last assumption holds for the twisted $\mc N=4$ gauge theories.  The linear term in $I^{\mr{q}}$ is given by the weight of a ``tadpole'' Feynman diagram: a wheel with one external edge.  As we discussed in Example \ref{tadpole_example}, since $\gg$ is reductive, the weight of such a diagram will necessarily be zero.
~\hfill$\Diamond$\end{remark}

With this condition being satisfied, we can define the following filtration.

\begin{definition}\label{antidiag filt}
The {\em anti-diagonal filtration} on $\obsq_{\mr{fp}}$ is
\[
F^k_{\overline{\Delta}} \obsq_{\mr{fp}} =  \{ f \, \colon \, [f]_{\overline{\Delta}} = k\},
\]
and the differential preserves this filtration when the linear quantum interaction term $I^{\mr{q}}_1$ vanishes.
\end{definition}

This filtration is again exhaustive and separated; 
exhaustiveness very much depends on our restriction to fiberwise polynomial observables.

The spectral sequence arising from this filtration begins by computing the cohomology of the fiberwise polynomial quantum observables of a {\em free} quantum theory.
Here the differential is $\{I_2+ I^\mr{q}_2, - \} + \Delta$, where the subscript 2 indicates that we take the quadratic term of the quantized action functional.
Hence it is a close cousin of the free-to-interacting spectral sequence we have discussed already,
and many observations we made there apply in this situation.
In particular, the following result holds by an identical argument.

\begin{lemma}
As a filtered $\bb E_4$-algebra, the fiberwise polynomial quantum observables $\obsq_{\mr{fp}}$ of a B-type twist are a nontrivial deformation of the fiberwise polynomial classical observables.
\end{lemma}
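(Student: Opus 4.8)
The plan is to run the argument for the analogous free-to-interacting lemma verbatim, substituting the anti-diagonal filtration of Definition~\ref{antidiag filt} for the free-to-interacting filtration throughout. First I would equip $\obsq_{\mr{fp}}$ with the anti-diagonal filtration; since the linear quantum interaction $I^{\mr q}_1$ vanishes (its weight is the trace of a tadpole diagram in the adjoint representation of the reductive $\gg$, cf.\ Example~\ref{tadpole_example}), the quantum differential $d^{\mr q}$ satisfies $[d^{\mr q} f]_{\overline{\Delta}} \leq [f]_{\overline{\Delta}}$ and hence preserves this filtration, so that $\obsq_{\mr{fp}}$ is genuinely a filtered $\bb E_4$-algebra. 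The associated graded with respect to $[-]_{\overline{\Delta}}$ is the complex of fiberwise polynomial quantum observables of the underlying \emph{free} (abelian) B-type twist, with differential $\{I_2 + I^{\mr q}_2, -\} + \Delta$. It therefore suffices to show that this associated graded is a nontrivial deformation of the fiberwise polynomial classical observables $\obscl_{\mr{fp}}$ of the abelian theory.

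Next, to detect nontriviality I would reduce to a one-dimensional, locally constant situation, exactly as in the unrestricted case. Choose a closed oriented $3$-manifold $M$ and push the fiberwise polynomial observables of the abelian B-twist on $M \times \RR$ forward along $\pi \colon M \times \RR \to \RR$. Because the theory is topological, the pushforward $\pi_* \obsq_{\mr{fp}}$ is locally constant, hence corresponds to an $\EE_1$-algebra, and the same holds for its associated graded $\pi_* \gr \obsq_{\mr{fp}}$ and for $\pi_* \obscl_{\mr{fp}}$. Thus it is enough to exhibit a nontrivial deformation of $\EE_1$-algebras upon quantizing the abelian B-twist on $M \times \RR$ at the level of fiberwise polynomial observables. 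This is precisely the content of Lemma~\ref{abelian B def} and Corollary~\ref{nontriv def}, where the first page of the associated spectral sequence is identified with differential operators on the formal moduli space of $\mr H^\bullet(M) \otimes |\gg|$: the passage from graded-commutative functions to differential operators is a nontrivial deformation, for the same reason that the Weyl algebra nontrivially deforms functions on $T^*V$ for a free BV theory on $\RR$ (Chapter~4 of \cite{Book1}).

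The main obstacle --- and the reason the fiberwise polynomial restriction is essential --- is ensuring that this deformation does not secretly collapse, as the HKR computation of Calaque--Willwacher \cite{CalaqueWillwacher} shows it does for the full (product) quantum observables viewed as a bare $\bb E_4$-algebra. The key point to check carefully is that the anti-diagonal filtration is exhaustive and separated on $\obsq_{\mr{fp}}$, which depends crucially on restricting to observables that are polynomial along the fiber direction $L^!$: with this restriction the difference $n-m$ is bounded above on each element, so $[-]_{\overline{\Delta}}$ is well defined and the spectral sequence converges. Granting convergence, the first nonzero differential --- which records the noncommutativity of the Weyl/differential-operator structure produced by $\Delta$ and $\{I^{\mr q}_2, -\}$ --- obstructs any filtered equivalence to $\obscl_{\mr{fp}}$, and the lemma follows by the identical argument to the free-to-interacting case.
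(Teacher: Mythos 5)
Your proposal is correct and follows essentially the same route as the paper: the paper proves this lemma simply by declaring that it ``holds by an identical argument'' to the earlier free-to-interacting lemma, whose proof is exactly your reduction --- pass to the associated graded (the free abelian theory), push forward along $M \times \RR \to \RR$ to an $\EE_1$-algebra, and invoke the Weyl-algebra nontriviality from Lemma~\ref{abelian B def} and Corollary~\ref{nontriv def}. Your added remarks on the vanishing of $I^{\mr q}_1$ via the tadpole and on exhaustiveness of the anti-diagonal filtration depending on the fiberwise polynomial restriction match the paper's surrounding discussion.
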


We now explain a nice feature of this filtration when the manifold is closed.

\begin{lemma}\label{obs as det}
Let $M$ be a closed manifold, and let $L$ be a local $L_\infty$ algebra on $M$ 
such that its cotangent theory has an exact one-loop quantization.
Then the cohomology of the fiberwise polynomial quantum observables $\obsq_{\mr{fp}}(M)$ is isomorphic to 
\[
\det \left(\mr H^\bu(L(M), \ell_1)\otimes \gg\right)[d_M],
\]
where $d_M$ modulo 2 agrees with the Euler characteristic.
\end{lemma}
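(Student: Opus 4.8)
The plan is to compute the cohomology by degenerating the anti-diagonal filtration of Definition~\ref{antidiag filt} onto its first page, where everything reduces to a free (abelian) BV theory, and then to identify the resulting cohomology with a one-dimensional determinant line by a finite-dimensional Koszul computation. First I would analyze the associated graded of the anti-diagonal filtration. As recorded in the discussion preceding Definition~\ref{antidiag filt}, the full quantum differential $d^{\mr{q}} = \ell_1 + \{I,-\} + \overline{\Delta} + \{I^{\mr{q}},-\}$ decomposes according to its effect on the quantity $n-m$: the linear differential $\ell_1$ and the BV Laplacian $\overline{\Delta}$ preserve $n-m$ (the latter sends $(m,n)$ to $(m-1,n-1)$), while the cubic term $\{I,-\}$ and the quantum correction $\{I^{\mr{q}},-\}$ strictly lower it, using that $I^{\mr{q}}$ has no linear term since $\gg$ is reductive (Example~\ref{tadpole_example}). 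Consequently the associated graded differential is exactly $\ell_1 + \overline{\Delta}$, which is the differential on the fiberwise polynomial quantum observables of the underlying \emph{free} theory, namely the cotangent theory to the abelian local $L_\infty$ algebra $(L\otimes\gg,\ell_1)$ with all higher brackets turned off. The first page $E_1$ of the associated spectral sequence is therefore the cohomology of this free theory's global observables.

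Second I would compute $E_1$. Since $M$ is closed and $L$ is elliptic, the complex $(L(M)\otimes\gg,\ell_1)$ has finite-dimensional cohomology $V := \mr H^\bu(L(M),\ell_1)\otimes\gg$, and the duality $L^! = L^*\otimes\dens_M$ together with Poincar\'e/Serre duality on $M$ (and the self-duality of reductive $\gg$) identifies the cohomology of the fields of the free cotangent theory with $V[1]\oplus V^\vee[-2] \cong T^*[-1](V[1])$. Because the free BV quantization is functorial for quasi-isomorphisms of the free theory \cite[Chapter 4]{Book1}, the cohomology of $\obsq_{\mr{fp}}(M)$ for the free theory agrees with that of this finite-dimensional model, whose differential is simply the BV Laplacian $\overline{\Delta}$ on $\widehat{\sym}((V[1])^*)\otimes\sym(V[1])$. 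This is a completed Koszul-type complex that factors as a tensor product over a homogeneous basis of $V[1]$, and each one-dimensional factor contributes a single surviving class; the total cohomology is therefore the one-dimensional determinant line $\det(V)[d_M]$, concentrated in a single cohomological degree $d_M$, with $d_M \equiv \chi(V)\pmod 2$ the Euler characteristic of $V$. This is the determinant-of-the-complex underlying the free-theory observables of \cite[Chapter 4]{Book1}.

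Finally I would conclude by collapse. The first page is one-dimensional and concentrated in the single degree $d_M$, so there is no room for any higher differential: the spectral sequence degenerates at $E_1$ for purely degree reasons, consistent with the general collapse on closed manifolds recorded in \cite[Theorem 5.6.1]{OGthesis}. Since the anti-diagonal filtration is exhaustive and separated and the relevant cohomology is finite-dimensional, convergence is immediate, and we obtain
\[
\mr H^\bu(\obsq_{\mr{fp}}(M)) \cong \det(V)[d_M] = \det\!\left(\mr H^\bu(L(M),\ell_1)\otimes\gg\right)[d_M].
\]
The hard part will be the second step: carrying out the homotopy transfer of the completed, infinite-dimensional free observables down to the finite-dimensional model and pinning down the determinant line in the correct cohomological degree, in particular verifying that the shift $d_M$ reduces modulo $2$ to the Euler characteristic. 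Once the free-theory answer is identified as a single line, the collapse and hence the passage to the interacting theory are forced.
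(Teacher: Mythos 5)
Your proof is correct and follows essentially the same route as the paper: run the spectral sequence of the anti-diagonal filtration, identify its first page with the quantum observables of the underlying free cotangent theory, use ellipticity on the closed manifold $M$ to reduce to a finite-dimensional shifted-cotangent BV complex whose cohomology is the determinant line, and conclude by collapse of the one-dimensional first page. The only cosmetic difference is that the paper passes to the finite-dimensional model via a second filtration by fiberwise weight and then cites Lemma 13.7.1.1 of \cite{Book2} for one-dimensionality, whereas you invoke quasi-isomorphism invariance of free BV quantization and carry out the Koszul/tensor-factorization computation directly.
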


Here $(L(M), \ell_1)$ denotes the underlying elliptic complex of the local $L_\infty$ algebra,
and this complex has finite-dimensional cohomology because $M$ is closed.

\begin{remark}
It is possible to write down the shift $d_M$ explicitly.  It takes the form
\[d_M = \sum_{k \in \ZZ} c_k \dim(\mr H^k(L(M), \ell_1)),\]
where 
\[c_k = \begin{cases}
         k & \text{ if } k \text{ is odd} \\
         1-k & \text{ if } k \text{ is even.} 
        \end{cases}
\]
~\hfill$\Diamond$\end{remark}

\begin{proof}
Consider the spectral sequence for the anti-diagonal filtration.
The first page is the cohomology of the quantum observables for a {\em free} cotangent theory.
By Lemma 13.7.1.1 of \cite{Book2}, this cohomology is one-dimensional,
and so the spectral sequence collapses.
We outline this argument.

For the fiberwise quantum observables of a free theory, 
there is another filtration by fiberwise polynomial degree.
Indeed for any dg vector space $V$, 
we can consider fiberwise polynomial functions on its shifted cotangent space $T^*[-1] V = V \oplus V^*[-1]$,
which is the dg commutative algebra
\[
\bigoplus_{n \geq 0} \symc(V^*) \otimes \sym^n(V[1])).
\]
We call $n$ the fiberwise weight,
and we note that the differential preserves fiberwise weight.
If we have a BV Laplacian arising from the -1-shifted symplectic pairing on the shifted cotangent space,
then it lowers this fiberwise weight by one.
Thus, on the quantum fiberwise observables,
there is a filtration by fiberwise weight and hence a spectral sequence.
The first page is simply the cohomology of the classical fiberwise observables.
Under our hypotheses, since $V$ is an elliptic complex $(L(M), \ell_1)$ on a closed manifold,
we know that $\mr H^\bu(V)$ is finite-dimensional,
so that the first page of the spectral sequence is isomorphic to
\[
\bigoplus_{n \geq 0} \symc(\mr H^\bu(V^*)) \otimes \sym^n(\mr H^\bu(V)[1])).
\]
The residual differential is the BV Laplacian for the shifted cotangent space $T^*[-1]\mr H^\bu(V)$.
We can then invoke Lemma 13.7.1.1 of \cite{Book2} to see that the second page is one-dimensional.
We have thus computed that the cohomology of the fiberwise polynomial quantum observables of a free cotangent theory on a closed manifold is one-dimensional.
\end{proof}

\subsection{Algebras of Differential Operator Type}
\label{DOTs}

For cotangent theories, the factorization algebras of quantum observables have the flavor of twisted differential operators.
We wish to capture what we mean with some precision.
(Perhaps more accurately, for an $n$-dimensional topological field theory of cotangent type, the observables are twisted $\bb E_n$ algebras of differential operators. 
We return to this issue after discussing the $\bb E_1$ case, 
which is more familiar.)

Recall the notion of twisted differential operators for a smooth manifold $X$ (or complex manifold or smooth variety, depending on the geometric context in which one wishes to work).
Let $\mc{O}_{\mr{fp}}(T^*X)$ denote the ring of fiberwise polynomial functions on the cotangent bundle $T^* X$ as a sheaf on $X$.
In other words, it is the sheaf $\sym_{\mc{O}_X}(\mc{T}_X)$,
where $\mc{T}_X$ denotes the tangent sheaf of $X$.
Hence $\mc{O}_{\mr{fp}}(T^*X)$ is naturally graded by symmetric powers in vector fields.

\begin{definition}
An {\em algebra of twisted differential operators} is a sheaf $A$ on $X$ of positively filtered associative algebras such that $\gr A = \mc{O}_{\mr{fp}}(T^*X)$.
\end{definition}

For each line bundle $L \to X$, the differential operators on $L$ are an example of an algebra of twisted differential operators.
In general, the cohomology group $\mr H^1(X, \Omega^1_{\mr{cl}})$ parametrizes twisted differential operators up to isomorphism of such filtered algebras, where $\Omega^1_{\mr{cl}}$ denotes the sheaf of closed 1-forms.
(More precisely, we are referring here to {\em locally trivial} TDOs \cite{BBJantzen}.)

We wish to study the analogous construction but where $X$ is replaced by a formal moduli space.
In this situation, we can offer a refinement of the usual notion because for a formal moduli space,
its algebra of functions itself already has a natural filtration.

Let $\mc{X}$ be a formal moduli space with basepoint $x$, 
modeled by an $L_\infty$ algebra $L$.  
We can take $\clie^\bu_{\rm Lie}(L)$ as a model for $\mc{O}(\mc{X})$, 
the dg commutative algebra of functions on $\mc{X}$.
Let $|\mc{X}|$ denote the linear formal moduli space associated to the tangent complex ${\bb T}_x \mc{X}$;
it can be modeled by the abelian $L_\infty$ algebra $|L|$, namely the underlying cochain complex of~$L$.
Note that $\mc{O}(\mc{X})$ has a natural filtration,
in essence determined by the maximal ideal of functions vanishing at the basepoint.
In our concrete model $\clie^\bu_{\rm Lie}(L)$, 
the filtration is by powers of the augmentation ideal.
By construction, 
\[
\gr_{\overline{\Delta}} \mc{O}_{\mr{fp}}(\mc{X}) \simeq \mc{O}_{\mr{fp}}(|\mc{X}|)
\]
so that this filtration identifies $\mc{O}_{\mr{fp}}(\mc{X})$ as a commutative deformation of functions on a linear formal moduli space. 

We want to study algebras that closely resemble the algebra of differential operators on $\mc{X}$.
As in the classic situation, we are interested in deformation quantizations of fiberwise polynomial functions on $T^* \mc{X}$,
by which we mean the dg algebra
\[
\mc{O}_{\mr{fp}}(T^*\mc{X}) = \sym_{\mc{O}(\mc{X})}({\bb T}_x \mc{X}).
\]
It can be modeled by~$\clie_{\mr{Lie}}^\bu(L, \sym(L[1]))$.
There is an obvious grading by symmetric powers of $L[1]$,
which is the analog of the grading on fiberwise polynomial functions for a smooth manifold~$X$.

On the other hand, we also have an anti-diagonal filtration along the lines of Definition~\ref{antidiag filt}.
It is a natural extension of the filtration on $\mc{O}(\mc{X})$,
the functions on the base~$\mc{X}$.
By construction, 
\[
\gr_{\overline{\Delta}} \mc{O}_{\mr{fp}}(T^*\mc{X}) \simeq \mc{O}_{\mr{fp}}(T^*|\mc{X}|)
\]
so that the anti-diagonal filtration identifies $\mc{O}_{\mr{fp}}(T^*\mc{X})$ as a commutative deformation of functions on a linear {\em symplectic} formal moduli space. 

We will now discuss deformation quantizations that satisfy conditions with respect to both filtrations.

\begin{definition}\label{def DOT}
For a formal moduli space $\mc{X}$, 
a dg associative algebra $A$ is {\em of differential operator type} if it admits two filtrations $F^\bu_{\mr{fib}}$ (positively graded) and $F^\bu_{\overline{\Delta}}$ (unbounded) such that
\begin{itemize}
\item the associated graded for the fiberwise filtration satisfies
\[
\gr_{\mr{fib}} A \simeq \mc{O}_{\mr{fp}}(T^*\mc{X})
\]
where the right hand side is equipped with its grading by symmetric powers of vector fields, and
\item the associated graded for the anti-diagonal filtration satisfies
\[
\gr_{\overline{\Delta}} A \simeq D(|\mc{X}|),
\]
where the right hand side denotes the algebra of differential operators on $|\mc{X}|$.
These differential operators can be viewed as a fiberwise polynomial version $\mr{Weyl}_{\mr{fp}}(T^*|\mc{X}|)$ of the Weyl algebra for the symplectic dg vector space $T^*|\mc X|$.
\end{itemize}
\end{definition}

\begin{example}
The algebra $D(\mc X) = U_{\mr{fp}}(T_{\mc X})$ of differential operators on $\mc X$, defined by taking a fiberwise polynomial version of the universal enveloping algebra, is of differential operator type.  The underlying graded vector space of $D(\mc X)$ can be written out explicitly as
\[\symc^\bullet(L^*) \otimes \sym^\bullet(L[1]),\]
and equipped with a pair of filtrations as in Section \ref{sec filtrations}.  One can readibly check that these do indeed define filtered dg algebra structures on $D(\mc X)$.

The very simplest example is provided by the case where $\mc X$ is purely classical (here meaning non-derived), so that $L$ is concentrated in degree one.  Then $\mc X = |\mc X|$, and we are able to use the trivial anti-diagonal filtration.
\end{example}

We will see that for the framed $\bb E_4$ algebra of the B-twist, 
when we take factorization homology over a closed 3-manifold $N$,
we get such algebras of differential operator type.

\begin{remark}
One can generalize Definition \ref{def DOT} in the following way.  Replace ``dg associative'' in the definition with $\bb E_n$,  replace $T^*\mc{X}$ with $T^*[1-n]\mc{X}$, and replace the Weyl algebra on $T^*|\mc{X}|$ with the $\bb E_n$-Weyl algebra on $T^*[1-n]|\mc{X}|$.  Such an $\bb E_n$ algebra deformation of $\mc{O}_{\mr{fp}}(T^*[n-1]\mc{X})$ is defined to be an \emph{$\bb E_n$ algebra of differential operator type}.  One can build examples of such $\bb E_n$ algebras as the $\bb E_n$-enveloping algebras of the tangent sheaf of $\mc X$, as in \cite{Knudsen}.

In this sense, the B-twist produces an $\bb E_4$-algebra of differential operator type on $\mc{X} = \Flat_G(\RR^4)^\wedge_0$, the formal neighborhood of the trivial flat $G$-bundle.
~\hfill$\Diamond$\end{remark}

\section{Factorization Homology Computations}

We will now explore some consequences of our results on manifolds other than $\RR^4$.
The key mechanism for global results is factorization homology:
a factorization algebra satisfies a local-to-global condition that means its value on an interesting manifold $M$ is determined by its behavior on ``small'' opens (e.g., disks sitting inside $M$).
In particular, the observables of a perturbative field theory (whether classical or quantum) form a factorization algebra, by the results of \cite{Book2}, 
and it is interesting to ask about the global observables on various manifolds,
which provide interesting algebraic structures depending on the manifold and which can thus encode invariants of the manifolds.

Let us focus on 4-dimensional theories such as those discussed in this paper.
If the 4-manifold $M$ is closed, we will see that the complex of global quantum observables is one-dimensional, 
if a quantization exists on that manifold.
(We will assume that hypothesis for the moment.)
If the manifold is a product $M = \RR \times N$ with $N$ a closed 3-manifold,
then the global observables determine a factorization algebra on $\RR$ by pushing forward along the projection to $\RR$.
This 1-dimensional factorization algebra encodes an interesting dg associative (or $A_\infty$) algebra
that a physicist might call the algebra of observables of the compactification of the 4-dimensional theory along $N$.
If the manifold is a product $M = \RR^2 \times N$ with $N$ a closed 2-manifold,
then the pushforward of observables to $\RR^2$ encodes a dg vertex algebra 
(possibly even just an $\EE_2$ algebra, if the 4-dimensional theory is topological along the $\RR^2$ directions).

A key assumption here was that the theory makes sense on a given 4-manifold $M$ and that a quantization exists.
This assumption must be checked and need not always be satisfied.
For instance, the holomorphic twist as a {\em classical} theory is defined on a complex surface,
so that $M$ must be equipped with a complex structure.
Our explicit quantization in this paper provides a quantization on manifolds closely related to $\CC^2$, notably
\begin{itemize}
\item complex tori such as $\CC^2/\Lambda$, where $\Lambda$ is a lattice $\ZZ^4 \hookrightarrow \CC^2$;
\item open subsets of $\CC^2$, notably $\CC^2 \setminus \{0\}$, which is diffeomorphic to $\RR \times S^3$ and hence offers an example of compactification along a 3-manifold; and
\item products $\CC \times E$, where $E$ is an elliptic curve.
\end{itemize}
In this paper we will discuss only such manifolds, deferring a treatment of other complex surfaces to later work.

On the other hand, we have provided an $\SO(4)$-equivariant quantization of twists living in the subspace of $\CC^3$ of elements of the form $(t, t, u)$,
so that our quantizations make sense on any oriented $4$-manifold.
For these twists,
the global quantum observables are the factorization homology of the framed $\EE_4$ algebra $\obsq_{(t,t,u)}$ associated with our quantization.
For such twists, we find that
\begin{itemize}
\item for any closed oriented $4$-manifold $M$, the cochain complex $\int_M \obsq_{(t,t,u)}$ has one-dimensional cohomology;
\item for any closed oriented $3$-manifold $N$, the factorization homology $\int_N \obsq_{(t,t,u)}$ is equivalent to an $\EE_1$-algebra;
\item for any closed oriented $2$-manifold $N$, the factorization homology $\int_N \obsq_{(t,t,u)}$ is equivalent to an $\EE_2$-algebra; and
\item for any closed oriented $1$-manifold $N$, the factorization homology $\int_N \obsq_{(t,t,u)}$ is equivalent to an $\EE_3$-algebra.
\end{itemize}
Indeed, this framed $\EE_4$-algebra $\obsq_{(t,t,u)}$ determines a fully extended and oriented 4-dimensional topological field theory, in the sense of Baez--Dolan--Lurie, by \cite{ScheimbauerThesis}.
It would be worthwhile to unravel how this functorial TFT arising from the Kapustin--Witten theories relates to other constructions occurring in the literature.
We defer this very interesting question to future explorations,
although we make some suggestions in this paper.

In this subsection, we will treat each type of twist separately.
For each twist we will first describe the factorization homology of the classical observables, 
before describing the quantum observables for abelian gauge theories and then deducing consequences for quantum observables of the nonabelian gauge theories,
using the filtrations we have already introduced.
The wonderful thing about the abelian gauge theories is that computations reduce to understanding the cohomology of elliptic complexes (notably the Dolbeault, de Rham, or mixed complexes).
Such questions are often rather tractable, as we will see,
following the treatment of abelian Chern--Simons theory in Section 4.5 of \cite{Book1}.
(We will freely refer to results from that book with minimal exposition.)

Note that in the abelian case, there is nothing interesting to say about the family of theories over the moduli of vacua,
as the family is manifestly constant.

\subsection{The Holomorphic Twist}

Let $G$ be a reductive algebraic group.
For any complex surface $M$, consider a holomorphic principal $G$-bundle $P \to M$ and let $\dbar_P$ denote the $\dbar$-connection on the adjoint bundle $\ad(P) = P \times^G \gg$.
The space of fields is then
\[
\mc{E}_P = \Omega^{\bu,\bu}(M, \ad(P))[1] \oplus \Omega^{\bu,\bu}(M, \ad(P)^*)[2],
\]
which is the space of sections of a graded vector bundle $E_P \to M$.
We equip this graded vector space with the differential $\dbar_P$.
The cohomology of the space of fields is
\[
\mr H^\bu(\mc{E}_P) = \mr H^{\bu,\bu}_{\dbar}(M,\ad(P) )[1] \oplus \mr H^{\bu,\bu}_{\dbar}(M, \ad(P)^*)[2],
\]
where we view $\mr H^{i,j}_{\dbar}(M, \ad(P))$ as concentrated in cohomological degree $i+j$ and use the subscript $\dbar$ to indicate the differential.
When $P$ is the trivial bundle $G \times M$,
this reduces to
\[
\mr H^\bu(\mc{E}_{\mr{triv}}) = \mr H^{\bu,\bu}_{\dbar}(M )\otimes \gg[1] \oplus \mr H^{\bu,\bu}_{\dbar}(M) \otimes \gg^*[2].
\]
For a closed complex surface  (i.e., compact and without boundary), 
the cohomology $\mr H^\bu(\mc{E}_P)$ is finite-dimensional.
Furthermore,  
there is a canonical pairing on~$\mr H^\bu(\mc{E}_P)$ of degree~$-1$ that is invariant for the shifted Lie bracket.

\begin{lemma}\label{hol ab class obs}
Let $G$ be an abelian complex algebraic group.
On the closed complex surface $M$ with holomorphic principal $G$-bundle $P \to M$, 
the cohomology of the classical observables $\obscl(M)$ is isomorphic to~$\symc(\mr H^\bu(\mc{E}_P)^*)$,
which is a finitely-generated free graded-commutative algebra.
\end{lemma}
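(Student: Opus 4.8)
The plan is to reduce the statement to a computation for a \emph{free} theory and then extract the cohomology by a Hodge-theoretic Künneth argument. First I would observe that when $G$ is abelian the Lie bracket on $\gg$ vanishes, so the classical interaction $I = \frac12\int \beta \wedge [\alpha,\alpha]$ is identically zero. Hence the theory is free: the classical BV complex $(\mc E_P, \d_{\mc E})$ has differential equal to the linear elliptic operator $\dbar_P$, with no higher $L_\infty$ brackets surviving. The classical observables are therefore
\[
\obscl(M) \iso \left(\symc(\mc E_P(M)^*),\ \dbar_P\right),
\]
the (completed) symmetric algebra on the continuous dual of the fields, equipped with the derivation extending the transpose of $\dbar_P$.

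Next I would compute the cohomology of this complex via the Hodge decomposition on the closed complex surface $M$. Fixing a Hermitian metric, elliptic theory for the Dolbeault complex of the graded holomorphic bundle $E_P$ yields a topological direct-sum decomposition $\mc E_P(M) \iso \mc H \oplus \mc E_P(M)^{\mr{ac}}$, where $\mc H \iso \mr H^\bu(\mc E_P)$ is the finite-dimensional space of harmonic representatives (carrying zero differential) and $\mc E_P(M)^{\mr{ac}}$ is an acyclic complementary subcomplex, namely the image of $\dbar_P$ together with the image of its adjoint. Dualizing, and using that on a closed manifold the cohomology of the distributional dual complex is the linear dual of the cohomology, I obtain a corresponding splitting $\mc E_P(M)^* \iso \mr H^\bu(\mc E_P)^* \oplus (\text{acyclic})$, with the first summand in zero differential.

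Then I would apply the symmetric algebra functor. Since $\symc$ is symmetric monoidal, it sends this direct sum to the completed tensor product $\symc(\mr H^\bu(\mc E_P)^*) \widehat{\otimes} \symc\big(\mc E_P(M)^{\mr{ac},*}\big)$. In characteristic zero the symmetric powers of a contractible complex are again contractible, so $\symc$ of the acyclic factor has cohomology concentrated in degree zero, equal to $\CC$. By the Künneth theorem the cohomology of the tensor product is therefore just $\symc(\mr H^\bu(\mc E_P)^*)$, where this first factor already has zero differential so equals its own cohomology. Finite-dimensionality of $\mr H^\bu(\mc E_P)$ for an elliptic complex on a compact manifold then identifies this as a finitely-generated free graded-commutative algebra, as claimed.

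The main obstacle is functional-analytic rather than conceptual: one must justify that passing to cohomology commutes with the \emph{completed} symmetric power functor on the relevant nuclear topological vector spaces, and that the harmonic splitting is a genuine direct-sum decomposition compatible with these completions and with the passage to continuous duals. This is precisely the content of the free-theory computation developed in \cite{Book1}; I would appeal to the general principle established there that the cohomology of the classical observables of a free elliptic theory on a closed manifold is the symmetric algebra on the cohomology of the dual of the fields, following the model computation for abelian Chern--Simons theory in Section 4.5 of loc.\ cit.
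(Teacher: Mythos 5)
Your proposal is correct and follows essentially the same route as the paper: both reduce to the free-theory computation of \cite{Book1} (the paper cites Chapter 5, Section 6, and Appendix C and simply asserts that the cohomology of each symmetric-power factor of the completed symmetric algebra is $\sym^n(\mr H^\bu(\mc{E}_P)^*)$), with your Hodge-decomposition-plus-K\"unneth argument being exactly the mechanism behind that cited result. The only cosmetic difference is that the paper works with smooth sections of the twisted dual bundle $E^!_P$ over $M^n$ rather than the continuous dual of the fields, which you correctly identify on a closed manifold.
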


We remark that this algebra also has a natural 1-shifted Poisson bracket,
arising from the canonical pairing on~$\mr H^\bu(\mc{E}_P)^*$.

\begin{proof}
This result is a special case of results from \cite{Book1};
see especially Chapter 5, Section 6, and Appendix~C.
We gloss the complete argument here.

The graded vector space of global observables is a version of a completed symmetric algebra:
\[
\obscl(M) = \prod_{n \in \NN} \Gamma(M^n, \left(E^!_P\right)^{\boxtimes n})_{S_n},
\]
where 
\begin{itemize}
\item $M^n$ denotes the $n$-fold product of $M$ with itself; 
\item $\boxtimes$ denotes the outer product of vector bundles: 
for $V \to M$ and $W \to N$ vector bundles, $V \boxtimes W \to M \times N$ is the obvious vector bundle;
\item $E^!_P = E_P^* \otimes \mr{Dens}_M$ is the dual vector bundle $E_P^*$ to $E_P$ twisted by the density line.
\end{itemize}
It has a differential determined by $\dbar_P$. 
We filter this $\obscl(M)$ by powers of the augmentation ideal,
and it is complete with respect to this filtration.
We view it as living in a category of pro-cochain complexes.
(Note that we could take the direct sum rather than the direct product, 
if we wanted a symmetric algebra rather than a completed symmetric algebra.
This would avoid working with complete filtered complexes.)

The cohomology of $\obscl(M)$ is thus the product over $\NN$ of the cohomology for each factor.
For each $n$, that cohomology is finite-dimensional and agrees with~$\sym^n(\mr H^\bu(\mc{E}_P)^*)$.
\end{proof}

For $G$ nonabelian, the fields $\mc{E}_P$ have a {\em nontrivial} shifted dg Lie algebra structure,
by mixing the wedge product of Dolbeault forms with the Lie bracket on $\gg$.
This dg Lie algebra $\mc{E}_P[-1]$ has a natural moduli-theoretic interpretation: it can identified with the $(-1)$-shifted tangent complex to the derived stack $T^*[-1]\mr{Higgs}_G(M)$, the shifted cotangent to the stack of $G$-Higgs bundles (as discussed in Section \ref{holo_vacua_section}), at a point $(P,0)$ corresponding to the zero Higgs bundle.

The cohomology $\mr H^\bu(\mc{E}_P)[-1]$ again carries the structure of a graded Lie algebra.

\begin{cor}
On the closed complex surface $M$ with holomorphic principal bundle $P \to M$, 
there is a canonical filtration on the classical observables $\obscl(M)$ by powers of the augmentation ideal. 
In the associated spectral sequence, the first page is isomorphic to~$\symc(\mr H^\bu(\mc{E}_P)^*)$.
\end{cor}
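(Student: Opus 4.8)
The plan is to recognize $\obscl(M)$ as the Chevalley--Eilenberg cochains of the local $L_\infty$ algebra $\mc{E}_P[-1]$ and to read off both the filtration and its first page from the way the Chevalley--Eilenberg differential interacts with symmetric powers. As a graded-commutative algebra, forgetting the differential, one has $\obscl(M) \iso \symc(\mc{E}_P^*)$, the completed symmetric algebra; the differential is the sum of a linear piece $d_1$ dual to $\dbar_P$ and a quadratic piece $d_2$ dual to the shifted Lie bracket on $\mc{E}_P[-1]$ (built from the wedge product of Dolbeault forms together with the bracket on $\gg$). The structural observation I would exploit is that $d_1$ preserves symmetric (polynomial) degree whereas $d_2$ raises it by one. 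Consequently the decreasing filtration $F^k = \prod_{n \geq k} \sym^n(\mc{E}_P^*)$ by powers of the augmentation ideal is preserved by the full differential and is complete and exhaustive; this is the canonical filtration asserted in the statement.

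First I would compute the associated graded. Passing to $\gr$ kills the degree-raising piece $d_2$, leaving only $d_1 = \dbar_P$ acting on each $\sym^n(\mc{E}_P^*)$ separately. Thus the zeroth page is precisely the complex of classical observables of the \emph{abelian} holomorphic theory on $M$ with the same underlying fields, which is exactly the situation of Lemma~\ref{hol ab class obs}: turning off the bracket on associated graded is the same as working with the abelian gauge theory. The first page is then the $\dbar_P$-cohomology of $\symc(\mc{E}_P^*)$, which by the argument in the proof of Lemma~\ref{hol ab class obs} is computed termwise. Since $M$ is closed, the cohomology $\mr H^\bu(\mc{E}_P)$ is finite-dimensional, so dualizing commutes with taking cohomology and $\mr H^\bu(\mc{E}_P^*) \iso \mr H^\bu(\mc{E}_P)^*$. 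Assembling the cohomologies $\mr H^\bu(\sym^n(\mc{E}_P^*), \dbar_P)$ over all $n$ produces $\symc(\mr H^\bu(\mc{E}_P)^*)$, as claimed.

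The one genuine point requiring care --- the main obstacle --- is the interchange of cohomology with the symmetric power functor. Over a field of characteristic zero, $\sym^n$ is a direct summand of the $n$-fold tensor power via the symmetrization idempotent, so the K\"unneth theorem gives $\mr H^\bu(\sym^n(\mc{E}_P^*), \dbar_P) \iso \sym^n(\mr H^\bu(\mc{E}_P^*))$; finite-dimensionality of $\mr H^\bu(\mc{E}_P)$, which is where compactness of $M$ enters, is exactly what makes K\"unneth applicable without flatness subtleties. I would also track products versus direct sums carefully in the completed symmetric algebra, working in the category of pro-cochain complexes as in the proof of Lemma~\ref{hol ab class obs}; because the filtration is by polynomial degree, the associated graded is computed degreewise and the completion causes no difficulty. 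Finally, I note that the corollary only asserts the identification of the first page, so no convergence statement for the spectral sequence is needed, which sidesteps the more delicate questions that would arise in identifying the full quantum answer.
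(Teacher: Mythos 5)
Your proposal is correct and matches the paper's (largely implicit) argument: the paper derives this corollary directly from the preceding lemma by filtering the Chevalley--Eilenberg cochains of $\mc{E}_P[-1]$ by powers of the augmentation ideal, observing that the associated graded retains only the linear differential $\dbar_P$ and hence reduces to the abelian computation. Your careful handling of the K\"unneth/symmetric-power interchange and of products versus sums in the pro-cochain-complex setting is exactly the content the paper delegates to the proof of the lemma and to the appendices of \cite{Book1}.
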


The next interesting differential in the spectral sequence is the Chevalley differential arising from the Lie algebra structure of~$\mr H^\bu(\mc{E}_P)$.

In most cases of geometric interest, we have a stronger result.
For $M$ a closed K\"ahler manifold of complex dimension 2, 
Hodge theory offers a deformation retract of $\mc{E}_P$ onto its cohomology.
In fact, by the main result of \cite{DGMS},
$\mc{E}_P[-1]$ is quasi-isomorphic to its cohomology {\em as a dg Lie algebra},
so that we have the following.

\begin{lemma}
For $M$ a closed K\"ahler manifold of complex dimension 2, 
the classical observables $\obscl(M)$ are quasi-isomorphic to $\clie^\bu_{\mr{Lie}}(\mr H^\bu(\mc{E}_P)[-1])$ as dg commutative algebras.
\end{lemma}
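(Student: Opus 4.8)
The plan is to reduce the statement to two inputs: the identification of classical observables with Chevalley--Eilenberg cochains, and the formality of the twisted Dolbeault complex on a compact K\"ahler surface. First I would recall, following the construction in \cite{Book1}, that the classical observables form the (completed) Chevalley--Eilenberg cochain algebra $\obscl(M) \iso \clie^\bu_{\mr{Lie}}(\mc{E}_P[-1])$ of the shifted local Lie algebra of fields, with its commutative product and with differential assembled from $\dbar_P$ and the shifted Lie bracket. The entire lemma is then an instance of functoriality: the functor $\clie^\bu_{\mr{Lie}}(-)$ sends a quasi-isomorphism of shifted dg (or $L_\infty$) Lie algebras to a quasi-isomorphism of dg commutative algebras. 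So it suffices to produce a quasi-isomorphism of shifted dg Lie algebras $\mc{E}_P[-1] \simeq \mr H^\bu(\mc{E}_P)[-1]$, where the target carries the zero differential and the induced graded Lie bracket.

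The key step is therefore the formality of $\mc{E}_P[-1]$. I would invoke the $\partial\dbar$-lemma, which holds on the compact K\"ahler surface $M$ (with coefficients in $\ad(P)$ once we fix a Hermitian--Yang--Mills metric, so that the relevant Kähler identities and the equality of Laplacians persist in the twisted setting). The standard DGMS zig-zag then reads
\[
\big(\Omega^{\bu,\bu}(M,\ad P), \dbar\big) \hookleftarrow \big(\ker \partial, \dbar\big) \twoheadrightarrow \big(\ker\partial / \im\partial, 0\big),
\]
and the $\partial\dbar$-lemma guarantees that both arrows are quasi-isomorphisms and that the induced differential on the right-hand complex vanishes, identifying it with $\mr H^\bu_{\dbar}(M,\ad P)$. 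Crucially, every map in this zig-zag is a morphism of dg commutative algebras, since the wedge product preserves both $\ker\partial$ and $\im\partial$; so after tensoring with $\gg$ and $\gg^*$ and inserting the fixed bracket on $\gg$, the same zig-zag upgrades to a chain of quasi-isomorphisms of shifted dg Lie algebras relating $\mc{E}_P[-1]$ to $\mr H^\bu(\mc{E}_P)[-1]$.

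Applying $\clie^\bu_{\mr{Lie}}(-)$ to this zig-zag yields the desired equivalence of dg commutative algebras, provided I track the completion. The one genuine bookkeeping subtlety is that $\obscl(M)$ is a \emph{completed} symmetric algebra, a product over symmetric powers as in the proof of Lemma \ref{hol ab class obs}, so I would work in pro-cochain complexes filtered by powers of the augmentation ideal. The zig-zag maps are filtered, and on the $n$-th associated graded they restrict to $\sym^n$ of the formality quasi-isomorphism on $\mr H^\bu(\mc{E}_P)^*$; since symmetric powers of a quasi-isomorphism of complexes over a field are again quasi-isomorphisms, completeness and exhaustiveness of the filtration promote this to a quasi-isomorphism of the completed algebras.

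The main obstacle I anticipate is the formality step \emph{with coefficients} in $\ad(P)$, together with the insistence that the zig-zag consist of \emph{dg Lie} rather than merely dg commutative maps. Classical DGMS formality is stated for the de Rham algebra with trivial coefficients, so upgrading to the $\dbar$-complex twisted by $\ad(P)$ requires that the $\partial\dbar$-lemma survive in the twisted setting, which is precisely where the K\"ahler hypothesis and a harmonic/HYM representative are needed; one must also check that the relevant projections are compatible with the external bracket coming from $\gg$. For the trivial bundle all of this is immediate, since the coefficients factor out, and I would first dispatch that case cleanly before addressing a general holomorphic $P$.
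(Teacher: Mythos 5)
Your proposal follows essentially the same route as the paper: the lemma is stated there as an immediate consequence of the preceding observation that, by the main result of Deligne--Griffiths--Morgan--Sullivan, $\mc E_P[-1]$ is formal as a dg Lie algebra on a closed K\"ahler surface, after which one applies $\clie^\bu_{\mr{Lie}}(-)$. Your additional care about the completed symmetric algebra and about the twisted $\partial\dbar$-lemma for $\ad(P)$-valued forms fills in details the paper leaves implicit, but the argument is the same.
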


We now turn to describing the global quantum observables on a closed surface.
Since we constructed a quantization on $\CC^2$,
we only know that we have quantizations for manifolds closely related to $\CC^2$.
For instance, our quantization determines a quantization on any complex torus $\CC^2/\Lambda$, 
where $\Lambda$ denotes a lattice $\ZZ^4 \hookrightarrow \CC^2$,
because our quantization is manifestly $\Lambda$-equivariant and hence the quantization descends from $\CC^2$ to the quotient.
Similarly, we obtain a quantization on a Hopf surface $\CC^2 \setminus \{0\}/q^\ZZ$, 
with $q \in \CC^\times\times \CC^\times$,
because our quantization restricts to $\CC^2 \setminus \{0\}$, the punctured affine plane,
and our quantization is equivariant under rescaling by $q$. 
In general, if $U$ is an open subspace of $\CC^2$ preserved by the action by a discrete subgroup $L$ of $\GL_2(\CC) \ltimes \CC^2$ with the property that $U/L$ is a manifold, then we have a quantization for the holomorphic theory on $U/L$.

\begin{remark}
Now that we are discussing the quantization of the algebra of observables, we will restrict attention to the trivial principal $G$-bundle over $M$, in order to ensure that our results from Section \ref{anomaly_section} for the quantization of the theory on $\CC^2$ can be applied (by passing, as discussed above, to a discrete quotient of an open subset of~$\CC^2$).
~\hfill$\Diamond$\end{remark}

\begin{prop} \label{holo_4d_fact_hom_prop}
Let $G$ be a reductive algebraic group.
Consider the trivial holomorphic $G$-bundle on $M$, 
a closed complex surface of the form discussed in the paragraph above, so that the holomorphically twisted theory on $M$ admits a quantization.
Then the cohomology of the fiberwise polynomial quantum observables $\obsq_{\mr{fp}}(M)$ is isomorphic to 
\[
\det \left(\mr H^{\bu,\bu}_{\dbar}(M)\otimes \gg\right)[d_M],
\]
where $d_M$ modulo 2 agrees with the Euler characteristic.
\end{prop}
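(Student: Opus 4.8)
The plan is to deduce this as a direct application of Lemma~\ref{obs as det}, whose hypotheses are tailored to exactly this situation. First I would record that the holomorphically twisted theory on $M$, with the trivial $G$-bundle, is a theory of cotangent type: in the notation of Example~\ref{CS_example} and the discussion following Corollary~\ref{cor:holtheory}, its base local $L_\infty$ algebra is the Dolbeault complex $(\Omega^{\bu,\bu}(M)\otimes\gg,\dbar)$, where we use the fixed invariant pairing to identify $\gg$ with $\gg^*$. The underlying elliptic complex is therefore the Dolbeault complex on the closed surface $M$, whose cohomology $\mr H^{\bu,\bu}_{\dbar}(M)\otimes\gg$ is finite-dimensional. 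For a nontrivial bundle one would instead obtain $\mr H^{\bu,\bu}_{\dbar}(M,\ad(P))$, which is why we restrict to the trivial bundle.

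Second I would verify the one hypothesis of Lemma~\ref{obs as det} that is not automatic, namely that the cotangent theory admits an exact one-loop quantization on $M$. By assumption $M = U/L$ is a discrete quotient of an open subset $U\subseteq\CC^2$ by a subgroup $L\subset\GL_2(\CC)\ltimes\CC^2$. The quantization on $\CC^2$ produced in Corollary~\ref{cor:holtheory} is built from the $\dbar^*$ gauge-fixing operator and the associated heat-kernel propagator, both of which are equivariant for the relevant action of $L$ (by translations, and by rescalings in the Hopf-surface case), as noted in the discussion preceding the proposition; hence the propagator and the effective collection $\{I[\Lambda]\}$ restrict to $U$ and descend to $M$. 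That the resulting quantization is exact at one loop is automatic for a cotangent theory by Corollary~\ref{cotangent_cor}, and the vanishing of the one-loop anomaly is a local statement that reduces, exactly as on $\CC^2$ (see the wheel argument of Proposition~\ref{holo_anomaly_prop} and Example~\ref{tadpole_example}), to the vanishing of a trace in the adjoint representation, which holds because $\gg$ is reductive.

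With both hypotheses in hand, I would simply invoke Lemma~\ref{obs as det}: the cohomology of $\obsq_{\mr{fp}}(M)$ is $\det(\mr H^{\bu,\bu}_{\dbar}(M)\otimes\gg)[d_M]$, with the shift $d_M$ agreeing modulo $2$ with the Euler characteristic $\chi(M)$. The determinant line and the one-dimensionality underlying this formula come from the anti-diagonal spectral sequence collapsing onto the free-theory answer computed by Lemma 13.7.1.1 of \cite{Book2}, precisely as in the proof of Lemma~\ref{obs as det}.

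The genuinely load-bearing step is the second one: ensuring that the $\CC^2$ quantization really does descend to the curved surface $M$, and that one-loop exactness together with anomaly vanishing survive there. I expect this to be the main obstacle, though it is a matter of equivariance bookkeeping and locality of the anomaly rather than of new homological algebra. Everything downstream is formal once Lemma~\ref{obs as det} applies, and the identification of the underlying elliptic complex with the Dolbeault complex is immediate for the trivial bundle.
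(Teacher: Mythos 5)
Your proposal is correct and follows exactly the route the paper takes: the paper's own proof is simply the one-line observation that the proposition is an application of Lemma~\ref{obs as det}, with the descent of the quantization to $M = U/L$ and the identification of the underlying elliptic complex handled in the surrounding discussion just as you describe. Your more explicit verification of the lemma's hypotheses (cotangent type, one-loop exactness via Corollary~\ref{cotangent_cor}, and anomaly vanishing via the adjoint-trace argument) is a faithful expansion of the same argument rather than a different one.
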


This result is an application of Lemma~\ref{obs as det}.

\begin{remark} \label{fact_hom_near_a_vacuum_rmk}
 We have a similar result describing the quantum observables on $M$ for the theory near any choice of vacuum $[x] \in [\gg^*/G]$.  By virtue of Lemma \ref{symmetry_breaking_lemma}, we can identify the perturbative theory over the vacuum $[x]$ with the theory over the vacuum $0$ for the gauge group $G_x$, the stabilizer of $x$.  The cohomology of the fiberwise polynomial quantum observables over $[x]$ can therefore be identified with $\det \left(\mr H^{\bu,\bu}_{\dbar}(M)\otimes \gg_x\right)[d_M]$.~\hfill$\Diamond$\end{remark}

We want to mention one last construction that leads to an $A_\infty$-algebra that has the flavor of the higher Kac-Moody algebras of Faonte, Hennion and Kapranov \cite{FHK}.
Consider the holomorphic twist on the manifold $\CC^2 \setminus \{0\}$,
which \cite{FHK} would denote~$\oAA^2$.
Using spherical coordinates, we can view $\oAA^2$ as $\RR_{>0} \times S^3$,
and hence we can compactify the theory along $S^3$ to get a one-dimensional theory.
Our quantization lives on this manifold,
so the observables form a factorization algebra on $\oAA^2$
and their pushforward to $\RR_{> 0}$ along the radial coordinate $\rho \colon \oAA^2 \to \RR_{> 0}$ is a factorization algebra.
We call this pushforward the algebra of {\em spherical operators} of the theory.
This factorization algebra encodes the ``canonical radial quantization'' of our holomorphic theory.

There is a rather precise statement we can make, 
thanks to \cite{GwilliamWilliamsKM},
notably the discussion following Definition 3.12.
Consider $V$ a dg vector space of finite rank,
which determines an elliptic complex $\Omega^{0,\bu}(\oAA^2) \otimes V$ on $\oAA^2$.
There is an associated {\em free} holomorphic cotangent theory,
sometimes known as a $\beta\gamma$ system.
It is shown that there is a dense inclusion
\[
U(\mc{H}_{V,2}) \to \rho_* \obsq
\]
of factorization algebras on $\RR_{> 0}$,
where $\mc{H}_{V,2}$ is the dg Heisenberg algebra of \cite[Definition 3.12]{GwilliamWilliamsKM}.
In other words, for a free cotangent theory, there is a dg Weyl algebra inside the algebra of spherical operators.
It encodes the operators one knows from canonical radial quantization.

It is straightforward to write down the fiberwise polynomial version of this construction,
so that we use completed functions in the ``base'' direction.
In other words, we view the vector space $V$ as replaced by $\widehat{V}$, 
the formal neighborhood of the origin in $V$.
This fiberwise polynomial algebra $U_{\mr{fp}}(\mc{H}_{V,2})$ should be seen as the ring of differential operators on the space $\maps(\oAA^2, \widehat{V})$,
which is a kind of higher loop space.
(For discussion of this interpretation, again see \cite{FHK} and~\cite{GwilliamWilliamsKM}.)
The algebra of spherical operators is a completion of that algebra,
and it is controlled by differential operators on this higher loop space.

Something similar should hold for the holomorphic theory with a nonabelian gauge group.
We replace the formal moduli space $\widehat{V}$ with $B\gg$,
the formal moduli space arising from $\gg$.
Then the higher loop space $\maps(\oAA^2, \widehat{V})$ is replaced by $\maps(\oAA^2, B\gg)$,
which can be seen as a formal neighborhood of the trivial bundle in~$\bun_G(\oAA^2)$.
Loosely speaking, we expect that the algebra of spherical operators is controlled by an algebra of differential operator type on this formal moduli space~$\bun_G(\oAA^2)^{\wedge}_0$.

We now give a precise, but weaker, statement, using the result about free cotangent theories.

\begin{lemma}
Let $G$ be a reductive algebraic group.
Consider the algebra $\rho_* \obsq_{\mr{fp}}$ of spherical operators for the theory.
Let $|\gg|$ denote the underlying vector space of~$\gg$.
Using the anti-diagonal filtration, we take the associated graded and obtain a map
\[
U_{\mr{fp}}(\mc{H}_{|\gg|[-1],2}) \to \gr \rho_* \obsq_{\mr{fp}}
\]
of factorization algebras on $\RR_{> 0}$.
\end{lemma}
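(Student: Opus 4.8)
The plan is to reduce the statement to the free-field computation of \cite{GwilliamWilliamsKM} by passing to the associated graded, where the nonabelian interaction is switched off and the theory collapses to a free holomorphic $\beta\gamma$ system on $\oAA^2$.

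First I would identify the associated graded explicitly. As recorded in the discussion surrounding Definition~\ref{antidiag filt}, the differential induced on $\gr_{\overline\Delta}\obsq_{\mr{fp}}$ retains only the terms that preserve the quantity $n-m$: the Dolbeault operator $\dbar$, the BV Laplacian $\Delta$, and the quadratic parts of the classical and quantum interactions. The cubic BF vertex and all higher quantum corrections, which strictly lower $n-m$, drop out. Hence $\gr_{\overline\Delta}\obsq_{\mr{fp}}$ is precisely the factorization algebra of fiberwise polynomial quantum observables of the underlying free theory, namely holomorphic BF theory with the Lie bracket on $\gg$ switched off. Since the radial pushforward $\rho_*$ is computed open-by-open and the anti-diagonal filtration is likewise defined open-by-open, $\rho_*$ commutes with $\gr_{\overline\Delta}$, giving
\[
\gr_{\overline\Delta}\rho_*\obsq_{\mr{fp}} \iso \rho_*\,\obsq^{\mr{free}}_{\mr{fp}}.
\]

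Next I would recognize this free theory as the $\beta\gamma$ system on $\oAA^2$ whose dg Heisenberg algebra is $\mc H_{|\gg|[-1],2}$ in the notation of \cite{GwilliamWilliamsKM}; a short count of Dolbeault and internal degrees pins down the shift. I would then invoke the construction following \cite[Definition 3.12]{GwilliamWilliamsKM} verbatim for this free system, which produces a dense inclusion $U(\mc H_{|\gg|[-1],2}) \to \rho_*\obsq^{\mr{free}}$ of factorization algebras on $\RR_{>0}$ encoding its canonical radial quantization. Restricting both sides to fiberwise polynomial observables yields $U_{\mr{fp}}(\mc H_{|\gg|[-1],2}) \to \rho_*\obsq^{\mr{free}}_{\mr{fp}}$, and composing with the identification above produces the asserted map of factorization algebras on $\RR_{>0}$.

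The main obstacle I expect is not any single estimate but the bookkeeping that makes these identifications compatible as factorization algebras rather than merely as graded cochain complexes. Concretely, one must verify that the $\beta\gamma$ presentation of the free theory matches $\gr_{\overline\Delta}\obsq_{\mr{fp}}$ on the nose---in particular that the surviving quadratic quantum term produces no correction beyond the standard free $\beta\gamma$ differential, which for reductive $\gg$ follows from the vanishing of the relevant low-valence wheels as in Example~\ref{tadpole_example}---and that the inclusion of \cite{GwilliamWilliamsKM}, stated for the full observables, restricts to the fiberwise polynomial subalgebra compatibly with the factorization structure over $\RR_{>0}$. Once these compatibilities are in place, the desired map is simply the Gwilliam--Williams inclusion read off on the associated graded.
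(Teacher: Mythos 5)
Your proposal is correct and follows essentially the route the paper intends: the paper states this lemma with no written proof, as an immediate consequence of identifying $\gr_{\overline{\Delta}}\rho_*\obsq_{\mr{fp}}$ with the (fiberwise polynomial) spherical operators of the underlying free cotangent theory on $|\gg|$ and then invoking the dense inclusion $U(\mc{H}_{V,2}) \to \rho_*\obsq$ from the discussion following \cite[Definition 3.12]{GwilliamWilliamsKM}. One small caveat: Example~\ref{tadpole_example} only disposes of the \emph{linear} quantum term (one external leg, where the weight is $\mr{tr}_\gg(X)$), not the quadratic term $I^{\mr{q}}_2$ whose algebraic factor is the Killing form; the paper itself acknowledges that $\{I_2 + I^{\mr{q}}_2,-\} + \Delta$ survives to the associated graded, and since the lemma only asserts a map (not an equivalence), this does not undermine your argument, but the justification you cite for that step is not the right one.
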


In other words, the spherical operators for the nonabelian gauge theory should determine a deformation of this dg Weyl algebra, obtained by computing the differentials in the fiberwise polynomial spectral sequence.
The main subtlety here is that the pushforward is sensitive to the complex geometry of the thickened spheres, notably the ratio of the interior to exterior radii,
whereas the locally constant algebra on the left hand side is insensitive to that ratio.
For this reason, the pushforward is not ``just'' differential operators on the formal moduli space.

We expect that deformation to have the following form.
Let $R_2$ denote the graded-commutative algebra $\mr H^\bu(\oAA^2, \mc{O}_{\oAA^2}^{\mr{alg}})$.
Since algebraic functions include canonically into the Dolbeault complex, there is a map from $R_2$ to the Dolbeault cohomology of $\oAA^2$.
For a reductive algebraic group $G$, the graded Lie algebra $\gg[\eps] \otimes R_2$ thus maps canonically to the cohomology of the local Lie algebra encoding the holomorphic twist we are studying.
Given a choice of invariant pairing on the reductive Lie algebra $\gg$, 
the dg algebra 
\[
\clie^\bu_{\mr{Lie}}(\gg[\eps] \otimes R_2) = \mc{O}(B(\gg \otimes R_2))
\]
has a Poisson bracket determined by the pairing on $\gg$ and the residue pairing on~$R_2$.
There is a canonical deformation to a dg associative algebra,
namely the algebra of differential operators on~$B(\gg \otimes R_2)$.

\begin{conjecture}
Let $G$ be a reductive algebraic group.
Consider $\rho_* \obsq_{\mr{fp}}$, the algebra of spherical operators for the holomorphically twisted theory.  There is a map from the algebra of differential operators on~$B(\gg \otimes R_2)$ to the second page of the spectral sequence for the anti-diagonal filtration on $\rho_* \obsq_{\mr{fp}}$.
\end{conjecture}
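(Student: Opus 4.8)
The plan is to bootstrap from the free-theory computation and then propagate the Lie bracket of $\gg$ through the first differential of the anti-diagonal spectral sequence. First I would fix the local $L_\infty$ algebra of the holomorphic twist on $\oAA^2$, namely $\Omega^{0,\bu}(\oAA^2)\otimes\gg[\eps]$ with its $\dbar$-differential (so that $|\eps| = -1$ is the cotangent/BF direction), and record that its Dolbeault cohomology is computed by the residue algebra $R_2 = \mr H^\bu(\oAA^2, \mc O^{\mr{alg}}_{\oAA^2})$. Thus the cohomology of the local Lie algebra is $\gg[\eps]\otimes R_2$, and the residue pairing on $R_2$ together with the invariant pairing on $\gg$ equips this cohomology with the $(-1)$-shifted symplectic structure of a cotangent theory. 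This identifies the relevant base formal moduli space as $B(\gg\otimes R_2)$, whose functions are $\mc O(B(\gg\otimes R_2)) = \clie^\bu_{\mr{Lie}}(\gg[\eps]\otimes R_2)$, matching the algebra whose differential operators $D(B(\gg\otimes R_2))$ we wish to map in.

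With this identification in hand, I would invoke the Lemma stated just above the Conjecture, which furnishes a map $U_{\mr{fp}}(\mc H_{|\gg|[-1],2}) \to \gr\rho_*\obsq_{\mr{fp}}$ from the fiberwise-polynomial enveloping algebra of the Heisenberg algebra of \cite{GwilliamWilliamsKM}. Using the residue description of $\mc H_{V,2}$ from \emph{loc. cit.}, I would identify the source with the fiberwise-polynomial Weyl algebra $\mr{Weyl}_{\mr{fp}}$ on the symplectic \emph{linear} formal moduli space $T^*|B(\gg\otimes R_2)|$, i.e. with the differential operators $D(|B(\gg\otimes R_2)|)$. This algebra is precisely the first page $E_1$ of the anti-diagonal spectral sequence, since by construction $E_1$ computes the cohomology of the quantum observables of the underlying free theory (the abelian theory obtained by switching off the bracket), and on the noncompact $\oAA^2$ those free observables are governed by this $\beta\gamma$-system Weyl/Heisenberg algebra rather than being one-dimensional as in the closed case of Lemma~\ref{obs as det}.

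The heart of the argument is to match the first nontrivial differential $d_1$ on $E_1$ with the deformation that turns $D(|B(\gg\otimes R_2)|)$ into $D(B(\gg\otimes R_2))$. On the algebraic side, $D(B(\gg\otimes R_2))$ is of differential operator type in the sense of Definition~\ref{def DOT}: its anti-diagonal associated graded is exactly the linear Weyl algebra above, while switching on the Chevalley differential of $\gg\otimes R_2$ deforms the base functions $\mc O(|B(\gg\otimes R_2)|)$ to $\clie^\bu_{\mr{Lie}}(\gg[\eps]\otimes R_2)$. I would then trace the BV differential on $\rho_*\obsq_{\mr{fp}}$ through the anti-diagonal filtration of Definition~\ref{antidiag filt} and show that $d_1$ is this same Chevalley differential: the only contribution at this filtration level is the cubic term of the classical interaction (the bracket), the one-loop quantum correction $I^{\mr{q}}$ contributing nothing at first order because its linear piece---the tadpole---vanishes for reductive $\gg$ by Example~\ref{tadpole_example}. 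Matching $d_1$ with the structure differential of $D(B(\gg\otimes R_2))$ would let me define the map on generators (functions on the base and fiberwise-linear ``vector field'' operators), check that the Weyl-type relations are respected modulo higher filtration, and pass to cohomology to land in $E_2 = \mr H(E_1, d_1)$.

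The main obstacle will be the analytic behaviour of the pushforward $\rho_*$. Unlike the locally constant source algebra, $\rho_*\obsq_{\mr{fp}}$ is sensitive to the complex geometry of the thickened spheres---the ratio of interior to exterior radii---so the comparison with differential operators can only be expected to hold after passing to associated graded and then to $E_2$; this is exactly why the Conjecture asks for a \emph{map} to $E_2$ rather than an equivalence. Two further subtleties must be handled with care: the free-theory comparison of \cite{GwilliamWilliamsKM} is only a \emph{dense} inclusion, so I must work throughout with the completed fiberwise-polynomial model and control convergence in the base direction; and the precise matching of the residue pairing on $R_2$ with the Heisenberg cocycle underlying $\mc H_{|\gg|[-1],2}$ requires a careful Dolbeault-to-algebraic comparison on $\oAA^2$ of the sort anticipated in \cite{FHK}. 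Verifying that no higher anti-diagonal differential interferes with the chosen generators and relations before one reaches $E_2$ is the final and most delicate point.
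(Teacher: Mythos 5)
This statement is labelled a \emph{conjecture} in the paper, and the paper offers no proof of it: the surrounding text supplies only the motivating Lemma (the map $U_{\mr{fp}}(\mc{H}_{|\gg|[-1],2}) \to \gr \rho_* \obsq_{\mr{fp}}$ for the free/abelian case) together with an explicitly heuristic discussion (``We expect that deformation to have the following form\ldots''). Your proposal faithfully reassembles exactly those ingredients --- the residue algebra $R_2$, the anti-diagonal filtration of Definition~\ref{antidiag filt}, the vanishing of the tadpole for reductive $\gg$ as in Example~\ref{tadpole_example}, and the identification of the would-be $d_1$ with the Chevalley differential of $\gg \otimes R_2$ --- so as a reconstruction of the authors' intended strategy it is on target. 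But it does not close the conjecture, and one step in the middle is asserted too strongly.

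The gap is in the sentence ``This algebra is precisely the first page $E_1$.'' The associated graded of the anti-diagonal filtration is the fiberwise polynomial quantum observables of the \emph{abelian} theory on $\oAA^2$, pushed forward along $\rho$; the cited comparison with the dg Heisenberg algebra of \cite{GwilliamWilliamsKM} is only a \emph{dense inclusion} of factorization algebras, and moreover the Dolbeault cohomology of $\oAA^2$ is an analytic completion of $R_2$, not $R_2$ itself (the paper is careful to say only that $\gg[\eps]\otimes R_2$ \emph{maps to} the cohomology of the local Lie algebra). Consequently $E_1$ is not the linear Weyl algebra $D(|B(\gg\otimes R_2)|)$; one only has a map from it into $E_1$, and the pushforward genuinely depends on the conformal moduli of the thickened spheres, as the paper warns. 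This is precisely why the conjecture asks only for a map into $E_2$ rather than an equivalence. To make your plan into a proof you would still need to (i) show that the map out of the Weyl algebra intertwines $d_1$ with the Chevalley differential of $\gg\otimes R_2$ \emph{on the image}, not on all of $E_1$, which requires an explicit comparison of the Feynman-diagrammatic differential with the algebraic one after the Dolbeault-to-algebraic restriction; and (ii) rule out interference from the analytic complement of the image under $d_1$. You flag both points as ``delicate'' at the end, but they are the entire content of the conjecture, so the proposal should be read as a plan consistent with the paper's heuristic rather than a proof the paper lacks.
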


\subsection{B-type twists}

For simplicity we will focus our attention to the B-twist associated to the parameter $(1,1,0)$, so that we are working with the usual de Rham differential.  
With some care everything in this section could be generalized to a generic B-twist associated to the parameter $(t,t,0)$ for any non-zero $t$.

\begin{remark}
In fact it would also be possible to consider theories associated to the twist at parameter $(t_1,t_2,0)$ where $t_1$ and $t_2$ are both non-zero, but where the two values may differ.  Such twists only carry a de Rham action of $\SO(2) \times \SO(2)$ rather than $\SO(4)$, so the factorization homology can only be computed on products $\Sigma_1 \times \Sigma_2$ of surfaces.
~\hfill$\Diamond$\end{remark}

Let $G$ be a reductive algebraic group.
For any 4-dimensional smooth manifold $M$, consider a principal $G$-bundle $P \to M$ equipped with a flat connection $\nabla$. 
It induces a flat connection on the adjoint bundle $\ad(P) = P \times^G \gg$, which we also denote by $\nabla$.
The space of fields is then
\[
\mc{E}_P = \Omega^{\bu}(M, \ad(P))[1] \oplus \Omega^{\bu}(M, \ad(P)^*)[2],
\]
which is sections of a graded vector bundle $E_P \to M$,
and we equip this graded vector space with the differentials induced by $\nabla$.
The the cohomology of the space of fields is
\[
\mr H^\bu(\mc{E}_P) = \mr H^{\bu}_{\nabla}(M,\ad(P) )[1] \oplus \mr H^{\bu}_{\nabla}(M, \ad(P)^*)[2].
\]
When $P$ is the trivial bundle $G \times M$ with the associated flat connection,
this reduces to
\[
\mr H^\bu_{\nabla}(\mc{E}_{\mr{triv}}) = \mr H^{\bu}_{\mr{dR}}(M )\otimes \gg[1] \oplus \mr H^{\bu}_{\mr{dR}}(M) \otimes \gg^*[2].
\]
For a closed 4-manifold  (i.e., compact and without boundary), 
the cohomology $\mr H^\bu(\mc{E}_P)$ is finite.
Furthermore,  
there is a canonical pairing on~$\mr H^\bu(\mc{E}_P)$ of degree~-1 that is invariant for the shifted Lie bracket.

\begin{lemma}
Let $G$ be an abelian reductive algebraic group.
On a closed 4-manifold $M$ with principal $G$-bundle $P \to M$ and flat connection $\nabla$, 
the cohomology of the classical observables $\obscl(M)$ is isomorphic to~$\symc(\mr H^\bu(\mc{E}_P)^*)$,
which is a finitely-generated free graded-commutative algebra.
\end{lemma}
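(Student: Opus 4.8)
The plan is to follow the same strategy as the proof of Lemma~\ref{hol ab class obs} for the holomorphic twist, replacing the Dolbeault differential $\dbar_P$ by the flat connection $\nabla$ and Dolbeault cohomology by twisted de Rham cohomology. The crucial simplification, exactly as in the holomorphic case, is that because $G$ (and hence $\gg$) is abelian, both the Lie bracket on $\gg$ and the coadjoint action of $\gg$ on $\gg^*$ vanish. Consequently the local $L_\infty$ algebra $\mc{E}_P[-1]$ underlying the B-twist is an \emph{abelian} dg Lie algebra: its only nontrivial structure map is the linear differential induced by $\nabla$, and the classical interaction $I = \frac12 \int \beta \wedge [\alpha,\alpha]$ vanishes identically. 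Hence the classical BV differential $\d_{\mc E} + \{I,-\}$ reduces to the linear differential induced by $\nabla$, and $\obscl(M)$ is the completed symmetric algebra on $\Gamma(M, E_P^!)$ equipped with this purely linear differential.

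Concretely, following the cited results of \cite{Book1}, I would write the global classical observables as
\[
\obscl(M) = \prod_{n \in \NN} \Gamma\big(M^n, (E^!_P)^{\boxtimes n}\big)_{S_n},
\]
where $E^!_P = E_P^* \otimes \dens_M$, with differential induced by $\nabla$. Because $M$ is closed and $(\Omega^\bu(M,\ad P),\nabla)$ is an elliptic complex, its twisted de Rham cohomology is finite-dimensional in each degree; by Poincar\'e duality the cohomology of the complex $\Gamma(M, E^!_P)$ is canonically identified with the graded dual $\mr H^\bu(\mc{E}_P)^*$.

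The heart of the argument is then to commute cohomology past the symmetric-power and infinite-product functors. For each fixed $n$, the K\"unneth theorem gives
\[
\mr H^\bu\big(\Gamma(M^n, (E^!_P)^{\boxtimes n})\big) \iso \big(\mr H^\bu(\mc{E}_P)^*\big)^{\otimes n},
\]
and since we work in characteristic zero the functor of $S_n$-(co)invariants is exact, so taking cohomology commutes with projection onto the $S_n$-invariants. This yields $\mr H^\bu\big(\sym^n(\Gamma(M, E^!_P))\big) \iso \sym^n(\mr H^\bu(\mc{E}_P)^*)$ for each $n$. Assembling over all $n$ gives
\[
\mr H^\bu(\obscl(M)) \iso \prod_{n\in\NN}\sym^n\big(\mr H^\bu(\mc{E}_P)^*\big) = \symc\big(\mr H^\bu(\mc{E}_P)^*\big),
\]
and since $\mr H^\bu(\mc{E}_P)$ is finite-dimensional this is a finitely generated free graded-commutative algebra.

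The main obstacle I expect is the careful bookkeeping needed to interchange cohomology with the completed symmetric algebra, namely justifying that the infinite product over $n$ introduces no convergence or $\lim^1$ pathology. This is handled by working in the category of pro-cochain complexes, filtering $\obscl(M)$ by powers of the augmentation ideal (with respect to which it is complete) and invoking the finite-dimensionality of the cohomology in each symmetric weight, exactly as in Appendix~C of \cite{Book1}. Everything else is a direct transcription of the Dolbeault argument into the de Rham setting.
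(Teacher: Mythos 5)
Your proposal is correct and follows essentially the same route as the paper: the paper's proof simply declares the argument ``completely parallel'' to the holomorphic case (Lemma on classical observables for the holomorphic twist), replacing Dolbeault complexes with the de Rham complexes for $\nabla$, which is precisely the transcription you carry out, including the completed symmetric algebra presentation, the K\"unneth/ellipticity step, and the augmentation-ideal filtration in pro-cochain complexes from Appendix~C of \cite{Book1}.
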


We remark that this algebra also has a natural 1-shifted Poisson bracket,
arising from the canonical pairing on~$\mr H^\bu(\mc{E}_P)^*$.

\begin{proof}
This result is a special case of results from \cite{Book1};
see especially Chapter 5, Section 6, and Appendix~C.
The argument is completely parallel to that for Lemma~\ref{hol ab class obs},
everywhere replacing Dolbeault complexes with the de Rham complexes for the flat connection~$\nabla$.
\end{proof}

For $G$ nonabelian, the fields $\mc{E}_P$ have a {\em nontrivial} shifted dg Lie algebra structure,
by mixing the wedge product of differential forms with the Lie bracket on $\gg$.
This dg Lie algebra $\mc{E}_P[-1]$ has a natural moduli-theoretic interpretation.  We can identify it with the $(-1)$-shifted tangent complex to the derived stack $T^*[-1]\mr{Flat}_G(M)$, the shifted cotangent to the stack of flat $G$-bundles on $M$, at the point $(P, \nabla)$ (see also the discussion of stacks of this type in Section \ref{B_vacua_section}.)

The cohomology $\mr H^\bu(\mc{E}_P)[-1]$ carries the structure of a graded Lie algebra.

\begin{cor} 
On the closed 4-manifold $M$ with principal bundle $P \to M$ and flat connection $\nabla$, 
there is a canonical filtration on the classical observables $\obscl(M)$ by powers of the augmentation ideal. 
In the associated spectral sequence, the first page is isomorphic to~$\symc(\mr H^\bu(\mc{E}_P)^*)$.
\end{cor}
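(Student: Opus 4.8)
The plan is to mirror the proof of the analogous corollary for the holomorphic twist, replacing Dolbeault complexes by the de Rham complex of the flat connection $\nabla$ throughout, exactly as the abelian B-lemma above adapts Lemma~\ref{hol ab class obs}. The starting point is the identification, following \cite{Book1} (Chapter 5, Section 6, and Appendix C), of the classical observables $\obscl(M)$ with the Chevalley--Eilenberg cochains $\clie^\bu_{\mr{Lie}}(\mc{E}_P[-1])$ of the shifted fields, regarded as a (local) dg Lie algebra whose differential is induced by $\nabla$ and whose bracket is built from the wedge product of forms and the Lie bracket on $\gg$. As a graded vector space this is a completed symmetric algebra $\symc(\mc{E}_P^*)$, whose $n$-th symmetric power is modeled by $\Gamma(M^n, (E_P^!)^{\boxtimes n})_{S_n}$ exactly as in the abelian computation.

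First I would define the filtration: let $F^k\obscl(M)$ be the $k$-th power of the augmentation ideal, namely the completed product of the symmetric powers $\sym^n(\mc{E}_P^*)$ with $n \geq k$. This is manifestly a canonical, decreasing, multiplicative filtration, which establishes the first assertion. The key structural observation is that the Chevalley--Eilenberg differential splits as $d = d_1 + d_2$, where $d_1$ is the symmetric-degree-preserving piece induced by the linear differential $\nabla$ on $\mc{E}_P$, and $d_2$ is the piece dual to the binary bracket, which raises symmetric degree by one. Hence $d$ preserves $F^\bu$, and the induced differential on the associated graded $\gr^k = \sym^k(\mc{E}_P^*)$ is precisely $d_1$.

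Next I would compute the first page, which is the cohomology of $(\gr^\bu, d_1)$. On each symmetric power $d_1$ is the differential induced from $\nabla$, so I would invoke a K\"unneth argument --- available because $M$ is closed, so that $\mc{E}_P$ is an elliptic complex with finite-dimensional cohomology (here Hodge theory for $\nabla$, together with the analytic input of \cite{Book1}, Appendix~C, is what lets one pass between smooth sections, distributions, and their cohomology) --- to conclude that $\mr H^\bu(\sym^n(\mc{E}_P^*)) \cong \sym^n(\mr H^\bu(\mc{E}_P^*))$. Using finite-dimensionality once more to identify $\mr H^\bu(\mc{E}_P^*)$ with $\mr H^\bu(\mc{E}_P)^*$, and summing over $n$, yields that the first page is $\symc(\mr H^\bu(\mc{E}_P)^*)$, a finitely generated free graded-commutative algebra, as claimed.

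The main obstacle is analytic bookkeeping rather than anything conceptual: one must track the completed symmetric algebra versus the direct sum, and the density-twisted dual $E_P^!$ appearing in the distributional model of the observables, in order to justify that taking cohomology commutes with forming symmetric powers. All of this is handled by the same appeal to \cite{Book1} (Appendix~C) used in the abelian lemmas above; once the K\"unneth isomorphism is in hand, the identification of the first page is immediate. Finally, I would note (as recorded in the remark following the statement) that the next differential in the spectral sequence is the Chevalley differential for the graded Lie algebra structure on $\mr H^\bu(\mc{E}_P)$, which arises precisely from the degree-raising component $d_2$.
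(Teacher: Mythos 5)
Your argument is correct and is essentially the one the paper intends: the corollary is meant to follow from the preceding abelian lemma by observing that the Chevalley--Eilenberg differential decomposes into a symmetric-degree-preserving part (the linear differential $\nabla$) and a degree-raising part (dual to the bracket), so that the associated graded of the augmentation-ideal filtration is the complex of observables of the linearized theory, whose cohomology the lemma computes to be $\symc(\mr H^\bu(\mc{E}_P)^*)$. Your appeal to the elliptic/K\"unneth machinery of \cite{Book1}, Appendix~C, and your identification of the next differential with the Chevalley differential for $\mr H^\bu(\mc{E}_P)[-1]$ both match the paper's (unwritten) reasoning and the remark following the statement.
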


The next interesting differential in the spectral sequence is the Chevalley differential arising from the Lie algebra structure of~$\mr H^\bu(\mc{E}_P)$.

In some cases, we have a stronger result.
For $M$ a closed 4-manifold that is formal in the sense of rational homotopy theory,
$\mc{E}_P[-1]$ is quasi-isomorphic to its cohomology {\em as a dg Lie algebra},
so that we have the following.

\begin{lemma}
For $M$ a formal closed 4-manifold, 
the classical observables $\obscl(M)$ are quasi-isomorphic to $\clie^\bu_{\mr{Lie}}(\mr H^\bu(\mc{E}_P)[-1])$ as dg commutative algebras.
\end{lemma}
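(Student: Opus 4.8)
The plan is to reduce the statement to a formality statement at the level of the local dg Lie algebra and then transport it through the Chevalley--Eilenberg cochains functor, exactly as in the proof of Lemma~\ref{hol ab class obs}. Recall that the classical observables are computed as $\obscl(M) \iso \clie^\bu_{\mr{Lie}}(\mc E_P[-1])$, where $\mc E_P[-1]$ is the shifted local $L_\infty$ (here honestly dg Lie) algebra of the theory. Thus it suffices to prove that $\mc E_P[-1]$ is \emph{formal as a dg Lie algebra}, i.e.\ quasi-isomorphic to its cohomology $\mr H^\bu(\mc E_P)[-1]$ through a zig-zag of dg Lie algebra quasi-isomorphisms; applying $\clie^\bu_{\mr{Lie}}$ then yields the desired quasi-isomorphism of dg commutative algebras.

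First I would record the structural observation that $\mc E_P[-1]$ is a \emph{current Lie algebra}. For the trivial bundle, $\mc E_{\mr{triv}}[-1] \iso \Omega^\bu_{\mr{dR}}(M) \otimes \mf L$, where $\mf L = \gg \ltimes \gg^*[1]$ is the finite-dimensional graded Lie algebra with $\gg$ acting on $\gg^*[1]$ by the coadjoint action, the differential acts through the de Rham differential on the first factor, and the bracket combines the wedge product with the bracket on $\mf L$. (For a general flat $P$ one replaces $\Omega^\bu_{\mr{dR}}(M)$ by the twisted de Rham complex $\Omega^\bu(M,\ad(P))$; see the obstacle below.)

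Next I would invoke the hypothesis that $M$ is formal: there is a zig-zag of quasi-isomorphisms of commutative dg algebras connecting $\Omega^\bu_{\mr{dR}}(M)$ to its cohomology $\mr H^\bu_{\mr{dR}}(M)$ (with zero differential), which is the content of \cite{DGMS} in the K\"ahler case and the defining property of formality in general. The key elementary point is that tensoring with the fixed finite-dimensional graded Lie algebra $\mf L$ turns each such cdga quasi-isomorphism $A \to B$ into a quasi-isomorphism of dg Lie algebras $A \otimes \mf L \to B \otimes \mf L$: it is a quasi-isomorphism of underlying complexes because $\mf L$ is a fixed finite-dimensional graded vector space over a field, and it is manifestly a map of Lie algebras because the bracket is built functorially from the commutative multiplication and the bracket on $\mf L$. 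Composing the tensored zig-zag exhibits $\Omega^\bu_{\mr{dR}}(M)\otimes\mf L$ as quasi-isomorphic to $\mr H^\bu_{\mr{dR}}(M)\otimes\mf L \iso \mr H^\bu(\mc E_P)[-1]$ as dg Lie algebras, which is precisely the formality of $\mc E_P[-1]$ we wanted.

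Finally I would apply $\clie^\bu_{\mr{Lie}}$ to the zig-zag. As in Lemma~\ref{hol ab class obs}, one filters the Chevalley--Eilenberg cochains by powers of the augmentation ideal and works in pro-cochain complexes, where the completed symmetric algebra functor sends quasi-isomorphisms of dg Lie algebras to quasi-isomorphisms of filtered dg commutative algebras; this gives $\obscl(M) \iso \clie^\bu_{\mr{Lie}}(\mr H^\bu(\mc E_P)[-1])$. The main obstacle is the passage from formality of the cdga $\Omega^\bu_{\mr{dR}}(M)$ to formality of the de Rham complex \emph{with coefficients} in the flat bundle $\ad(P)$ when $P$ is nontrivial: ordinary rational formality controls only the constant coefficient system, so for a genuine local system one needs the twisted de Rham complex to be formal as a dg module over the formal cdga, which does not follow automatically. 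For the trivial bundle (the case relevant to the quantization produced in Section~\ref{anomaly_section}) this issue does not arise and the argument above is complete.
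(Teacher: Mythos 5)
Your argument is correct and is essentially the paper's own (very terse) proof spelled out: the paper simply asserts that formality of $M$ implies $\mc E_P[-1]$ is quasi-isomorphic to its cohomology as a dg Lie algebra, and your tensoring of the cdga formality zig-zag with the finite-dimensional graded Lie algebra $\gg \ltimes \gg^*[1]$, followed by applying $\clie^\bu_{\mr{Lie}}$ in the completed/filtered setting as in the abelian lemma, is the standard way to justify that assertion. The obstacle you flag for nontrivial flat $P$ is genuine and is not addressed by the paper either --- the lemma as stated covers arbitrary flat bundles, but formality of $M$ with constant coefficients does not by itself yield formality of the twisted de Rham complex $\Omega^\bu(M,\ad(P))$, so your restriction to the trivial bundle (the case actually used for the quantization) is the honest scope of the argument.
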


The quantum situation for the B-twist is nice:
thanks to Corollary~\ref{framed E4}, 
our quantization works on any oriented 4-manifold.
Indeed, by Lemma~\ref{obs as det}, we have the following.

\begin{prop}  \label{B_4d_fact_hom_prop}
Let $M$ be an oriented closed 4-manifold, equipped with the trivial principal $G$-bundle.
Then the cohomology of the fiberwise polynomial quantum observables $\obsq_{\mr{fp}}(M)$ is isomorphic to 
\[
\det \left(\mr H^{\bu}_{\mr{dR}}(M )\otimes \gg\right)[d_M],
\]
where $d_M$ modulo 2 agrees with the Euler characteristic.
\end{prop}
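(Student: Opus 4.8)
The plan is to deduce Proposition~\ref{B_4d_fact_hom_prop} as a direct application of Lemma~\ref{obs as det}, so that essentially all of the work lies in checking that its hypotheses hold for the B-twist on an arbitrary oriented closed $4$-manifold $M$. First I would record the cotangent structure. For the trivial $G$-bundle the B-twisted fields are $\mc E = \Omega^\bu(M, \gg)[1] \oplus \Omega^\bu(M, \gg^*)[2]$, which is the cotangent theory to the local $L_\infty$ algebra $\Omega^\bu(M) \otimes \gg$ carrying the de Rham differential together with the bracket mixing the wedge product of forms with the Lie bracket on $\gg$. The underlying geometric elliptic complex $L(M)$ entering Lemma~\ref{obs as det} is the de Rham complex $(\Omega^\bu(M), \d_{\mr{dR}})$, so that $\mr H^\bu(L(M), \ell_1) = \mr H^\bu_{\mr{dR}}(M)$, which is finite-dimensional because $M$ is closed.

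The substantive hypothesis to verify is that this cotangent theory admits an exact one-loop quantization on $M$. One-loop exactness is automatic: the theory is of cotangent type, so by Corollary~\ref{cotangent_cor} only diagrams with at most one loop contribute. The genuine content is the existence of a quantization on the curved manifold $M$ rather than merely on $\RR^4$, and this is exactly what Corollary~\ref{framed E4} provides. The local observables of the B-twist on $\RR^4$ form a framed $\bb E_4$-algebra (whose existence rests on the vanishing of the framing anomaly, Theorem~\ref{KW_family_framing_anomaly_thm}), so by Scheimbauer's construction of an oriented extended $4$-dimensional TQFT the factorization homology $\int_M \obsq$ is defined for every oriented closed $M$. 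This is precisely the assertion that our quantization is defined on $M$, whence both hypotheses of Lemma~\ref{obs as det} are met.

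With the hypotheses in place, feeding $L = (\Omega^\bu(M), \d_{\mr{dR}})$ into Lemma~\ref{obs as det} immediately gives
\[
\mr H^\bu(\obsq_{\mr{fp}}(M)) \iso \det\!\left(\mr H^\bu_{\mr{dR}}(M)\otimes\gg\right)[d_M],
\]
with $d_M$ congruent to the Euler characteristic modulo $2$ by the explicit shift recorded after Lemma~\ref{obs as det}. I expect the only genuinely delicate point to be the one flagged in the previous paragraph: confirming that the quantization produced on $M$ by the framed $\bb E_4$/factorization-homology machinery is the same cotangent-type BV quantization to which Lemma~\ref{obs as det} applies, so that the one-loop exactness and the determinant-line computation of that lemma carry over verbatim. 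Everything else is bookkeeping of cohomological degrees.
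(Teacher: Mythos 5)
Your proposal is correct and takes essentially the same route as the paper, which likewise deduces the proposition by combining Corollary~\ref{framed E4} (extending the quantization to arbitrary oriented closed 4-manifolds via the vanishing framing anomaly) with Lemma~\ref{obs as det} applied to the de Rham complex. The compatibility issue you flag at the end --- that the quantization furnished by the framed $\bb E_4$ machinery is the cotangent-type one to which Lemma~\ref{obs as det} applies --- is also left implicit in the paper's treatment.
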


\begin{remark}
By the same argument as in Remark \ref{fact_hom_near_a_vacuum_rmk}, we can identify the cohomology of the fiberwise polynomial quantum observables after symmetry breaking to a choice of vacuum $[x] \in [\gg^*/G]$ with 
\[
\det \left(\mr H^{\bu}_{\mr{dR}}(M )\otimes \gg_x\right)[d_M],
\]
where $\gg_x$ is the centralizer of $x$.
~\hfill$\Diamond$\end{remark}

Now take $M$ to be a closed 3-manifold and consider the theory on $M \times \RR$.
In other words, we compactify our 4-dimensional theory along $M$ to produce a 1-dimensional theory in terms of the $\RR$ coordinate (or ``time direction'').
This theory is a one-dimensional topological $\sigma$-model of AKSZ-type, 
where the target space is, in essence, $T^*\Flat_G(M)$, 
where $\Flat_G(M)$ denotes the derived stack of flat $G$-connections on $M$.  In practice, we take a formal moduli space as the target,
typically the formal neighborhood of the trivial $G$-bundle equipped with the trivial connection.

The observables of this compactified theory should offer a deformation quantization of this symplectic derived stack.
In more explicit terms, let $\pi \colon M \times \RR \to \RR$ denote the projection map,
and consider the pushforward factorization algebra $\pi_* \obsq$ on $\RR$.
It is locally constant, and hence it corresponds to an $\EE_1$-algebra,
thanks to the equivalence of $\infty$-categories between $\EE_1$-algebras and locally constant factorization algebras on $\RR$~\cite{LurieHA, AyalaFrancis}.
The $\EE_1$-algebra associated to the pushforwards has a clear conceptual meaning, as we will now see.

Let $\Flat_G(M)^\wedge_0$ denote the formal neighborhood of the trivial flat bundle inside all flat $G$-bundles on~$M$.

\begin{prop} \label{B_3d_fact_hom_prop}
For a reductive algebraic group $G$ with Lie algebra $\gg$,
consider the fiberwise polynomial observables of the B-twisted theory on $M \times \RR$.
The pushforward observables $\pi_* \obscl_{\mr{fp}}$ and $\pi_*\obscl_{\mr{fp}}$ are locally constant on $\RR$ and hence correspond to $\EE_1$-algebras:
\begin{itemize}
\item the classical observables $\pi_* \obscl_{\mr{fp}}$ correspond to $\mc{O}_{\mr{fp}}(T^*\Flat_G(M)^\wedge_0)$, and
\item the quantum observables $\pi_* \obsq_{\mr{fp}}$ correspond to an algebra of differential operator type on~$\Flat_G(M)^\wedge_0$,
in the sense of Definition~\ref{def DOT}.
\end{itemize}
\end{prop}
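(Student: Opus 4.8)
The plan is to first establish that both pushforwards are locally constant factorization algebras on $\RR$, and then to identify the resulting $\EE_1$-algebras by matching the cotangent structure of the B-twist with the algebraic models of Definition~\ref{def DOT}. Since the B-twist at $(1,1,0)$ is topological, its factorization algebras $\obscl$ and $\obsq$ on $M \times \RR$ are locally constant (this local constancy is the input to Corollary~\ref{framed E4}). Pushing forward along $\pi$ sends an interval $I \subset \RR$ to the slab observables on $M \times I$, and the factorization maps for nested intervals $I \subset I'$ are quasi-isomorphisms because $M \times I \hookrightarrow M \times I'$ is a homotopy equivalence compatible with the topological structure. Hence $\pi_* \obscl_{\mr{fp}}$ and $\pi_* \obsq_{\mr{fp}}$ are locally constant on $\RR$, and by the equivalence between such factorization algebras and $\EE_1$-algebras \cite{LurieHA, AyalaFrancis} each determines an $\EE_1$-algebra.

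For the classical bullet, write $\mc L = \Omega^\bu(M \times \RR, \gg)$ for the local $L_\infty$ algebra whose cotangent theory is the B-twist, and set $L = \Omega^\bu(M, \gg)$, so that $\Flat_G(M)^\wedge_0 = BL$. I would evaluate the fiberwise polynomial classical observables on a slab and use the deformation retraction $\Omega^\bu(M \times I, \gg) \simeq L$; the cotangent structure then identifies the resulting $\EE_1$-algebra with $\clie^\bu_{\mr{Lie}}(L, \sym(L[1]))$, which is precisely $\mc O_{\mr{fp}}(T^*\Flat_G(M)^\wedge_0)$, with its commutative product induced by disjointness of the two intervals.

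For the quantum bullet, set $A := \pi_* \obsq_{\mr{fp}}$ and equip it with the fiberwise and anti-diagonal filtrations inherited from $\obsq_{\mr{fp}}$ (both survive the pushforward, which only restricts supports; both are exhaustive and separated by our restriction to fiberwise polynomial observables). For the fiberwise filtration, the BV Laplacian $\Delta$ and the quantum interaction $\{I^\mr{q},-\}$ strictly lower fiberwise weight, so $\gr_{\mr{fib}} A$ retains only the classical differential, and the classical computation above gives $\gr_{\mr{fib}} A \simeq \mc O_{\mr{fp}}(T^*\mc X)$, graded by symmetric powers of vector fields, as required. For the anti-diagonal filtration I would compute $\gr_{\overline{\Delta}} A$ by reducing to the underlying free (abelian) theory, whose associated graded differential is $\{I_2,-\} + \Delta$: using the slab retraction this is the free one-dimensional cotangent theory on $T^*[-1]|\mc X|$, whose pushforward to $\RR$ is the $\EE_1$-Weyl algebra by Chapter~4 of \cite{Book1} (the precise statements are Lemma~\ref{abelian B def} and Corollary~\ref{nontriv def}). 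Thus $\gr_{\overline{\Delta}} A \simeq \mr{Weyl}_{\mr{fp}}(T^*|\mc X|) = D(|\mc X|)$, exhibiting $A$ as an algebra of differential operator type in the sense of Definition~\ref{def DOT}.

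The delicate step is the anti-diagonal computation: one must check that the free quantum observables pushed forward to $\RR$ reproduce $D(|\mc X|)$ \emph{as an associative $\EE_1$-algebra}, matching the product coming from concatenation of intervals with the Weyl product, rather than merely recovering the correct underlying complex. This is where the one-loop exactness of the quantization and the sensitivity of the pushforward to the geometry of the slabs must be reconciled with the purely algebraic differential operators; the free-theory results of \cite{Book1} are exactly what make this identification rigorous, and the free-to-interacting philosophy of Section~\ref{sec filtrations} is what lets one bootstrap from the abelian case to the reductive $\gg$.
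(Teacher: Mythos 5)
Your proposal is correct and follows essentially the same route as the paper: local constancy via the K\"unneth/slab retraction, identification of the classical pushforward with $\clie^\bu_{\mr{Lie}}(L,\sym(L[1]))$ for $L = \Omega^\bu(M,\gg)$, and reduction of the quantum pushforward to the free abelian theory via the anti-diagonal filtration, where Chapter 4 of \cite{Book1} (together with \cite{Knudsen}) identifies the result as a Weyl algebra. The only difference is cosmetic --- you explicitly verify the fiberwise-filtration half of Definition~\ref{def DOT}, which the paper leaves implicit in its setup of the fiberwise polynomial observables.
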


Before proving this proposition, we will provide a very concrete characterization of the graded associative algebra associated to $\mr H^\bu \pi_* \obsq$,
the strictly locally constant factorization algebra on~$\RR$.
The reader may find thinking through these very explicit constructions useful for appreciating what sort of thing these methods can produce.

Let us start by considering the case of the abelian gauge group $\GG_m$.
In this case, we are working with the derived moduli of flat line bundles --- a kind of derived version of a Jacobian --- 
and it is easy to show that the tangent complex to the trivial flat bundle is modeled by $\Omega^\bu(M)[1] = B(\Omega^\bu(M))$, the shifted de Rham complex.
There are no nontrivial $L_\infty$ brackets, so the formal neighborhood $\Flat_G(M)^\wedge_0$ of the trivial bundle, 
is $\widehat{\Omega^\bu(M)[1]}$, viewed as a formal moduli space.

Taking cohomology simplifies things, so that we get a particularly concrete consequence of the proposition.
If the reader wants a very explicit discussion of how to produce representatives of observables and how the Weyl algebra arises,
we direct them to  Chapter 4 of~\cite{Book1},
where this kind of situation is treated in depth.

\begin{corollary}\label{abelian B def}
For the abelian group $\GG_m$,
consider the fiberwise polynomial observables of the B-twisted theory on $M \times \RR$.
The pushforward observables $\pi_* \obscl_{\mr{fp}}$ and $\pi_*\obsq_{\mr{fp}}$ are locally constant on $\RR$,
and their cohomologies correspond to graded associative algebras:
\begin{itemize}
\item $\mr H^\bu \pi_* \obscl_{\mr{fp}}$ corresponds to $\mc{O}_{\mr{fp}}(T^* \widehat{(\mr H^\bu(M)[1])})$,  and
\item $\mr H^\bu \pi_* \obsq_{\mr{fp}}$ corresponds to differential operators on the formal moduli space~$\widehat{\mr H^\bu(M)[1]}$.
\end{itemize}
\end{corollary}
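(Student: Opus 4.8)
The plan is to deduce the corollary from the general nonabelian statement of Proposition~\ref{B_3d_fact_hom_prop} by exploiting the drastic simplifications that occur when $\gg$ is abelian. The only source of interaction in the B-twisted theory is the term $\frac12\int \beta \wedge [\alpha,\alpha]$, which is built from the Lie bracket on $\gg$; when $G = \GG_m$ this bracket vanishes, so the theory on $M \times \RR$ is genuinely \emph{free} (and since no trivalent vertices occur, there is no quantum interaction term either). Consequently the formal moduli space $\Flat_{\GG_m}(M)^\wedge_0$ appearing in Proposition~\ref{B_3d_fact_hom_prop} is linear: its tangent complex is the shifted de Rham complex $\Omega^\bu(M)[1]$, with no higher $L_\infty$ brackets, so as a formal moduli space it is simply $\widehat{\Omega^\bu(M)[1]}$, as recorded in the discussion preceding the statement.

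First I would specialize Proposition~\ref{B_3d_fact_hom_prop} to obtain that $\pi_* \obscl_{\mr{fp}}$ corresponds to $\mc{O}_{\mr{fp}}(T^*\widehat{\Omega^\bu(M)[1]})$ and that $\pi_*\obsq_{\mr{fp}}$ corresponds to an algebra of differential operator type on $\widehat{\Omega^\bu(M)[1]}$. Since the abelian theory is free, these identifications can alternatively be read off directly from the analysis of free cotangent BV theories on $\RR$ in Chapter~4 of \cite{Book1}: there the classical observables of a free theory are the fiberwise polynomial functions on the shifted cotangent space of the fields, while the quantum observables assemble into the corresponding fiberwise polynomial Weyl algebra, i.e.\ the algebra of differential operators. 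Local constancy on $\RR$ then presents both as $\EE_1$-algebras.

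The remaining step is to pass to cohomology and identify $\mr H^\bu \pi_* \obscl_{\mr{fp}}$ and $\mr H^\bu \pi_* \obsq_{\mr{fp}}$ with the stated answers. Here I would use that on a closed manifold the de Rham complex is formal, so there is a quasi-isomorphism $\Omega^\bu(M) \simeq \mr H^\bu(M)$ of the relevant (here abelian) $L_\infty$ structures; combined with Hodge theory this replaces the infinite-dimensional linear formal moduli space $\widehat{\Omega^\bu(M)[1]}$ by the finite-dimensional $\widehat{\mr H^\bu(M)[1]}$. Taking cohomology then yields $\mc{O}_{\mr{fp}}(T^*\widehat{\mr H^\bu(M)[1]})$ in the classical case and differential operators on $\widehat{\mr H^\bu(M)[1]}$ in the quantum case.

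The main obstacle I anticipate is not any single computation but rather verifying that taking cohomology is compatible with the $\EE_1$-algebra (equivalently, associative-algebra) structure, so that the cohomology of the pushforward factorization algebra really \emph{is} the graded associative algebra described, rather than merely agreeing with it as a graded vector space. For the classical observables this is immediate from commutativity, but for the quantum observables one must check that the Weyl/differential-operator product descends correctly through the homotopy transfer to cohomology; this is where I would lean most heavily on the explicit representative-level constructions of Chapter~4 of \cite{Book1}, which produce the deformation quantization at the chain level and hence control its behavior under passage to cohomology.
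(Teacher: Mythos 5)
Your proposal is correct and follows essentially the same route as the paper: the corollary is obtained by specializing Proposition \ref{B_3d_fact_hom_prop} to the abelian case, where the absence of a Lie bracket makes the theory free and renders $\Flat_{\GG_m}(M)^\wedge_0$ the linear formal moduli space $\widehat{\Omega^\bu(M)[1]}$, after which one passes to cohomology and invokes the explicit Heisenberg/Weyl-algebra description of free cotangent theories from Chapter 4 of \cite{Book1}. Your concern about transferring the $\EE_1$-structure to cohomology is exactly what that reference (together with the identification of the quantum observables as the enveloping factorization algebra of a Heisenberg dg Lie algebra, whose pairing descends to Poincar\'e duality on $\mr H^\bu(M)$) is used to resolve in the paper.
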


For $G$ nonabelian, we obtain a related statement using the spectral sequence of the anti-diagonal filtration.

\begin{corollary}\label{nontriv def}
For a reductive algebraic group $G$ with Lie algebra $\gg$,
consider the fiberwise polynomial observables of the B-twisted theory on $M \times \RR$.
The pushforward observables $\pi_* \obscl_{\mr{fp}}$ and $\pi_*\obscl_{\mr{fp}}$ are locally constant on $\RR$.
Moreover, the spectral sequence of the anti-diagonal filtration produces a sequence of locally constant factorization algebras.

On the first page and ignoring the differential, these factorization algebras correspond to the graded associative algebras:
\begin{itemize}
\item $\mr H^\bu \pi_* \obscl_{\mr{fp}}$ corresponds to $\mc{O}_{\mr{fp}}(T^* B(|\gg| \otimes H^\bu(M)))$,  and
\item $\mr H^\bu \pi_* \obsq_{\mr{fp}}$ corresponds to differential operators on $B(|\gg| \otimes \mr H^\bu(M))$,
\end{itemize}
where $|\gg|$ denotes the vector space underlying~$\gg$.
\end{corollary}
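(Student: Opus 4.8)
The plan is to prove this corollary by reducing to the free (abelian) theory via the anti-diagonal filtration of Definition~\ref{antidiag filt}, and then running the abelian computation of Corollary~\ref{abelian B def} with $\gg$ replaced by its underlying vector space. Local constancy of $\pi_* \obscl_{\mr{fp}}$ and $\pi_* \obsq_{\mr{fp}}$ on $\RR$ is already furnished by Proposition~\ref{B_3d_fact_hom_prop}; since the anti-diagonal filtration (and, classically, the augmentation-ideal filtration) is by factorization subalgebras, each page of the associated spectral sequence is again a locally constant factorization algebra on $\RR$, hence corresponds to a graded associative ($\bb E_1$-)algebra. So the real content of the statement is the identification of the first page, with its differential set aside.

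First I would unwind that first page. By the discussion following Definition~\ref{antidiag filt}, the spectral sequence of the anti-diagonal filtration begins by computing the cohomology of the fiberwise polynomial observables of the \emph{free} quantum theory, with differential the quadratic part $\{I_2 + I^{\mr q}_2, -\} + \Delta$. Turning off the cubic interaction amounts to turning off the Lie bracket on $\gg$, so the free theory is precisely the abelian B-twist with gauge vector space $|\gg|$. The classical side is analogous: the first page of the augmentation filtration on $\pi_*\obscl_{\mr{fp}}$ is the free classical observables of this same abelian theory.

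Next I would apply the abelian computation. The tangent complex to the trivial flat bundle for the abelian gauge datum $|\gg|$ is the shifted de Rham complex $\Omega^\bu(M) \otimes |\gg|[1]$. Because $M$ is a closed $3$-manifold, taking cohomology replaces this by $\mr H^\bu(M) \otimes |\gg|[1]$, which is exactly the tangent complex of the formal moduli space $B(|\gg| \otimes \mr H^\bu(M))$, viewing $|\gg| \otimes \mr H^\bu(M)$ as an abelian graded Lie algebra. Running the argument of Corollary~\ref{abelian B def} verbatim --- now tensored with $|\gg|$ rather than working with a one-dimensional gauge space --- identifies $\mr H^\bu \pi_* \obscl_{\mr{fp}}$ with $\mc{O}_{\mr{fp}}(T^* B(|\gg| \otimes \mr H^\bu(M)))$ and $\mr H^\bu \pi_* \obsq_{\mr{fp}}$ with the algebra of differential operators on $B(|\gg| \otimes \mr H^\bu(M))$, the latter being the $\bb E_1$-Weyl algebra attached to the symplectic formal moduli space, exactly as in Definition~\ref{def DOT}.

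I expect the main obstacle to be the bookkeeping needed to confirm that the first page of the anti-diagonal spectral sequence genuinely reproduces the free \emph{abelian} observables, rather than some twisted variant. One must check that the only terms surviving to the first page are those of fiberwise weight preserved by $\Delta$ and by the quadratic action, and that the filtration differential is well-defined at all --- which, as noted after Definition~\ref{antidiag filt}, requires the vanishing of the linear quantum interaction $I^{\mr q}_1$. The latter holds here because that term is the weight of a tadpole diagram, which vanishes for reductive $\gg$ by Example~\ref{tadpole_example}. Once this is in place, the identification of the free quantum observables with a Weyl algebra is the computation of Chapter~4 of~\cite{Book1}, and the higher differentials --- which encode the Lie bracket on $\gg$, hence the nonabelian deformation --- are precisely what the statement sets aside by restricting to the first page.
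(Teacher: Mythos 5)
Your proposal is correct and follows essentially the same route as the paper: the paper establishes this corollary inside the proof of Proposition~\ref{B_3d_fact_hom_prop}, by passing to the associated graded of the anti-diagonal filtration, recognizing it as the free abelian B-twist with gauge vector space $|\gg|$, and identifying the resulting quantum observables with a Weyl algebra via the enveloping factorization algebra of the Heisenberg dg Lie algebra (the K\"unneth theorem plus the Chapter~4 machinery of \cite{Book1} and \cite{Knudsen}). Your additional check that the filtration differential is well-defined because the tadpole contribution $I^{\mr q}_1$ vanishes for reductive $\gg$ is a point the paper makes only in the remark preceding Definition~\ref{antidiag filt}, so it is good that you flagged it explicitly.
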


The differential on the first page is determined by the cubic term $I_3^{\mr{cl}}$ of the {\it classical} interaction,
which arises from the Lie bracket on $\Omega^{\bu}(M \times \RR) \otimes \gg[\eps]$.
Hence for the classical observables, the first page of the spectral sequence corresponds to the dg associative algebra 
\[
\mc{O}_{\mr{fp}}(T^* B(\gg \otimes \mr H^\bu(M))) \cong \clie^\bu_{\mr{Lie}}(\gg \otimes \mr H^\bu(M) \ltimes (\gg^*\otimes \mr H^\bu(M))[-2] ).
\]
For the quantum observables, the first page of the spectral sequence corresponds to the dg associative algebra of differential operators on~$B(\gg \otimes \mr H^\bu(M))$.
Later pages would yield further deformations.

\begin{proof}[Proof of Proposition \ref{B_3d_fact_hom_prop}]
We begin with the classical observables.
The pushforward $\pi_* \obscl$ is the factorization algebra on $\RR$ that assigns
\[
\clie^\bu_{\mr{Lie}}(\Omega^\bu(I \times M) \otimes \gg[\eps])
\]
to each open set $I \subset \RR$.
Note that the K\"unneth theorem assures us that the inclusion
\[
\Omega^\bu(I) \otimes \Omega^\bu(M) \otimes \gg[\eps] \hookrightarrow \Omega^\bu(I \times M) \otimes \gg[\eps]
\]
is a quasi-isomorphism of dg Lie algebras for every open $I$.
Hence we have a quasi-isomorphism of factorization algebras
\[
\pi_* \obscl \xto{\simeq} \clie^\bu_{\mr{Lie}}(\Omega^\bu(I) \otimes (\Omega^\bu(M) \otimes \gg[\eps]))
\]
Observe that the dg Lie algebra $\Omega^\bu(M) \otimes \gg$ models $\Flat_G(M)^\wedge_0$,
and so 
\[
\clie^\bu_{\mr{Lie}}(\Omega^\bu(I_0) \otimes (\Omega^\bu(M) \otimes \gg[\eps])) \simeq \mc{O}(T^*\Flat_G(M)^\wedge_0)
\]
when $I_0$ is non-empty and connected (i.e. a single interval).
By taking the fiberwise polynomial observables, 
we find that $\pi_* \obscl_{\mr{fp}}$ maps quasi-isomorphically to a factorization algebra that is locally constant (in the dg sense) and manifestly quasi-isomorphic to $\mc{O}_{\mr{fp}}(T^*\Flat_G(M)^\wedge_0)$, as claimed.

(If one wants, one can go farther and give a map to a strictly locally constant factorization algebra. 
Since every open $I$ is a disjoint union of intervals and hence $\CC^{\pi_0(I)} \hookrightarrow \Omega^\bu(I)$ is a quasi-isomorphic by the Poincar\'e lemma,
we have
\[
\CC^{\oplus \pi_0(I)} \otimes \Omega^\bu(M) \otimes \gg \hookrightarrow \Omega^\bu(I \times M) \otimes \gg
\]
is a quasi-isomorphism.
Thus
\[
\pi_* \obscl_{\mr{fp}}(I) \xto{\simeq} \clie^\bu_{\mr{Lie, fp}}(\CC^{\oplus \pi_0(I)} \otimes \Omega^\bu(M) \otimes \gg[\eps]) \cong \mc{O}_{\mr{fp}}(T^*\Flat_G(M)^\wedge_0)^{\otimes \pi_0(I)}
\]
and this map is functorial in the open~$I$.)

We now turn to the quantum observables.
We have already seen that we can equip these observables with an anti-diagonal filtration.
Taking the associated graded, we obtain the fiberwise polynomial observables of a free cotangent theory, namely the B-twisted theory with abelian Lie algebra $|\gg|$.
Consider, for a moment, the polynomial observables for this free theory (i.e., before completing along the base direction $B|\gg|$ and hence getting the fiberwise polynomial observables).
It is the central result of Chapter 4 of \cite{Book1} that the observables of any free theory are an enveloping factorization algebra of a Heisenberg dg Lie algebra.
In this case, it is the Heisenberg dg Lie algebra arising by extending $(\Omega^\bu_c(\RR \times M) \otimes |g|[\eps])[1]$ via its pairing.
This dg Lie algebra is quasi-isomorphic to the Heisenberg dg Lie algebra for $\Omega^\bu_c(\RR) \otimes  \Omega^\bu(M) \otimes |g|[\eps]$,
by the K\"unneth theorem.
Let us denote it by $\mc{H}_{|\gg|,M}$.
Hence there is a quasi-isomorphism
\[
\clie_\bu^{\mr{Lie}}(\mc{H}_{|\gg|,M}) \xto{\simeq} \pi_* \obsq_{|\gg|}
\]
into the observables for the abelian B-twisted theory.
By \cite{Knudsen}, the left hand side corresponds to the enveloping algebra of the (ordinary, unshifted) Heisenberg Lie algebra $(\Omega^\bu(M) \otimes |g|[\eps])[1] \oplus \hbar \CC$.
This implies that $\pi_* \obsq_{|\gg|}$ models a Weyl algebra (or polynomial differential operators).
To get the claim about fiberwise polynomial observables, we simply complete in the ``base direction,''
which we can do at every stage of the argument just given.
\end{proof}

\subsection{A-type twists}

The situation here is, at first inspection, rather simple.
For a classical A-twisted theory expanded around some point $[x] \in [\gg^*/G]$, the BV complex of fields is acyclic,
so that the space of solutions is just an isolated point.
The observables are quasi-isomorphic to $\CC$, the algebra of functions on a point.
There are no interesting $\bb E_4$-algebra deformations here!

On the other hand, we are working over a base stack $[\gg^*/G]$,
and the constant sheaf on a space encodes interesting topological information.
The key example to bear in mind is that the de Rham complex on a sheaf looks locally trivial, by the Poincar\'e lemma,
but it knows interesting global information,
namely cohomology.

Our situation is exactly parallel.
By construction, at a point $x \in \gg^*$, the BV complex of the A-twisted theory 
\[
(\Omega^{\bu,\bu}(\CC^2) \otimes \gg[\eps][1], \dbar + t_1 \partial_{z_1} + t_2 \partial_{z_2} + \eps \ad_x + u \frac \dd{\dd \eps} \id_{\gg_x})
\]
admits a canonical inclusion from a ``constant subcomplex''
\[
(\gg[\eps][1], \eps \ad_x + u \frac \dd{\dd \eps} \id_{\gg_x})
\]
and this map is a quasi-isomorphism by the Poincar\'e lemma.
Hence, there is a canonical quasi-isomorphism 
\[
\obscl_A(\RR^4) \to \clie^\bu_{\mr{Lie}}(\gg[\eps], \eps \ad_x + u \frac \dd{\dd \eps} \id_{\gg_x})
\]
that we will call the ``Poincar\'e map.''
The right hand side is a model of the de Rham complex of the formal neighborhood of $[x]$ in $[\gg^*/G]$.
In this sense, the global classical observables provide a fat model of the de Rham complex on the stack~$[\gg^*/G]$.

When we quantize, we deform the complex of observables,
and we might hope to deform this Poincar\'e map in a compatible way.
Such a construction would allow us to take a quantum observable (e.g., a Wilson loop) and produce a closed differential form on~$[\gg^*/G]$.

This discussion is so far about factorization algebras on $\RR^4$,
where we have seen the observables are quasi-isomorphic to the unit factorization algebra (i.e., it assigns the base field to any open).
It is straightforward to compute factorization homology of the unit algebra on any  manifold.

\printbibliography

\textsc{University of Massachusetts, Amherst}\\
\textsc{Department of Mathematics and Statistics, 710 N Pleasant St, Amherst, MA 01003}\\
\texttt{celliott@math.umass.edu}\\
\texttt{gwilliam@math.umass.edu}\\
\mbox{}\\
\textsc{University of Edinburgh}\\
\textsc{School of Mathematics, James Clerk Maxwell Building, Peter Guthrie Tait Road, Edinburgh, EH9 3FD}\\
\texttt{brian.williams@ed.ac.uk}

\end{document}